\newcommand{\DTV}{\!{TV}}
\newcommand{\vol}{\!{vol}}
\title{On the query complexity of sampling from non-log-concave distributions\thanks{Accepted for presentation at the
Conference on Learning Theory (COLT) 2025.}}
\author{Yuchen He\\Shanghai Jiao Tong University \\ \textsf{yuchen\_he@sjtu.edu.cn} \and Chihao Zhang \\Shanghai Jiao Tong University \\ \textsf{chihao@sjtu.edu.cn}}
\begin{document}
\maketitle

\begin{abstract}
    We study the problem of sampling from a $d$-dimensional distribution with density $p(x)\propto e^{-f(x)}$, which does not necessarily satisfy good isoperimetric conditions.

    Specifically, we show that for any $L,M$ satisfying $LM\ge d\ge 5$, $\eps\in \tp{0,\frac{1}{200}}$, and any algorithm with query accesses to the value of $f(x)$ and $\grad f(x)$, there exists an $L$-log-smooth distribution with second moment at most $M$ such that the algorithm requires $\tp{\frac{LM}{d\eps}}^{\Omega(d)}$ queries to compute a sample whose distribution is within $\eps$ in total variation distance to the target distribution. We complement the lower bound with an algorithm requiring $\tp{\frac{LM}{d\eps}}^{\+O(d)}$ queries, thereby characterizing the tight (up to the constant in the exponent) query complexity for sampling from the family of non-log-concave distributions.

    Our results are in sharp contrast with the recent work of Huang et al. (COLT'24), where an algorithm with quasi-polynomial query complexity was proposed for sampling from a non-log-concave distribution when $M=\!{poly}(d)$. Their algorithm works under the stronger condition that all distributions along the trajectory of the Ornstein-Uhlenbeck process, starting from the target distribution, are $\+O(1)$-log-smooth. We investigate this condition and prove that it is strictly stronger than requiring the target distribution to be $\+O(1)$-log-smooth. Additionally, we study this condition in the context of mixtures of Gaussians.

    Finally, we place our results within the broader theme of ``sampling versus optimization'', as studied in Ma et al. (PNAS'19). We show that for a wide range of parameters, sampling is strictly easier than optimization by a super-exponential factor in the dimension $d$.

\end{abstract}

\newpage

\setcounter{tocdepth}{3}
\tableofcontents

%\ctodo{make sure the TOC does not exceed one page.}

\newpage

\section{Introduction}
The problem of sampling from a given distribution has gained much attention in recent years due to its  wide applications in many fields such as machine learning, physics, finance and statistics (e.g. \cite{ADDJ03,K06,D07,LB21}). Given query access to the value and gradients of a potential function $f:\bb R^d\to \bb R$, the task is to generate a sample from the distribution $\mu$ with density $p_\mu\propto e^{-f}$ on $\bb R^d$ within as few queries as possible.

Many algorithms have been developed to address this problem. The Langevin algorithm and its variants are the most commonly used ones (e.g. \cite{RT96,CB18,CCBJ18,Wib19,CLA+21,LPW24}). 
% This family includes unadjusted Langevin algorithms (\cite{RT96,CB18,DM19,VW19,CEL+24,MFH+23}), underdamped Langevin algorithms (\cite{CCBJ18,SL19,ZCL+23,LPW24}), Metropolis-adjusted Langevin algorithms (\cite{RS02,BH13,DCWY19,CLA+21}), proximal Langevin algorithms (\cite{Ber18,Wib19,LL22}) and so on. 
Polynomial-time convergence of Langevin-based algorithms is guaranteed when the target distribution $\mu$ exhibits good properties such as isoperimetric properties (e.g. \cite{VW19,CEL+24,MFH+23,ZCL+23}) or log-concavity (e.g. \cite{SL19,DCWY19,AC24}).
% However, for general non-log concave distributions, these algorithms may not be effective.

However, many interesting sampling problems are not log-concave and do not satisfy good isoperimetric properties. In recent years, there has been growing interest in sampling from general non-log-concave distributions. In this work, as well as in most of the literature on non-log-concave sampling, the distributions are required to satisfy only the following two weak assumptions.

\begin{assumption}\label{assump:moment}
    The second moment of $\mu$ is bounded, i.e. $\E[X\sim \mu]{\|X\|^2}\leq M$ for some $M<\infty$.
\end{assumption}

\begin{assumption}\label{assump:smooth}
    The potential function $f$ is differentiable and $\grad f$ is $L$-Lipschitz, i.e, for any $x,y\in \bb R^d$, $\|\grad f(x)-\grad f(y)\| \leq L \|x-y\|$.
\end{assumption}

Note that \Cref{assump:smooth} is equivalent to requiring $\mu$ to be $L$-log-smooth. There are mainly two lines of work regarding non-log-concave sampling. 

The first line aims at understanding the astonishing performance of the  score-based generative models (SGMs) in practice. It is proved that assuming the score functions can be accurately estimated, an accurate sampler with polynomially many queries exists (e.g. \cite{CCL+23,CCL+23b,CLL23,BBDD24}). %\ctodo{There might be many other refs.}

Another line of work is to identify the inherent sample complexity under the value / gradient oracle model. For special distributions such as the mixture of Gaussians of some particular shapes, polynomially many queries are sufficient to get an accurate sample (\cite{GLR18,GTC24}). On the other hand, a lower bound of $e^{\Omega(d)}$ for general non-log-concave sampling problems is established in \cite{GLR18}.

%In this paper, we require the target distributions to satisfy only the following two very weak assumptions, which are commonly used in most works.
% In most studies, the target distributions are only required to satisfy the following two very weak assumptions.

% \htodo{mention that $LM\geq d$.}
%Yet, due to the challenging nature of this problem, there is still no clear understanding of it. 
To the best of our knowledge, the query complexity for sampling in terms of the parameters $L$ and $M$ is not known yet. The work of~\cite{HRT24} derived a bound of $\exp\tp{\+O\tp{d\log \frac{L(d+M)}{\eps}} + Z}$, where $Z$, the maximum norm of the particles appeared in their algorithm, does not have an explicit bound. The recent work~\cite{HZD+24} designed a quasi-polynomial algorithm under a stronger assumption. Their algorithm further requires that \emph{the distributions during the Ornstein-Uhlenbeck process (OU process) starting from the target distribution $\mu$ to be $L$-log-smooth} and has query complexity roughly $\exp\tp{\+O(L^3)\cdot \!{polylog}(Ld+M)}$. Therefore, if all the distributions along the trajectory of the OU process is $\+O(1)$-log-smooth and $M=\!{polylog}(d)$, their algorithm is quasi-polynomial. Breaking the exponential barrier turns out to be significant in both theory and applications. However, as far as we can see, the relationship between this ``smoothness along the trajectory'' assumption used in their work and \Cref{assump:smooth} remains unclear, which limits the applicability of their results.

In this work, we make progress in understanding the query complexity for sampling under \Cref{assump:moment} and \Cref{assump:smooth}. Additionally, we further investigate the ``smoothness along the trajectory'' property mentioned above and, finally, compare our results with the query complexity for general non-convex optimization.

\subsection{Main results}
% Our main results can be summarized in the following four aspects.
We summarize our main results in the following four aspects.

\paragraph{Query complexity lower bounds for sampling}
% We prove a general lower bound with respect to $L,M,d$ and $\eps$. If an algorithm can sample from any distribution that satisfies \Cref{assump:moment}~and~\ref{assump:smooth} with a total variation distance error of no more than $\eps$, it will inevitably require $e^{\frac{d}{2}\log \Omega\tp{\frac{LM}{d\eps}}}$ queries (see \Cref{thm:}). 

First, we provide a general query complexity lower bound. We prove that to guarantee an error of $\eps$ in total variation distance, $\tp{\frac{LM}{d\eps}}^{\Omega(d)}$ queries is inevitable. See \Cref{thm:main-lb} below for the formal statement. 

\begin{theorem}\label{thm:main-lb}
    Let $\eps\in \tp{0,\frac{1}{200}}$. For any $L,M>0$ such that $LM\geq d$ and for any $d\geq 5$, if a sampling algorithm $\+A$ always terminates within $K$ queries on any target distribution $\mu$ under \Cref{assump:moment}~and~\ref{assump:smooth}, and guarantees that the distribution of $\+A$'s output, denoted as $\tilde \mu$, satisfies $\DTV(\tilde \mu, \mu)\leq \eps$, then $K = \tp{\frac{LM}{d\eps}}^{\Omega(d)}$.
\end{theorem}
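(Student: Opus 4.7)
The plan is to exhibit a hard family $\{\mu_y\}_{y\in S}$ of cardinality $N := |S| = \tp{LM/(d\eps)}^{\Omega(d)}$ with three properties: (i) every $\mu_y$ satisfies \Cref{assump:moment} and \Cref{assump:smooth}; (ii) $\mu_y$ places mass $\geq 3\eps$ inside a small ball $B(y,\rho)$, and these balls are pairwise disjoint across $y \in S$; (iii) the potential $f_y$ agrees \emph{exactly} outside $B(y,\rho)$ with a common $y$-independent background potential. Given (i)--(iii), the lower bound follows by a disjoint-signature counting argument: an algorithm whose queries never enter $B(y,\rho)$ produces the same output distribution on $\mu_y$ as against a ``background oracle,'' and each query can lie in at most one packing ball.

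\textbf{Construction.} Set $\sigma^2 := 1/L$, $\rho := c_1/\sqrt{L}$, $\delta := 4\eps$, and $R := c_2\sqrt{M/\eps}$ for small absolute constants $c_1, c_2$. Let $S$ be a maximal $2\rho$-packing on the sphere $\{y \in \bb R^d : \|y\| = R\}$; volume comparison gives $|S| \geq \tp{R/\rho}^{\Omega(d)}$, which under the hypotheses $LM \geq d$ and $\eps < 1/32$ is at least $\tp{LM/(d\eps)}^{\Omega(d)}$. For each $y \in S$ define
\[
\mu_y(x) \;\propto\; \gamma(x) + b(x - y),
\]
where $\gamma$ is the density of $\mathcal{N}(0, \sigma^2 I)$ and $b \geq 0$ is a smooth bump supported on $B(0,\rho)$ with $\int b = \delta$, built (e.g., a mollified truncated Gaussian) so that $\grad^2(-\log(\gamma + b)) \preceq L\cdot I$ on $B(0,\rho)$. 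The key structural fact is that for $x \notin B(y,\rho)$ we have $\mu_y(x) = \gamma(x)/(1+\delta)$, so $f_y(x) = -\log\gamma(x)+\log(1+\delta)$ and $\grad f_y(x) = Lx$ are both independent of $y$.

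\textbf{Verifications and lower bound.} The second moment is $(d/L + \delta R^2 + O(\delta\rho^2))/(1+\delta) \leq M$ after tuning $c_2$; \Cref{assump:smooth} holds outside $B(y,\rho)$ (quadratic potential) and inside by design of $b$; and $\mu_y(B(y,\rho)) \geq \delta/(1+\delta) \geq 3\eps$. For the query lower bound, by Yao's principle it suffices to consider a deterministic algorithm whose randomness is encoded in a common random tape $r$. Let $x_1^r, \ldots, x_K^r$ be the queries the algorithm would issue against the \emph{background oracle} that always returns $(-\log\gamma(x)+\log(1+\delta), Lx)$. By induction on query index, on any input $\mu_y$ with $\{x_i^r\} \cap B(y,\rho) = \varnothing$, responses coincide with the background at every query, so the trajectory and output are identical. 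Writing $\tilde\mu_\emptyset$ for the output distribution against the background oracle, this gives $\DTV(\tilde\mu_y, \tilde\mu_\emptyset) \leq \bb P_r\bigl[\{x_i^r\} \cap B(y,\rho) \neq \varnothing\bigr]$. Since each $x_i^r$ lies in at most one packing ball $B(y,\rho)$, summing over $y \in S$ gives $\sum_y \DTV(\tilde\mu_y, \tilde\mu_\emptyset) \leq K$. Combined with $\tilde\mu_y(B(y,\rho)) \geq \mu_y(B(y,\rho)) - \eps \geq 2\eps$ and the disjointness $\sum_y \tilde\mu_\emptyset(B(y,\rho)) \leq 1$,
\[
2\eps N \;\leq\; \sum_{y \in S} \tilde\mu_y(B(y,\rho)) \;\leq\; \sum_{y \in S} \tilde\mu_\emptyset(B(y,\rho)) + \sum_{y \in S} \DTV(\tilde\mu_y, \tilde\mu_\emptyset) \;\leq\; 1 + K,
\]
so $K \geq 2\eps N - 1 = \tp{LM/(d\eps)}^{\Omega(d)}$ (the $\eps$ prefactor only shifts the exponent in $N$ by a constant since $LM/(d\eps) \geq 32$).

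\textbf{Main hurdle.} The heart of the technical work is constructing the bump $b$ so as to simultaneously satisfy (a) compact support in $B(0,\rho)$ with $\rho \sim 1/\sqrt{L}$, (b) total mass $\delta \sim \eps$, and (c) the Hessian bound $\grad^2(-\log(\gamma + b)) \preceq L\cdot I$ across the boundary of $\operatorname{supp}(b)$ where $b$ tapers to zero. Balancing these three constraints is the main quantitative obstacle; once the bump is in place, the rest of the argument is clean (volume-comparison packing, induction/coupling against the background oracle, and the final disjoint-ball count).
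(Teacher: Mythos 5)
Your overall skeleton (pack perturbation balls on a sphere of radius $\Theta(\sqrt{M/\eps})$, make the potentials agree outside each ball, and run an indistinguishability/counting argument) matches the paper's, and your final counting step $2\eps N\le 1+K$ via $\sum_y \DTV(\tilde\mu_y,\tilde\mu_\emptyset)\le K$ is clean and correct. But the construction itself has a genuine gap, and it is exactly the obstruction the paper's technical overview is built around. You place a bump of mass $\delta=\Theta(\eps)$, supported on a ball of radius $\rho=c_1/\sqrt L$, centered at $\|y\|=R=c_2\sqrt{M/\eps}$, on top of the Gaussian $\gamma\sim\+N(0,L^{-1}I)$. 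At that location $-\log\gamma(y)\approx LR^2/2=\Theta(LM/\eps)\ge \Theta(d/\eps)$, while at the bump's peak $-\log b \approx \log(1/\eps)+\frac d2\log(d/L)+O(d)$. So the potential $-\log(\gamma+b)$ must drop by $\Delta=\Theta(LM/\eps)$ across a ball of radius $\rho$. An $L$-smooth potential whose gradient at the boundary matches the background (magnitude $\approx LR$) can vary by at most $O(\rho\cdot LR+L\rho^2)=O(\sqrt{LM/\eps})$ over that ball, which is far smaller than $\Delta$ whenever $LM/\eps\gg 1$ (and here $LM/\eps\ge 32d$). So no such bump exists with your parameters; the "main hurdle" you flag at the end is not a technicality but an impossibility.

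The standard fix — enlarging $\rho$ until smoothness is feasible — forces $\rho=\Omega(\sqrt{d/(L\eps)})$, which gives $N=(R/\rho)^{\Theta(d)}=(LM/d)^{\Theta(d)}$: the $\eps$-dependence cancels and you only recover the weaker bound $(LM/d)^{\Omega(d)}$, not the claimed $(LM/(d\eps))^{\Omega(d)}$ (these differ genuinely, e.g.\ when $LM=\Theta(d)$ and $\eps$ is tiny). The paper avoids this by not using a plain Gaussian base: it first flattens the potential to a constant $h_1=\log\vol(\+B_{3R})+\log(1/\eps)$ on the annulus $\frac R2<\|x\|\le R$ (a ``plateau'' carrying total mass $\Theta(\eps)$), places the perturbation balls at $\|v\|=3R/4$ inside this plateau, and then only needs to lower the potential by $\gamma=\Theta(d\log\frac{LM}{d\eps})$ to inject $\Theta(\eps)$ extra mass into a ball of radius $r_1=\sqrt{\frac dL\log\frac{LM}{d\eps}}$ — a drop compatible with $O(L)$-smoothness since $\gamma/r_1^2=\Theta(L)$. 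You would need to incorporate this plateau (and re-verify smoothness and second moment of the modified base, as in Lemmas \ref{lem:Z_0} and \ref{lem:propertymu0}) for your argument to yield the stated bound.
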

%\htodo{I write $e^{\Omega\tp{\frac{d}{2}\log \frac{LM}{d\eps}}}$ rather than $e^{\frac{d}{2}\log \Omega\tp{\frac{LM}{d\eps}}}$ because it seems that we cannot get $e^{\frac{d}{2}\log \Omega\tp{\frac{LM}{d\eps}}}$ and can only get $e^{\frac{d}{2}\log \tilde\Omega\tp{\frac{LM}{d\eps}}}$}
%\ctodo{Can we write $\tp{\frac{LM}{d\eps}}^{\frac{d}{2}\tp{1-o(1)}}$. (I think it is somehow important to keep the constant $\frac{1}{2}$ explicit in the bound, since we believe that it is the correct constant. This constant matters a lot in some cases, e.g. \Lovasz local lemma)}
% We remark that the assumption $LM=\Omega(d)$ is natural because under \Cref{assump:moment}~and~\ref{assump:smooth}, $LM$ is at least $d$ (see \Cref{lem:lb-LM}).
Besides, when $\frac{LM}{d\eps}=\omega(1)$, the constant coefficient in the exponent is $\frac{1}{2}-o\tp{\frac{1}{d}}$, i.e., we can prove a lower bound of the form $\tp{\frac{LM}{d\eps}}^{\frac{d}{2}-o(1)}$. The detailed formula is given in \Cref{thm:main}.  

%To the best of our knowledge, this is the first general lower bound result on this problem. Our lower bound covers the previous lower bound in~\cite{GLR18} and our techniques can also be applied to~\cite{Cha24} to establish a more general bound in their context.

Note that $LM\ge d$ is a natural assumption for distributions (see \Cref{lem:lb-LM}). This indicates that, in general exponentially many queries are needed for sampling, and when $LM=\omega(d)$, super-exponential queries are required. This result significantly improved previous lower bound of $e^{\Omega(d)}$ in~\cite{GLR18}. Our techniques can also be applied to~\cite{Cha24} and may offer some inspiration for non-log-concave sampling problems under other settings like~\cite{HB23} to establish more general bounds in their contexts. 

The dependency of other parameters also provides new insights. Notably, we first incorporate the dependence of $\eps$ in the lower bound, establishing an $(1/\eps)^{\Omega(d)}$ dependence. This rules out the possibility of fast samplers, i.e., the samplers with $\!{polylog} \tp{1/{\eps}}$ dependence, in non-log-concave cases, which contrasts with the case of log-concave sampling. For another example, according to the result of \cite{HMBE24}, dependence of $\+O\tp{\log \frac{1}{\eps}}$ can be achieved when the distribution satisfies the functional \Poincare inequality (FPI). Although their result is in terms of the expected query complexity, we can terminate the algorithm once the number of queries exceeds $\+O\tp{\frac{1}{\eps} \log \frac{1}{\eps}}$ to get a worst case complexity bound. By Markov's inequality, this still ensures an error of at most $\+O(\eps)$ in total variation distance. Compared with \Cref{thm:main-lb}, this implies that in high-dimensional cases, the FPI does not hold in general for distributions under Assumption~\ref{assump:moment}~and~\ref{assump:smooth}.

%Besides, we design a sampling algorithm which requires at most $\tp{\frac{LM}{d\eps}}^{\+O(d)}$ queries of $f$ and $\grad f$. Taken together, we get a tight (up to the constant in the exponent) bound for general non-log-concave sampling problems.

% \htodo{Our methods are similar to holden lee. But in \Cref{sec:lb}, we demonstrated why we cannot introduce the dependency on $\eps$ if we still use their hard instances. }

\paragraph{Query complexity upper bounds for sampling}
We also design an algorithm to sample from $\mu$ with an $\eps$ error in total variation distance for any $\mu$ satisfying \Cref{assump:moment}~and~\ref{assump:smooth}, with the query complexity bounded by $\tp{\frac{LM}{d\eps}}^{\+O(d)}\cdot \!{poly}(\eps^{-1},d,L,M)$. %See the following \Cref{thm:main-ub}.
\begin{theorem}\label{thm:main-ub}
    Assume $d\ge 3$. There exists an algorithm such that, for any distribution $\mu$ with density $p_\mu(x)\propto e^{-f_\mu(x)}$ where $f_\mu\in C^1(\bb R^d)$, $\grad f(0)=0$, and satisfies \Cref{assump:moment}~and~\ref{assump:smooth}, outputs a sample $x$ with distribution $\tilde \mu$ satisfying $\DTV(\mu,\tilde \mu)\le \eps$ within $\tp{\frac{LM}{d\eps}}^{\+O(d)}\cdot \!{poly}\tp{\eps^{-1},d,L,M}$ query accesses to $f_\mu$ and $\grad f_\mu$, for any $\eps\in (0,1)$. 
\end{theorem}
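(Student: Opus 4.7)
The plan is to use a grid-based sampler: truncate via the second-moment bound, then approximate $f$ on each grid cell by its first-order Taylor expansion at the cell center, and sample each cell in closed form.

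First, by \Cref{assump:moment} and Markov's inequality, $\mu(B_R^c) \leq \eps/2$ for $R:=\sqrt{2M/\eps}$, so it suffices to approximate $\mu|_{B_R}$ to $\DTV$ error $\eps/2$. I would partition $[-R,R]^d$ into axis-aligned cubes $\{C_i\}$ of side length $s$, and at each center $c_i$ query the oracle for $f(c_i)$ and $\grad f(c_i)$. Define the linear surrogate $\tilde f_i(x):=f(c_i)+\langle\grad f(c_i),\,x-c_i\rangle$ on $C_i$; by \Cref{assump:smooth},
\[ |f(x)-\tilde f_i(x)| \leq \frac{L\|x-c_i\|^2}{2} \leq \frac{Lds^2}{8}=:\delta \quad \text{for } x\in C_i. \]
The cell weight $w_i:=\int_{C_i} e^{-\tilde f_i(x)}\,\mathrm{d}x$ factors into one-dimensional exponential integrals and is therefore available in closed form.

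Sampling is then straightforward: pick index $i$ with probability $w_i/W$ (where $W:=\sum_j w_j$) and, conditional on $i$, draw $x\in C_i$ from $e^{-\tilde f_i(x)}/w_i$ by per-coordinate inverse-CDF. For the error, on each cube one has $p_\mu(x)/\tilde p(x)\in[e^{-2\delta},e^{2\delta}]$ (the factor $2$ absorbing the corresponding deviation of the normalizers $Z$ and $\tilde Z$), so $\DTV(\mu|_{B_R},\tilde\mu)\leq e^{2\delta}-1$. Choosing $s=\Theta(\sqrt{\eps/(Ld)})$ makes $\delta=O(\eps)$ and hence $\DTV(\mu,\tilde\mu)\leq\eps$ after collecting constants. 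The query count equals the number of cubes intersecting $B_R$, which is at most $O(1)^d\cdot\vol(B_R)/s^d$; combining the Stirling estimate $\vol(B_R)\asymp (2\pi e R^2/d)^{d/2}/\sqrt{\pi d}$ with $s^2\asymp\eps/(Ld)$ yields $N=O(1)^d(LM/\eps^2)^{d/2}$.

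The main obstacle will be the tight volume-accounting needed to cast this count into the advertised form $\tp{\frac{LM}{d\eps}}^{\+O(d)}\cdot\!{poly}\tp{\eps^{-1},d,L,M}$: the key saving comes from switching from the naive enclosing-box count $(2R/s)^d$ to the ball-volume bound, which improves the count by a factor of roughly $(d/(2\pi e))^{d/2}$ and is essential for pulling the $d$ into the denominator of the base. A secondary technical point is controlling $w_i$ when $\|\grad f(c_i)\|$ is large; the estimate $\|\grad f(c_i)\|\leq L\|c_i\|\leq LR$, which follows from $\grad f(0)=0$ and \Cref{assump:smooth}, keeps each one-dimensional factor of $w_i$ in a regime where no numerical blow-up occurs in the closed-form formula, and the propagation from the pointwise density ratio bound to the final $\DTV$ bound is then a routine triangle-inequality argument combined with the truncation error from Step~1.
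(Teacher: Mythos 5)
Your construction is a genuinely different route from the paper's (the paper never samples from a grid surrogate: it only uses a coarse grid to estimate $f^*$ and $Z_\mu$ up to $e^{O(d)}$ factors, then runs averaged Langevin dynamics on a smoothly truncated potential whose \Poincare constant it bounds by comparison with a Gaussian). Unfortunately, your route does not reach the claimed complexity, and the failure is exactly at the point you flag as "volume accounting." To make the per-cube Taylor error $\delta = Lds^2/8$ of order $\eps$ you are forced to take $s = \Theta\bigl(\sqrt{\eps/(Ld)}\bigr)$, and then your own count $N = O(1)^d\, (LM/\eps^2)^{d/2}$ is correct — but it is \emph{not} of the form $(LM/(d\eps))^{O(d)}\cdot \mathrm{poly}(\eps^{-1},d,L,M)$. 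Writing $(LM/\eps^2)^{d/2} = (LM/(d\eps))^{d/2}\cdot (d/\eps)^{d/2}$, the residual factor $d^{d/2}$ cannot be absorbed: when $LM = \Theta(d)$ and $\eps = \Theta(1)$ the theorem promises $2^{O(d)}\cdot\mathrm{poly}$ queries, whereas your grid has $d^{\Theta(d)}$ cells. (This regime is not marginal — it is precisely the one used in Section 6 to show sampling beats optimization by a super-exponential factor; your bound would match the optimization lower bound instead of beating it.) The ball-volume refinement you mention only removes an $O(1)^d$ constant relative to $(R/(s\sqrt d))^d$; it does not pull a $d$ into the denominator of the base, because the base is already $R^2/(ds^2) = \Theta(LM/\eps^2)$ after the $d$'s cancel between $\mathrm{vol}(\+B_R)\approx (2\pi eR^2/d)^{d/2}$ and $s^2 = \Theta(\eps/(Ld))$.

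The conceptual point your approach misses is that pointwise multiplicative accuracy $e^{O(\eps)}$ for the density is far more than is needed, and is what costs the extra $d^{d/2}$. The paper tolerates a pointwise multiplicative error of $e^{O(d)}$ (cube side $\ell = \Theta(\sqrt{d\eps/(L^2M)})$, giving $(O(LM/(d\eps)))^{d}$ cubes as in \Cref{lem:cubes}), because those coarse estimates only feed into the truncation thresholds $h_1,h_2$ and the normalization of $f_\pi$; the $\eps$-accuracy in total variation is then obtained by running LMC with \emph{exact} oracle access to $f_\mu,\grad f_\mu$, paying $(LM/(d\eps))^{O(d)}$ through the \Poincare constant rather than through the grid size. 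If you want to salvage a direct discretization argument, you would have to combine your coarse ($\delta = O(d)$) surrogate with, e.g., rejection sampling against the true unnormalized density, rather than sampling from the surrogate itself; as written, the proposal proves a weaker theorem than the one stated.
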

The assumption $\grad_\mu f(0)=0$ is for the sake of simplicity. One can first apply a gradient descent algorithm to find a stationary point of $f$ and shift the origin to that point. The form of this bound matches our lower bound in \Cref{thm:main-lb}, which suggests that the optimal query complexity for sampling from non-log-concave distributions should be $\tp{\frac{LM}{d\eps}}^{\Theta(d)}$. 
%\ctodo{The notation $\E[X\sim p]{.}$.}

\paragraph{The smoothness evolvement during the OU process}

As mentioned before, the work of~\cite{HZD+24} presented a quasi-polynomial algorithm for sampling from distributions with second moment at most $M=\!{poly}(d)$. They further require that all distribution along the trajectory of the OU process initialized at the target distribution is $\+O(1)$-log-smooth. Our lower bound already implies that the condition is strictly stronger than merely requiring the target distribution to be $\+O(1)$-smooth. We further construct an explicit family of distributions, called ``stitched Gaussians'', and calculate its evolvement of the smoothness property along the OU trajectory. This family of distributions are $\+O(1)$-log-smooth while along the OU process, the Hessian of their log density becomes unbounded.

%We also investigate the ``smoothness along the trajectory'' condition mentioned above and show that it is strictly stronger than merely requiring the initial distribution to be smooth. Specifically, we construct explicit distributions which are $\+O(1)$-log-smooth while along the OU process, the Hessian of their log density becomes unbounded. Therefore, it is interesting to study under what conditions, the $\+O(1)$-log-smoothness condition can be preserved during the OU process. We obtain some partial results for the mixtures of Gaussians.

%Besides the work on complexity bounds, to provide an intuitive comparison between our results and the results in \cite{HZD+24}, we compare \Cref{assump:smooth} with the smoothness condition in their work. They assume the distributions throughout the entire OU process to be $L$-log-smooth. Previous works do not pay much attention to distinguishing between these two smoothness conditions. However, we find that even though the initial distribution is $\+O(1)$-log-smooth, it might evolve into an $\omega(1)$-log-smooth distribution during the OU process. 
\begin{theorem}\label{thm:main-smooth}
    For the OU process $\ab\{X_t\}_{t\geq 0}$, let $\mu_t$ be the distribution of $X_t$ and its density is $p_t$. For arbitrary $s=\Omega(d)$, there exists an initial distribution $\mu$ which is $\+O(1)$-log-smooth but $\norm{\grad^2 \log p_t(x)}_{\!{op}} \geq \Omega\tp{e^{-2t}s-1}$ for any $t\geq \frac{\log 10}{2}$ at some point $x$.
\end{theorem}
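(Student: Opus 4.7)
The plan is to construct an explicit ``stitched Gaussian'' distribution in $\mathbb{R}^1$ that has a bounded log-Hessian but whose OU evolution develops a bimodal profile with large Hessian. Pick a width $c=\Theta(\sqrt{s})$ and a curvature constant $C\ge 1$, and define the potential $f$ to be $\tfrac12(x\mp\sqrt{s})^2$ for $\pm x\ge c$, joined in the middle by the concave $C^1$-quadratic $f(x)=f(c)+\tfrac{\sqrt{s}-c}{2c}(c^2-x^2)$ whose slope matches the Gaussian arms at $\pm c$. Taking $c=\sqrt{s}/(C+1)$ one checks that $f''\in\{1,-C\}$ pointwise, so $p_\mu\propto e^{-f}$ is $\mathcal{O}(1)$-log-smooth. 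Because $f(0)=\tfrac{sC}{2(C+1)}$, the stitching region $[-c,c]$ carries only $e^{-\Omega(s)}$ mass, so $\mu$ is exponentially close in TV to the idealized mixture $\tfrac12(\mathcal{N}(\sqrt{s},1)+\mathcal{N}(-\sqrt{s},1))$. To reach dimension $d$, tensor with $\mathcal{N}(0,I_{d-1})$; since those coordinates are stationary for OU, the $d$-dimensional problem reduces to the 1D one, and the claimed operator-norm bound at the origin comes directly from the 1D Hessian in the stitched direction.

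The 1D Hessian is obtained from $p_t(y)=\int p_\mu(x)\,\phi_{\sigma_t^2}(y-e^{-t}x)\,dx$ (with $\sigma_t^2=1-e^{-2t}$) by using the symmetry $p_\mu(-x)=p_\mu(x)$ to cancel $\partial_y p_t(0)$ and differentiating twice:
\[
\partial_y^2\log p_t(0)\;=\;\frac{e^{-2t}}{\sigma_t^4}\,\mathbb{E}_{q}[X^2]\;-\;\frac{1}{\sigma_t^2},\qquad q(x)\;\propto\;p_\mu(x)\,\exp\!\left(-\tfrac{e^{-2t}x^2}{2\sigma_t^2}\right).
\]
Completing the square of each Gaussian mode of $p_\mu$ against the tilt shows that, up to exponential error in $s$, $q$ is a mixture of two Gaussians with means $\pm\sqrt{s}\,\sigma_t^2$ and variance $\sigma_t^2$, so $\mathbb{E}_q[X^2]=s\sigma_t^4+\sigma_t^2+o(1)$; using $1-e^{-2t}=\sigma_t^2$ this collapses to $\partial_y^2\log p_t(0)=e^{-2t}s-1+o(1)$.

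The main obstacle is bounding the contribution of the stitching region to $\mathbb{E}_q[X^2]$. On $[-c,c]$ one has $p_\mu(x)\le e^{-sC/(2(C+1))}/Z_\mu$, while each mode contributes mass $\asymp e^{-s\alpha_t/(2(1+\alpha_t))}/Z_\mu$ to $q$, where $\alpha_t:=e^{-2t}/\sigma_t^2\le \tfrac19$ throughout $t\ge\tfrac12\log 10$. The ratio is therefore $e^{-s(C/(C+1)-\alpha_t/(1+\alpha_t))/2}$, and any $C\ge 1$ yields a positive exponent ($\ge 1/2-1/10$), so the stitching correction to $\mathbb{E}_q[X^2]$ is $e^{-\Omega(s)}$ and the approximation goes through uniformly for $t\ge\tfrac12\log 10$, giving the claimed bound $\Omega(e^{-2t}s-1)$.
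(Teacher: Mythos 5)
Your construction is correct and reaches the paper's conclusion by a genuinely different route. The paper stitches two full-dimensional Gaussians $\+N(0,\!{Id}_d)$ and $\+N(u,\!{Id}_d)$ radially with a mollifier, evaluates the Hessian at the midpoint $x_0=e^{-t}u/2$, and extracts the large eigenvalue from the covariance decomposition of a two-component mixture (the cross term $\delta_t(1-\delta_t)\tp{\E[\gamma_t]{Y}-\E[\+N_1]{Y}}\tp{\E[\gamma_t]{Y}-\E[\+N_1]{Y}}^\top$), using only a geometric lower bound on the separation of the component means; this gives $\Omega(e^{-2t}\|u\|^2-1)$ without identifying the constant. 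You instead work in one dimension with a symmetric double well, tensorize with a stationary Gaussian in the remaining $d-1$ coordinates, and compute $\E[q]{X^2}$ by completing the square, which yields the sharp value $e^{-2t}s-1+o(1)$ rather than just its order. Both arguments rest on the same identity — the Hessian of the OU-evolved log-density equals a rescaled covariance of a tilted measure minus the identity; the paper imports this from \cite{CLL23}, you rederive it by differentiating the Gaussian convolution. Your version is cleaner and sharper, at the price of relying on the symmetry and product structure (these are what kill $\partial_y p_t(0)$ and reduce the covariance to a second moment), whereas the paper's decomposition works at any point and for non-symmetric stitchings.

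Two small points. Your piecewise-quadratic $f$ is only $C^1$, but since the paper's notion of $L$-log-smoothness is Lipschitzness of $\grad f$, this is harmless (and can be mollified). More substantively, in the error-control paragraph the bound $p_\mu(x)\le e^{-sC/(2(C+1))}/Z_\mu$ on $[-c,c]$ points the wrong way: $f$ is concave on the bridge with its maximum $f(0)=\frac{sC}{2(C+1)}$ at the origin, so $e^{-f(0)}$ is the \emph{minimum} of $e^{-f}$ there; the correct uniform bound is $e^{-f(\pm c)}=e^{-sC^2/(2(C+1)^2)}$. This does not break the argument: for $C\ge 1$ one has $C^2/(C+1)^2\ge 1/4$, which still exceeds $\alpha_t/(1+\alpha_t)\le 1/10$ for all $t\ge\frac{\log 10}{2}$, so the bridge contribution to $\E[q]{X^2}$ remains $e^{-\Omega(s)}$; you should just correct the stated exponent.
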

% The specific distributions we find that satisfy the property in \Cref{thm:main-smooth} is a family of multi-modal distributions called the stitched Gaussian. 
Our results indicate that on those distributions with the property in \Cref{thm:main-smooth}, the upper bound of \cite{HZD+24}, which is $e^{\tilde{\+O}(L^3)}$ with $L$ being the smoothness bound during whole the OU process, can be very large. In contrast, our algorithm only requires smoothness of the initial distribution, allowing it to achieve a better complexity bound than~\cite{HZD+24} in these cases.

Moreover, our results also suggest that it is interesting to study under what conditions, the $\+O(1)$-log-smoothness condition can be preserved during the OU process. We obtain some partial results for mixtures of Gaussian and summarize them in \Cref{tab:result-comp}.

%Furthermore, this surprising result raises an interesting question: when will the property of $\+O(1)$-smoothness be preserved throughout the OU process? We provide an analysis to this question on the mixture of Gaussian distributions and summarize our results in \Cref{tab:result-comp}. 

% We compare our \Cref{assump:smooth} with the smoothness assumption in previous DDPM-based works, where they assume the distributions throughout the entire OU process are $L$-smooth. We prove that, even though the initial distribution is $\+O(1)$-smooth, it might evolve into an $\omega(1)$-smooth distribution during the OU process. This is a surprising result and suggests that assuming the smoothness throughout the entire OU process may not be trivial.

\paragraph{The comparison between sampling and optimization} 

Finally, we provide a comparison between the sampling and optimization tasks. We establish the lower bound in \Cref{thm:main-opt-lb} for optimization problems in non-convex cases based on the results in \cite{MCJ+19}. Let $x^*$ be the minimizer of function $f$ and $x^{(k)}$ be the point queried by the algorithm at step $k$. 
\begin{theorem}\label{thm:main-opt-lb}
    Assume $LM=\Omega(d)$ and $d\geq 8$. If an optimization algorithm $\+A$ queries at most $K$ points and can guarantee that $\min_{k\leq K} \abs{f(x^{(k)}) - f(x^*)}<1$ with constant probability for any $L$-smooth function $f:\bb R^d\to \bb R$ satisfying that the second moment of the distribution $\mu$ with density $\propto e^{-f}$ is $\Theta(M)$, then $K$ is at least $ (\alpha\cdot LM)^{\frac{d}{2}}$ for some universal constant $\alpha>0$.
\end{theorem}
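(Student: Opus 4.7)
The plan is to adapt the classical ``needle-in-a-haystack'' lower bound for non-convex optimization, in the spirit of~\cite{MCJ+19}, with a careful scaling of the construction to the parameters $L$, $M$ and $d$. I will exhibit a family $\{f_\theta\}_{\theta\in\Theta}$ of $L$-smooth potentials, indexed by the location of a single hidden minimum $\theta$, all sharing a common base $f_0$ outside a tiny Euclidean ball around $\theta$, each inducing a measure $\mu_\theta$ with second moment $\Theta(M)$, and of cardinality $|\Theta|\ge(\alpha LM)^{d/2}$. A Yao-type argument will then force $K=\Omega(|\Theta|)$.

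Concretely, I would take $f_0(x)=0$ on the ball $B(0,R)$ with $R=c\sqrt{M}$ and $f_0(x)=\frac{L}{2}(\|x\|-R)^2$ outside. This is $L$-smooth up to a universal constant (which can be absorbed into the final $\alpha$), and $\mu_0\propto e^{-f_0}$ is essentially uniform on $B(0,R)$ with a thin Gaussian shell, so $\E[\mu_0]{\|X\|^2}=\Theta(R^2)=\Theta(M)$. Fixing a smooth bump $\phi:\bb R^d\to[0,1]$ supported in the unit ball with $\phi(0)=1$ and bounded Hessian norm, I define
\[
    f_\theta(x)\;=\;f_0(x)\;-\;\lambda\,\phi\!\tp{\sqrt{L}\,(x-\theta)}
\]
for a small constant $\lambda>1$, with $\theta$ ranging over a maximal $(2/\sqrt{L})$-packing $\Theta\subseteq B(0,R/2)$. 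A standard volume count gives $|\Theta|\ge(\alpha LM)^{d/2}$ for some universal $\alpha>0$, where the hypothesis $LM=\Omega(d)$ is what ensures that $B(0,R/2)$ is large enough in ``smoothness units'' $1/\sqrt{L}$ to host so many disjoint bumps.

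I would then verify three properties. (i) Each $f_\theta$ is $L$-smooth: the bump contributes an operator-norm Hessian of at most $\lambda L\cdot\|\grad^2\phi\|_\infty$, which is controllable via the design of $\phi$. (ii) The second moment of $\mu_\theta$ is $\Theta(M)$: the bump multiplies $e^{-f_0}$ by at most $e^\lambda$ on a set of Lebesgue volume $\Theta(L^{-d/2})$, perturbing both normalizer and second moment by a factor $1+\+O\tp{(LM)^{-d/2}}$, which is a $1+o(1)$ correction under $LM=\Omega(d)$. (iii) The minimum of $f_\theta$ is attained near $\theta$ with value at most $-\lambda$, while for any $x\notin B\tp{\theta,1/\sqrt{L}}$ one has $f_\theta(x)=f_0(x)\ge 0$, so the suboptimality gap at such $x$ exceeds $\lambda>1$.

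The lower bound then follows from Yao's principle with the uniform prior on $\Theta$: for any query $x^{(k)}$ lying outside $\bigcup_{\theta'\in\Theta}B\tp{\theta',1/\sqrt{L}}$, the oracle responses $f_0(x^{(k)})$ and $\grad f_0(x^{(k)})$ are independent of $\theta$, so such queries carry no information about which $\theta$ was planted; by disjointness of the packing neighborhoods, $K$ queries land in the correct planted ball with probability at most $K/|\Theta|$ under the prior. Requiring constant success probability therefore forces $K=\Omega(|\Theta|)=(\alpha LM)^{d/2}$ after adjusting the universal constant. The main obstacle is the simultaneous quantitative bookkeeping needed to make (i)-(iii) hold at the sharp rate while keeping $\alpha$ universal; the assumption $LM=\Omega(d)$ enters precisely to guarantee that the second-moment perturbation in (ii) is subdominant and that the packing count in (iii) truly reaches $(LM)^{d/2}$.
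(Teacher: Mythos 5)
Your proposal is correct and follows essentially the same route as the paper, which instantiates the hard instances of Ma et al.\ \cite{MCJ+19} (a flat base potential with quadratic confinement at radius $R=\Theta(\sqrt{M})$, plus a planted dip of depth just over $1$ supported on a ball of radius $\Theta(1/\sqrt L)$ drawn from a packing of size $(\alpha LM)^{d/2}$) and then adds the second-moment verification $\E_{\mu_i}\|X\|^2=\Theta(R^2)$. The one place to be careful in your bookkeeping is that $LM=\Omega(d)$ is needed not only for the packing count and the bump's effect on the normalizer, but already to keep the Gaussian shell of the \emph{base} measure $e^{-f_0}$ concentrated at radius $O(R)$ (otherwise the factor $(1+s/R)^{d-1}$ lets the tail dominate the second moment), which is exactly what the paper's \Cref{lem:lb-moment} checks under $LR^2\ge 6d$.
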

On the other hand, for constant $\eps>0$, our upper bound for the sampling task shows that one requires only $\tp{\frac{LM}{d}}^{\+O(d)}$ queries to draw an approximate sample from $\mu$. This demonstrates that, in the context of non-convex optimization and non-log-concave sampling under \Cref{assump:moment}~and~\ref{assump:smooth}, when $LM=\Theta(d)$, sampling from a distribution with density $\propto e^{-f}$ can indeed be faster than finding the minimizer of $f$ by a super-exponential factor.

% We compared the complexity of the sampling and optimization problems under these two assumptions. The relationship between optimization and sampling has always been a topic of interest.  
% This suggests that, for this specific class of functions, sampling is harder than optimization.

\subsection{Technical overview}

\subsubsection{The lower bound}

The general idea to prove a query complexity lower bound, as used in previous work~\cite{GLR18,Cha24}, is to construct many distributions that are difficult for the algorithm to distinguish. Specifically, for each $v\in \bb R^d$, one can perturb the density of a base distribution $\mu_0$ in $\+B_r(v)$, a ball of radius $r$ centered at $v$, to get a new distribution $\mu_v$.  For two vectors $u$ and $v$ with $\+B_r(u)\cap \+B_r(v)=\emptyset$, after proper perturbation, one can get $\DTV(\mu_{v},\mu_u)\approx \eps$. Suppose one can find $n$ disjoint balls, with centers $\set{v_1,\dots,v_n}$. Then if the algorithm wants to sample from $\mu_{v_i}$ within error $\eps$, it should first recognize $v_i$ among the $n$ candidates. This indicates that when the input instance is $\mu_0$, the algorithm must query almost every $\+B_r(v_i)$, which gives the lower bound $\Omega(n)$.

Then at a high level, proving lower bounds can be viewed as packing as many disjoint $\+B_r(v)$'s in the domain of the base distribution while maintaining the desired properties (smoothness and bounded second moment) of the distribution. Previous works~\cite{GLR18,Cha24} 
%Suppose one can pack $n$ disjoint balls with centers $\set{v_i}_{i\in [n]}$. Then at a high level, the lower bound come from the following fact: any algorithm that can correctly sample from all these distributions must query almost every $B_r(v_i)$ when ithe
chose a Gaussian distribution as the base $\mu_0$ and proved a lower bound of $e^{\Omega(d)}$ using the above arguments. 

In this work, we want to derive a general lower bound with regard to all parameters $d,L,M$ and $\eps$. Our key observation is that using Gaussian as the base distribution is suboptimal. Let us explain the reason.
%To achieve the correct dependence on $\eps$, however, this simple base distribution is not viable. 
%If we simply choose $\mu_0=\+N\tp{0,\frac{M}{d}I_d}$ (the reason for using this Gaussian is that its second moment is exactly $M$), we claim that $r$ should be $\Omega\tp{\sqrt{\frac{d}{L\eps}}}$ and the number of balls we can pack is smaller than $n$.
There exists a trade-off between packing more disjoint balls and maintaining properties of the perturbed distributions. To pack more balls, one needs the radius $r$ to be as small as possible, and the center $v_i$ should be as far as possible from the origin. However,
%Note that the main challenge for the above strategy is to design a proper perturbation rule. The following requirements are necessary for a legal perturbation:
% To see the reasons behind this, we now examine what issues arise when $\mu_0=\+N\tp{0,\frac{M}{d}I_d}$ (the reason for using this Gaussian is that its second moment is exactly $M$). 

% To simulate this new distribution, the algorithm must recognize the perturbed region first. For simplicity, we let the perturbed region to be a ball $\+B_r(v)$ with center vector $v$ and radius $r$. To select such $r$ and $v$'s, we need the following requirements:
\begin{itemize}
    % \item [1.] 
    % By perturbing the density of $\mu_0$ in $\+B_r(v)$, we get a new distribution $\mu_v$. 
    % After perturbation, the mass in $\+B_r(v)$ increases by $\approx \eps$. Then the total variation distance between $\mu_{v}$ and $\mu_{u}$ will be $\approx \eps$ for those $u,v$ such that $\+B_r(u)\cap \+B_r(v)=\emptyset$. Thus, we can reduce problem of sampling with error $\epsilon$ to the problem of recognizing the perturbed regions.
    \item [1.] As the mass of perturbed area is approximately $\eps$ more than the base distribution, each $\+B_r(v_i)$ should be placed inside $\+B_R$ for $R=\+O\tp{\sqrt{\frac{M}{\eps}}}$ to ensure that the second moment of $\mu_{v_i}$ remains $\+O(M)$.
    \item [2.] The radius $r$ must be large enough to guarantee a smooth transition at the boundary, ensuring that $\mu_{v_i}$ remains $\+O(L)$-log-smooth.
\end{itemize}
% Under these conditions, we can find $n=()$ disjoint balls with centers $\ab\{v_i\}_{i\in[n]}$ and $r=\+O\tp{\sqrt{\frac{d}{L}}}$. To recognize the perturbed ball among the $n$ ones, it will take $()$ queries. 
Suppose we pick a Gaussian distribution as the base distribution $\mu_0$ as in the previous work, namely that $\mu_0 = \+N\tp{0,\frac{M}{d}\!{Id}_d}$ (so that the second moment of $\mu_0$ is $M$). Let function $h_0$ and $f_i$ be the log-density of $\+N\tp{0,\frac{M}{d}\!{Id}_d}$ and $\mu_{v_i}$ respectively.  Suppose we want to pack all balls with their centers at $R\cdot\+S_{d-1}$ (the sphere with radius $R$). Then to maintain the second moment of each $\mu_{v_i}$ to be $\+O(M)$, one should pick $R = \+O\tp{\sqrt{\frac{M}{\eps}}}$. For those points $x$ with $\norm{x}\approx R\approx \sqrt{\frac{M}{\eps}}$, we have $h_0(x)\approx \frac{d}{\eps}$. 

% If we pack these balls inside $\+B_{R}$ with $R=\+O\tp{\sqrt{\frac{M}{\eps}}}$, to obtain the desired form lower bound, we want at least $()$ balls and thus we need $r=\+O(\sqrt{\frac{d}{L}})$. 
To guarantee the mass of $\+B_r(v)$ to be $\approx \eps$, we have $\int_{\+B_r(v_i)} \tp{e^{-f_{i}(x) } - e^{-h_0(x)}} \d x \approx \eps$. This indicates that $h_0(x) -f_i(x) \approx \frac{d}{\eps}$ inside $\+B_r(v_i)$. To further guarantee the $\+O(L)$-smoothness of $f_i$, $r$ should be $\Omega\tp{\sqrt{\frac{d}{L\eps}}}$. One can pack approximately $\tp{\frac{R}{r}}^d$ (\Cref{lem:disjointcap}) many disjoint balls with radius $r$ centering at $R\cdot \+S_{d-1}$, which gives a lower bound approximately $\tp{\alpha\cdot \frac{LM}{d}}^{\frac{d}{2}}$ for some universal constant $\alpha>0$. The dependency on $\eps$ has been cancelled, and it is obviously not optimal!
% Then the difference in the potential function between the center of the ball and its outer border needs to be $\Omega\tp{\frac{d}{\eps}}$ and  the potential function in this region is only $\+O\tp{\frac{L}{\eps}}$-smooth rather than $\+O(L)$-smooth. When $\eps$ is small, this falls far short of meeting our third requirement. 

The key trade-off of the construction comes from how to efficiently perturb the mass in each $\+B_r(v)$. Recall that one requires
\[
    \int_{\+B_r(v_i)} \tp{e^{-f_{i}(x) } - e^{-h_0(x)}} \d x \approx \eps
\]
while keeping $r$ small and simultaneously $f_i - h_0$ small. We observe that, ignoring low order terms in $r$, the integral on the LHS is almost equal to $e^{-h_0(x)}e^{h_0(x)-f_i(x)}$. Imagine we are filling a mound with sand within a small circular area at an elevation of $ e^{-h_0(x)} $. We aim for the amount of sand used to satisfy $ e^{-h_0(x)} \cdot e^{h_0(x)-f_i(x)} \approx \eps $, while ensuring the mound is not too steep -- meaning the height difference \( h_0(x) - f_i(x) \) should be minimized. The most efficient approach is to raise the base elevation, i.e., maximize $e^{-h_0(x)}$, thereby reducing the required height increment.
%Moreover, the value of $f_i-h_0$ affects the value of the integral at an exponential rate while affects the value of 

Therefore, at a very high level, our construction for the base distribution $p_{\mu_0}\propto e^{-f_{\mu_0}}$ in the lower bound proof is to first modify the Gaussian by creating a plateau in a ring, and then perturb the mass on the plateau. The construction is illustrated in \Cref{fig:lb}. Of course, the plateau itself may affect the smoothness and the second moment. Nevertheless, we carefully pick the location, shape and the mass of the plateau, and are able to pack approximately $\tp{\alpha\cdot \frac{LM}{d\eps}}^{\frac{d}{2}}$ disjoint balls with radius $r=\tilde{\+O}\tp{\sqrt{\frac{d}{L}}}$, which provides the desired optimal lower bound. 

% We artificially increase the density of $\+N\tp{0,\frac{M}{d}I_d}$ (or equivalently, decreasing the value of $h_0$) on the ring $\+B_R\setminus \+B_{\frac{R}{2}}$ to create a new base distribution $\mu_0$ with density $\propto e^{-f_{\mu_0}}$. Then, we place $\+B_r(v_i)$ with $\|v_i\|=\frac{3R}{4}$ and $r=\tilde{\+O}\tp{\sqrt{\frac{d}{L}}}$ on the ring. By perturbing the values in $\+B_r(v_i)$, we get a new distribution $\mu_{v_i}$. By carefully choosing the density values on the ring, we can ensure that \Cref{assump:moment} still hold for $\mu_0$ and each $\mu_{v_i}$. Most importantly, $f_{\mu_0}(x) - f_i(x) = {\+O}\tp{d\log \frac{LM}{d\eps}} \ll h_0(x) - f_i(x)$ inside $\+B_r(v_i)$. This ensures the $\+O(L)$-smoothness inside $\+B_r(v_i)$ when $r=\+O\tp{\sqrt{\frac{d}{L}}}$. 
% the second moment of the new base distribution is still $\+O(M)$, and the difference in potential function values between the center and outer border of $\+B_r(v)$ is $\+O\tp{d\log \frac{LM}{d\eps}}$. This ensures the $\+O(L)$-smoothness inside and around $\+B_r(v)$. 

%Intuitively, the advantage of this $\mu_0$ is that,  it provides a \emph{high platform} on the ring and the transition from $h_0$ to $f_i$ is partly undertaken by the ring. Although the width of this ring is $\+O(R)$, it is part of the base instance and does not affect the packing of small balls. Therefore, this addresses the issue when using Gaussian as the base instance.

\begin{figure}[h!]
	\centering
    % \ExecuteMetaData[figures.tex]{lbfigure}
    \includegraphics[scale=0.3]{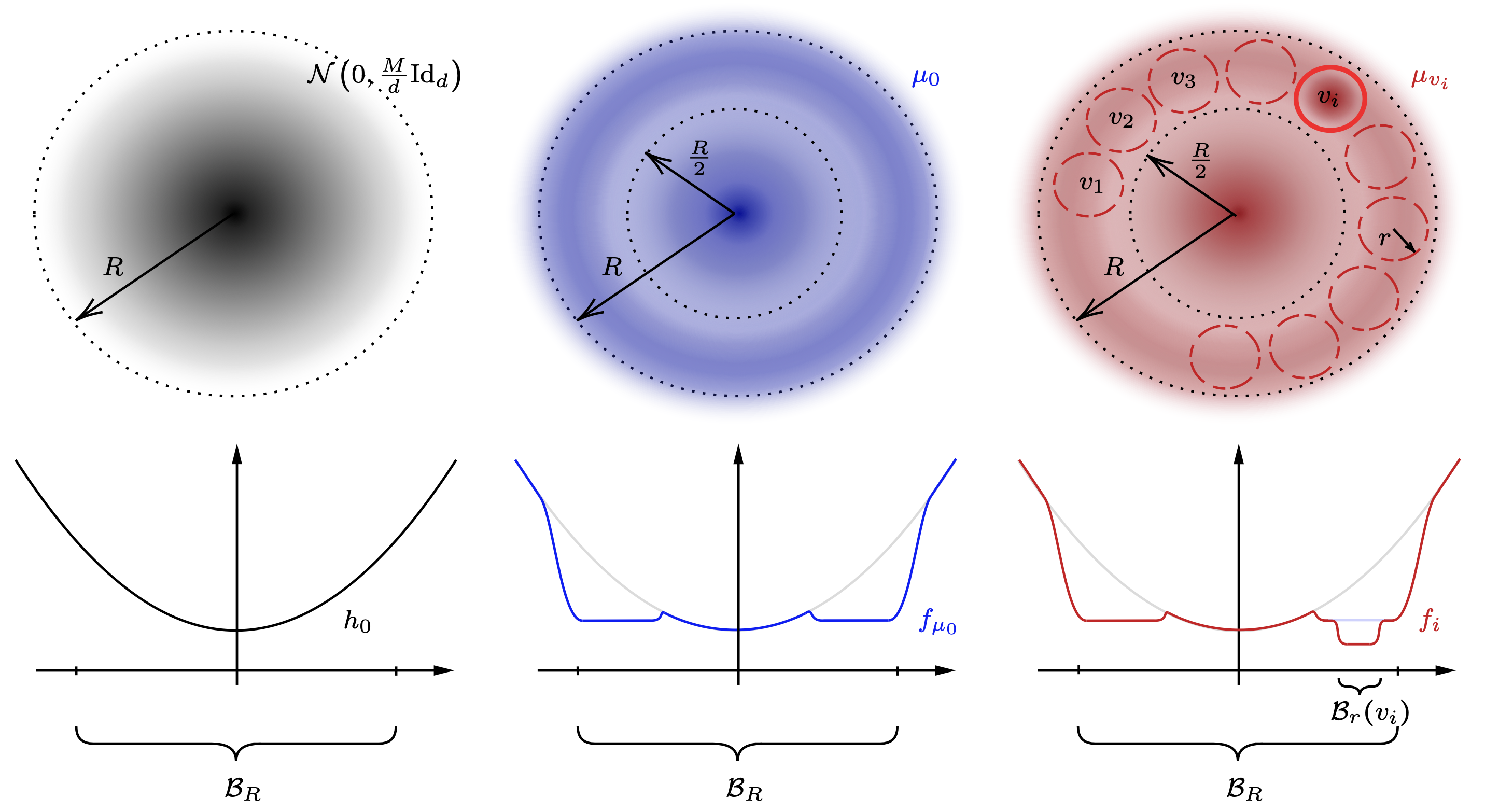}
  \caption{Construction of $f_{\mu_0}$ and $f_{\mu_v}$.\protect\footnotemark}
  \label{fig:lb}
\end{figure}
\footnotetext{In the above three figures of \Cref{fig:lb}, the deeper color represents larger density.}

\subsubsection{The upper bound}

%To sample from general non-log-concave distributions under \Cref{assump:moment}~and~\ref{assump:smooth}, the main challenge is that, the target distribution $\mu$ may have extremely small probability values in certain regions, causing the algorithm to get trapped locally and converge slowly. To mitigate this issue, we first modify the potential function $f_{\mu}$ before running a sampling algorithm. 

It is challenging to establish suitable isoperimetric inequalities directly for distributions solely satisfying \Cref{assump:moment} and~\ref{assump:smooth} due to the existence of point with extremely small density. However, we observe that: 1) the mass of the target distribution $\mu$ with density $\propto e^{-f_{\mu}}$ is concentrated in $\+B_{R}$ for a sufficiently large radius $R$; 2) the total mass of the region with extremely small density values inside $\+B_{R}$ is small.
% For those extremely small $\mu(x)$ inside $\+B_{R}$, we can truncate these extreme values of $f_{\mu}$ without affect $\mu$ much.
Based on these observations, we construct another distribution $\pi$  with density $\propto e^{-f_{\pi}}$ which is close to $\mu$ in total variation distance and is easier to sample from. Basically the distribution $\pi$ is Gaussian outside $\+B_R$ and is the truncation (remove points with extremely small density) of $\mu$ inside $\+B_R$. The construction of $f_{\pi}$ can be described in the following steps:
%\htodo{Here I use $\+B_R$ and I don't mention the $\+B_{2R}$ used in \Cref{sec:ub} for simplicity. So here the descriptions are slightly different with our actual implementation. Does this matter?} 
\begin{itemize}
    \item Step 1: Discretize $\+B_R$ into small cubes. Use the value of $f_{\mu}$ at the center of each cube to estimate the values inside the cube. Use these approximations of each cube to calculate a rough estimation of the minimum value $\wh f^* \approx f^*= \min_{x\in \+B_R} f(x)$ and the normalizing factor $\wh Z_{\mu}\approx \int_{\bb R^d} e^{-f_{\mu}(x)} \dd x$.
    \item Step 2: For $x\in \+B_R$, as shown in \Cref{fig:ub}, if $f_{\mu}(x) - \wh f^*$ exceeds some threshold $h_1-\wh f^*$, let $f_{\pi}(x)$ be the smooth truncation of $f_{\mu}(x)$. Otherwise, let $f_{\pi}(x) = f_{\mu}(x)$. By doing so, we guarantee that $f_{\pi}(x) - \wh f^*$ is always bounded by $h_2-\wh f^*$ for some value $h_2$ and extremely small density values in $\pi$ is circumvented.
    \item Step 3: For $x\not \in \+B_R$, %since the impact of this area is minimal, we can adopt a more straightforward construction. As \Cref{fig:ub} shows, 
    we define $f_{\pi}(x)$ by replacing the original density $\frac{e^{-f_{\mu}(x)}}{\wh Z_{\mu}}$ with the density of a Gaussian $\+N\tp{0, \frac{M}{\eps d}\cdot \!{Id}_d}$.
\end{itemize}

\begin{figure}[h!]
	\centering
    % \ExecuteMetaData[figures.tex]{ubfigure}
    \includegraphics[scale=0.35]{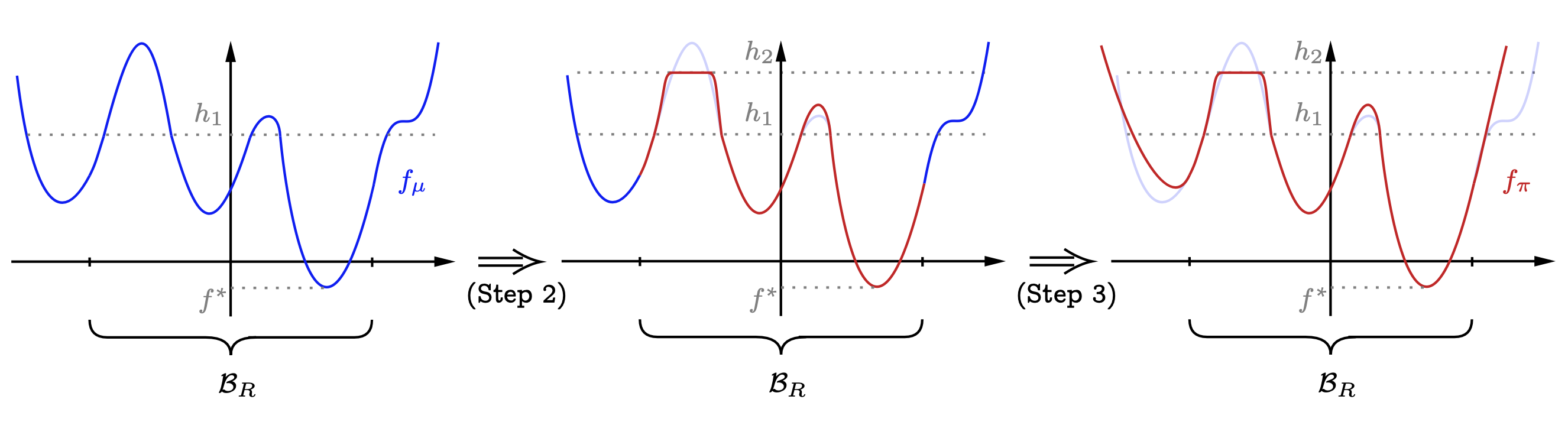}
  \caption{Construction of $f_{\pi}$.}
  \label{fig:ub}
\end{figure}

By smoothing the above construction appropriately, we can prove the smoothness of $f_{\pi}$ and prove a bound for the \Poincare constant of $\pi$. Besides, the value of $f_{\pi}$ and $\grad f_{\pi}$ can be efficiently calculated given query access to $f_{\mu}$ and $\grad f_{\mu}$. Then we can apply the averaged Langevin algorithm in \cite{BCE+22} to sample from $\pi$ and the output distribution is also close to $\mu$ in total variation distance.

Note that in Step 1, we use a grid-based approximation to find $\wh f^*$, which is an optimization task. However, the comparison between \Cref{thm:main-ub,thm:main-opt-lb} shows that our sampling algorithm has lower complexity than solving the optimization problem itself. This is because the required precision for this optimization in our task is very low (see \Cref{prop:Z-and-fmin}). We make a more thorough discussion on the theme ``sampling versus optimization'' in \Cref{sec:sampling-vs-opt}.

%Therefore, even though our sampling algorithm first requires solving an optimization problem, the sampling task is actually simpler than the optimization. 

% The idea of truncation in step 2 has also been used in a recent work~\cite{Cha24}, with different implementation and purpose. Their goal is to ensure the effectiveness of importance sampling. They truncate regions where the potential function is small, transforming the original distribution into a log-concave one to facilitate approximate sampling. In contrast, we truncate regions with large potential values (i.e., very small density), and to keep $\pi$ close enough to $\mu$, our truncation must be more sophisticated.
%\htodo{Do we need this comparison with the technique in \cite{Cha24}? If need, please check whether this is correct.}
%\ctodo{I think it is not necesssary.}

\section{Preliminaries}

\paragraph{Notations}
In this paper, all logarithms refer to the natural logarithms with base $e$. 
For two distributions $\mu$ and $\nu$ over $\bb R^d$ with density function $p_{\mu}$ and $p_{\nu}$, the total variation distance is defined as $\DTV(\mu,\nu)=\frac{1}{2}\int_{\bb R^d}\abs{p_\mu(x)-p_\nu(x)}\dd x$. The Kullback-Leibler divergence (KL divergence) is defined as $\!{KL}(\mu\|\nu)=\E[X\sim \mu]{\log \frac{p_\mu(X)}{p_\nu(X)}}$. 

Unless otherwise specified, the distributions we consider in this paper are all under \Cref{assump:moment}~and~\ref{assump:smooth}. We say a distribution $\mu$ is $L$-log-smooth if its potential function is $L$-smooth.  We always assume $LM\geq d$ because from \Cref{lem:lb-LM}, $LM$ is lower bounded by $d$ as long as $\grad f(0)=0$.

We use the notation $\+N(u,\Sigma)$ to denote a Gaussian distribution over $\bb R^d$ with mean $u\in \bb R^{d}$ and covariance matrix $\Sigma\in \bb R^{d\times d}$. We use $\!{Id}_d$ to denote the identity matrix in $\bb R^{d\times d}$ and let $\|A\|_{\!{op}}$ denote the operator norm of a matrix $A$.

% \paragraph{Balls in $d$ dimension}

For a ball centered at a point $v\in \bb R^d$ with radius $R$, we denote it as $\+B_R(v)$. When $v=0$, we abbreviate $\+B_R(v)$ as $\+B_R$.
%  Recall that $\!{vol}(\+B_R) = \frac{(\pi R^2)^{\frac{d}{2}}}{\Gamma\tp{\frac{d}{2}+1}}$.
%  We also find the following bounds for $\!{vol}(\+B_R)$ useful.

% \begin{proposition}\label{prop:dballvolbound}
% $\tp{\frac{e\pi R^2}{d}}^{\frac{d}{2}}\le\!{vol}\tp{\+B_R} \le \tp{\frac{2e\pi R^2}{d}}^{\frac{d}{2}}$.
 
% \end{proposition}

\paragraph{The mollifier}\label{sec:mollifier}

Define the function $q_{\!{mol}}\colon \bb R\to [0,1]$ as 
$
q_{\!{mol}}(z) = 
    \begin{cases}
        0, & z<0,\\
        6z^5-15z^4+10z^3, &z\in [0,1],\\
        1, & z>1.                
    \end{cases}
$
It is a mollifier between $0$ and $1$, and it is easy to see that $q_{\!{mol}}$ has the following properties.
\begin{proposition}
    The following holds for $q_{\!mol}$.
    \begin{itemize}
        \item For $z\le 0$, $q_{\!{mol}}(z)=0$; for $z\ge 1$, $q_{\!{mol}}(z)=1$; and for $z\in (0,1)$, $q_{\!{mol}}(z)\in (0,1)$.
        \item $q'_{\!{mol}}(0)=q'_{\!{mol}}(1)=0$ and for each $z\in [0,1]$, $\abs{q'_{\!{mol}}(z)}<\infty$.
        \item $q''_{\!{mol}}(0)=q''_{\!{mol}}(1)=0$ and for each $z\in [0,1]$, $\abs{q''_{\!{mol}}(z)}<\infty$.
    \end{itemize}
\end{proposition}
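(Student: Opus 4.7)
The proposition asserts three elementary facts about the fixed polynomial $q_{\!{mol}}$ on $[0,1]$ together with its constant extension outside the unit interval, and I would prove it by direct computation, organizing the work bullet by bullet.

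For the first bullet, the cases $z\le 0$ and $z\ge 1$ are immediate from the piecewise definition. Plugging in the endpoints of the polynomial piece gives $q_{\!{mol}}(0)=0$ and $q_{\!{mol}}(1)=6-15+10=1$, which agree with the two constant pieces. For $z\in(0,1)$, I would compute
\[
q'_{\!{mol}}(z)=30z^4-60z^3+30z^2=30z^2(z-1)^2,
\]
which is strictly positive on $(0,1)$ and vanishes only at the endpoints. Combined with the boundary values $0$ and $1$, strict monotonicity on $[0,1]$ then forces $q_{\!{mol}}(z)\in(0,1)$ for $z\in(0,1)$.

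The factorization $q'_{\!{mol}}(z)=30z^2(z-1)^2$ also immediately yields the second bullet: the derivative vanishes at both endpoints, and being a polynomial it is continuous and hence bounded on the compact interval $[0,1]$. For the third bullet I would differentiate once more to obtain
\[
q''_{\!{mol}}(z)=120z^3-180z^2+60z=60z(z-1)(2z-1),
\]
from which both $q''_{\!{mol}}(0)=0$ and $q''_{\!{mol}}(1)=0$ are visible, and finiteness on $[0,1]$ is again automatic from continuity of a polynomial on a compact set.

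There is essentially no obstacle here; the proof is purely a bookkeeping exercise. The only conceptual remark worth recording is that the vanishing of $q'_{\!{mol}}$ and $q''_{\!{mol}}$ at $z=0$ and $z=1$ is precisely what glues the polynomial on $[0,1]$ to the constant pieces $0$ and $1$ in a $C^2$ fashion at the two joins, since the one-sided derivatives coming from the constant pieces are zero. This $C^2$ smoothness is exactly why the specific polynomial $6z^5-15z^4+10z^3$ is chosen, and it is what makes $q_{\!{mol}}$ usable as a mollifier in the upper-bound construction of the paper.
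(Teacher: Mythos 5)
Your computations are correct and complete: the factorizations $q'_{\!{mol}}(z)=30z^2(z-1)^2$ and $q''_{\!{mol}}(z)=60z(z-1)(2z-1)$ check out, and the monotonicity argument for the first bullet is sound. The paper omits the proof entirely (stating the properties as "easy to see"), and your direct verification is exactly the routine argument being implicitly invoked, so there is nothing further to add.
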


% \paragraph{The Markov's inequality} For any non-negative random variable $X$ and any positive real number $s$, the Markov's inequality states that 
% \[
%     \Pr{X\geq s}\leq \frac{\E{X}}{s}.
% \]

% For each $q\in (1,\infty)$, the Renyi divergence between two distributions $\nu$ and $\pi$ is defined as
% \[
%     R_q(\nu\|\pi) = \frac{1}{q-1}\ln\tp{\bigg\|\frac{\dd \nu}{\dd \pi}\bigg\|^q_{L^q(\pi)}}
% \]

\paragraph{The Langevin dynamics and Ornstein-Uhlenbeck process}
The Langevin dynamics is a continuous-time process $\ab\{X_t\}_{t\geq 0}$ described in the following stochastic differential equation:
\begin{equation}
    \d X_t = - \grad f(X_t)\d t + \sqrt{2}\d B_t, \label{eq:LD}
\end{equation}
where $f:\bb R^d\to \bb R$ is a differentiable function and $\ab\{B_t\}_{t\geq 0}$ is the standard Brownian motion. The Ornstein-Uhlenbeck process (OU process) is a special case of \Cref{eq:LD} with $f(x) = \frac{\|x\|^2}{2}$. It is well known that the law of $X_t$ always converges to the standard Gaussian distribution in the OU process.

\paragraph{The \Poincare inequality}
We say a distribution $\mu$ satisfies the \Poincare inequality with a constant $C>0$ if for all $f\in C^1(\bb R^d)$ with $\E{f^2}< \infty$, it holds that
\begin{equation}\label{eqn:def-PI}
    \Var[\mu]{f} \le \frac{1}{C}\cdot \E[\mu]{\|\grad f\|^2}.
\end{equation}
We use $C_{\!{PI}}$ to denote the largest $C>0$ so that \eqref{eqn:def-PI} holds. We also call $C_{\!{PI}}$ the \Poincare constant of $\mu$. 

% \subsection{The log-Sobolev inequality}

% We say a distribution $\mu$ satisfies the log-Sobolev inequality with a constant $C>0$ if for all $f\in C^1(\bb R^d)$ with $\E{f^2}< \infty$, it holds that
% \begin{equation}\label{eqn:def-LSI}
%     \Ent{f^2} \le \frac{2}{C}\cdot \E{\norm{\grad f}^2},
% \end{equation}
% where $\Ent{g}$, the entropy of a function $g$, is defined as
% \[
%     \Ent{g} \defeq \E{g\log g}-\E{g}\E{\log g}.
% \]
% We use $C_{\!{LSI}}$ to denote the largest $C>0$ so that \eqref{eqn:def-LSI} holds. We also call $C_{\!{LSI}}$ the log-Sobolev constant of $\mu$. 

\section{The lower bound}\label{sec:lb}
In this section, we provide a sample complexity lower bound proof with regard to the error tolerance $\eps$. In fact, we prove \Cref{thm:main}, which is a formal version of \Cref{thm:main-lb}. %All distributions constructed in this section will satisfy \Cref{assump:moment} and \Cref{assump:smooth}.

% A sampling algorithm here is a procedure with query access to $f$, $\grad f$ or $\grad^2 f$, which outputs a sample point based on a series of queries. We aim to prove the following theorem in this note.
% \htodo{Here $c$ is the constant in \Cref{lem:disjointcap}.}

Recall that we use $\@D_{L,M}$ to denote the collection of distributions which are $L$-log-smooth and have second moment at most $M$. 
\begin{theorem}\label{thm:main}
    % There exist a universal constant $C>0$ such that 
    For any $L,M>0$ satisfying $LM\ge d$ and for any $\eps\in(0,1/200)$, $d\geq 5$, if a sampling algorithm $\+A$ always terminates within 
\[
    \frac{\eps (d-2)^{\frac{3}{2}}}{8}\cdot \tp{\frac{9}{256} \cdot \frac{LM}{d\eps} \cdot \frac{1}{\log \frac{LM}{d\eps}}}^{\frac{d-1}{2}}
    % \frac{\eps}{4}\cdot \tp{\frac{C\cdot LM}{2 d\eps} \cdot \frac{1}{\log \frac{LM}{d\eps}}}^{\frac{d-1}{2}}% \approx \frac{\eps}{4}\exp\set{\frac{d}{2}\cdot \Omega\tp{\log \frac{LM}{d\eps}}}
\]
queries on every input instance in $\@D_{L,M}$, then there must exist some distribution $\mu\in \@D_{L,M}$ such that when the underlying instance is $\mu$, the distribution of $\+A$'s output, denoted as $\tilde \mu$, is $\eps$ away from $\mu$ in total variation distance, i.e., $\DTV(\mu,\tilde \mu)\geq \eps$.
\end{theorem}
%\htodo{Actually we consider those distributions with second moment $O(M)$ and $O(L)$-smooth. Not exactly $M$ and $L$.}

%\mn{When $d$ is even, $\Gamma\tp{\frac{d}{2}+1} = \tp{\frac{d}{2}}!$. When $d$ is odd, $\Gamma\tp{\frac{d}{2}+1} = \frac{\sqrt{\pi}}{2^d}\cdot \frac{d!}{\tp{\frac{d-1}{2}}!}$.}

\subsection{The base instance}

We first construct a base distribution $\mu_0$. Let $R= \tp{\frac{M}{\eps}}^{\frac{1}{2}}$, and let
\[
    \mathfrak{g}_{[\frac{R}{4},\frac{R}{2}]}(x)=q_{\!{mol}}\tp{\frac{ \|x\|^2- \frac{R^2}{16}}{\frac{R^2}{4} - \frac{R^2}{16}}} \quad \mbox{and} \quad \mathfrak{g}_{[R,2R]}(x)=q_{\!{mol}}\tp{\frac{\|x\|^2-R^2}{4R^2-R^2}},
\]
% Let $\alpha\in (0,2)$ be a constant to be determined later.
With constant function $h_1 \equiv \log \tp{\!{vol}(\+B_{3 R})} + \log \frac{1}{\eps}$ and function $h_0(x)=\frac{d\|x\|^2}{2M} + \frac{d}{2}\log \frac{2\pi M}{d}$, define the function $f_0$ as
\[
    \forall x\in \bb R^d, f_0(x) = \begin{cases} h_0(x), & \|x\|\leq \frac{R}{4} \\
    \mathfrak{g}_{[\frac{R}{4},\frac{R}{2}]}(x)\cdot h_1 + \tp{1-\mathfrak{g}_{[\frac{R}{4},\frac{R}{2}]}(x)}\cdot h_0(x), & \frac{R}{4}<\|x\| \leq \frac{R}{2}\\
    h_1, & \frac{R}{2} <\|x\|\leq R\\
    \mathfrak{g}_{[R,2R]}(x)\cdot h_0(x) + \tp{1-\mathfrak{g}_{[R,2R]}(x)}\cdot h_1, & R<\|x\|\leq 2R \\
    h_0(x), & \|x\|>2R
    \end{cases}.
\]

%\mn{Here we use a constant function $h_1$ rather than using Gaussian directly. This is crucial.}
Consider the distribution $\mu_0$ with density $p_{\mu_0} \propto \exp\tp{-f_0(x)}$ and its normalizing factor $Z_0 = \int_{\bb R^d} \exp\tp{-f_0(x)} \d x$.
\begin{lemma}\label{lem:Z_0}
    The normalizing constant $1-16\eps \leq Z_0 \leq  1+\eps$. 
\end{lemma}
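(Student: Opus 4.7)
The plan is to compare $f_0$ with $h_0$ directly, exploiting the identity
\[
    Z_0 \;=\; \int_{\bb R^d} e^{-h_0(x)}\dd x \;+\; \int_{\bb R^d} \tp{e^{-f_0(x)}-e^{-h_0(x)}}\dd x \;=\; 1 + \int_{\+A} \tp{e^{-f_0(x)}-e^{-h_0(x)}}\dd x,
\]
where $\+A:=\set{x:\frac{R}{4}<\|x\|\le 2R}$ is the annulus on which $f_0$ and $h_0$ differ. The first integral is $1$ because $e^{-h_0}$ is exactly the density of $\+N\tp{0,\frac{M}{d}\!{Id}_d}$ (by inspection of the normalizing constant built into $h_0$). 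Thus the whole lemma reduces to controlling the single error integral over $\+A$ from above by $\eps$ and from below by $-16\eps$.

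For the upper bound, I would observe that in every transition region $f_0$ is a convex combination of $h_0$ and $h_1$, while on the plateau $f_0=h_1$; in both cases $f_0\ge \min(h_0,h_1)$ on $\+A$. Hence $e^{-f_0}\le \max(e^{-h_0},e^{-h_1})$, which gives $e^{-f_0}-e^{-h_0}\le e^{-h_1}$ pointwise on $\+A$. Integrating,
\[
    Z_0 - 1 \;\le\; \vol(\+B_{2R})\cdot e^{-h_1} \;=\; \frac{\vol(\+B_{2R})}{\vol(\+B_{3R})}\cdot \eps \;=\; \tp{\tfrac{2}{3}}^d\eps \;\le\; \eps,
\]
where the last inequality uses $d\ge 5$. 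This is the required upper bound.

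For the lower bound, I would simply discard the positive contribution from $\+A$ using $e^{-f_0}\ge 0$, giving $e^{-f_0}-e^{-h_0}\ge -e^{-h_0}$ on $\+A$, and reduce to a Gaussian tail bound:
\[
    Z_0 \;\ge\; 1 - \int_{\|x\|>R/4} e^{-h_0(x)}\dd x \;=\; 1 - \Pr[X\sim\+N(0,\tfrac{M}{d}\!{Id}_d)]{\|X\|>\tfrac{R}{4}}.
\]
Since $\E{\|X\|^2}=M$, Markov's inequality gives this probability at most $16M/R^2=16\eps$ (recalling $R^2=M/\eps$), which yields $Z_0\ge 1-16\eps$.

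There is no substantive obstacle here: the only conceptual step is to subtract and add $e^{-h_0}$ so that the estimate localizes to the annulus, after which the upper bound is controlled by the calibration $\vol(\+B_{3R})\cdot e^{-h_1}=\eps$ and the lower bound by the second-moment tail of the underlying Gaussian. The bounds that the construction makes available ($\min(h_0,h_1)\le f_0\le \max(h_0,h_1)$ in the transition layers) are exactly strong enough, so no finer analysis of $\mathfrak{g}_{[R/4,R/2]}$ or $\mathfrak{g}_{[R,2R]}$ is required.
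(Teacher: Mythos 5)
Your proof is correct and follows essentially the same route as the paper's: the lower bound comes from Markov's inequality applied to the Gaussian tail beyond radius $R/4$ (giving exactly the $16\eps$), and the upper bound from the calibration $\vol(\+B_{2R})\cdot e^{-h_1}=\eps\,\vol(\+B_{2R})/\vol(\+B_{3R})\le\eps$. Your decomposition into an error integral over the annulus is a slightly cleaner packaging of the same two estimates.
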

\begin{proof}
    On one hand, from Markov's inequality,
    $$
        Z_0\geq \int_{\+B_{\frac{R}{4}}} e^{-f_0(x)} \dd x = \int_{\+B_{\frac{R}{4}}} e^{-h_0(x)} \dd x = 1- \Pr[X\sim \+N\tp{0,\frac{M}{d}\cdot \!{Id}_d}]{\|X\|^2\geq \frac{R^2}{16}} \geq 1-16\eps.
    $$
    On the other hand, 
    $$
        Z_0\leq \int_{\bb R^d} e^{-h_0(x)} \dd x + \vol (\+B_{2R})\cdot e^{-h_1} \leq 1 + \eps\cdot \frac{\vol (\+B_{2R})}{\vol(\+B_{3R})}\leq 1+\eps.
    $$
    
    % Note that $h_1=\log \frac{1}{\eps} + \frac{d}{2}\log\tp{\pi \alpha^2 R^2} - \log \Gamma\tp{\frac{d}{2}+1}$.
    
    % Note that for each fixed $x\in \bb R^d$, $f_0$ is a non-decreasing function wrt $\alpha$. So $Z_0$ is decreasing as $\alpha$ increases. When $\alpha=1$,
\end{proof}

% We fixed the value of $\alpha$ to be the one in \Cref{lem:Z_0}.
\begin{lemma}\label{lem:propertymu0}
    The distribution $\mu_0$ is $\+O(L)$-log-smooth and has second moment $\+O(M)$.
\end{lemma}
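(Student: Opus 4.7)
The plan is to verify the two properties separately, each via a region-wise analysis that exploits the piecewise structure of $f_0$ and the design choice $R = \sqrt{M/\eps}$. For smoothness, I partition $\bb R^d$ into the five pieces in the definition of $f_0$. Where $f_0 = h_0$, one has $\grad^2 h_0 = \frac{d}{M}\!{Id}_d$, whose operator norm is at most $L$ because $LM \ge d$; where $f_0 = h_1$ is constant, the Hessian vanishes. The two mollified annuli are the only nontrivial regions. For the inner annulus I apply the product rule to $f_0 = \mathfrak{g} h_1 + (1-\mathfrak{g})h_0$ (with the roles of the two pieces swapped in the outer annulus), obtaining
\[
\grad^2 f_0 = (h_1-h_0)\grad^2 \mathfrak{g} - \grad \mathfrak{g}(\grad h_0)^\top - \grad h_0(\grad \mathfrak{g})^\top + (1-\mathfrak{g})\grad^2 h_0.
\]
By the chain rule and the boundedness of the first two derivatives of $q_{\!{mol}}$, one has $\|\grad \mathfrak{g}\| = O(1/R)$ and $\|\grad^2\mathfrak{g}\|_{\!{op}} = O(1/R^2)$, while $\|\grad h_0\| \le d\|x\|/M$ and $\|\grad^2 h_0\|_{\!{op}} = d/M$.

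The crux is the first term, $(h_1-h_0)\grad^2\mathfrak{g}$. Expanding $h_0, h_1$ via $\vol(\+B_{3R}) = (3R)^d \pi^{d/2}/\Gamma(d/2+1)$ and applying Stirling's estimate for $\Gamma(d/2+1)$, the logarithmic normalization terms cancel, leaving an expression of the form $\frac{d}{2}\log(9/\eps) + O(d + \log(1/\eps)) - \frac{d\|x\|^2}{2M}$. In both transition annuli the $-\frac{d\|x\|^2}{2M}$ piece dominates (it is of order $d/\eps$, while the remainder is only $O(d + d\log(1/\eps))$), so $|h_1-h_0| = O(d/\eps)$. Combined with $\|\grad^2\mathfrak{g}\|_{\!{op}} = O(\eps/M)$ this yields $\|(h_1-h_0)\grad^2\mathfrak{g}\|_{\!{op}} = O(d/M) = O(L)$. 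The cross term satisfies $\|\grad \mathfrak{g}\|\cdot\|\grad h_0\| \le O(1/R)\cdot O(dR/M) = O(d/M) = O(L)$, and the final term is dominated by $d/M \le L$. Summing up, $\|\grad^2 f_0\|_{\!{op}} = O(L)$ uniformly on $\bb R^d$.

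For the second moment I split $\int \|x\|^2 e^{-f_0(x)}\,\dd x$ along the same regions. Where $f_0 = h_0$, the two contributions together are at most the full Gaussian second moment, $M$. On the annulus $R/4 < \|x\| \le 2R$, the convex-combination structure gives $e^{-f_0} \le e^{-h_0} + e^{-h_1}$, so this region contributes at most another $M$ (from $e^{-h_0}$) plus $(2R)^2 \cdot \vol(\+B_{2R}) \cdot e^{-h_1} \le 4R^2\eps = 4M$ by the choice $e^{-h_1} = \eps/\vol(\+B_{3R})$. Dividing by $Z_0 \ge 1/2$ (which follows from \Cref{lem:Z_0} and $\eps < 1/32$) yields $\E[X\sim\mu_0]{\|X\|^2} = O(M)$. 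The main obstacle I anticipate is carrying out the cancellation in $h_1-h_0$ cleanly; once the identity above is in hand, every remaining bound falls out of the key relation $R^{-2} = \eps/M$, engineered precisely to offset the $O(d/\eps)$ scale of $h_1-h_0$ and the $O(dR/M)$ scale of $\|\grad h_0\|$.
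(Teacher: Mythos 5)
Your proposal is correct and follows essentially the same route as the paper's proof: a region-wise Hessian computation on the two mollified annuli with the Stirling-based bound $|h_1-h_0|=O(d/\eps)$ cancelling against $\|\grad^2\mathfrak{g}\|_{\!{op}}=O(\eps/M)$, and the same decomposition $e^{-f_0}\le e^{-h_0}+e^{-h_1}$ together with $Z_0\ge 1/2$ for the second moment.
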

%\htodo{$O(M)$ and $O(L)$, not exactly $M$ and $L$} 
\begin{proof}
    We first calculate the second moment of $\mu_0$. We have
    $$
        \E[\mu_0]{\|X\|^2} \leq \frac{\E[X\sim \+N\tp{0,\frac{M}{d}\cdot \!{Id}_d}]{\|X\|^2} + \vol (\+B_{2R})\cdot e^{-h_1}\cdot 4R^2}{Z_0} \leq \frac{M + \eps\cdot \frac{\vol (\+B_{2R})}{\vol (\+B_{3R})}\cdot 4R^2}{Z_0} \leq \frac{3M}{Z_0}\leq 6M,
    $$
    where the last inequality is due to \Cref{lem:Z_0} and the fact that $\eps<\frac{1}{200}$.
    % \htodo{We may need $\eps<\frac{1}{200}$.}

    For the smoothness, we only need to check $\|\grad^2 f_0(x)\|$ for those $x$ with $\|x\|\in (\frac{R}{4},\frac{R}{2}]$ and $\|x\|\in (R,2R]$ since clearly $f_0\in C^2(\bb R^d)$. %\ctodo{need $f_0\in C^2$ here.}

    First, for $\|x\|\in (\frac{R}{4},\frac{R}{2}]$,
    \[
        \grad f_0(x) = \grad \mathfrak{g}_{[\frac{R}{4},\frac{R}{2}]}(x) \cdot (h_1 - h_0(x) ) + (1-\mathfrak{g}_{[\frac{R}{4},\frac{R}{2}]}(x))\cdot \grad h_0(x), 
    \]
    and
    \begin{align*}
        \grad^2 f_0(x) &= \underbrace{\grad^2 \mathfrak{g}_{[\frac{R}{4},\frac{R}{2}]}(x) \cdot (h_1 - h_0(x))}_{\mbox{(a)}} - \underbrace{\tp{\grad \mathfrak{g}_{[\frac{R}{4},\frac{R}{2}]}(x) \cdot \grad h_0(x)^{\top} +  \grad h_0(x)\cdot \grad \mathfrak{g}_{[\frac{R}{4},\frac{R}{2}]}(x)^{\top}}}_{\mbox{(b)}}\\
        &\quad + \underbrace{ \tp{1-\mathfrak{g}_{[\frac{R}{4},\frac{R}{2}]}(x)}\cdot \grad^2 h_0(x)}_{\mbox{(c)}}.
        \end{align*}
    Recall that $LM\geq d$, so it is easy to get $0\preceq \mbox{(c)}\preceq L\cdot \!{Id}_d$. Since 
    \[
        \grad \mathfrak{g}_{[\frac{R}{4},\frac{R}{2}]}(x) = \frac{2x}{\frac{R^2}{4} - \frac{R^2}{16}} \cdot q_{\!{mol}}'\tp{\frac{ \|x\|^2- \frac{R^2}{16}}{\frac{R^2}{4} - \frac{R^2}{16}}},
    \]
    we have
    \[
        \mbox{(b)} = \frac{4d \cdot xx^{\top}}{M\tp{\frac{R^2}{4} - \frac{R^2}{16}}} \cdot q_{\!{mol}}'\tp{\frac{ \|x\|^2- \frac{R^2}{16}}{\frac{R^2}{4} - \frac{R^2}{16}}}.
    \]
    % \mn{For $x,y\in \bb R^d$, we can show that $\|x\|\|y\|\!{Id}_d - xy^T\succeq 0$: for any $z\in \bb R^d$,
    % \begin{align*}
    %     &\phantom{{}={}}z^T\tp{\|x\|\|y\|\!{Id}_d - xy^T}z \\
    %     & =\|x\|\|y\|\|z\|^2 - (z^Tx)(y^Tz)\\
    %     &\geq \|x\|\|y\|\|z\|^2 - \|x\|\|y\|\|z\|^2\\
    %     &= 0.
    % \end{align*}}
    Therefore, $-\+O(L)\cdot\!{Id}_d \preceq \mbox{(b)} \preceq \+O(L)\cdot\!{Id}_d$.
    By direct calculation,
    \[
        \grad^2 \mathfrak{g}_{[\frac{R}{4},\frac{R}{2}]}(x) = \frac{4xx^{\top}}{\tp{\frac{R^2}{4} - \frac{R^2}{16}}^2} \cdot q''_{\!{mol}}\tp{\frac{ \|x\|^2- \frac{R^2}{16}}{\frac{R^2}{4} - \frac{R^2}{16}}} + \frac{2\!{Id}_d}{\frac{R^2}{4} - \frac{R^2}{16}} \cdot q'_{\!{mol}}\tp{\frac{ \|x\|^2- \frac{R^2}{16}}{\frac{R^2}{4} - \frac{R^2}{16}}}
    \]
    and 
    \begin{align}
        \abs{h_1 - h_0(x)} &= \abs{\log \frac{1}{\eps} + \frac{d}{2}\log\tp{9\pi R^2} - \log \Gamma\tp{\frac{d}{2}+1} - \frac{d\|x\|^2}{2M} - \frac{d}{2}\log \frac{2\pi M}{d}} \notag\\
        &= \abs{\log \frac{1}{\eps} + \frac{d}{2}\log\frac{9d}{2\eps} - \log \Gamma\tp{\frac{d}{2}+1} - \frac{d\|x\|^2}{2M}} \label{eq:1}
        % &\leq \log \frac{1}{\eps} + \frac{d}{2}\log\tp{9\pi R^2} - \log \Gamma\tp{\frac{d}{2}+1} + \frac{R^2 d}{8M} + \frac{d}{2}\log \frac{2\pi M}{d}.
    \end{align}
    % \htodo{Here we need $R^2=\frac{M}{\eps}\log \frac{M}{\eps}$?}
    From Stirling's formula, we know that for any $d>0$,
    \[
        \log\sqrt{\pi d} + \frac{d}{2}\log \frac{d}{2e} \leq \log \Gamma\tp{\frac{d}{2}+1}\leq \log\sqrt{\pi d} + \frac{d}{2}\log \frac{d}{2e} + 1.
    \]
    % \[
    %     \log \Gamma\tp{\frac{d}{2}+1} \leq \begin{cases}
    %         \log\sqrt{\pi d} + \frac{d}{2}\log \frac{d}{2e} + 1, & d \mbox{ is even}\\
    %         \log \sqrt{\frac{2\pi d}{d-1}} + \frac{d+1}{2}\log \frac{d}{2e} + 1, & d \mbox{ is odd}
    %     \end{cases},
    % \]
    % and 
    % \[
    %     \log \Gamma\tp{\frac{d}{2}+1} \geq \begin{cases}
    %         \log\sqrt{\pi d} + \frac{d}{2}\log \frac{d}{2e} , & d \mbox{ is even}\\
    %         \log \sqrt{\frac{2\pi d}{d-1}} + \frac{d+1}{2}\log \frac{d}{2e} , & d \mbox{ is odd}
    %     \end{cases}.
    % \]
    Back to \Cref{eq:1}, we have
    \begin{align*}
        \abs{h_1 - h_0(x)}&\leq \abs{\log \frac{1}{\eps} + \frac{d}{2}\log\frac{9e}{\eps} - \frac{d\|x\|^2}{2M} -\log\sqrt{\pi d}} + 1 \\
        &\leq \log \frac{1}{\eps} + \frac{d}{2}\log\frac{9e}{\eps} + \frac{dR^2}{8M} + \log\sqrt{\pi d} +1.
    \end{align*}
    Since $LM\geq d$, we have $-\+O(L)\cdot\!{Id}_d \preceq \mbox{(a)}\preceq \+O(L)\cdot\!{Id}_d$.
    
    For $\|x\|\in (R,2R]$, 
    \[
        \grad^2 f_0(x) = \grad^2 \mathfrak{g}_{[R,2R]}(x)(h_0(x)-h_1) + 2\grad \mathfrak{g}_{[R,2R]}(x) \grad h_0(x)^{\top} + \mathfrak{g}_{[R,2R]}(x) \cdot \grad^2 h_0(x).
    \]
    The remaining calculations are similar. %\ctodo{Maybe say more here.}
\end{proof}

\subsection{Perturb the base instance}\label{sec:hardinstance}
We then construct instances via perturbing $\mu_0$. Let $r_1= \sqrt{\frac{d}{L}\log \frac{LM}{d\eps}}$, $r_2=\sqrt{2}r_1$. Let $h_2\defeq h_1 - \gamma$, where $\gamma$ is a value to be determined later.
% \htodo{Assume the value of $\eps,L,M,d$ satisfy $4r_2\leq R$ and $r_1=\Omega(1)$.}

Note that when $\eps<1/200$, we have $4r_2\leq R$. For a point $v\in \bb R^d$ with $\|v\|=\frac{3R}{4}$, let $\mathfrak{g}_v(x) = q_{\!{mol}}\tp{\frac{\|x-v\|^2-r_1^2}{r_2^2-r_1^2}}$ and $f_v(x)=\mathfrak{g}_v(x)f_0(x)+(1-\mathfrak{g}_v(x))h_2$. This means that, outside the ball $\+B_{r_2}(v)$, $f_v\equiv f_0$, and inside the ball $\+B_{r_1}(v)$, $f_v\equiv h_2$. Define density of the distribution $\mu_v$ over $\bb R^d$ as $p_{\mu_v}\propto e^{-f_v}$. Let $Z_v = \int_{\bb R^d} e^{-f_v(x)} \dd x$.

\begin{lemma}\label{lem:gamma}
    There exists a $\gamma>0$ such that the following holds at the same time:
    \begin{itemize}
        \item  $\int_{\+B_{r_2}(v)} \tp{e^{-f_v(x)} - e^{-h_1}}  \dd x = 9\eps$;
        %\item $9 \tp{\frac{3R}{r_2}}^d \leq e^{\gamma}\leq 18\tp{\frac{3R}{r_1}}^d $;
        % \item $\eps \leq \DTV(\mu_0,\mu_v)\leq 10\eps$;
        \item $\frac{9\eps e^{h_1}}{\!{vol}(\+B_{r_2})}\le e^\gamma-1 \le \frac{9\eps e^{h_1}}{\!{vol}(\+B_{r_1})}$.
        \item $Z_0\leq Z_v\leq 1+10\eps$.
    \end{itemize}
    % There exists a $\gamma$ with $\eps\abs{h_1-\gamma} \leq ()$, such that $\eps \leq \DTV(\mu_0,\mu_v)\leq 10\eps$ for each $\|v\| = \frac{3R}{4}$. This $\gamma$ also satisfies that $Z_0\leq Z_v\leq 1+10\eps$
\end{lemma}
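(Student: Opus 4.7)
The plan is to exploit the fact that on $\+B_{r_2}(v)$, the base potential $f_0$ is identically $h_1$. This is because $\|v\| = \frac{3R}{4}$ together with $4r_2 \le R$ (noted immediately before the lemma) forces $\+B_{r_2}(v)\subseteq \{x:\frac{R}{2}<\|x\|\le R\}$, where $f_0\equiv h_1$. Hence on this ball $f_v(x)=h_1-(1-\mathfrak{g}_v(x))\gamma$ and
\[
    e^{-f_v(x)} - e^{-h_1} = e^{-h_1}\left(e^{(1-\mathfrak{g}_v(x))\gamma}-1\right).
\]
All three conclusions of the lemma will be read off from this single identity.

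To establish the first bullet, I would define
\[
    F(\gamma) \defeq \int_{\+B_{r_2}(v)}\left(e^{-f_v(x)}-e^{-h_1}\right)\dd x
\]
and study it as a function of $\gamma$. Since $1-\mathfrak{g}_v$ is non-negative and strictly positive on a set of positive Lebesgue measure, $F$ is continuous and strictly increasing in $\gamma$, with $F(0)=0$ and $F(\gamma)\to\infty$ as $\gamma\to\infty$ (the inner ball alone contributes $e^{-h_1}(e^\gamma-1)\!{vol}(\+B_{r_1})$, which blows up). The intermediate value theorem then produces a unique $\gamma>0$ with $F(\gamma)=9\eps$.

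For the second bullet I would use a pointwise sandwich on the integrand. Since $\mathfrak{g}_v\equiv 0$ on $\+B_{r_1}(v)$ and $\mathfrak{g}_v\in[0,1]$ on the annulus $\+B_{r_2}(v)\setminus \+B_{r_1}(v)$, the quantity $e^{(1-\mathfrak{g}_v(x))\gamma}-1$ equals $e^\gamma-1$ on the inner ball, lies in $[0,e^\gamma-1]$ on the annulus, and is zero outside. Integrating and substituting $F(\gamma)=9\eps$ immediately yields
\[
    \frac{9\eps e^{h_1}}{\!{vol}(\+B_{r_2})} \le e^\gamma-1 \le \frac{9\eps e^{h_1}}{\!{vol}(\+B_{r_1})}.
\]

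Finally, since $f_v\equiv f_0$ outside $\+B_{r_2}(v)$, I can write $Z_v-Z_0=\int_{\+B_{r_2}(v)}(e^{-f_v(x)}-e^{-f_0(x)})\dd x$, which equals $F(\gamma)=9\eps$ because $f_0\equiv h_1$ on this ball. Thus $Z_v=Z_0+9\eps\ge Z_0$, and combining with $Z_0\le 1+\eps$ from \Cref{lem:Z_0} gives $Z_v\le 1+10\eps$. The only non-automatic ingredient is the containment $\+B_{r_2}(v)\subseteq\{x:\frac{R}{2}<\|x\|\le R\}$; once that is in hand, the rest of the argument reduces to an IVT existence statement together with a pointwise bound on a non-negative integrand, so I do not anticipate any real obstacle.
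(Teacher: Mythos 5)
Your proposal is correct and follows essentially the same route as the paper: an intermediate-value argument in $\gamma$ for the existence of the perturbation, a pointwise sandwich of $e^{(1-\mathfrak{g}_v)\gamma}-1$ between its values on $\+B_{r_1}(v)$ and $\+B_{r_2}(v)$ for the bounds on $e^\gamma-1$, and $Z_v=Z_0+9\eps$ combined with \Cref{lem:Z_0} for the last bullet. The only addition is that you spell out the containment $\+B_{r_2}(v)\subseteq\{\frac{R}{2}\le\|x\|\le R\}$ (where $f_0\equiv h_1$), which the paper leaves implicit; this is a worthwhile clarification but not a different argument.
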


Before proving the lemma, let us examine the information it brings. Recall that $f_0(x)\equiv h_1$ when $x\in \+B_{r_2}(v)$ and we would like to perturb $h_1$ by amount of $\gamma$ to obtain $f_v$ so that there will be $\Theta(\eps)$ more probability mass in $\+B_{r_2}(v)$. For fixed $r_1$ and $r_2$, the lemma says that the order of $\gamma$ is roughly proportional to $h_1$. 

\begin{proof}[Proof of \Cref{lem:gamma}]
    Consider the value $\int_{\+B_{r_2}(v)} \tp{e^{-f_v(x)} - e^{-h_1}}\dd x$. It is continuous and increasing in $\gamma$ when $\gamma\ge 0$. When $\gamma = 0$, $\int_{\+B_{r_2}(v)} \tp{e^{-f_v(x)} - e^{-h_1}}\dd x = 0$. When $\gamma \to \infty $, this value goes to $\infty$. So we can find a $\gamma$ such that $\int_{\+B_{r_2}(v)} \tp{e^{-f_v(x)} - e^{-h_1}}\dd x = 9\eps$ holds exactly.

    For such a $\gamma$, we have
    \[
        Z_v = \int_{\bb R^d} e^{-f_v(x)} \dd x \leq \int_{\bb R^d} e^{-f_0(x)} \dd x + \int_{\+B_{r_2}(v)} \tp{ e^{-f_v(x)} - e^{-h_1}}\dd x = Z_0+9\eps \leq 1+10\eps.
    \]
    Also
    $$
        Z_v = \int_{\bb R^d} e^{-f_v(x)} \dd x \geq \int_{\bb R^d} e^{-f_0(x)} \dd x = Z_0.
    $$
    
    % Then we caculate $\DTV(\mu_0,\mu_v)$. On one hand, since $\+B(v,r_1)\subseteq \+B(v,r_2)\subseteq \set{x\in \bb R^d:\ \|x\| \in (\frac{R}{2},R]}$,
    % \begin{align*}
    %     \DTV(\mu_0,\mu_v) &= \frac{1}{2}\int_{\bb R^d} \abs{\frac{e^{-f_0(x)}}{Z_0} - \frac{e^{-f_v(x)}}{Z_v}} \dd x \\ 
    %     &\geq \frac{1}{2}\int_{\+B(v,r_2)} \abs{\frac{e^{-f_0(x)}}{Z_0} - \frac{e^{-f_v(x)}}{Z_v}} \dd x \\
    %     &=  \frac{1}{2}\int_{\+B(v,r_2)}  \abs{\frac{e^{-f_0(x)}}{Z_v} - \frac{e^{-f_v(x)}}{Z_v}} - \abs{\frac{e^{-f_0(x)}}{Z_0} -  \frac{e^{-f_0(x)}}{Z_v}}  \dd x  \\
    %     &= \frac{1}{2}\tp{\frac{9\eps}{Z_v} - e^{-h_1}\cdot \abs{\frac{1}{Z_0} - \frac{1}{Z_v}} \cdot \vol\tp{\+B(v,r_2)}} \\
    %     &\geq \eps.
    % \end{align*}

    % On the other hand, \begin{align*}
    %     \DTV(\mu_0,\mu_v) &= \frac{1}{2}\int_{\bb R^d} \abs{\frac{e^{-f_0(x)}}{Z_0} - \frac{e^{-f_v(x)}}{Z_v}} \dd x \\ 
    %     &\leq \frac{1}{2}\int_{\bb R^d }  \abs{\frac{e^{-f_0(x)}}{Z_v} - \frac{e^{-f_v(x)}}{Z_v}} + \abs{\frac{e^{-f_0(x)}}{Z_0} -  \frac{e^{-f_0(x)}}{Z_v}}  \dd x \\
    %     &\leq \frac{1}{2}\tp{\frac{9\eps}{Z_v} + Z_0\cdot \abs{\frac{1}{Z_0} - \frac{1}{Z_v}}}\\
    %     &\leq 10\eps.
    % \end{align*}
    
    It remains to calculate $e^\gamma$. We have that
    \[
        \vol\tp{\+B_{r_1}(v)} \cdot e^{-h_1} \tp{e^\gamma - 1}\leq \int_{\+B_{r_2}(v)} \tp{e^{-f_v(x)} - e^{-h_1}}  \dd x =9\eps \leq \vol\tp{\+B_{r_2}(v)} \cdot e^{-h_1} \tp{e^\gamma - 1}.
    \]
\end{proof}

\begin{corollary}\label{cor:gamma-bound}
    For our choice of $h_1$, $r_1$ and $r_2$, it holds that
    \[
        9\tp{\frac{3R}{r_2}}^d \leq e^{\gamma}\leq 18\tp{\frac{3R}{r_1}}^d.
    \]
\end{corollary}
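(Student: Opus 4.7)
The plan is to derive \Cref{cor:gamma-bound} as a direct algebraic consequence of \Cref{lem:gamma} by substituting in the explicit values of $h_1$, $r_1$, $r_2$, and noting that the bracketing in \Cref{lem:gamma} already has the right shape. Recall that $h_1 = \log \vol(\+B_{3R}) + \log\frac{1}{\eps}$, so $e^{h_1} = \vol(\+B_{3R})/\eps$ and the factor $9\eps e^{h_1}$ appearing in \Cref{lem:gamma} equals $9\vol(\+B_{3R})$. Since ball volumes scale as the $d$-th power of the radius, $\vol(\+B_{3R})/\vol(\+B_{r_i}) = (3R/r_i)^d$ for $i=1,2$. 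Plugging this in yields
\[
    9\!\left(\frac{3R}{r_2}\right)^d \;\le\; e^\gamma - 1 \;\le\; 9\!\left(\frac{3R}{r_1}\right)^d.
\]

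To pass from $e^\gamma-1$ to $e^\gamma$, the lower bound is immediate from $e^\gamma \ge e^\gamma - 1$. For the upper bound I would observe that it suffices to prove $1 + 9(3R/r_1)^d \le 18(3R/r_1)^d$, which reduces to showing $(3R/r_1)^d \ge 1/9$. Substituting $R = \sqrt{M/\eps}$ and $r_1 = \sqrt{(d/L)\log(LM/(d\eps))}$ gives
\[
    \left(\frac{3R}{r_1}\right)^2 \;=\; \frac{9\,LM/(d\eps)}{\log\bigl(LM/(d\eps)\bigr)}.
\]
The standing hypotheses $LM\ge d$ and $\eps<1/32$ force $LM/(d\eps) \ge 32$, so the ratio $x/\log x$ at $x\ge 32$ is already greater than $1$; hence $(3R/r_1)^2 \ge 9$ and $(3R/r_1)^d \ge 3^d \ge 1 \ge 1/9$ for all $d\ge 1$, which is more than enough.

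I do not expect any real obstacle here: the corollary is essentially bookkeeping on top of \Cref{lem:gamma}, and the only inequality requiring the assumptions on $L,M,\eps$ is the easy lower bound $(3R/r_1)^d \ge 1/9$. The only thing worth double-checking is that the definition of $h_1$ actually produces $e^{h_1} = \vol(\+B_{3R})/\eps$ (so that the $\eps$'s in $9\eps e^{h_1}$ cancel), which follows immediately from the constant-function definition of $h_1$ given right before $f_0$.
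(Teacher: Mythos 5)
Your proposal is correct and follows essentially the same route as the paper: apply the bracketing of $e^\gamma-1$ from \Cref{lem:gamma}, cancel the $\eps$'s using $e^{h_1}=\eps^{-1}\vol(\+B_{3R})$, and use the $d$-th-power scaling of ball volumes. You are in fact slightly more careful than the paper, which asserts $e^\gamma\le 18(3R/r_1)^d$ without explicitly checking $(3R/r_1)^d\ge 1/9$; your verification of that step is valid.
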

\begin{proof}
    Recall that $e^{h_1}=\eps^{-1}\vol\tp{\+B_{3R}}$ and $\vol\tp{\+B_r} = \frac{\tp{\pi R^2}^{\frac{d}{2}}}{\Gamma\tp{\frac{d}{2}+1}}$. We have
    \[
        e^\gamma-1 \leq  9\tp{\frac{3R}{r_1}}^d \mbox{ and } e^\gamma-1 \geq 9 \tp{\frac{3R}{r_2}}^d.
    \]
    Therefore
    \[
        18\tp{\frac{3R}{r_1}}^d \geq e^\gamma\geq 9 \tp{\frac{3R}{r_2}}^d.
    \]
\end{proof}

\subsection{Properties of the perturbed distributions}
For every $v$ with $\|v\|= \frac{3R}{4}$, we first analyze the smoothness and second moment of the distribution $\mu_v$.

\begin{lemma}\label{lem:moment}
    For $\|v\|=\frac{3R}{4}$, $\E[\mu_v]{\|X\|^2}= \+O\tp{M}$.
\end{lemma}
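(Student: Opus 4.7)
The strategy is to split the integral $\int_{\bb R^d}\|x\|^2 e^{-f_v(x)}\dd x$ between the perturbation region $\+B_{r_2}(v)$ and its complement, so that the bound on the base second moment handles the bulk while a crude $\|x\|^2\leq R^2$ estimate handles the perturbation. The key point is that although the perturbation inflates the density pointwise by the huge factor $e^\gamma$ of \Cref{cor:gamma-bound}, the small radius $r_2$ keeps the extra mass only $\Theta(\eps)$, which exactly cancels the $R^2=M/\eps$ contribution from the moment.

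Outside $\+B_{r_2}(v)$ the function $f_v$ coincides with $f_0$ by construction, so the exterior contribution is at most $\int_{\bb R^d}\|x\|^2 e^{-f_0(x)}\dd x = Z_0\cdot \E[\mu_0]{\|X\|^2} = \+O(M)$, using the second-moment bound from \Cref{lem:propertymu0} together with the upper bound $Z_0\leq 1+\eps$ from \Cref{lem:Z_0}. For the interior contribution, I use that $\|v\|=\tfrac{3R}{4}$ and $4r_2\leq R$ (the inequality already invoked before \Cref{lem:gamma}), so every $x\in\+B_{r_2}(v)$ satisfies $\|x\|\leq R$, hence $\|x\|^2\leq R^2 = M/\eps$. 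At the same time, the first bullet of \Cref{lem:gamma} yields
\[
\int_{\+B_{r_2}(v)} e^{-f_v(x)}\dd x \;=\; \vol(\+B_{r_2})\cdot e^{-h_1} \;+\; 9\eps \;\leq\; \eps + 9\eps \;=\; 10\eps,
\]
where I used $e^{-h_1}=\eps/\vol(\+B_{3R})$ together with $r_2\leq 3R$. Multiplying the two estimates gives an interior contribution of at most $(M/\eps)\cdot 10\eps = \+O(M)$.

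Adding the two contributions and dividing by $Z_v \geq Z_0 \geq 1-16\eps > \tfrac{1}{2}$ (using $\eps<\tfrac{1}{32}$) yields $\E[\mu_v]{\|X\|^2}=\+O(M)$, as claimed. There is no substantial obstacle; the only thing to track carefully is the cancellation between the $\eps^{-1}$ blow-up in $R^2$ and the $\Theta(\eps)$ perturbation mass that the calibration of $\gamma$ in \Cref{lem:gamma} was designed to produce.
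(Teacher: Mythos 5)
Your proof is correct and follows essentially the same strategy as the paper's: bound the bulk contribution by the second moment of the base instance $\mu_0$ and bound the perturbation contribution by $R^2=M/\eps$ times the $\Theta(\eps)$ extra mass guaranteed by \Cref{lem:gamma}, then divide by $Z_v\geq 1/2$. The only cosmetic difference is that you split the domain at $\partial\+B_{r_2}(v)$ while the paper splits the density into Gaussian, plateau, and perturbation-excess terms; the estimates are the same.
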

\begin{proof}
    Direct calculation gives
    \begin{align*}
        \E[\mu_v]{\|X\|^2} & \leq \frac{\E[X\sim \+N\tp{0,\frac{M}{d}\cdot \!{Id}_d}]{\|X\|^2} + \vol (B_{2R})\cdot e^{-h_1}\cdot 4R^2 + \int_{\+B_{r_2}(v)} \tp{e^{-f_v(x)} - e^{-h_1}} \|X\|^2 \dd x}{Z_v} \\
        &\leq \frac{3M + R^2\cdot 9\eps}{Z_v} \leq 24 M.
    \end{align*}
    % \htodo{Here only $\tilde O(M)$. It seems that we cannot guarantee the $O(L)$-smooth of $f_0$ and $O(M)$ second moment of $\mu_v$. Or equivalently, we can choose $M=\frac{M_0}{\log \frac{M_0}{\eps}}$. We require $LM_0\geq d\log\frac{M_0}{\eps}$.}
\end{proof}

\begin{lemma}\label{smooth}
  For $\|v\|=\frac{3R}{4}$, the function $f_v$ is $\+O(L)$-smooth. 
\end{lemma}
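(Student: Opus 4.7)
The plan is to split $\bb R^d$ into the exterior and the interior of $\+B_{r_2}(v)$ and handle each piece separately. On the exterior, $\mathfrak{g}_v\equiv 1$ so $f_v\equiv f_0$, and the desired $\+O(L)$-smoothness is already supplied by \Cref{lem:propertymu0}. The substantive region is $\+B_{r_2}(v)$.

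The structural observation that trivializes the interior is: since $\|v\|=\tfrac{3R}{4}$ and $4r_2\le R$, every $x\in\+B_{r_2}(v)$ satisfies $\tfrac{R}{2}\le\|x\|\le R$, which is precisely the shell on which $f_0\equiv h_1$ is \emph{constant}. Therefore, on $\+B_{r_2}(v)$,
\[
    f_v(x)=\mathfrak{g}_v(x)\,h_1+(1-\mathfrak{g}_v(x))\,h_2 = h_2+\gamma\cdot \mathfrak{g}_v(x),
\]
so $\grad^2 f_v(x)=\gamma\cdot\grad^2\mathfrak{g}_v(x)$, and the problem reduces to proving $\gamma\cdot\|\grad^2\mathfrak{g}_v\|_{\!{op}}=\+O(L)$.

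For the Hessian of $\mathfrak{g}_v$, I would differentiate the composition analogously to the computation of $\grad^2\mathfrak{g}_{[\frac{R}{4},\frac{R}{2}]}$ in \Cref{lem:propertymu0}, obtaining, with $z=(\|x-v\|^2-r_1^2)/(r_2^2-r_1^2)$,
\[
    \grad^2\mathfrak{g}_v(x)=\frac{4(x-v)(x-v)^\top}{(r_2^2-r_1^2)^2}\,q''_{\!{mol}}(z)+\frac{2\!{Id}_d}{r_2^2-r_1^2}\,q'_{\!{mol}}(z).
\]
Using $r_2^2-r_1^2=r_1^2$, $\|x-v\|\le r_2=\sqrt{2}\,r_1$ and boundedness of $q'_{\!{mol}},q''_{\!{mol}}$, this yields $\|\grad^2\mathfrak{g}_v\|_{\!{op}}=\+O(1/r_1^2)=\+O\tp{\frac{L}{d\log(LM/(d\eps))}}$. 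For $\gamma$, \Cref{cor:gamma-bound} gives $e^\gamma\le 18(3R/r_1)^d$, so $\gamma\le\log 18+d\log 3+\tfrac{d}{2}\log(R^2/r_1^2)$. Substituting $R^2/r_1^2=(LM/(d\eps))/\log(LM/(d\eps))\le LM/(d\eps)$ yields $\gamma=\+O(d\log(LM/(d\eps)))$. Multiplying the two bounds gives $\gamma\cdot\|\grad^2\mathfrak{g}_v\|_{\!{op}}=\+O(L)$, as required.

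I do not expect a real obstacle: the radii were calibrated so that the logarithm in $r_1^{-2}$ cancels the logarithmic growth of $\gamma$ coming from $R/r_1$, which is exactly the point of the ``plateau + perturbation'' construction sketched in the technical overview. The only minor point to check is that $f_v$ is $C^2$ across the spheres $\|x-v\|=r_1$ and $\|x-v\|=r_2$, which follows from $q'_{\!{mol}}(0)=q'_{\!{mol}}(1)=q''_{\!{mol}}(0)=q''_{\!{mol}}(1)=0$, so the interior bound glues to the exterior bound from \Cref{lem:propertymu0} without any jump.
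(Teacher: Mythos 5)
Your proposal is correct and follows essentially the same route as the paper: reduce to the annulus $\+B_{r_2}(v)\setminus\+B_{r_1}(v)$ where $f_0\equiv h_1$, so $\grad^2 f_v=\gamma\,\grad^2\mathfrak{g}_v$, then combine the bound $\|\grad^2\mathfrak{g}_v\|_{\!{op}}=\+O(1/r_1^2)$ with the bound on $\gamma$ from \Cref{cor:gamma-bound} so that the logarithms cancel. Your write-up is in fact slightly more careful than the paper's, since you explicitly verify that $\+B_{r_2}(v)$ lies in the shell $\frac{R}{2}\le\|x\|\le R$ and that the pieces glue in $C^2$ fashion via the vanishing of $q'_{\!{mol}}$ and $q''_{\!{mol}}$ at the endpoints.
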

\begin{proof}
    We only need to consider those $x\in \+B_{r_2}(v)\setminus \+B_{r_1}(v)$. For such $x$, $f_0(x)=h_1$. Therefore,
    \[
        \grad^2 f_v(x) = \gamma\cdot \grad^2 g_v(x).
    \]
    Note that
    \[
        \grad g_v(x) = \frac{2(x-v)}{r_2^2-r_1^2}\cdot q_{\!{mol}}'\tp{\frac{\|x-v\|^2-r_1^2}{r_2^2-r_1^2}}
    \]
    and
    \[
        \grad^2 g_v(x) = \frac{4(x-v)(x-v)^{\top}}{\tp{r_2^2-r_1^2}^2}\cdot q''_{\!{mol}}\tp{y=\frac{\|x-v\|^2-r_1^2}{r_2^2-r_1^2}} + \frac{2\!{Id}_d}{r_2^2-r_1^2}\cdot q'_{\!{mol}}\tp{\frac{\|x-v\|^2-r_1^2}{r_2^2-r_1^2}}.
    \]
    From \Cref{cor:gamma-bound}, 
    \[
        \gamma \leq \log 18 + \frac{d}{2}\log\frac{9LM}{\eps d} - \frac{d}{2}\log\log \frac{LM}{d\eps}.
    \]
    Therefore, $-\+O(L)\cdot\!{Id}_d \preceq \grad^2 f_v(x) \preceq \+O(L)\cdot\!{Id}_d $.
\end{proof}

We remark that the constants hidden in the $\+O(\cdot)$ in the above two lemmas are universal constants and do not depend on $d$ and $\eps$.

\begin{lemma}\label{lem:TV}
    For $u,v\in \bb R^d$ such that $\|v\|=\|u\|=\frac{3R}{4}$ and $\+B_{r_2}(u)\cap \+B_{r_2}(v)=\emptyset$, $\DTV\tp{\mu_u,\mu_v}> 4\eps$.
\end{lemma}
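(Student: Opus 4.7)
The plan is to directly unpack the definition of total variation and exploit two simple structural facts. First, since $\|v\|=\|u\|=\tfrac{3R}{4}$ and $r_2\le R/4$ (which holds because $\eps<1/32$, as noted in the construction), both balls $\+B_{r_2}(v)$ and $\+B_{r_2}(u)$ lie inside the plateau $\{R/2\le \|x\|\le R\}$ on which $f_0\equiv h_1$. Second, the normalizing constants coincide: $Z_v=Z_u$. This follows from a translation change of variable, because the perturbation defining $\mu_v$ from $\mu_0$ depends only on $\|x-v\|$ and the restriction $f_0|_{\+B_{r_2}(v)}$ is the same constant $h_1$ for both choices of center. Let $Z$ denote this common value, so $Z\in[Z_0,1+10\eps]$ by \Cref{lem:gamma}.

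Next I would split $\bb R^d$ into three regions: $\+B_{r_2}(v)$, $\+B_{r_2}(u)$, and the complement $W=\bb R^d\setminus(\+B_{r_2}(v)\cup \+B_{r_2}(u))$. On $W$ the potentials satisfy $f_v=f_u=f_0$ and the normalizing constants agree, so $p_{\mu_v}\equiv p_{\mu_u}$ and $W$ contributes nothing. On $\+B_{r_2}(v)$, disjointness of the perturbation supports gives $f_u(x)=f_0(x)=h_1$, while by construction $f_v(x)\le h_1$, so
\[
    |p_{\mu_v}(x)-p_{\mu_u}(x)|=\frac{e^{-f_v(x)}-e^{-h_1}}{Z}.
\]
Integrating over $\+B_{r_2}(v)$ and applying the first bullet of \Cref{lem:gamma} shows this region contributes $9\eps/Z$ to $\int|p_{\mu_v}-p_{\mu_u}|$, and $\+B_{r_2}(u)$ contributes the same by symmetry. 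Therefore
\[
    \DTV(\mu_u,\mu_v)=\frac{1}{2}\cdot 2\cdot\frac{9\eps}{Z}=\frac{9\eps}{Z}\ge \frac{9\eps}{1+10\eps}>4\eps,
\]
where the last inequality reduces to $5>40\eps$ and uses $\eps<1/32$.

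There is no genuine obstacle: the claim is bookkeeping once one notices the two structural facts above. The only subtlety is the equality $Z_v=Z_u$, which eliminates any cross-term on $W$ that would otherwise arise from comparing densities with different normalizing constants; without this observation one would have to track $|Z_v^{-1}-Z_u^{-1}|$ explicitly, which is wasteful but also feasible given the bounds of \Cref{lem:gamma}.
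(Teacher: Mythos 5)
Your proof is correct and follows essentially the same route as the paper: both use $Z_u=Z_v$, decompose the integral over the two disjoint balls (the complement contributing nothing), apply the $9\eps$ mass bound from \Cref{lem:gamma} to each ball, and conclude via $Z_v\le 1+10\eps$. You supply slightly more justification (e.g., for $Z_u=Z_v$ and the balls lying in the plateau), but the argument is the same.
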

\begin{proof}
    By the definition of total variation distance, 
    \begin{align*}
        \DTV(\mu_u,\mu_v) & = \frac{1}{2}\int_{\bb R^d} \abs{\frac{e^{-f_u(x)}}{Z_u} - \frac{e^{-f_v(x)}}{Z_v}} \dd x \\
        \mr{$Z_u=Z_v$} & =\frac{1}{2Z_v} \tp{\int_{\+B_{r_2}(u)}\abs{e^{-f_u(x)} - e^{-h_1}} \dd x + \int_{\+B_{r_2}(v)}\abs{e^{-f_v(x) }- e^{-h_1}} \dd x} \\
        &= \frac{9\eps}{Z_v} > 4\eps.
    \end{align*}
\end{proof}

\subsection{The number of disjoint $\+B_{r_2}(v)$'s}

\begin{lemma}\label{lem:disjointcap}
    Suppose $d\geq 5$. There exist $n=\frac{(d-1)\sqrt{d-2}}{2}\cdot \tp{\frac{3R}{8\sqrt{2}r_2}}^{d-1}$ vectors $v_1,v_2,\dots,v_n \in \bb R^d$ such that
    \begin{itemize}
        \item for each $i\in[n]$, $\|v_i\| = \frac{3R}{4}$;
        \item for each $i,j\in [n]$, if $i\ne j$, then $\+B_{r_2}(v_i)\cap \+B_{r_2}(v_j)=\emptyset$.
    \end{itemize}
\end{lemma}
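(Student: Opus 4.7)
The plan is to reduce this to a classical sphere-packing problem on the sphere of radius $3R/4$ and then apply a maximal-packing/covering argument with explicit area estimates. First, I observe that $\+B_{r_2}(v_i)\cap \+B_{r_2}(v_j)=\emptyset$ is equivalent to $\|v_i-v_j\|>2r_2$. Writing $v_i=(3R/4)u_i$ with $u_i$ a unit vector, the disjointness condition becomes $\|u_i-u_j\|>\delta$ where $\delta := 8r_2/(3R)$; note that the assumption $4r_2\le R$ (stated earlier in \Cref{sec:hardinstance}) gives $\delta\le 2/3$, well below $2$.

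Next I would let $\{u_1,\dots,u_n\}$ be any \emph{maximal} subset of the unit sphere $\+S_{d-1}$ whose pairwise Euclidean distances exceed $\delta$. Maximality means that for any $u\in\+S_{d-1}$ there exists some $u_i$ with $\|u-u_i\|\le \delta$, so the closed spherical caps $C_i := \{u\in\+S_{d-1}: \|u-u_i\|\le\delta\}$ cover $\+S_{d-1}$. This immediately yields $n \cdot \mathrm{Area}(C_1) \ge \omega_{d-1}$, where $\omega_{d-1}=2\pi^{d/2}/\Gamma(d/2)$ is the surface area of the unit sphere.

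To bound $\mathrm{Area}(C_1)$, I would pass to the angular radius: setting $\theta=2\arcsin(\delta/2)$, one has
\[
    \mathrm{Area}(C_1) = \omega_{d-2}\int_0^\theta \sin^{d-2}(\phi)\,\dd \phi \le \frac{\omega_{d-2}\,\theta^{d-1}}{d-1}
\]
using $\sin\phi\le\phi$, with $\omega_{d-2}=2\pi^{(d-1)/2}/\Gamma((d-1)/2)$. The concavity bound $\sin(\theta/2)\ge (2/\pi)(\theta/2)$ then upper-bounds $\theta$ by an explicit constant times $\delta$. Combining with the gamma-ratio identity $\omega_{d-1}/\omega_{d-2}=\sqrt{\pi}\,\Gamma((d-1)/2)/\Gamma(d/2)$, controlled from below by Gautschi's inequality applied with $x=(d-2)/2$, yields a lower bound of the form $n \ge c(d)\cdot (3R/r_2)^{d-1}$; plugging in $\delta=8r_2/(3R)$ then produces the target expression.

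The main obstacle is not conceptual but book-keeping: I must carefully track the constants arising from (i) the chord-to-angular-radius conversion $\theta = 2\arcsin(\delta/2)$, (ii) the upper bound on $\int_0^\theta \sin^{d-2}(\phi)\,\dd\phi$, and (iii) the gamma-function ratio $\omega_{d-1}/\omega_{d-2}$, and tune the estimates so that the final lower bound lands at exactly $\frac{(d-1)\sqrt{d-2}}{2}\bigl(\frac{3R}{8\sqrt{2}\,r_2}\bigr)^{d-1}$. The hypothesis $d\ge 5$ is exactly what makes the $\sqrt{d-2}$ factor well-defined and the gamma-function estimates clean.
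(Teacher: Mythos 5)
Your argument is the same packing-via-covering scheme the paper uses: a maximal separated set on the sphere forces the corresponding caps to cover it, the cap area is bounded through $\int_0^\theta\sin^{d-2}(\phi)\,\dd\phi$, and Gautschi's inequality controls the Gamma ratio; the paper merely phrases the maximality step as an explicit contradiction (via an angle-composition lemma and a chord-versus-ball lemma) rather than via a maximal $\delta$-separated set. One concrete warning on the book-keeping you defer: the pair of estimates you name, $\sin\phi\le\phi$ followed by $\theta=2\arcsin(\delta/2)\le\frac{\pi}{2}\delta$, loses a factor of $(\pi/2)^{d-1}$ and lands you at a base of $\frac{3R}{4\pi r_2}$, which is strictly smaller than the target $\frac{3R}{8\sqrt{2}r_2}$ since $4\pi>8\sqrt 2$. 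The fix is to use the paper's sharper integral bound $\int_0^\theta\sin^{d-2}(\phi)\,\dd\phi\le\frac{(\sin\theta)^{d-1}}{(d-1)\cos\theta}$ and keep $\sin\theta=\delta\sqrt{1-\delta^2/4}\le\delta$ directly; your setup actually has room to spare here, because your covering caps (chordal radius $8r_2/(3R)$) are a factor $\sqrt{2}$ tighter than the paper's effective ones, so the corrected computation comfortably reaches the stated $n$.
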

\begin{proof}
    Let $S$ be the sphere $\set{x\in \bb R^d: \|x\| = \frac{3R}{4}}$. For two vectors $x,y\in \bb R^d$, let $\theta(x,y)$ represent the angle between $x$ and $y$. We first try to find $n$ disjoint caps $C_1,C_2,\dots,C_n$ on $S$. Denoting $v_i$ as the central vector of cap $C_i$, $C_i=\set{x\in \bb R^d: \|x\|=\frac{3R}{4}, \cos(\theta(x,v_i))\geq \ell}$ with $\ell=\frac{\sqrt{(\tp{\frac{3R}{4}}^2-2r_2^2)}}{\frac{3R}{4}}$. 
    
    In contrast, suppose we can only find $n'<\frac{(d-2)^{\frac{3}{2}}}{2}\cdot \tp{\frac{3R}{8\sqrt{2}r_2}}^{d-1}$ such disjoint caps $\set{C_i}_{1\leq i\leq n'}$ with central vectors $\set{v_i}_{1\leq i\leq n'}$. Then for any $w\in S$, there exist $i\in[n']$ and $x\in S$ such that $\cos(\theta(x,v_i))\geq \ell$ and $\cos(\theta(x,w))\geq \ell$. Otherwise we can find a new cap with center $w$.
    
    Via \Cref{lem:cos}, we know that $\cos(\theta(w,v_i))\geq \ell^2 - \tp{1-\ell^2}  \ell'$ with $\ell'=\frac{\tp{\frac{3R}{4}}^2-4r_2^2}{\tp{\frac{3R}{4}}^2}$. 
    This means we can find $n'$ larger caps with central vectors $\set{v_i}_{1\leq i\leq n'}$ and angle $\arccos(\ell')$ to cover the sphere.
    From \cite{L11}, however, the area of a cap with angle $\theta = \arccos(\ell')$ is $\frac{\Gamma\tp{\frac{d-1}{2}}}{\sqrt{\pi}\Gamma\tp{\frac{d}{2}}} \int_{0}^{\theta} \sin^{d-2}(\phi)\d \phi$ times of the total sphere. We have
    % \ctodo{Find the constant $c$.}
    \[
         \int_{0}^{\theta} \sin^{d-2}(\phi)\d \phi \leq \frac{1}{\ell'} \int_{0}^{\theta} \sin^{d-2}(\phi)\cos (\phi)\d \phi = \frac{1}{\ell'} \int_0^{\sin(\theta)} s^{d-2} \d s = \frac{1}{\ell'} \frac{\tp{\sin \theta}^{d-1}}{d-1}.
    \]
    Since $\sin \theta = \sqrt{1-\tp{\ell'}^2}\leq \frac{2\sqrt{2}r_2}{\frac{3R}{4}}$, the ratio between the area of this larger cap and the sphere can be bounded by
    \begin{align*}
        \frac{\Gamma\tp{\frac{d-1}{2}}}{\sqrt{\pi}\Gamma\tp{\frac{d}{2}}} \int_{0}^{\theta} \sin^{d-2}(\phi)\d \phi & \leq \frac{1}{\ell' (d-1)}\cdot \frac{\Gamma\tp{\frac{d-1}{2}}}{\sqrt{\pi}\Gamma\tp{\frac{d}{2}}}\cdot \tp{\frac{2\sqrt{2}r_2}{\frac{3R}{4}}}^{d-1} \\
        \mr{$R\geq 4r_2$}
        &\leq \frac{9}{5\sqrt{\pi}(d-1)} \cdot \frac{\Gamma\tp{\frac{d-1}{2}}}{\Gamma\tp{\frac{d}{2}}} \cdot \tp{\frac{8\sqrt{2}r_2}{3R}}^{d-1} \\
        \mr{Gautschi's inequality}
        &\leq \frac{9\sqrt{2}}{5\sqrt{\pi}}\cdot \frac{1}{(d-1)\sqrt{d-2}}\cdot \tp{\frac{8\sqrt{2}r_2}{3R}}^{d-1}\\
        &\leq \frac{2}{(d-1)\sqrt{d-2}} \tp{\frac{8\sqrt{2}r_2}{3R}}^{d-1}.
    \end{align*}
    
    % So there exists some universal constant $c'$ such that the ratio between the area of this larger cap and the sphere is no larger than $\tp{\frac{c' r_2}{R}}^{d-1}$. 
    % By choosing $c= \frac{1}{c'}$, 
    This will lead to a conflict since $n'$ such caps cannot cover the sphere. Therefore, we can find $n= \frac{(d-1)\sqrt{d-2}}{2}\cdot \tp{\frac{3R}{8\sqrt{2}r_2}}^{d-1}$ such $C_i$'s.
    
    Furthermore, from \Cref{lem:cosinBall}, $\+B_{r_2}(v_i)\cap \+B_{r_2}(v_j)=\emptyset$.
\end{proof}

\subsection{Proof of the lower bound}
\begin{theorem}[\Cref{thm:main} restated]
    % There exist a universal constant $C>0$ such that 
    For any $L,M>0$ satisfying $LM\ge d$ and for any $\eps\in(0,1/200)$, $d\geq 5$, if a sampling algorithm $\+A$ always terminates within 
    \[
        \frac{\eps (d-2)^{\frac{3}{2}}}{8}\cdot \tp{\frac{9}{256} \cdot \frac{LM}{d\eps} \cdot \frac{1}{\log \frac{LM}{d\eps}}}^{\frac{d-1}{2}}
        % \frac{\eps}{4}\cdot \tp{\frac{C\cdot LM}{2 d\eps} \cdot \frac{1}{\log \frac{LM}{d\eps}}}^{\frac{d-1}{2}}
        % % \approx \frac{\eps}{4}\exp\set{\frac{d}{2}\cdot \Omega\tp{\log \frac{LM}{d\eps}}}
    \]
    queries on every input instance in $\@D_{L,M}$, then there must exist some distribution $\mu\in \@D_{L,M}$ such that when the underlying instance is $\mu$, the distribution of $\+A$'s output, denoted as $\tilde \mu$, is $\eps$ away from $\mu$ in total variation distance, i.e., $\DTV(\mu,\tilde \mu)\geq \eps$.
\end{theorem}
\begin{proof}
    Let $v_1,v_2,\dots,v_n$ be the $n$ vectors in \Cref{lem:disjointcap}. For each $i\in [n]$, construct a distribution $\mu_{v_i}$ with density $p_i \propto e^{-f_{v_i}}$ as described in \Cref{sec:hardinstance}. From the discussion in previous sections, we can assume that every $\mu_{v_i}$ as well as the base instance are $L$-log-smooth and have second moment at most $M$. For simplicity, we write $\mu_{v_i}$ as $\mu_i$.

    We use $\Pr[\mu]{\cdot}$ and $\E[\mu]{\cdot}$ to denote the probability and expectation when the underlying instance is some distribution $\mu$. Let $\+E_{k,i}$ be the event that the algorithm $\+A$ queries a value in zone $\+B_{r_2}(v_i)$ in the $k$-th query and it is the first time that $\+A$ queries the points in $\+B_{r_2}(v_i)$. If the algorithm terminates before the $k$-th query, we regard $\+E_{k,i}$ as an impossible event.

    Assume $\+A$ always terminates in $N$ queries for some 
    $$
        N< \frac{\eps (d-2)^{\frac{3}{2}}}{8}\cdot \tp{\frac{9}{256}\cdot \frac{LM}{d\eps} \cdot \frac{1}{\log \frac{LM}{d\eps}}}^{\frac{d-1}{2}} < \frac{\eps}{4}\cdot \frac{(d-1)\sqrt{d-2}}{2}\cdot \tp{\frac{3R}{8\sqrt{2}r_2}}^{d-1} = \frac{\eps n}{4}.
    $$ 
    Let $\+E_{i}$ be the event that $\+A$ queries the points in $\+B_{r_2}(v_i)$ at least once. Then
    \begin{align*}
        \sum_{i=1}^n \Pr[\mu_0]{\+E_i}& = \sum_{i=1}^n \sum_{k=1}^N \Pr[\mu_0]{\+E_{k,i}} = \sum_{k=1}^N \sum_{i=1}^n \Pr[\mu_0]{\+E_{k,i}} \leq N<\frac{\eps n}{4}.
    \end{align*}
    So there exists some $i_0,j_0\in[n]$ and $i_0\neq j_0$ such that $\Pr[\mu_0]{\+E_{i_0}}<\frac{\eps}{3}$ and $\Pr[\mu_0]{\+E_{j_0}}<\frac{\eps}{3}$. Otherwise $\sum_{i=1}^n \Pr[\mu_0]{\+E_i}\geq \frac{\eps(n-1)}{3}\geq \frac{\eps n}{4}$ since $n\geq 4$ when $d\geq 5$. From union bound, $\Pr[\mu_0]{\ol{\+E_{i_0}}\cap \ol{\+E_{j_0}}}> 1-\frac{2\eps}{3}$. 
    
    We know that on $\bb R^d\setminus \+B_{r_2}(v_i)$, $f_{v_i}(x)=f_0(x)$ for each $i\in[n]$. Therefore, via coupling arguments,
    \[
        \Pr[\mu_{i_0}]{\ol{\+E_{i_0}}\cap \ol{\+E_{j_0}}} = \Pr[\mu_{j_0}]{\ol{\+E_{i_0}}\cap \ol{\+E_{j_0}}} = \Pr[\mu_0]{\ol{\+E_{i_0}}\cap \ol{\+E_{j_0}}}> 1-\frac{2\eps}{3}.
    \]

    Let $\+E=\+E_{i_0}\cup \+E_{j_0}$. Let $\tilde \mu_0^{\ol{\+E}}$ be the output distribution of $\+A$ with input distribution $\mu_0$ when $\ol{\+E}$ happens. 
    Since
    \begin{align*}
        4\eps\leq \DTV(\mu_{i_0},\mu_{j_0}) \leq \DTV(\mu_{i_0},\tilde \mu_0^{\ol{\+E}}) + \DTV(\mu_{j_0},\tilde \mu_0^{\ol{\+E}}),
    \end{align*}
    we have either $\DTV(\mu_{i_0},\tilde \mu_0^{\ol{\+E}})>2\eps$ or $\DTV(\mu_{j_0},\tilde \mu_0^{\ol{\+E}})>2\eps$. W.l.o.g., assume $\DTV(\mu_{i_0},\tilde \mu_0^{\ol{\+E}})>2\eps$. Assume the output distribution of $\+A$ is $\tilde \mu_{i_0}$ when the input is $\mu_{i_0}$. Let $\tilde \mu_{i_0}^{\+E}$ and $\tilde \mu_{i_0}^{\ol{\+E}}$ be $\tilde \mu_{i_0}$ conditioned on $\+E$ and $\ol{\+E}$ respectively and denote their density functions as $\tilde p_{i_0}^{\+E}$, $\tilde p_{i_0}^{\ol{\+E}}$ and $\tilde p_{i_0}$. Then $\tilde \mu_{i_0}^{\ol{\+E}}=\tilde \mu_0^{\ol{\+E}}$. 
    Then we have
    \begin{align*}
        \DTV(\mu_{i_0},\tilde \mu_{i_0}) &= \frac{1}{2}\int_{\bb R^d} \abs{p_{i_0}(x) - \tilde p_{i_0} (x)} \dd x\\
        &=\frac{1}{2}\int_{\bb R^d} \abs{p_{i_0}(x) - \Pr[\mu_{i_0}]{\+E}\tilde p_{i_0}^{\+E}(x) - \Pr[\mu_{i_0}]{\ol{\+E}}\tilde p_{i_0}^{\ol{\+E}}(x)} \dd x \\
        &\geq \Pr[\mu_{i_0}]{\ol{\+E}}\cdot \frac{1}{2} \int_{\bb R^d}\abs{p_{i_0}(x) - \tilde p_{i_0}^{\ol{\+E}}(x)} \dd x  - \Pr[\mu_{i_0}]{\+E}\cdot \frac{1}{2} \int_{\bb R^d}\abs{p_{i_0}(x) - \tilde p_{i_0}^{\+E}(x)} \dd x \\
        & = \Pr[\mu_{i_0}]{\ol{\+E}}\cdot \DTV(\mu_{i_0}, \tilde \mu_{i_0}^{\ol{\+E}}) - \Pr[\mu_{i_0}]{\+E}\cdot \DTV(\mu_{i_0}, \tilde \mu_{i_0}^{\+E}) \\
        & = \Pr[\mu_{i_0}]{\ol{\+E}}\cdot \DTV(\mu_{i_0}, \tilde \mu^{\ol{\+E}}_0) - \Pr[\mu_{i_0}]{\+E}\cdot \DTV(\mu_{i_0}, \tilde \mu_{i_0}^{\+E}) \\
        &\geq \tp{1-\frac{2\eps}{3}}\cdot 2\eps - \frac{2\eps}{3}\cdot 1\\
        &>\eps.
    \end{align*}
    This means, on instance $\mu_{i_0}$, the algorithm $\+A$ will fail to output a distribution which is $\eps$-close to $\mu_{i_0}$ in total variation distance.
\end{proof}

\section{The upper bound}\label{sec:ub}

In this section we prove \Cref{thm:main-ub}. We construct a distribution $\pi$  that satisfies $\DTV(\mu,\pi)\le \frac{\eps}{2}$, has smoothness close to that of $\mu$, is of bounded moment, and whose \Poincare constant is at least $\approx \tp{\frac{LM}{d\eps}}^{-O(d)}$. Then we call known Langevin-based algorithm to sample from $\pi$, whose sample complexity is directly related to the \Poincare constant. Our strategy for the construction of $\pi$ is as follows.

\begin{itemize}
    \item First, observe the following comparison result. Let $p_1(x)$ and $p_2(x)$ be the densities of two distributions supported on $\bb R^n$. If $1/C\le \frac{p_1(x)}{p_2(x)}\le C$ for every $x\in \bb R^n$, then the ratio of their \Poincare constants is at least $C^{-\+O(1)}$. Therefore, we only need to construct a distribution $\pi$ with appropriate smoothness, whose density is pointwise close to a suitable Gaussian. The range of ratios in the densities that we can tolerate is of the order $\tp{\frac{LM}{d\eps}}^{\+O(d)}$.
    \item Clearly the density $p_\mu(x)\propto e^{-f_\mu(x)}$ of $\mu$ does not satisfy our requirement due to the possible existence of certain regions with extremely small probability. The value of $f_\mu$ may be very large in these regions. On the other hand, the measure of these regions under $\mu$ is small, so we can \emph{truncate} $f_\mu$ appropriately to ensure its value is well upper bounded without affecting the measure $\mu$ much. We then use the truncated function to define the distribution $\pi$.
    \item In order to truncate $f_\mu$ appropriately, we need to estimate its minimum value $f^*\defeq \min_{x\in\bb R^d} f_\mu(x)$ and the partition function $Z_\mu \defeq \int_{\bb R^n} \exp\tp{-f_\mu(x)} \d x$ within a certain accuracy. To this end, we divide a compact set containing most mass of $\mu$ into cubes and approximate $\mu$ in each cube respectively using queries to $f_\mu$.
\end{itemize}
%\htodo{We only use queries to $f_\mu$, no $\grad f_{\mu}$ in the estimation?}

We will give the construction of $\pi$ in \Cref{sec:construction-of-pi} and prove its properties in \Cref{sec:properties-of-pi}. Then we show how to estimate the key parameters in our construction in \Cref{sec:estimate-of-pi}. Finally, we combine everything and prove \Cref{thm:main-ub} in \Cref{sec:proof-of-ub}.

\subsection{The construction of $\pi$}\label{sec:construction-of-pi}

The purpose of this section is to construct a distribution $\pi$ whose density function is close to that of $\gamma\sim\+N\tp{0,\frac{M}{\eps d}\!{Id}_d}$ \emph{pointwise} and $\DTV(\mu,\pi)\le \frac{\eps}{2}$. We assume the density of $\pi$ is \emph{proportional to} $\exp\tp{-f_\pi(x)}$. Let $f_\gamma\colon \bb R^d\to\bb R$ be the function $x\mapsto \frac{\eps d\norm{x}^2}{2M}+\frac{d}{2}\log\frac{2\pi M}{d\eps}$. Then the density of $\gamma$ is \emph{equal to} $\exp\tp{-f_\gamma(x)}$.

We use $Z_\pi \defeq \int_{\bb R^d} \exp\tp{-f_\pi(x)}\d x$ and $Z_\mu\defeq \int_{\bb R^d} \exp\tp{-f_\mu(x)} \d x$ to denote the two normalizing factors. Then the density of $\pi$ and $\mu$ are
\[
    p_\pi(x) = \exp\tp{-f_\pi(x)} / Z_\pi \mbox{ and } p_{\mu}(x) = \exp\tp{-f_{\mu}(x)} / Z_\mu
\]
respectively.

Note that the second moment of a random variable $X$ with law $\mu$ is at most $M$. By Markov's inequality, for $R=\sqrt{\frac{32M}{\eps}}$,
\begin{equation}\label{eqn:markov-mu}
    \Pr{\norm{X}^2>R^2} \le \frac{\E{\norm{X}^2}}{R^2}\le \frac{M}{R^2} = \frac{\eps}{32},
\end{equation}
meaning outside a ball of radius $\Theta(R)$, the mass of $\mu$ is $O(\eps)$. We let $f_\pi(x) = f_\gamma(x) - \log\eps$ for $x\in \bb R^d \setminus \+B_{2R}$. We will then construct a function $f_\pi^{\le 2R}$ with support $\+B_{2R}$ and define
\[
    f_\pi(x) = (1-\mathfrak{g}_{[R,2R]}(x)) \cdot f_\pi^{\le 2R}(x) + \mathfrak{g}_{[R,2R]}(x) \cdot \tp{f_\gamma(x)-\log\eps},
\]
where $\mathfrak{g}_{[R,2R]}\defeq q_{\!{mol}}\tp{\frac{\norm{x}^2-R^2}{(2R)^2-R^2}}$ is the smooth function interpolating $f_\pi^{\le 2R}$ and $f_\gamma - \log\eps$ in the region $\norm{x}\in [R,2R]$. 

%Let $Z_\pi\defeq \int_{\bb R^n}$ The construction of $f_\pi^{\le 2R}$ should meet the 
As discussed before, for $x\in \+B_R$, ideally $p_{\pi}(x) = \exp\tp{-f_{\pi}^{\le 2R}(x)} / Z_\pi$ should be close to $p_{\mu}(x) = \exp\tp{-f_\mu(x)} / Z_\mu$ with those points of extremely small probability smoothly truncated. As a result, we first assume that we can find an approximation of $Z_\mu$, denoted as $\wh Z_\mu$. In general calculating a good approximation for $Z_\mu$ is computationally equivalent to sampling from $p_\mu$. However, as our target is a bound for the \Poincare constant of order exponential in $d$, our requirement for the accuracy of the approximation is very loose. We also assume an approximation $\wh f^*$ of the minimum $f^*\defeq \inf_{x\in \+B_{2R}} f_\mu(x)$. In fact, we will prove the following proposition in \Cref{sec:estimate-of-pi}.

\begin{proposition} \label{prop:Z-and-fmin}
    Within $\+O\tp{\frac{LM}{\eps d}}^d$ queries to $f_\mu(x)$, one can find
    \begin{itemize}
        \item a number $\wh Z_\mu$ satisfying $\frac12 e^{-d}\le \frac{\wh Z_\mu}{Z_\mu}\le 1$, and
        \item a number $\wh f^*$ satisfying $f^*\le \wh f^*\le f^*+d$.
    \end{itemize}    
\end{proposition}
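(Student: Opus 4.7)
The plan is to discretize the compact ball $\+B_{2R}$ into a grid of cubes of side length $\delta$, query $f_\mu$ at each cube's center, and assemble $\wh f^*$ and $\wh Z_\mu$ from these local evaluations. The value of $\delta$ is chosen so that $f_\mu$ oscillates by at most $O(d)$ on any single cube of the grid. Since $\grad f_\mu(0)=0$ and $\grad f_\mu$ is $L$-Lipschitz, $\|\grad f_\mu(x)\|\le 2LR$ throughout $\+B_{2R}$; combining with the quadratic remainder from $L$-smoothness yields a uniform oscillation bound $\Delta\defeq LR\sqrt{d}\,\delta+Ld\delta^2/8$ valid on every cube. With $R=\sqrt{32M/\eps}$, picking $\delta$ on the order of $\sqrt{d}/(LR)$ forces $\Delta\le d/4$ while keeping the total cube count $(4R/\delta)^d$ within the declared query budget of order $\+O\tp{\frac{LM}{\eps d}}^d$.

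For $\wh f^*$, I would set $\wh f^*\defeq\min_c f_\mu(c)$, where the minimum is taken over all grid centers $c\in\+B_{2R}$. The inequality $\wh f^*\ge f^*$ is immediate from the definition of $f^*$. For the other direction, pick $x_\star\in\+B_{2R}$ with $f_\mu(x_\star)\le f^*+1$, let $c_\star$ be the center of the cube containing $x_\star$, and note that $\|c_\star-x_\star\|\le\sqrt{d}\,\delta/2$. The oscillation bound then yields $f_\mu(c_\star)-f_\mu(x_\star)\le \Delta\le d/4$, so $\wh f^*\le f_\mu(c_\star)\le f^*+1+d/4\le f^*+d$.

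For $\wh Z_\mu$, I would set $\wh Z_\mu\defeq\sum_c \delta^d\exp\tp{-f_\mu(c)-\Delta}$, the Riemann sum over grid centers using a uniform pointwise under-approximation of $e^{-f_\mu}$ on each cube. This immediately yields $\wh Z_\mu\le \int_{\+B_{2R}}e^{-f_\mu(x)}\dd x\le Z_\mu$. Conversely, the analogous uniform over-approximation gives $\int_{\+B_{2R}}e^{-f_\mu(x)}\dd x\le e^{2\Delta}\wh Z_\mu$, while \Cref{eqn:markov-mu} (applied to the unnormalized mass) implies $Z_\mu\le\tfrac{1}{1-\eps/32}\int_{\+B_{2R}}e^{-f_\mu(x)}\dd x\le 2\int_{\+B_{2R}}e^{-f_\mu(x)}\dd x$. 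Chaining these inequalities yields $\wh Z_\mu\ge\tfrac{1}{2}e^{-2\Delta}Z_\mu\ge\tfrac{1}{2}e^{-d/2}Z_\mu\ge\tfrac{1}{2}e^{-d}Z_\mu$, which is the required bound.

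The main obstacle is calibrating the single parameter $\delta$ so that the per-cube oscillation budget $\Delta$ stays below $O(d)$ for both estimates while the total cube count remains within the declared query budget; the only handle on the gradient magnitude is the a priori bound $\|\grad f_\mu(x)\|\le 2LR$ inherited from $\grad f_\mu(0)=0$ and $L$-smoothness, which is the dominant contribution to $\Delta$. The crucial observation that makes the analysis go through is that both tasks only demand accuracy of order $O(d)$ (rather than $O(1)$), so a relatively coarse grid suffices and the resulting cube count remains within the claimed query complexity.
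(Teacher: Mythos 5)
Your construction is essentially the paper's own: a grid of side length $\Theta\tp{\sqrt{d\eps/(L^2M)}}$, a per-cube oscillation bound of order $d$ obtained from $\norm{\grad f_\mu}\le 2LR$ on $\+B_{2R}$ together with $L$-smoothness, the minimum over queried centers for $\wh f^*$, and a shifted Riemann sum for $\wh Z_\mu$ combined with Markov's inequality to control the mass outside the ball. Both estimates are logically sound and the error budgets work out as you state.

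One step as written does not deliver the claimed query count. With $\delta=\Theta(\sqrt d/(LR))$ one has $4R/\delta=\Theta\tp{LR^2/\sqrt d}=\Theta\tp{\sqrt d\cdot \frac{LM}{\eps d}}$, so $(4R/\delta)^d$ --- the number of cubes in the bounding box $[-2R,2R]^d$ --- equals $d^{d/2}\cdot \+O\tp{\frac{LM}{\eps d}}^d$, which overshoots the stated budget by a super-exponential factor (and $d^{d/2}$ is not absorbed by $\tp{LM/(d\eps)}^{\+O(d)}$ when, say, $LM=\Theta(d)$ and $\eps$ is constant). The fix is to count only the cubes that meet $\+B_{2R}$: their number is at most $\!{vol}\tp{\+B_{2R+\sqrt d\delta}}/\delta^d$, and since $\!{vol}(\+B_\rho)\le \tp{2e\pi\rho^2/d}^{d/2}$ this is $\tp{\+O(LR^2/d)}^d=\+O\tp{\frac{LM}{\eps d}}^d$; this is exactly the computation in \Cref{lem:cubes}, and it is consistent with your algorithm, which only queries centers near the ball anyway. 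A second, smaller point: restricting the minimum defining $\wh f^*$ to centers lying inside $\+B_{2R}$ makes $\wh f^*\ge f^*$ immediate, but the cube containing a near-minimizer $x_\star$ close to the boundary of $\+B_{2R}$ may have its center outside $\+B_{2R}$, so $c_\star$ need not participate in your minimum. You should either minimize over the centers of all cubes meeting $\+B_{2R}$ and absorb the resulting one-sided error into an additive shift (the paper adds $d/2$ for this reason), or pass to a neighboring in-ball center at the cost of a constant factor in $\Delta$.
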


Then we turn to truncate the small value of $\exp(-f_\mu(x))$ or equivalently the large value of $f_\mu(x)$ in $\+B_{2R}$. Define two constants
\[
    h_1 \defeq \wh f^* +\log\!{vol}(\+B_{2R}) + \frac{d}{2}\log L+\log\frac{4}{\eps},\; h_2 \defeq h_1+\frac{d}{2}\log\frac{LM}{d\eps}.
\]
We remark that $h_1$ is our threshold for the truncation. The term $\frac{d}{2}\log L$ term is used to guarantee that $\log\!{vol}(\+B_{2R}) + \frac{d}{2}\log L$ is nonnegative and therefore $h_1\ge f^*$. In order to keep the truncated function smooth, we define a \emph{soft threshold} $h_2$ above $h_1$. 

Define the interpolation function $\mathfrak{g}_{[h_1,h2]}(x) \defeq q_{\!{mol}}\tp{\frac{h_2-f_{\mu}(x)}{h_2-h_1}}$. We define
\[
    \ol{f_{\pi}^{\le 2R}}(x)\defeq \mathfrak{g}_{[h_1,h_2]}(x)\cdot f_\mu(x) + (1-\mathfrak{g}_{[h_1,h_2]}(x))\cdot h_2
\]
In other words, for those $x$ with $f_\mu(x)\le h_1$, $\ol{f_{\pi}^{\le 2R}}(x) = f_\mu(x)$; for those $x$ with $f_\mu(x)\ge h_2$, $\ol{f_{\pi}^{\le 2R}}(x) = h_2$; for those $x$ with $f_\mu(x)\in [h_1,h_2]$, the value of $\ol{f_{\pi}^{\le 2R}(x)}$ is smoothly interpolated between $h_1$ and $h_2$. The function $f_\pi^{\le 2R}$ is illustrated in \Cref{fig:ub}. %\ctodo{A figure here.}

Finally, we \emph{approximately normalize} $\exp\tp{-\ol{f_{\pi}^{\le 2R}}(x)}$ into a ``probability'' by dividing our estimate $\wh Z_\mu$, namely that for every $x\in \bb R^d$, let
\[
    f^{\le 2R}_\pi(x) \defeq \ol{f^{\le 2R}_\pi}(x) + \log \wh Z_{\mu}.
\]

\subsection{Properties of $\pi$}\label{sec:properties-of-pi}

In this section we prove some useful properties of the distribution $\pi$ just constructed. We begin with three useful technical lemmas in \Cref{sec:ub-tech}. Then we prove key properties of $\pi$, including showing its closeness to $\mu$ in terms of total variation distance in \Cref{sec:ub-closeness}, bounding its \Poincare constant in \Cref{sec:ub-poincare} and analyzing its smoothness in \Cref{sec:ub-smooth}.

\subsubsection{Technical lemmas}\label{sec:ub-tech}

Recall that $p_\mu(x)\propto \exp\tp{-f_\mu(x)}$ is the density of the $L$-log-smooth distribution $\mu$ and $Z_\mu = \int_{\bb R^d} p_\mu(x)\d x$ is the normalizing factor. The first lemma says that $p_\mu(x)$ has an upper bound since $\mu$ is $L$-log-smooth. 

\begin{lemma}\label{lem:mu-bound}
    $\forall x\in \bb R^d,\;p_\mu(x) \le \tp{\frac{2\pi}{L}}^{-\frac{d}{2}}$.
\end{lemma}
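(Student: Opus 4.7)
The plan is to leverage the $L$-smoothness of $f_\mu$ to derive a pointwise \emph{lower} bound on $e^{-f_\mu(y)}$ for all $y$, and then integrate this bound to obtain a lower bound on the normalizing constant $Z_\mu$ in terms of $e^{-f_\mu(x)}$.

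First, I would invoke the standard descent lemma from the $L$-smoothness of $f_\mu$ (\Cref{assump:smooth}): for any fixed reference point $x\in\bb R^d$ and all $y\in\bb R^d$,
\[
    f_\mu(y) \le f_\mu(x) + \grad f_\mu(x)^\top (y-x) + \frac{L}{2}\norm{y-x}^2.
\]
Exponentiating yields $e^{-f_\mu(y)} \ge e^{-f_\mu(x)}\cdot e^{-\grad f_\mu(x)^\top(y-x) - \frac{L}{2}\norm{y-x}^2}$.

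Second, I would integrate this bound over $\bb R^d$ to lower bound $Z_\mu$. After the substitution $u = y-x$ and completing the square in the exponent as $-\frac{L}{2}\norm{u + \grad f_\mu(x)/L}^2 + \frac{\norm{\grad f_\mu(x)}^2}{2L}$, the resulting integral is a translated Gaussian integral with value $(2\pi/L)^{d/2}$. This gives
\[
    Z_\mu \;\ge\; e^{-f_\mu(x)}\cdot e^{\frac{\norm{\grad f_\mu(x)}^2}{2L}}\cdot \tp{\frac{2\pi}{L}}^{d/2} \;\ge\; e^{-f_\mu(x)}\cdot \tp{\frac{2\pi}{L}}^{d/2},
\]
where the second inequality simply drops the nonnegative $\norm{\grad f_\mu(x)}^2/(2L)$ term.

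Finally, rearranging gives $p_\mu(x) = e^{-f_\mu(x)}/Z_\mu \le (2\pi/L)^{-d/2}$, as required. There is no real obstacle here; the only subtlety is remembering to use the quadratic upper bound (not the lower bound, which would require convexity that we do not have) and then performing the Gaussian integration cleanly. Note that the bound is tight for $\mu = \+N(x, L^{-1}\!{Id}_d)$, which saturates the smoothness inequality.
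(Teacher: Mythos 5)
Your proof is correct and follows essentially the same route as the paper: use $L$-smoothness to upper-bound $f_\mu(y)$ by a quadratic, then integrate to lower-bound $Z_\mu$ by a Gaussian integral. The only difference is that the paper works at the maximizer $x^*$ of the density (where the gradient term vanishes), whereas you complete the square at an arbitrary $x$ and discard the nonnegative $\norm{\grad f_\mu(x)}^2/(2L)$ term --- a minor variation that has the small advantage of not requiring the maximizer to be attained.
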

\begin{proof}
    Let $x^* = \argmax_{x\in\bb R^d} \exp\tp{-f_\mu(x)}$. Since $f_\mu$ is $L$-smooth, for each $y\in \bb R^d$,
    \begin{align*}
        f_\mu(y)
        &\le f_\mu(x^*) + \grad f_\mu(x^*)\top (y-x^*) + \frac{L}{2}\norm{y-x^*}^2\\
        \mr{$\grad f_\mu(x^*)=0$}
        &=f_\mu(x^*)+ \frac{L}{2}\norm{y-x^*}^2.
    \end{align*}
    On the other hand, 
    \begin{align*}
        1
        &=Z_\mu^{-1} \int_{\bb R^d} \exp\tp{-f_\mu(y)} \d y\\
        &\ge Z_{\mu}^{-1}\int_{\bb R^d} \exp\tp{-f_\mu(x^*)-\frac{L}{2}\norm{y-x^*}^2} \d y\\
        &= p_\mu(x^*)\cdot\int_{\bb R^d} \exp\tp{-\frac{L}{2}\norm{y-x^*}^2}\d y.
    \end{align*}
    Since $\int_{\bb R^d} \exp\tp{-\frac{L}{2}\norm{y-x^*}^2}\d y = \tp{\frac{2\pi}{L}}^{\frac{d}{2}}$, we conclude the proof. 
\end{proof}

The second lemma shows that the function $\ol{f^{\le 2R}_{\pi}}$, our truncation for $f_\mu$, does not change the mass in the $\+O(R)$-ball much. 

\begin{lemma}\label{lem:Z_R-close-to-one}
    Assume $d\geq 3$. The following holds.
    \[
        Z_{\mu}^{-1}\int_{\+B_{2R}} \exp\tp{-\ol{f^{\le 2R}_{\pi}}(x)}\d x \le 1+\frac{\eps}{32}, \mbox{ and } Z_{\mu}^{-1}\int_{\+B_R} \exp\tp{-\ol{f^{\le 2R}_{\pi}}(x)}\d x \ge 1-\frac{\eps}{16}.
    \]
\end{lemma}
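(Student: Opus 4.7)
The plan is to split each integral into the regions $\{f_\mu \le h_1\}$ (where $\ol{f_\pi^{\le 2R}}$ coincides with $f_\mu$) and $\{f_\mu > h_1\}$ (where $\ol{f_\pi^{\le 2R}}$ has been truncated upward toward the soft cap $h_2$). In the first region nothing changes, so the integral there is bounded above by $Z_\mu$ (for the upper bound) and contributes all of the ``good'' mass (for the lower bound). The crucial quantity to control in both directions is therefore the contribution of the second region, which I will bound uniformly using the definition of $h_1$.

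For the upper bound, observe that on $\{f_\mu > h_1\}$ the convex-combination definition yields $\ol{f_\pi^{\le 2R}}(x) \ge h_1$, hence $e^{-\ol{f_\pi^{\le 2R}}(x)} \le e^{-h_1}$. Integrating gives an excess of at most $\!{vol}(\+B_{2R})\cdot e^{-h_1}$. Unpacking the definition $h_1 = \wh f^* + \log\!{vol}(\+B_{2R}) + \tfrac{d}{2}\log L + \log(4/\eps)$ together with $\wh f^*\ge f^*$ turns this into $\tfrac{\eps}{4L^{d/2}}e^{-f^*}$. By \Cref{lem:mu-bound}, $e^{-f^*}\le Z_\mu\cdot (L/2\pi)^{d/2}$, so the excess is at most $\tfrac{\eps Z_\mu}{4(2\pi)^{d/2}}$; for $d\ge 3$ this is at most $\tfrac{\eps Z_\mu}{32}$. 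Dividing by $Z_\mu$ yields the desired $1+\eps/32$ bound.

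For the lower bound, restrict the integral to $\+B_R\cap\{f_\mu\le h_1\}$ and use
\[
    \int_{\+B_R} e^{-\ol{f_\pi^{\le 2R}}(x)}\d x \;\ge\; \int_{\+B_R} e^{-f_\mu(x)}\d x \;-\; \int_{\+B_R\cap\{f_\mu> h_1\}} e^{-f_\mu(x)}\d x.
\]
The first term equals $Z_\mu\cdot \Pr_{X\sim\mu}(\norm{X}\le R)\ge Z_\mu(1-\eps/32)$ by the Markov estimate \eqref{eqn:markov-mu}. The second term is bounded by $\!{vol}(\+B_R)\cdot e^{-h_1}\le \!{vol}(\+B_{2R})\cdot e^{-h_1}$, which by the same computation as above is at most $\eps Z_\mu/32$. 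Combining and dividing by $Z_\mu$ gives $1-\eps/16$.

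I do not expect any real obstacles; the only subtle point is ensuring the factor $\frac{1}{4(2\pi)^{d/2}}$ is at most $1/32$, which is precisely where the assumption $d\ge 3$ enters. Everything else is routine bookkeeping using the definition of $h_1$ and the pointwise bound on $p_\mu$ coming from $L$-log-smoothness.
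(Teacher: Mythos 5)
Your proposal is correct and follows essentially the same route as the paper's proof: the same split into the truncated region $\{f_\mu\ge h_1\}$ and its complement, the same uniform bound $e^{-\ol{f_\pi^{\le 2R}}}\le e^{-h_1}$ on the truncated part combined with the definition of $h_1$ and \Cref{lem:mu-bound}, and the same Markov estimate for the lower bound. The arithmetic, including where $d\ge 3$ enters via $(2\pi)^{-d/2}$, matches the paper's.
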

\begin{proof}
    Let $\+L =\set{x\in \+B_{2R}\cmid f_\mu(x)\ge h_1}$ be the set of points in $\+B_{2R}$ where the truncation occurs. Clearly
    \begin{align*}
        Z_{\mu}^{-1}\int_{\+B_{2R}} \exp\tp{-\ol{f^{\le 2R}_\pi}(x)}\d x
        &\leq Z_{\mu}^{-1}\int_{\+B_{2R}\setminus \+L} \exp\tp{-f_\mu(x)}\d x + Z_{\mu}^{-1}\int_{\+L} \exp\tp{-h_1}\d x\\
        &\le Z_{\mu}^{-1}\int_{\bb R^d} \exp\tp{-f_\mu(x)}\d x + Z_\mu^{-1}\cdot \!{vol}(\+B_{2R})\cdot \exp(-h_1)\\
        \mr{by \Cref{lem:mu-bound}, $Z_\mu^{-1}\exp\tp{-f^*} \le \tp{\frac{2\pi}{L}}^{-\frac{d}{2}}$}
        &\le 1+\frac{\eps}{4}\cdot (2\pi)^{-\frac{d}{2}}\\
        \mr{$d\geq 3$} 
        &\le 1+\frac{\eps}{32}.
    \end{align*}
    For the lower bound, we can calculate that
    \begin{align*}
        Z_\mu^{-1}\int_{\+B_R} \exp\tp{-\ol{f^{\le 2R}_\pi}(x)}\d x
        &\ge Z_\mu^{-1}\int_{\+B_R\setminus \+L} \exp\tp{-\ol{f^{\le 2R}_\pi}(x)}\d x\\
        \mr{$\ol{f^{\le 2R}_\pi}(x) = f_\mu(x)$ for $x\in\+B_{2R}\setminus\+L$} 
        &=Z_\mu^{-1}\int_{\+B_R\setminus\+L} \exp\tp{-f_\mu(x)} \d x\\
        &=Z_\mu^{-1}\int_{\+B_R} \exp\tp{-f_\mu(x)} \d x - Z_\mu^{-1}\int_{\+L} \exp\tp{-f_\mu(x)} \d x.
    \end{align*}
    By Markov's inequality,
    \begin{equation}\label{eqn:1st}
        Z_\mu^{-1}\int_{\+B_R} \exp\tp{-f_\mu(x)} \d x = \Pr[X\sim\mu]{X\in \+B_R}\ge 1-\frac{M}{R^2} = 1-\frac{\eps}{32}.
    \end{equation}
    By our definition of $h_1$, 
    \begin{equation}\label{eqn:2nd}
        Z_\mu^{-1}\int_{\+L} \exp\tp{-h_1}\le \!{vol}(\+B_{2R})\cdot Z_\mu^{-1}\exp\tp{-h_1}\le \frac{\eps}{4}\cdot (2\pi)^{-\frac{d}{2}}\le \frac{\eps}{32}.
    \end{equation}
    Combining~\eqref{eqn:1st} and~\eqref{eqn:2nd} finishes the proof.
\end{proof}

Recall that our definition for $f^{\le 2R}_{\pi}$ is an \emph{approximately normalized} $\ol{f^{\le 2R}_{\pi}}$ using our estimate $\wh Z_\mu$ for $Z_\mu$. The following lemma states that provided the estimate is accurate enough, $Z_\pi$ is close to $1$.

\begin{lemma}\label{lem:Zpi-close-to-one}
    Assume $d\geq 3$. It holds that
    \begin{itemize}
        \item $1-\frac{\eps}{16} \le Z_\pi \cdot \frac{\wh Z_\mu}{Z_\mu} \le 1+\frac{\eps}{16}$.
        \item $\frac12\le Z_\pi\le 4e^d$.
    \end{itemize}
\end{lemma}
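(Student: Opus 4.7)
The plan is to estimate $Z_\pi = \int_{\bb R^d} \exp(-f_\pi(x))\d x$ by decomposing $\bb R^d$ into three regions---the inner ball $\+B_R$, the transition annulus $\+B_{2R}\setminus \+B_R$, and the exterior $\bb R^d\setminus \+B_{2R}$---and controlling the contribution from each. By construction and the fact that $\mathfrak{g}_{[R,2R]}\equiv 0$ on $\+B_R$ and $\equiv 1$ outside $\+B_{2R}$, we have $f_\pi \equiv f_\pi^{\le 2R}$ on $\+B_R$ and $f_\pi \equiv f_\gamma - \log\eps$ on $\bb R^d \setminus \+B_{2R}$, so these two ``clean'' regions contribute explicit terms, while the annulus is handled via convexity.

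For the annulus, since $f_\pi(x)$ is a convex combination $(1-\mathfrak{g}_{[R,2R]}(x))\cdot f_\pi^{\le 2R}(x) + \mathfrak{g}_{[R,2R]}(x)\cdot(f_\gamma(x)-\log\eps)$, convexity of $\exp$ yields $\exp(-f_\pi(x)) \le (1-\mathfrak{g}_{[R,2R]}(x))\exp(-f_\pi^{\le 2R}(x)) + \mathfrak{g}_{[R,2R]}(x)\cdot \eps\exp(-f_\gamma(x))$. Integrating and summing all three pieces (using $\mathfrak{g}_{[R,2R]}\in[0,1]$) gives $Z_\pi \le \int_{\+B_{2R}} \exp(-f_\pi^{\le 2R}(x))\d x + \eps \int_{\bb R^d\setminus \+B_R} \exp(-f_\gamma(x))\d x$. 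I then multiply both sides by $\wh Z_\mu/Z_\mu$ and use $\exp(-f_\pi^{\le 2R}(x))=\exp(-\ol{f_\pi^{\le 2R}}(x))/\wh Z_\mu$ together with \Cref{lem:Z_R-close-to-one} to bound the first term by $1+\eps/32$. For the second term I use $\wh Z_\mu/Z_\mu\le 1$ from \Cref{prop:Z-and-fmin}, and apply Markov to $\gamma\sim\+N(0,\frac{M}{\eps d}\!{Id}_d)$ (whose second moment is $M/\eps$) to get $\int_{\bb R^d\setminus\+B_R}\exp(-f_\gamma)\d x = \Pr_\gamma[\|X\|>R]\le (M/\eps)/R^2 = 1/32$. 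These combine to $Z_\pi\cdot\wh Z_\mu/Z_\mu \le 1+\eps/32+\eps/32 = 1+\eps/16$. The lower bound is immediate: discarding the non-negative annulus and exterior contributions gives $Z_\pi\cdot\wh Z_\mu/Z_\mu \ge Z_\mu^{-1}\int_{\+B_R}\exp(-\ol{f_\pi^{\le 2R}}(x))\d x \ge 1-\eps/16$, again by \Cref{lem:Z_R-close-to-one}.

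For the second bullet, I combine the first part with \Cref{prop:Z-and-fmin}, which gives $\frac{1}{2}e^{-d}\le \wh Z_\mu/Z_\mu\le 1$. The lower bound $Z_\pi\ge 1/2$ follows from $Z_\pi\ge (1-\eps/16)\cdot Z_\mu/\wh Z_\mu\ge (1-\eps/16)\cdot 1\ge 1/2$ (using $\eps<1/32$); the upper bound $Z_\pi\le 4e^d$ follows from $Z_\pi\le(1+\eps/16)\cdot Z_\mu/\wh Z_\mu\le(1+\eps/16)\cdot 2e^d\le 4e^d$.

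The main subtlety is that the pointwise convexity bound on $\exp(-f_\pi)$ in the annulus at first looks too lossy, as it effectively lets both endpoint functions contribute fully. Tightness comes from integrating the Gaussian piece only over $\bb R^d\setminus\+B_R$ (rather than over all of $\bb R^d$), where Markov supplies exactly the $1/32$ factor needed to combine with the $\eps/32$ slack from \Cref{lem:Z_R-close-to-one} to yield the target $\eps/16$.
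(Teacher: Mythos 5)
Your proof is correct and follows essentially the same route as the paper: lower-bound $Z_\pi$ by the integral over $\+B_R$, upper-bound it by the sum of the $\+B_{2R}$ and $\bb R^d\setminus\+B_R$ contributions (you just make the convexity justification for the annulus explicit where the paper leaves it implicit), then invoke \Cref{lem:Z_R-close-to-one}, Markov for the Gaussian tail, and \Cref{prop:Z-and-fmin}. No gaps.
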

\begin{proof}
    On the one hand, from \Cref{lem:Z_R-close-to-one},
    \[
        Z_{\pi} \geq \int_{\+B_R} \exp\tp{-f_\pi(x)} \d x = \frac{Z_\mu}{\wh Z_{\mu}} \cdot Z_\mu^{-1}\int_{\+B_R} \exp\tp{-\ol{f^{\le 2R}_\pi}(x)} \d x \geq \frac{Z_\mu}{\wh Z_{\mu}} \cdot \tp{1 - \frac{\eps}{16}}.
    \]
    On the other hand, 
    \begin{align*}
        Z_{\pi} &\leq \int_{\bb R^d\setminus \+B_R} e^{-f_\gamma(x)+\log\eps} \d x + \int_{\+B_{2R}} \exp\tp{-f^{\le 2R}_\pi(x)} \d x\\
        &\le \eps\Pr[X\sim \+N(0,\frac{M}{d\eps}\cdot \!{Id}_d)]{ \|X\|^2 \geq R^2} + \frac{Z_\mu}{\wh Z_{\mu}} \cdot Z_\mu^{-1}\int_{\+B_{2R}} e^{-\ol{f^{\le 2R}_\pi}(x)} \d x \\
        \mr{\Cref{lem:Z_R-close-to-one}}
        &\leq \frac{\eps}{32} + \frac{Z_\mu}{\wh Z_{\mu}} \cdot \tp{1+\frac{\eps}{32}}.
    \end{align*}
    The lemma then follows from \Cref{prop:Z-and-fmin}.
\end{proof} 

\subsubsection{Distance between $\pi$ and $\mu$}\label{sec:ub-closeness}

We now prove that the total variation distance between $\pi$ and $\mu$ is at most $\frac{\eps}{2}$. 

\begin{lemma} \label{lem:pi-mu-close}
    Assume $d\geq 3$. We have $\DTV(\pi,\mu)\le \frac{\eps}{2}$.
\end{lemma}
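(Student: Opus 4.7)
The plan is to partition $\bb R^d$ into a ``good'' region, on which the densities $p_\pi$ and $p_\mu$ agree up to a single multiplicative constant close to $1$, and a ``bad'' region, whose mass under both distributions is $O(\eps)$.

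Define the truncation set $\+L \defeq \set{x\in \+B_{2R}\cmid f_\mu(x)\ge h_1}$ and the good region $\+G \defeq \+B_R\setminus \+L$. For $x\in \+G$, I have $\|x\|\le R$, so $\mathfrak{g}_{[R,2R]}(x)=0$, and $f_\mu(x)<h_1$ forces $\mathfrak{g}_{[h_1,h_2]}(x)=q_{\!{mol}}\tp{\frac{h_2-f_\mu(x)}{h_2-h_1}}=1$. Chasing the definitions of $f_\pi$, $f_\pi^{\le 2R}$, and $\ol{f_\pi^{\le 2R}}$ then gives $f_\pi(x)=f_\mu(x)+\log\wh Z_\mu$, hence
\[
    \frac{p_\pi(x)}{p_\mu(x)} = \frac{Z_\mu}{\wh Z_\mu Z_\pi}\qquad \text{for every }x\in \+G.
\]
By \Cref{lem:Zpi-close-to-one}, this ratio lies in $[1-\eps/8,\,1+\eps/8]$, so $\int_\+G |p_\pi - p_\mu|\dd x \le \frac{\eps}{8}\cdot \mu(\+G)\le \frac{\eps}{8}$.

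For the bad region, I would simply use the triangle inequality
\[
    \int_{\bb R^d\setminus \+G} |p_\pi - p_\mu|\dd x \le \mu(\bb R^d\setminus \+G) + \pi(\bb R^d\setminus \+G),
\]
and bound both total masses separately. Since $\bb R^d\setminus \+G \subseteq \+L \cup (\bb R^d\setminus \+B_R)$, Markov's inequality as in~\eqref{eqn:markov-mu} gives $\mu(\bb R^d\setminus \+B_R)\le \eps/32$, while combining the definition of $h_1$ with \Cref{lem:mu-bound} (which upper-bounds $\exp(-f^*)/Z_\mu$) yields $\mu(\+L) \le \vol(\+B_{2R})\exp(-h_1)/Z_\mu \le \eps/32$. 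For $\pi$, I would use the density identity on $\+G$ together with $\mu(\+G)\ge 1-\eps/16$ just established to conclude $\pi(\+G) \ge (1-\eps/8)(1-\eps/16)$, hence $\pi(\bb R^d\setminus \+G) = O(\eps)$. Note that this route avoids ever having to analyze the interpolation region $\+B_{2R}\setminus \+B_R$ directly, since it is entirely absorbed into the bad region.

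The main obstacle is careful bookkeeping of the constants $Z_\mu, \wh Z_\mu, Z_\pi$ entering the density ratio on $\+G$ and their interaction with the absolute bound $\exp(-h_1)$ governing the truncated mass. Once all contributions are collected, the three $O(\eps)$ pieces sum to at most $\eps$, yielding $\DTV(\pi,\mu)\le \eps/2$.
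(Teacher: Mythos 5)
Your proof is correct and follows essentially the same route as the paper's: both isolate the good region $\+B_R\setminus\+L$, where $p_\pi/p_\mu$ equals the constant $\frac{Z_\mu}{\wh Z_\mu Z_\pi}\in[1-\tfrac{\eps}{8},1+\tfrac{\eps}{8}]$ by \Cref{lem:Zpi-close-to-one}, and both reduce the remainder to showing the complement has $O(\eps)$ mass under each of $\mu$ and $\pi$. The only difference is bookkeeping: the paper splits the bad region into $\bb R^d\setminus\+B_R$ and $\+L\cap\+B_R$ and bounds the $\pi$-mass of each piece directly, whereas you bound the $\pi$-mass of the whole bad region at once as $1-\pi(\+G)$ via the density identity on $\+G$ — an equivalent and slightly cleaner accounting that yields the same $O(\eps)$ total.
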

\begin{proof}
    We still let $\+L = \set{x\in \+B_{2R}\cmid f_{\mu}(x)\ge h_1}$ denote those points that have been truncated. Clearly
    \[
        \DTV(\pi,\mu) = \frac{1}{2}\Big(\underbrace{\int_{\bb R^d\setminus \+B_R}  \abs{p_\pi(x)-p_\mu(x)} \d x}_{\mbox{(a)}} + \underbrace{\int_{ \+B_R\setminus \+L }  \abs{p_\pi(x)-p_\mu(x)} \d x}_{\mbox{(b)}} + \underbrace{\int_{\+L\cap \+B_R}  \abs{p_\pi(x)-p_\mu(x)} \d x}_{\mbox{(c)}}\Big).
    \]
    We then bound terms (a), (b) and (c) respectively. For (a), we have
    \begin{align*}
        \mbox{(a)}
        &\le \int_{\bb R^d\setminus \+B_R} p_\mu(x) \d x + \int_{\bb R^d\setminus \+B_R} p_\pi(x)\d x\\
        &= \Pr[X\sim\mu]{\norm{X}^2>R^2} +\tp{1-Z_\pi^{-1} \int_{\+B_R} \exp\tp{-f^{\le 2R}_\pi(x)} \d x}\\
        \mr{\eqref{eqn:markov-mu} and definition of $f^{\le 2R}_\pi$}
        &\le \frac{\eps}{32} + 1-Z_{\pi}^{-1} \cdot \frac{Z_\mu}{\wh Z_\mu}\cdot Z_\mu^{-1}\cdot \int_{\+B_R} \exp\tp{-\ol{f^{\le 2R}_\pi}(x)} \d x\\
        \mr{\Cref{lem:Z_R-close-to-one}}
        &\le\frac{\eps}{32} + 1-\frac{Z_\mu}{\wh Z_\mu}\cdot Z_\pi^{-1}\tp{1-\frac{\eps}{16}}\\
        \mr{\Cref{lem:Zpi-close-to-one}}
        &\le \frac{\eps}{32}+\frac{\eps}{8} = \frac{5\eps}{32}.
    \end{align*}
    By our construction, for $x\in \+B_R\setminus\+L$, we have that $f_\pi(x) = f_{\mu}(x)+\log\wh Z_\mu$. Therefore, for the term (b), we have
    \begin{align*}
        \mbox{(b)}
        &=\int_{\+B_R\setminus\+L}\abs{Z_\mu^{-1}\exp\tp{-f_{\mu}(x)}-Z_\pi^{-1}\cdot \wh Z_\mu^{-1}\exp\tp{-f_{\mu}(x)}} \d x\\
        &=\abs{1-\frac{Z_\mu}{\wh Z_\mu}\cdot Z_\pi^{-1}} \cdot Z_\mu^{-1}\int_{\+B_R\setminus\+L} \exp\tp{-f_{\mu}(x)} \d x\\
        &\le \abs{1-\frac{Z_\mu}{\wh Z_\mu}\cdot Z_\pi^{-1}}\\
        \mr{\Cref{lem:Zpi-close-to-one}}
        &\le \frac{\eps}{8}.
    \end{align*}
    Finally, for the term (c), we have
    \begin{align*}
        \mbox{(c)}
        &\le \int_{\+L\cap \+B_R} p_\mu(x)\d x + \int_{\+L\cap \+B_R} p_\pi(x) \d x\\
        &\le Z_\mu^{-1}\int_{\+B_R} \exp\tp{-h_1}\d x + Z_\pi^{-1}\int_{\+L\cap \+B_R} \exp\tp{-h_1-\log \wh Z_\mu}\d x\\
        &\le \tp{1+\frac{Z_\mu}{\wh Z_\mu\cdot Z_\pi}}\cdot Z_\mu^{-1}\int_{\+B_R} \exp\tp{-h_1}\d x\\
        \mr{\Cref{lem:Zpi-close-to-one}}
        &\le 3\cdot\!{vol}(\+B_R)\cdot Z_\mu^{-1} \exp\tp{-h_1}\\
        \mr{Definition of $h_1$ and \Cref{lem:mu-bound}}
        &\le\frac{3\eps}{4}\cdot\tp{2\pi}^{-\frac{d}{2}} \le \frac{3\eps}{32}.
    \end{align*}
    In total, we have $\DTV(\pi,\mu) \le \frac{5\eps}{32}+\frac{\eps}{8}+\frac{3\eps}{32}<\frac{\eps}{2}$.
\end{proof}

\subsubsection{The \Poincare constant of $\pi$}\label{sec:ub-poincare}

In this section we bound the \Poincare constant of $\pi$. Recall that the density $p_\gamma$ of $\gamma\sim \+N\tp{0,\frac{M}{\eps d}\!{Id}_d}$ is $p_\gamma(x) = \exp\tp{-f_\gamma(x)}$ where $f_\gamma(x) = \frac{\eps d\norm{x}^2}{2M}+\frac{d}{2}\log\frac{2\pi M}{d\eps}$.  We will show that $p_\pi$ is close to $p_\gamma$ pointwise. 

\begin{lemma}\label{lem:fclose}
    Assume $d\geq 3$. For every $x\in \bb R^d$, $\abs{f_\gamma(x)-f_\pi(x)} \le \+O\tp{d\log\frac{LM}{d\eps}}$.
\end{lemma}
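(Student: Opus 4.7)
The plan is to partition $\bb R^d$ into three regions and bound $|f_\gamma(x)-f_\pi(x)|$ in each. Outside $\+B_{2R}$, by construction $f_\pi(x)=f_\gamma(x)-\log\eps$, so the difference equals $\log(1/\eps)$; since $LM\ge d$ implies $LM/(d\eps)\ge 1/\eps$, this is at most $d\log(LM/(d\eps))$. In the annular interpolation region $\+B_{2R}\setminus \+B_R$, $f_\pi$ is a convex combination of $f_\pi^{\le 2R}$ and $f_\gamma-\log\eps$, so the triangle inequality reduces the task to bounding both component functions against $f_\gamma$; the second gives $\log(1/\eps)$ again, and the first reduces to the interior case. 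Thus the core of the proof is the region $x\in\+B_R$.

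For $x\in\+B_R$ we have $f_\pi(x)=\ol{f_\pi^{\le 2R}}(x)+\log\wh Z_\mu$ with $\ol{f_\pi^{\le 2R}}(x)\in[f^*,h_2]$. I would decompose
\[
f_\pi(x)-f_\gamma(x)=\bigl(f^*+\log Z_\mu-f_\gamma(x)\bigr)+\bigl(\ol{f_\pi^{\le 2R}}(x)-f^*\bigr)+\log(\wh Z_\mu/Z_\mu)
\]
and bound each summand. The normalizing error $|\log(\wh Z_\mu/Z_\mu)|\le d+\log 2$ is immediate from \Cref{prop:Z-and-fmin}. The truncation offset lies in $[0,h_2-f^*]$, and expanding $h_2-f^*=(h_2-h_1)+(h_1-\wh f^*)+(\wh f^*-f^*)$ via the explicit definitions of $h_1,h_2$ together with Stirling's approximation for $\log\vol(\+B_{2R})$ yields $h_2-f^*=\+O(d\log(LM/(d\eps)))$; here $LM\ge d$ is used to absorb stray $\log L$ and $\log(1/\eps)$ factors.

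The key step is two-sided control of the shift $f^*+\log Z_\mu$ against $f_\gamma(x)$. A lower bound $f^*+\log Z_\mu\ge (d/2)\log(2\pi/L)$ follows from \Cref{lem:mu-bound} applied at (a sequence approaching) the minimizer of $f_\mu$. An upper bound $f^*+\log Z_\mu\le \log\vol(\+B_R)+O(\eps)$ comes from combining Markov's inequality (the mass of $\mu$ in $\+B_R$ is at least $1-\eps/32$) with the trivial estimate $\int_{\+B_R} e^{-f_\mu(x)}\,\d x\le \vol(\+B_R)\cdot e^{-f^*}$. Since $f_\gamma(x)=(d/2)\log(2\pi M/(d\eps))+\Theta(d)$ on $\+B_R$ and $\log\vol(\+B_R)=(d/2)\log(M/(d\eps))+O(d)$ by Stirling, the upper direction of $f^*+\log Z_\mu-f_\gamma(x)$ is $\+O(d)$, while the lower direction picks up an additional $-(d/2)\log L$. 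Both extremes sit within $\+O(d\log(LM/(d\eps)))$ thanks to $LM\ge d$. The main subtlety I anticipate is bookkeeping the competing $(d/2)\log L$, $(d/2)\log(M/(d\eps))$, and $\log\vol(\+B_R)$ terms so that everything collapses into a single clean $d\log(LM/(d\eps))$ factor; the upper bound on $f^*+\log Z_\mu$, which hinges on $\mu$'s concentration in $\+B_R$, is the only genuinely nontrivial ingredient.
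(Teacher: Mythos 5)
Your proposal is correct and follows essentially the same route as the paper's proof: the same three-region split, the same sandwich $\ol{f_\pi^{\le 2R}}(x)\in[f^*,h_2]$, the lower bound on $f^*+\log Z_\mu$ from \Cref{lem:mu-bound}, the upper bound via $\vol(\+B_R)e^{-f^*-\log Z_\mu}\ge \Pr[X\in\+B_R]\ge 1-\eps/32$, and \Cref{prop:Z-and-fmin} for the normalizing error. The only difference is cosmetic bookkeeping in how the summands are grouped.
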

\begin{proof}
    Outside $\+B_{2R}$, we have
    \[
        \abs{f_\pi(x) - f_\gamma(x)} = \log\frac{1}{\eps} = \+O\tp{d\log\frac{LM}{d\eps}}.
    \]
    For $x\in \+B_{2R}$, or equivalently $\norm{x}^2\le \frac{128M}{\eps}$, 
    % $\abs{f_\gamma(x)} = \+O\tp{d\log\frac{LM}{d\eps}}$. 
    \[
        \frac{d}{2}\log\frac{2\pi M}{d\eps} \leq f_\gamma(x) \leq 64d + \frac{d}{2}\log\frac{2\pi M}{d\eps}.
    \]
    
    % It remains to show that $\abs{f_\pi(x)} = \+O\tp{d\log\frac{LM}{d\eps}}$ as well.
    It remains to bound $f_\pi(x)$ inside $\+B_{2R}$. Note that $f_\pi(x) = f^{\le 2R}_\pi(x)$ for $x\in \+B_R$ and $f_\pi(x)$ is an interpolation of $f^{\le 2R}_\pi(x)$ and $f_\gamma(x)-\log\eps$ for $x\in \+B_{2R}\setminus \+B_R$, we only need to bound $f^{\le 2R}_{\pi}$.
    % we only need to verify that $f^{\le 2R}_{\pi} = \+O\tp{d\log\frac{LM}{d\eps}}$ for $x\in \+B_{2R}$. 

    Recall that ${f^{\le 2R}_{\pi}(x)} = {\ol{f^{\le 2R}_{\pi}}(x)+\log \wh Z_\mu}= {\ol{f^{\le 2R}_{\pi}}(x)+\log Z_\mu}+{\log\frac{\wh Z_\mu}{Z_\mu}}$. By \Cref{prop:Z-and-fmin}, $-d -1 \leq \log\frac{\wh Z_\mu}{Z_\mu} \leq 0$.
    % we only need to bound $\abs{\ol{f^{\le 2R}_{\pi}}(x)+\log Z_\mu}$.
    By our construction, for all $x\in \+B_{2R}$, 
    \[
        f^*+\log Z_{\mu} \leq \ol{f^{\le 2R}_{\pi}}(x)+\log Z_{\mu} \leq h_2 + \log Z_{\mu}.
    \]
    From \Cref{lem:mu-bound}, $f^*+\log Z_{\mu} \geq \frac{d}{2}\log\frac{2\pi}{L}$. On the other hand, $h_2 + \log Z_{\mu} \leq f^*+\log Z_{\mu} + d + \log \!{vol}(\+B_{2R}) + \frac{d}{2}\log\frac{L^2M}{d\eps} + \log \frac{4}{\eps}$. Since
    \[
        \!{vol}(\+B_R)\cdot e^{-f^* - \log Z_{\mu}} = \int_{\+B_R} e^{-f^* - \log Z_{\mu}} \dd x \geq \Pr[X\sim \mu]{X\in \+B_R} \geq 1-\frac{\eps}{32}, 
    \]
    we have $f^*+\log Z_{\mu} \leq \log \!{vol}(\+B_R) + 1$. Therefore, 
    \begin{align*}
        \ol{f^{\le 2R}_{\pi}}(x)+\log Z_{\mu} \leq h_2 + \log Z_{\mu} &\leq \log \!{vol}(\+B_R) + \log \!{vol}(\+B_{2R}) + d+1 + \frac{d}{2}\log\frac{L^2M}{d\eps} + \log \frac{4}{\eps} \\
        \mr{\Cref{prop:Gamma}}
        &\leq \log \frac{4}{\eps} + d+1 + \frac{d}{2}\log \frac{ L^2 M}{d\eps} + \frac{d}{2}\log \frac{64e\pi M}{d\eps} + \frac{d}{2}\log \frac{4\cdot 64e\pi M}{d\eps} \\
        & \leq \log \frac{4}{\eps} + d+ 1 + d\log \frac{8 L M}{d\eps} + \frac{d}{2}\log \frac{4\cdot 64 e^2\pi^2 M}{d\eps}.
        % & = \+O\tp{d\log\frac{LM}{d\eps}}.
    \end{align*}

    Combining the above calculations, for $x\in \+B_{2R}$,
    \begin{align*}
        f_\pi(x) - f_\gamma(x) &\leq f^{\le 2R}_{\pi}(x) - f_\gamma(x) + \log \frac{1}{\eps}\\
        &\leq \ol{f^{\le 2R}_{\pi}}(x)+\log Z_{\mu} - f_\gamma(x) + \log\frac{\wh Z_\mu}{Z_\mu} + \log \frac{1}{\eps}\\
        &\leq \log \frac{4}{\eps} + d+ 1 + d\log \frac{8 L M}{d\eps} + \frac{d}{2}\log \frac{4\cdot 64 e^2\pi^2 M}{d\eps} - \frac{d}{2}\log\frac{2\pi M}{d\eps} + \log \frac{1}{\eps}\\
        &= \+O\tp{d\log \frac{LM}{d\eps}}
    \end{align*}
    and 
    \begin{align*}
        f_\pi(x) - f_\gamma(x) &\geq f^{\le 2R}_{\pi}(x) - f_\gamma(x) \\
        &= \ol{f^{\le 2R}_{\pi}}(x)+\log Z_{\mu} - f_\gamma(x) + \log\frac{\wh Z_\mu}{Z_\mu} \\
        &\geq \frac{d}{2}\log\frac{2\pi}{L} - 64d - \frac{d}{2}\log\frac{2\pi M}{d\eps} - d - 1\\
        &= - \+O\tp{d\log \frac{LM}{d\eps}}.
    \end{align*}
\end{proof}

Since $\abs{\log p_\pi(x) - \log p_\gamma(x)} = \abs{f_\gamma(x)-f_\pi(x)-\log Z_\pi} \le \abs{f_\gamma(x)-f_\pi(x)}+\abs{\log Z_\pi}$, by \Cref{lem:Zpi-close-to-one}, we have the following corollary.

\begin{corollary}\label{cor:pclose}
    For every $x\in\bb R^d$, $\tp{\frac{LM}{d\eps}}^{-\+O\tp{d}}\le \frac{p_\pi(x)}{p_\gamma(x)} \le \tp{\frac{LM}{d\eps}}^{\+O\tp{d}}$.
\end{corollary}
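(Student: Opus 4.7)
The plan is to convert the pointwise bound on $\abs{f_\pi-f_\gamma}$ established in \Cref{lem:fclose} into the corresponding pointwise bound on the density ratio, by accounting for the two different normalizations. Note that $p_\gamma$ is already normalized as a density since the additive constant $\frac{d}{2}\log\frac{2\pi M}{d\eps}$ in $f_\gamma$ absorbs the Gaussian normalizing factor, so $p_\gamma(x)=\exp(-f_\gamma(x))$. By contrast $p_\pi(x)=\exp(-f_\pi(x))/Z_\pi$. Taking logarithms, I obtain
\[
    \log\frac{p_\pi(x)}{p_\gamma(x)} \;=\; \bigl(f_\gamma(x)-f_\pi(x)\bigr) \;-\; \log Z_\pi,
\]
and the triangle inequality reduces the task to bounding each piece.

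For the first piece, \Cref{lem:fclose} directly yields $\abs{f_\gamma(x)-f_\pi(x)}\le\+O\tp{d\log\frac{LM}{d\eps}}$ uniformly in $x$. For the second piece, \Cref{lem:Zpi-close-to-one} gives $\tfrac12\le Z_\pi\le 4e^d$, hence $\abs{\log Z_\pi}\le d+\log 4=\+O(d)$. Since $LM\ge d$ and $\eps<1$, we have $\tfrac{LM}{d\eps}>1$, so in the regime of interest $\log\tfrac{LM}{d\eps}$ is bounded below and $\+O(d)$ is subsumed into $\+O\tp{d\log\frac{LM}{d\eps}}$ after suitably enlarging the hidden constant. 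Combining,
\[
    \Abs{\log\frac{p_\pi(x)}{p_\gamma(x)}}\;\le\;\+O\Tp{d\log\frac{LM}{d\eps}}.
\]
Exponentiating yields the desired two-sided bound $\tp{\frac{LM}{d\eps}}^{-\+O(d)}\le\frac{p_\pi(x)}{p_\gamma(x)}\le\tp{\frac{LM}{d\eps}}^{\+O(d)}$.

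There is no substantive obstacle here: the corollary is essentially a rewriting of \Cref{lem:fclose} with the normalizing constant $Z_\pi$ absorbed. The only minor care needed is to verify that the $\+O(d)$ contribution from $\log Z_\pi$ does not dominate, which is automatic under the standing assumption $LM\ge d$.
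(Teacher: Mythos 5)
Your proposal is correct and follows essentially the same route as the paper: decompose $\log\frac{p_\pi(x)}{p_\gamma(x)}=\tp{f_\gamma(x)-f_\pi(x)}-\log Z_\pi$, bound the first term by \Cref{lem:fclose} and the second by \Cref{lem:Zpi-close-to-one}, then exponentiate. Your additional remark on why the $\+O(d)$ contribution from $\log Z_\pi$ is absorbed is a slightly more careful accounting than the paper gives, but the argument is the same.
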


Then we come to the bound for the \Poincare constant of $\pi$. 

\begin{lemma}\label{lem:pi-PI}
    $C_{\!{PI}}(\pi) = \frac{2d\eps}{M}\cdot \tp{\frac{LM}{d\eps}}^{-\+O(d)}$.
\end{lemma}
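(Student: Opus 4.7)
The plan is to deduce the Poincaré inequality for $\pi$ by comparing it to the Gaussian $\gamma \sim \+N\tp{0,\frac{M}{\eps d}\!{Id}_d}$, for which the Poincaré constant is known exactly. From \Cref{cor:pclose}, the density ratio $p_\pi / p_\gamma$ is pointwise sandwiched between $\tp{\frac{LM}{d\eps}}^{-\+O(d)}$ and $\tp{\frac{LM}{d\eps}}^{\+O(d)}$, so a standard Holley--Stroock-type perturbation argument transfers the Poincaré constant from $\gamma$ to $\pi$ at the cost of a multiplicative factor given by the product of the two density-ratio bounds.

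First, I would note that for any $\sigma^2 > 0$, the Gaussian $\+N\tp{0,\sigma^2\!{Id}_d}$ satisfies the Poincaré inequality with constant $1/\sigma^2$. Plugging in $\sigma^2 = \frac{M}{\eps d}$ yields $C_{\!{PI}}(\gamma) = \frac{\eps d}{M}$.

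Second, I would invoke (or quickly verify) the following comparison lemma: if $\mu_1, \mu_2$ are probability measures on $\bb R^d$ with densities $p_1, p_2$ and $a \le p_1(x)/p_2(x) \le b$ for all $x$, then $C_{\!{PI}}(\mu_1) \ge \frac{a}{b}\, C_{\!{PI}}(\mu_2)$. The one-line proof is: for any smooth $f$,
\[
\Var_{\mu_1}(f) \;\le\; \E_{\mu_1}\bigl[(f - \E_{\mu_2} f)^2\bigr] \;\le\; b\cdot \Var_{\mu_2}(f) \;\le\; \frac{b}{C_{\!{PI}}(\mu_2)}\,\E_{\mu_2}\|\grad f\|^2 \;\le\; \frac{b}{a\,C_{\!{PI}}(\mu_2)}\,\E_{\mu_1}\|\grad f\|^2.
\]

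Third, applying this with $\mu_1=\pi$, $\mu_2=\gamma$, and the ratio bounds $a = \tp{\frac{LM}{d\eps}}^{-\+O(d)}$, $b = \tp{\frac{LM}{d\eps}}^{\+O(d)}$ from \Cref{cor:pclose} yields
\[
C_{\!{PI}}(\pi) \;\ge\; \frac{\eps d}{M}\cdot \tp{\frac{LM}{d\eps}}^{-\+O(d)},
\]
with the factor of $2$ absorbed into the $\+O(d)$ exponent. I do not expect any real obstacle here: all the heavy lifting has already been done in \Cref{lem:fclose} and \Cref{cor:pclose}, and the perturbation lemma is a textbook consequence of the variational characterization of the Poincaré constant.
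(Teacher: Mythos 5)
Your proposal is correct and follows essentially the same route as the paper: both transfer the known Poincar\'e constant of $\gamma\sim\+N\tp{0,\frac{M}{\eps d}\!{Id}_d}$ to $\pi$ via the pointwise density-ratio bounds of \Cref{cor:pclose}, paying a factor of $\tp{\frac{LM}{d\eps}}^{\+O(d)}$. The only cosmetic difference is that the paper compares the Dirichlet form and the variance through the symmetrized double-integral representation $\int\int (h(x)-h(y))^2 p(x)p(y)\,\dd x\,\dd y$, whereas you route the variance comparison through $\E_{\mu_2}[f]$ using the mean-minimization property; both are standard instances of the Holley--Stroock perturbation argument and yield the same bound.
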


\begin{proof}
    By definition, $C_{\!{PI}}(\pi) = \inf_{h\colon\bb R^d \to \bb R} \frac{\E[\pi]{\|\grad h\|^2}}{\Var[\pi]{h^2}}$. For each $h\colon\bb R^d \to \bb R$,
    \begin{align*}
        \frac{\E[\pi]{\|\grad h\|^2}}{\Var[\pi]{h}} 
        & = \frac{2\int_{\bb R^d} \|\grad h(x)\|^2 p_\pi(x) \dd x}{\int_{\bb R^d\times \bb R^d} \tp{h(x)-h(y)}^2 p_\pi(x)p_\pi(y) \dd x \dd y} \\
        \mr{\Cref{cor:pclose}}
        &\geq \tp{\frac{LM}{d\eps}}^{\+O(d)}\cdot \frac{2\int_{\bb R^d} \|\grad h(x)\|^2 p_\gamma(x) \dd x}{\int_{\bb R^d\times \bb R^d} \tp{h(x)-h(y)}^2 p_\gamma(x)p_\gamma(y) \dd x \dd y}\\
        &= \tp{\frac{LM}{d\eps}}^{\+O(d)}\cdot \frac{\E[\gamma]{\|\grad h\|^2}}{\Var[\gamma]{h}}.
    \end{align*}
    Since $C_{\!{PI}}(\gamma) = \inf_{h\colon\bb R^d \to \bb R} \frac{\E[\gamma]{\|\grad h\|^2}}{\Var[\gamma]{h}} = \frac{2d\eps}{M}$ \cite{HE76}, we know that $C_{\!{PI}}(\pi) \geq \frac{2d\eps}{M}\cdot \tp{\frac{LM}{d\eps}}^{\+O(d)}$.

\end{proof}

\subsubsection{The smoothness and first moment of $\pi$} \label{sec:ub-smooth}

In this section, we prove the smoothness property and bound the first moment of $\pi$. These properties are important in the algorithm to sample from $\pi$ in \Cref{sec:proof-of-ub}. Remember that we assumed $\grad f_\mu(0) = 0$. 

\begin{lemma}\label{lem:smooth1}
    % We have $\grad \ol{f^{\le 2R}_\pi}(0)=0$ and for each $x\in \+B_{2R}$, $\| \grad^2 \ol{f^{\le 2R}_\pi}(x) \| =  \+O\tp{\frac{L^3R^4}{\tp{h_2-h_1}^2}}$ and $\| \grad \ol{f^{\le 2R}_\pi}(x) \| = \+O\tp{\frac{L^2R^3}{h_2-h_1}}$.  
    We have $\grad \ol{f^{\le 2R}_\pi}(0)=0$ and for any $x,y \in \+B_{2R}$, $\| \grad \ol{f^{\le 2R}_\pi}(x) \| = \+O\tp{\frac{L^2R^3}{h_2-h_1}}$ and $\| \grad \ol{f^{\le 2R}_\pi}(x) - \grad \ol{f^{\le 2R}_\pi}(y) \| =  \+O\tp{\frac{L^3R^4}{\tp{h_2-h_1}^2}}\cdot \|x-y\|$.  
\end{lemma}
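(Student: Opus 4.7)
The plan is to expand $\ol{f_\pi^{\le 2R}}(x)=h_2+q_{\!{mol}}(z(x))\cdot(f_\mu(x)-h_2)$ with $z(x)=\frac{h_2-f_\mu(x)}{h_2-h_1}$, differentiate twice via the chain and product rules, and bound each resulting term using three ingredients. First, $q_{\!{mol}},q'_{\!{mol}},q''_{\!{mol}}$ are all bounded uniformly by absolute constants. Second, $q'_{\!{mol}}$ and $q''_{\!{mol}}$ are supported on $z\in[0,1]$, which automatically confines the otherwise-unbounded factor $f_\mu(x)-h_2$ to the interval $[-(h_2-h_1),0]$ wherever these derivatives contribute. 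Third, the $L$-smoothness of $f_\mu$ together with $\grad f_\mu(0)=0$ yields $\|\grad f_\mu(x)\|\le L\|x\|\le 2LR$ and $\|\grad^2 f_\mu(x)\|_{\!{op}}\le L$ throughout $\+B_{2R}$.

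Since $\grad z(x)=-\grad f_\mu(x)/(h_2-h_1)$, a one-line chain-rule calculation gives
\[
    \grad \ol{f_\pi^{\le 2R}}(x) = \tp{q_{\!{mol}}(z(x))-\frac{q'_{\!{mol}}(z(x))(f_\mu(x)-h_2)}{h_2-h_1}}\grad f_\mu(x).
\]
The first claim $\grad \ol{f_\pi^{\le 2R}}(0)=0$ is then immediate from $\grad f_\mu(0)=0$. For the magnitude, the three ingredients show that the parenthetical factor is $\+O(1)$ and $\|\grad f_\mu(x)\|\le 2LR$, so $\|\grad \ol{f_\pi^{\le 2R}}(x)\|=\+O(LR)$. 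To reach the stated form it suffices to verify $LR^2\ge h_2-h_1$, which follows by plugging in $LR^2=32LM/\eps$ and $h_2-h_1=\frac{d}{2}\log(LM/d\eps)$ and observing that $LM/(d\eps)\ge 1/\eps\ge 32$ dwarfs $\log(LM/d\eps)$; multiplying $\+O(LR)$ by $LR^2/(h_2-h_1)\ge 1$ recovers $\+O(L^2R^3/(h_2-h_1))$.

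For the gradient Lipschitz bound I would differentiate the above identity once more; since the scalar coefficient in the parenthesis has its own gradient parallel to $\grad f_\mu$, the Hessian organizes into four terms: two proportional to $\grad^2 f_\mu$ and two rank-one terms proportional to $\grad f_\mu\grad f_\mu^\top$, with prefactors involving $q_{\!{mol}},q'_{\!{mol}},q''_{\!{mol}}$ and powers of $(h_2-h_1)^{-1}$. Applying the three ingredients termwise, the $\grad^2 f_\mu$ terms contribute only $\+O(L)$ while the two rank-one terms contribute $\+O(L^2R^2/(h_2-h_1))$ and dominate, so $\|\grad^2 \ol{f_\pi^{\le 2R}}(x)\|_{\!{op}}=\+O(L^2R^2/(h_2-h_1))$. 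Integrating along the segment joining $x$ and $y$ inside $\+B_{2R}$ converts this Hessian bound into a gradient Lipschitz estimate of the same order, and a final multiplication by $LR^2/(h_2-h_1)\ge 1$ weakens it to the stated $\+O(L^3R^4/(h_2-h_1)^2)$. The only real care-point in the proof is bookkeeping of the four Hessian terms; the single conceptual observation making them all of comparable order is the support restriction of $q'_{\!{mol}}$ and $q''_{\!{mol}}$, which tames the otherwise unbounded factor $f_\mu-h_2$.
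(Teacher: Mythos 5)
Your proof is essentially correct and reaches the same bounds, but the key step is handled differently from the paper. Where the paper must control the factor $f_\mu(x)-h_2$ that multiplies $\grad \mathfrak{g}_{[h_1,h_2]}$, it proves the global estimate $\abs{f_\mu(x)-h_2}=\+O(LR^2)$ on all of $\+B_{2R}$ (using $L$-smoothness, $\grad f_\mu(0)=0$, and the explicit definition of $h_2$ together with \Cref{prop:Z-and-fmin}), and then pays the ratio $\frac{LR^2}{h_2-h_1}$ inside the bound. You instead observe that $q'_{\!{mol}}$ and $q''_{\!{mol}}$ vanish unless $f_\mu(x)\in[h_1,h_2]$, so the offending factor is automatically confined to $[-(h_2-h_1),0]$ wherever it matters, yielding the tighter intermediate bounds $\+O(LR)$ and $\+O\tp{\frac{L^2R^2}{h_2-h_1}}$ which you then deliberately weaken (via $LR^2\ge h_2-h_1$, which you verify correctly) to the stated form. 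Your route is cleaner and avoids the paper's computation of $\abs{f_\mu(x)-h_2}$ entirely; since the lemma only asserts upper bounds, the weakening is legitimate.

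One technical caveat: your second part differentiates twice and bounds $\norm{\grad^2 \ol{f^{\le 2R}_\pi}}_{\!{op}}$, but under \Cref{assump:smooth} the function $f_\mu$ is only $C^1$ with Lipschitz gradient, so $\grad^2 f_\mu$ need not exist pointwise. The paper sidesteps this by bounding the difference $\grad\ol{f^{\le 2R}_\pi}(x)-\grad\ol{f^{\le 2R}_\pi}(y)$ termwise using Lipschitz estimates rather than a Hessian. Your argument is repairable either by rewriting your four Hessian terms as the corresponding difference terms (each of which you already know how to bound, since $\abs{f_\mu(x)-f_\mu(y)}\le \+O(LR)\norm{x-y}$ and $q'_{\!{mol}}$ is Lipschitz), or by invoking Rademacher's theorem and integrating the a.e.\ derivative along the segment, which is licit because $\grad\ol{f^{\le 2R}_\pi}=\phi\cdot\grad f_\mu$ is a product of bounded Lipschitz functions on the convex set $\+B_{2R}$ and hence itself Lipschitz. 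As written, though, the step ``$\norm{\grad^2 f_\mu(x)}_{\!{op}}\le L$ throughout $\+B_{2R}$'' assumes more regularity than the hypotheses provide.
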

\begin{proof}
        By the definition of $\ol{f^{\le 2R}_\pi}$, for each $x,y\in \+B_{2R}$, direct calculation gives
        \begin{align*}
            \grad \ol{f^{\le 2R}_\pi}(x) &= \grad \mathfrak{g}_{[h_1,h_2]}(x) 
            \cdot \tp{f_\mu(x) - h_2} + \grad f_\mu(x) \cdot \mathfrak{g}_{[h_1,h_2]}(x)
            % ,\\
            % \grad^2 \ol{f^{\le 2R}_\pi}(x) &= \grad^2 \mathfrak{g}_{[h_1,h_2]}(x)\cdot \tp{f_\mu(x) - h_2} + \mathfrak{g}_{[h_1,h_2]}(x)\cdot \grad^2 f_\mu(x)\\
            % &\quad\quad +\grad \mathfrak{g}_{[h_1,h_2]}(x) \grad f_\mu(x)^{\top} + \grad f_\mu(x)\cdot \grad \mathfrak{g}_{[h_1,h_2]}(x)^{\top}
        \end{align*}
        and 
        \begin{align}
            \grad \ol{f^{\le 2R}_\pi}(x) - \grad \ol{f^{\le 2R}_\pi}(y) &= \tp{\grad \mathfrak{g}_{[h_1,h_2]}(x) - \grad \mathfrak{g}_{[h_1,h_2]}(y) }\cdot \tp{f_\mu(x) - h_2} + \grad \mathfrak{g}_{[h_1,h_2]}(y) \cdot (f_\mu(x)-f_{\mu}(y)) \notag \\
            &\quad\quad +\mathfrak{g}_{[h_1,h_2]}(x) \tp{\grad f_\mu(x) - \grad f_\mu(y)} + \grad f_\mu(y)\cdot \tp{\mathfrak{g}_{[h_1,h_2]}(x) - \mathfrak{g}_{[h_1,h_2]}(y)}. \notag
        \end{align}
        By the definition of $\mathfrak{g}_{[h_1,h_2]}$, we have
        \[
            \grad \mathfrak{g}_{[h_1,h_2]}(x) = \frac{-\grad f_\mu(x)}{h_2-h_1} \cdot q_{\!{mol}}'\tp{\frac{ h_2 - f_\mu(x) }{ h_2 - h_1 }}.
        \]
        % and
        % \[
        %     \grad^2 \mathfrak{g}_{[h_1,h_2]}(x) = \frac{\grad f_\mu(x)\cdot \grad f_\mu(x)^{\top}}{(h_2-h_1)^2} \cdot q_{\!{mol}}''\tp{\frac{ h_2 - f_\mu(x) }{ h_2 - h_1 }} - \frac{\grad^2 f_{\mu}(x)}{h_2-h_1} \cdot q_{\!{mol}}'\tp{\frac{ h_2 - f_\mu(x) }{ h_2 - h_1 }}.
        % \]
        It is easy to see $\grad \ol{f^{\le 2R}_\pi}(0)=0$. Since $f$ is $L$-smooth and $\grad f_\mu(0)=0$, for $x\in \+B_{2R}$, $\|\grad f_\mu(x) \| \leq L\|x\| \leq 2LR$ and $\|\grad f_\mu(x) - \grad f_\mu(y) \|\leq L\|x-y\|$. Recall that $q'_{\!{mol}}$ is always $O(1)$. We have $\|\grad \mathfrak{g}_{[h_1,h_2]}(x)\| =\+O\tp{\frac{LR}{h_2-h_1}}$ and 
        \begin{align*}
            \|\grad \mathfrak{g}_{[h_1,h_2]}(x) - \grad \mathfrak{g}_{[h_1,h_2]}(y) \| &\leq \norm{\frac{\grad f_\mu(x)-\grad f_\mu(y)}{h_2-h_1}}\cdot q_{\!{mol}}'\tp{\frac{ h_2 - f_\mu(x) }{ h_2 - h_1 }} \\
            &\quad + \frac{\|\grad f_{\mu}(y)\|}{h_2-h_1} \cdot \abs{q_{\!{mol}}'\tp{\frac{ h_2 - f_\mu(x) }{ h_2 - h_1 }} - q_{\!{mol}}'\tp{\frac{ h_2 - f_\mu(y) }{ h_2 - h_1 }}}\\
            &\leq \+O\tp{\frac{L}{h_2-h_1}}\cdot \|x-y\| + \+O\tp{\frac{LR}{h_2-h_1}\cdot \frac{LR}{h_2-h_1}}\cdot \|x-y\|\\
            &= \+O\tp{\frac{L^2R^2}{(h_2-h_1)^2}}\cdot \|x-y\|.
        \end{align*}
        Consequently, $\abs{f_\mu(x)-f_\mu(y)} \leq \+O(LR)\|x-y\|$ and $\abs{ \mathfrak{g}_{[h_1,h_2]}(x) - \mathfrak{g}_{[h_1,h_2]}(y) } \leq \+O\tp{\frac{LR}{h_2-h_1}}\cdot \|x-y\|$.
        
         Let $x^* = \arg\min_{x\in \+B_{2R}} f_\mu(x)$. We have for any $x\in \+B_{2R}$
         \begin{equation}\label{eq:x*}
             f^*\leq f_\mu(x) \leq f_\mu(x^*) + \grad f_\mu(x^*)\cdot (x-x^*) + \frac{L}{2}\cdot \|x-x^*\|^2 \leq f^* + 16LR^2. 
         \end{equation}
        Recall the definition of $h_2 =  \wh f^* + \log \!{vol}(\+B_{2R}) + \frac{d}{2}\log L + \log \frac{4}{\eps} + \frac{d}{2}\log\frac{LM}{d\eps}$. According to \Cref{prop:Z-and-fmin} and~\eqref{eq:x*}, for $x\in \+B_{2R}$, 
        \[
             f_\mu(x)-h_2\geq - \frac{d}{2}\log \frac{LM}{d\eps} - \log\frac{4}{\eps} + \log\Gamma\tp{\frac{d}{2}+1} - \frac{d}{2}\log \frac{4\cdot 32\pi LM}{\eps} - d,
        \]
        and
        \[
             f_\mu(x)-h_2\leq - \frac{d}{2}\log \frac{LM}{d\eps} - \log\frac{4}{\eps} + \log\Gamma\tp{\frac{d}{2}+1} - \frac{d}{2}\log \frac{4\cdot 32\pi LM}{\eps} + 16LR^2.
        \]    
        Therefore, $\abs{f_\mu(x)-h_2}= \+O(LR^2)$.

        Combining all above, we have
        \[
             \| \grad \ol{f^{\le 2R}_\pi}(x) \| \leq 2LR \cdot \+O\tp{\frac{\abs{f_\mu(x)-h_2}}{h_2-h_1}}  + 2LR = \+O\tp{\frac{L^2R^3}{h_2-h_1}},
        \]
        and 
        \begin{align*}
            \|\grad \ol{f^{\le 2R}_\pi}(x) - \grad \ol{f^{\le 2R}_\pi}(y)\| &\leq \+O\tp{\frac{L^3R^4}{\tp{h_2-h_1}^2} + \frac{L^2R^2}{h_2-h_1} + L } = \+O\tp{\frac{L^3R^4}{\tp{h_2-h_1}^2}}\cdot \|x-y\|.
        \end{align*}
\end{proof}

\begin{lemma}\label{lem:f2smooth}
    % The function $f_\pi$ is $\+O\tp{\frac{L^3R^4}{\tp{h_2-h_1}^2} + \frac{d\eps}{M}}$-smooth.
    The function $f_\pi$ is $\+O\tp{\frac{L^3R^4}{\tp{h_2-h_1}^2}}$-smooth.
\end{lemma}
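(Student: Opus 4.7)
The plan is to verify the Lipschitz bound for $\grad f_\pi$ region by region, following the three pieces in the definition of $f_\pi$: the inner ball $\+B_R$, the exterior $\bb R^d \setminus \+B_{2R}$, and the interpolating annulus $\+B_{2R} \setminus \+B_R$. On $\+B_R$ we have $f_\pi = \ol{f_\pi^{\le 2R}} + \log \wh Z_\mu$, so the Lipschitz constant of $\grad f_\pi$ is $\+O\tp{L^3 R^4 /(h_2-h_1)^2}$ directly from \Cref{lem:smooth1}. On $\bb R^d \setminus \+B_{2R}$, $f_\pi = f_\gamma - \log\eps$ is a quadratic, so $\grad f_\pi$ is $(\eps d/M)$-Lipschitz; thanks to $LM \ge d$ and $\eps \le 1$ this is at most $L$, which is absorbed into the target bound since $R^4/(h_2-h_1)^2$ grows much faster.

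The delicate case is the annulus. Writing $\phi \defeq \mathfrak{g}_{[R,2R]}$ and $g \defeq f_\gamma - \log\eps$, we have $f_\pi = f_\pi^{\le 2R} + \phi\cdot(g - f_\pi^{\le 2R})$, hence
\[
\grad f_\pi = \grad f_\pi^{\le 2R} + \phi\tp{\grad g - \grad f_\pi^{\le 2R}} + (g - f_\pi^{\le 2R})\grad \phi.
\]
Each summand is a product of at most two smooth factors, whose sup-norms and Lipschitz constants on the annulus are readily available: (i) direct differentiation of $\phi$ gives $\|\grad \phi\| = \+O(1/R)$ and $\grad \phi$ is $\+O(1/R^2)$-Lipschitz, using that $q'_{\!{mol}}$ and $q''_{\!{mol}}$ are $\+O(1)$; (ii) $\|\grad g\| \le 2\eps d R/M$ and $\grad g$ is $(\eps d/M)$-Lipschitz since $g$ is quadratic; (iii) \Cref{lem:smooth1} provides $\|\grad f_\pi^{\le 2R}\| = \+O\tp{L^2 R^3/(h_2-h_1)}$ on $\+B_{2R}$ together with a Lipschitz constant of $\+O\tp{L^3 R^4/(h_2-h_1)^2}$ for $\grad f_\pi^{\le 2R}$; (iv) \Cref{lem:fclose} yields $|g - f_\pi^{\le 2R}| = \+O\tp{d \log(LM/(d\eps))}$. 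Bounding the Lipschitz norm of each product by the standard rule ``sup of one factor times Lipschitz of the other, plus Lipschitz of one factor times sup of the other'' then finishes the calculation.

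The main obstacle is the bookkeeping: verifying that every term produced by the product rule is dominated by $\+O\tp{L^3 R^4/(h_2-h_1)^2}$. The worst contributions come from the Lipschitz constant of $(g - f_\pi^{\le 2R}) \grad \phi$, which scales like $d \log(LM/(d\eps))/R^2 \sim \eps d \log(LM/(d\eps))/M$, and from the piece of $\phi(\grad g - \grad f_\pi^{\le 2R})$ involving $\|\grad f_\pi^{\le 2R}\|_\infty$ times the Lipschitz constant of $\phi$, scaling as $L^2 R^2/(h_2 - h_1)$. Using $R^2 = 32M/\eps$, $h_2 - h_1 = \tp{d/2}\log(LM/(d\eps))$, and $LM \ge d$, a short algebraic comparison confirms each such contribution is indeed $\+O\tp{L^3 R^4/(h_2-h_1)^2}$, completing the proof.
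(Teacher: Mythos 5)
Your proposal is correct and follows essentially the same route as the paper's proof: the same three-region decomposition, the same product-rule expansion on the annulus, and the same invocation of \Cref{lem:smooth1} (for the sup-norm and Lipschitz constant of $\grad f_\pi^{\le 2R}$) and \Cref{lem:fclose} (for $|f_\pi^{\le 2R}-f_\gamma+\log\eps|$), with the same dominant terms identified and the same algebraic comparisons via $R^2=32M/\eps$, $h_2-h_1=\frac{d}{2}\log\frac{LM}{d\eps}$, and $LM\ge d$. The only detail worth making explicit is the cross-region case ($x,y$ in different pieces), which the paper handles by splitting the segment at the boundary and using continuity of $\grad f_\pi$.
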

\begin{proof}
    We divide $\bb R^d$ into three parts, $\+B_R$, $\+B_{2R}\setminus \+B_{R}$ and $\bb R^d \setminus \+B_{2R}$. Since our construct guarantees that $f_\pi$ and $\grad f_{\pi}$ are continuous functions, to prove the smoothness of $f_{\pi}$, we only need to bound $\|\grad f_{\pi}(x) - \grad f_{\pi}(y)\|$ for those $x,y$ from the same part. For $x,y$ from different parts, for example, if $x\in \+B_R$ and $y\in \+B_{2R}\setminus \+B_{R}$, we can find a $z$ at the intersection of this two parts such that $\|x-y\|=\|x-z\|+\|z-y\|$ and bounding $\|\grad f_{\pi}(x) - \grad f_{\pi}(y)\|$ can be transformed to bounding $\|\grad f_{\pi}(x) - \grad f_{\pi}(z)\|$ and $\|\grad f_{\pi}(z) - \grad f_{\pi}(y)\|$ respectively.

    By construction, for $x,y\in \+B_R$, $\norm{\grad f_\pi(x)-\grad f_\pi(y)}$ is bounded by \Cref{lem:smooth1}. For $x,y\in \bb R^d \setminus \+B_{2R}$, we know $\|\grad f_\pi(x)-\grad f_\pi(y)\|\leq \frac{\eps d}{M}\cdot \|x-y\|$ and $\frac{\eps d}{M} = \+O\tp{\frac{L^3R^4}{\tp{h_2-h_1}^2}}$. It remains to deal with those $x,y\in \+B_{2R}\setminus \+B_{R}$.

    For $x,y\in \+B_{2R}\setminus \+B_{R}$,
    \[
        \grad f_\pi(x) = - \grad f^{\le 2R}_\pi(x) \cdot \mathfrak{g}_{[R,2R]}(x) - \grad \mathfrak{g}_{[R,2R]}(x) \cdot f^{\le 2R}_\pi(x) + \grad \mathfrak{g}_{[R,2R]}(x) \cdot (f_{\gamma}(x)-\log \eps) + \grad f_{\gamma}(x) \cdot  \mathfrak{g}_{[R,2R]}(x)
    \] 
    and
    \begin{align}
        \|\grad f_\pi(x) - \grad f_\pi(y)\| &\leq  \norm{\grad f^{\le 2R}_\pi(x) - \grad f^{\le 2R}_\pi(y)}\cdot \abs{\mathfrak{g}_{[R,2R]}(x)} + \norm{\grad f^{\le 2R}_\pi(y)}\cdot \abs{\mathfrak{g}_{[R,2R]}(x) - \mathfrak{g}_{[R,2R]}(y)} \notag \\
        &\quad + \abs{f^{\le 2R}_\pi(x) - f^{\le 2R}_\pi(y)}\cdot \|\grad \mathfrak{g}_{[R,2R]}(x)\|  + \norm{\grad \mathfrak{g}_{[R,2R]}(x)}\cdot \abs{f_{\gamma}(x) - f_{\gamma}(y)} \notag  \\
        &\quad + \norm{\grad f_{\gamma}(x) - \grad f_{\gamma}(y)}\cdot \abs{\mathfrak{g}_{[R,2R]}(x)} + \norm{\grad f_{\gamma}(y)}\cdot \abs{\mathfrak{g}_{[R,2R]}(x) - \mathfrak{g}_{[R,2R]}(y)}.\notag  \\
        &\quad + \norm{\grad \mathfrak{g}_{[R,2R]}(x) - \grad \mathfrak{g}_{[R,2R]}(y)} \cdot \abs{f^{\le 2R}_\pi(y) -f_{\gamma}(y)+\log \eps} . \label{eq:grad2}
    \end{align}
    For the first term in \Cref{eq:grad2}, we know from \Cref{lem:smooth1} and the fact $\mathfrak{g}_{[R,2R]}(x)=\+O(1)$ that $\norm{\grad f^{\le 2R}_\pi(x) - \grad f^{\le 2R}_\pi(y)}\cdot \abs{\mathfrak{g}_{[R,2R]}(x)} = \+O\tp{\frac{L^3R^4}{\tp{h_2-h_1}^2}}\cdot \|x-y\|$. For the fifth term, similarly, we have $\norm{\grad f_{\gamma}(x) - \grad f_{\gamma}(y)}\cdot \abs{\mathfrak{g}_{[R,2R]}(x)} = \+O\tp{\frac{\eps d}{M}}\cdot \|x-y\|$.

    By the definition of $\mathfrak{g}_{[R,2R]}$, we have $\grad \mathfrak{g}_{[R,2R]}(x) = \frac{2x}{(2R)^2 - R^2} \cdot q'_{\!{mol}}\tp{\frac{\|x\|^2 - R^2}{(2R)^2 - R^2}} = \+O\tp{\frac{1}{R}}$ for $x\in \+B_{2R}$. Therefore, we can bound the second term in \Cref{eq:grad2} by $\norm{\grad f^{\le 2R}_\pi(y)}\cdot \abs{\mathfrak{g}_{[R,2R]}(x) - \mathfrak{g}_{[R,2R]}(y)} = \+O\tp{\frac{L^2R^2}{h_2-h_1}}\cdot \|x-y\|$ and also bound the third term by $\abs{f^{\le 2R}_\pi(x) - f^{\le 2R}_\pi(y)}\cdot \|\grad \mathfrak{g}_{[R,2R]}(x)\| = \+O\tp{\frac{L^2R^2}{h_2-h_1}}\cdot \|x-y\|$. 

    Since $\grad f_{\gamma}(x) = \frac{\eps d x}{M}$, for any $x\in \+B_{2R}$, $\norm{\grad f_{\gamma}(x)} = \+O\tp{\frac{\eps d R }{M}}= \+O\tp{\frac{d }{R}}$. Then both the fourth and the sixth term in \Cref{eq:grad2} can be bounded by $\+O\tp{\frac{d }{R^2}}\cdot \|x-y\|$.

    Still by the definition of $\mathfrak{g}_{[R,2R]}$, for $x\in \+B_{2R}$,
    \begin{align*}
        \|\grad \mathfrak{g}_{[R,2R]}(x) - \grad \mathfrak{g}_{[R,2R]}(y)\| &= \frac{2\|x-y\|}{(2R)^2 - R^2} \cdot \abs{q'_{\!{mol}}\tp{\frac{\|x\|^2 - R^2}{(2R)^2 - R^2}}} \\
        &\quad + \frac{2\|y\|}{(2R)^2 - R^2} \cdot \abs{q'_{\!{mol}}\tp{\frac{\|x\|^2 - R^2}{(2R)^2 - R^2}} - q'_{\!{mol}}\tp{\frac{\|y\|^2 - R^2}{(2R)^2 - R^2}}} \\
        &\leq \+O\tp{\frac{1}{R^2}}\cdot \|x-y\|.
    \end{align*}
    Therefore, from \Cref{lem:fclose} the last term in \Cref{eq:grad2} can be bounded by  $\+O\tp{\frac{d\log \frac{LM}{d\eps}}{R^2}}\cdot \|x-y\|$.

    % Combining the above equation and \Cref{lem:fclose}, we have $\norm{\grad^2 \mathfrak{g}_{[R,2R]}(x) \cdot \tp{f_\gamma(x) - f^{\le 2R}_\pi(x) - \log\eps}} \leq \+O\tp{\frac{d\eps}{M}\cdot \log\frac{LM}{d\eps}}$. From \Cref{lem:smooth1}, we have $\norm{\grad \mathfrak{g}_{[R,2R]}(x) \grad f^{\le 2R}_\pi(x)^\top},\norm{\grad f^{\le 2R}_\pi(x)\grad \mathfrak{g}_{[R,2R]}(x)^\top} \leq \+O\tp{\frac{L^2R^2}{h_2-h_1}}$. For $\norm{\grad \mathfrak{g}_{[R,2R]}(x) \grad f_\gamma(x)^\top}$ and $\norm{ \grad f_\gamma(x)\grad \mathfrak{g}_{[R,2R]}(x)^\top}$, they can be bounded by $\+O\tp{\frac{d\eps}{M}}$.
    In total, $\grad f_\pi(x)$ is $ \+O\tp{\frac{L^3R^4}{\tp{h_2-h_1}^2}}$-Lipschitz.
\end{proof}

The following result is a corollary of the above lemmas.
\begin{corollary}\label{coro:gap}
    The function $f_{\pi}$ satisfies $f_{\pi}(0) - \min_{x\in \bb R^d} f_{\pi}(x) = \+O\tp{\frac{L^3R^6}{\tp{h_2-h_1}^2}}$.
\end{corollary}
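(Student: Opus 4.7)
The plan is to combine the global smoothness of $f_\pi$ established in \Cref{lem:f2smooth} with the fact that $0$ is a stationary point of $f_\pi$, and then argue that the global minimum of $f_\pi$ must lie inside $\+B_{2R}$, so that the quadratic excess accumulated by smoothness is controlled by $R^2$.

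First, I would note that $\grad f_\pi(0)=0$. Since $\mathfrak{g}_{[R,2R]}(x)\equiv 0$ on $\+B_R$, the construction in \Cref{sec:construction-of-pi} gives $f_\pi(x)=\ol{f^{\le 2R}_\pi}(x)+\log\wh Z_\mu$ for every $x\in \+B_R$, and \Cref{lem:smooth1} furnishes $\grad \ol{f^{\le 2R}_\pi}(0)=0$ (this is where the assumption $\grad f_\mu(0)=0$ propagates through the soft truncation to $\ol{f^{\le 2R}_\pi}$).

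Second, I would show that $\min_{x\in\bb R^d} f_\pi(x)=\min_{x\in \+B_{2R}} f_\pi(x)$. For $\|x\|\ge 2R$, the construction gives $f_\pi(x)=f_\gamma(x)-\log\eps=\frac{\eps d\|x\|^2}{2M}+\frac{d}{2}\log\frac{2\pi M}{d\eps}-\log\eps$, which is strictly increasing in $\|x\|$. Consequently the infimum of $f_\pi$ on $\bb R^d\setminus \+B_{2R}$ is approached on the sphere $\{\|x\|=2R\}\subset \+B_{2R}$, so the global minimum is witnessed inside $\+B_{2R}$.

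Third, letting $L_{\!{smooth}}=\+O\tp{\frac{L^3R^4}{(h_2-h_1)^2}}$ denote the smoothness constant from \Cref{lem:f2smooth}, the standard second-order Taylor inequality combined with $\grad f_\pi(0)=0$ yields, for every $x\in\bb R^d$,
\[
    f_\pi(x)\ge f_\pi(0)-\frac{L_{\!{smooth}}}{2}\|x\|^2.
\]
In particular, for every $x\in \+B_{2R}$, $f_\pi(x)\ge f_\pi(0)-2L_{\!{smooth}}R^2$. Combined with the second step, this gives
\[
    f_\pi(0)-\min_{x\in\bb R^d}f_\pi(x)\le 2L_{\!{smooth}}R^2=\+O\tp{\frac{L^3R^6}{(h_2-h_1)^2}},
\]
as required. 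There is no genuine obstacle; the whole argument simply recognizes $0$ as a critical point of $f_\pi$ and invokes the two structural results already proved, with the only subtlety being the confinement of the global minimum to $\+B_{2R}$, which follows from the Gaussian tail outside.
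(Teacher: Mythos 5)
Your proof is correct and follows essentially the same route as the paper's: confine the minimizer to $\+B_{2R}$ using the monotone Gaussian tail of $f_\pi$ outside, then combine $\grad f_\pi(0)=\grad \ol{f^{\le 2R}_\pi}(0)=0$ with the $\+O\tp{\frac{L^3R^4}{(h_2-h_1)^2}}$-smoothness from \Cref{lem:f2smooth} to bound the drop over a radius-$2R$ ball. No gaps.
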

\begin{proof}
    By the definition of $f_\pi$, it is increasing as $\|x\|$ increases outside $\+B_{2R}$. Therefore, $f_{\pi}(0) - \min_{x\in \bb R^d} f_{\pi}(x) = f_{\pi}(0) - \min_{x\in \+B_{2R}} f_{\pi}(x)$. Since $\grad f_{\pi}(0) = \grad \ol{f^{\le 2R}_\pi}(0)=0$ and $f_{\pi}$ is $\+O\tp{\frac{L^3R^4}{\tp{h_2-h_1}^2}}$-smooth from \Cref{lem:f2smooth}, for arbitrary $x\in \+B_{2R}$, $f_{\pi}(0) - f_{\pi}(x) \leq \+O\tp{\frac{L^3R^6}{\tp{h_2-h_1}^2}}$ and consequently, $f_{\pi}(0) - \min_{x\in \bb R^d} f_{\pi}(x) = \+O\tp{\frac{L^3R^6}{\tp{h_2-h_1}^2}}$.
\end{proof}

Then we bound the first moment of $\pi$.
\begin{lemma}\label{lem:f2moment}
    The first moment of $\pi$ is bounded by $\+O(\sqrt{M})$.
\end{lemma}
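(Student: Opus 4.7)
To bound $\E[\pi]{\norm{X}}$, I would partition $\bb R^d$ into four regions---$\+B_R\setminus\+L$, $\+B_R\cap \+L$, $\+B_{2R}\setminus\+B_R$, and $\bb R^d\setminus\+B_{2R}$, where $\+L\defeq \set{x\in\+B_{2R}:f_\mu(x)\ge h_1}$ as in \Cref{lem:pi-mu-close}---and bound $\int\norm{x}p_\pi(x)\d x$ on each region using the explicit construction of $\pi$ together with the preceding lemmas.

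On $\+B_R\setminus\+L$, the construction gives $p_\pi=p_\mu\cdot Z_\mu/(Z_\pi\wh Z_\mu)$, and \Cref{lem:Zpi-close-to-one} bounds this ratio by at most $1+\eps/8\le 2$, so Cauchy--Schwarz against $\E[\mu]{\norm{X}^2}\le M$ yields a contribution of at most $2\sqrt{M}$. On $\bb R^d\setminus\+B_{2R}$ we have $p_\pi=\eps p_\gamma/Z_\pi$ by construction, and Cauchy--Schwarz against $\E[\gamma]{\norm{X}^2}=M/\eps$ combined with $Z_\pi\ge 1/2$ gives a contribution of at most $2\sqrt{\eps M}$. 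The interpolation region $\+B_{2R}\setminus\+B_R$ is handled by applying convexity of $t\mapsto e^{-t}$ to the identity $f_\pi=(1-\mathfrak{g}_{[R,2R]})f^{\le 2R}_\pi+\mathfrak{g}_{[R,2R]}(f_\gamma-\log\eps)$, yielding the pointwise bound
\[
    p_\pi(x)\le \exp(-\ol{f^{\le 2R}_\pi}(x))/(Z_\pi\wh Z_\mu)+\eps p_\gamma(x)/Z_\pi,
\]
which reduces that region to estimates of the same shape as above (split again by $\+L$ for the first summand, Gaussian tail for the second).

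The main obstacle is the truncation region $\+L$, on which I can only use the pointwise bound $p_\pi(x)\le \exp(-h_1)/(Z_\pi\wh Z_\mu)$, and the naive estimate $\norm{x}\le 2R=\Theta(\sqrt{M/\eps})$ is off by a factor of $\Theta(1/\sqrt{\eps})$. The key that makes this work is the $\frac{d}{2}\log L$ term built into the threshold $h_1$: combined with \Cref{lem:mu-bound}, which gives $\exp(-f^*)/Z_\mu\le (L/(2\pi))^{d/2}$, and with \Cref{prop:Z-and-fmin} (so that $\wh f^*\ge f^*$), one obtains
\[
    \vol(\+B_{2R})\exp(-h_1)/(Z_\pi\wh Z_\mu)\le \eps/(2(2\pi)^{d/2}).
\]
Multiplying by $2R$ produces a contribution of $O(\sqrt{\eps M}\cdot (2\pi)^{-d/2})=O(\sqrt{M})$, since $\eps<1$ and $d\ge 3$. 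Summing the four pieces yields $\E[\pi]{\norm{X}}=O(\sqrt{M})$ as claimed.
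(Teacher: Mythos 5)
Your proof is correct and follows essentially the same route as the paper's: the paper bounds $\E[\pi]{\norm{X}^2}=\+O(M)$ in one chain and applies Jensen, using exactly the same three ingredients you identify (the Gaussian tail with the $\log\eps$ prefactor, the untruncated region compared to $\mu$ via $Z_\mu/(Z_\pi\wh Z_\mu)\le 2$, and the truncated region $\+L$ controlled by the $\frac{d}{2}\log L$ term in $h_1$ together with \Cref{lem:mu-bound}), whereas you bound the first moment directly with a region-wise Cauchy--Schwarz. Your key estimate $\vol(\+B_{2R})e^{-h_1}/(Z_\pi\wh Z_\mu)\le \eps/(2(2\pi)^{d/2})$ is exactly right provided one groups the normalizers as $Z_\mu/(Z_\pi\wh Z_\mu)\le 2$ (as you do in the first region) rather than invoking $\wh Z_\mu\ge\frac12 e^{-d}Z_\mu$ separately, which would leak an $e^{d}$ factor.
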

\begin{proof}
    By the definition of $\pi$ and $f_\gamma$, we have
    \begin{align*}
        \E[X \sim \pi]{\|X\|}^2 &\leq \E[X \sim \pi]{\|X\|^2} \\
        & \le \frac{1}{Z_{\pi}} \tp{\int_{\bb R^d} \|x\|^2\cdot \exp\tp{-f_\gamma(x)+\log \eps} \dd x + \frac{Z_{\mu}}{\wh Z_{\mu}}\cdot \frac{1}{Z_{\mu}}\cdot \int_{\+B_{2R}} \|x\|^2\cdot \exp\tp{-\ol{f^{\le 2R}_\pi}(x)} \dd x } \\
        \mr{\Cref{lem:Zpi-close-to-one}}
        & \leq \frac{M}{Z_{\pi}} + \frac{2}{Z_{\mu}} \tp{\int_{\bb R^d} \|x\|^2\cdot \exp\tp{-f_\mu(x)} \dd x + \int_{\+B_{2R}} \|x\|^2\cdot \exp\tp{-h_1} \dd x } \\
        &\leq \frac{M}{ Z_{\pi}} + 2M + \frac{2}{Z_\mu}\cdot \int_{\+B_{2R}} (2R)^2\cdot \exp\tp{-h_1} \dd x  \\
        \mr{Definition of $h_1$} &\leq \frac{M}{Z_{\pi}} + 2M + \frac{8R^2}{Z_\mu}\cdot \exp\tp{-\wh f^* - \frac{d}{2}\log L - \log \frac{4}{\eps}} \\
        \mr{\Cref{lem:mu-bound}}
        &\leq \frac{M}{Z_{\pi}} + 2M + 2\eps R^2 \cdot (2\pi)^{-\frac{d}{2}}\\
        \mr{\Cref{lem:Zpi-close-to-one}}
        &= \+O\tp{M}.
    \end{align*}
\end{proof}

% \begin{lemma}\label{lem:f2moment}
%     The second moment of $\pi$ is bounded by $\+O(M)$.
% \end{lemma}
% \begin{proof}
%     By the definition of $\pi$ and $f_\gamma$, we have
%     \begin{align*}
%         \E[X \sim \pi]{\|X\|^2} 
%         & \le \frac{1}{Z_{\pi}} \tp{\int_{\bb R^d} \|x\|^2\cdot \exp\tp{-f_\gamma(x)+\log \eps} \dd x + \frac{Z_{\mu}}{\wh Z_{\mu}}\cdot \frac{1}{Z_{\mu}}\cdot \int_{\+B_{2R}} \|x\|^2\cdot \exp\tp{-\ol{f^{\le 2R}_\pi}(x)} \dd x } \\
%         \mr{\Cref{prop:Z-and-fmin}}
%         & \leq \frac{M}{Z_{\pi}} + \frac{2}{Z_{\mu}} \tp{\int_{\bb R^d} \|x\|^2\cdot \exp\tp{-f_\mu(x)} \dd x + \int_{\+B_{2R}} \|x\|^2\cdot \exp\tp{-h_1} \dd x } \\
%         &\leq \frac{M}{ Z_{\pi}} + 2M + \frac{2}{Z_\mu}\cdot \int_{\+B_{2R}} (2R)^2\cdot \exp\tp{-h_1} \dd x  \\
%         \mr{Definition of $h_1$} &\leq \frac{M}{Z_{\pi}} + 2M + \frac{8R^2}{Z_\mu}\cdot \exp\tp{-\wh f^* - \frac{d}{2}\log L - \log \frac{4}{\eps}} \\
%         \mr{\Cref{lem:mu-bound}}
%         &\leq \frac{M}{Z_{\pi}} + 2M + 2\eps R^2 \cdot (2\pi)^{-\frac{d}{2}}\\
%         \mr{\Cref{lem:Zpi-close-to-one}}
%         &= \+O\tp{M}.
%     \end{align*}
% \end{proof}

\subsection{Estimate $f^*$ and $Z_\mu$}\label{sec:estimate-of-pi}

In this section, we prove \Cref{prop:Z-and-fmin}, namely to show that how to get estimators $\wh f^*$ and $\wh Z_{\mu}$ satisfying
\[
    f^* \leq \wh f^* \leq f^* + d \quad \mbox{and}\quad 
    \frac12 e^{-d}\le \frac{\wh Z_{\mu}}{Z_{\mu}} \le 1.
\]
The idea is to discretize $\bb R^d$ into cubes of side length $\ell$ and use information in each cube to construct the estimation. Let $\ell = \frac{1}{64}\sqrt{\frac{d\eps}{L^2 M}}$ and $R_0 = 2R + \sqrt{d \cdot \ell^2} $. Let $\+Z_{R_0} = \+B_{R_0}\cap \ell \bb Z^d$ be the collection of vertices of the cubes in $\+B_{R_0}$.
%\set{x\in \+B_{R_0}:\ x / \ell\in \bb Z^d}$.

\begin{lemma}\label{lem:cubes}
    There are at most $\tp{\frac{2^{10}\cdot 5 LM}{d\eps}}^{d}$ cubes with side length $\ell$ in $\+B_{R_0}$ whose vertices are all in $\+Z_{R_0}$. 
\end{lemma}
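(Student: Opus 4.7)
\medskip

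\noindent\textbf{Proof plan for Lemma \ref{lem:cubes}.}

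The plan is a standard volume-packing argument. Any axis-aligned cube of side length $\ell$ whose vertices all lie in $\+Z_{R_0} \subseteq \+B_{R_0}$ is the convex hull of those $2^d$ vertices, so by the convexity of the ball it is itself contained in $\+B_{R_0}$. Different cubes of the grid $\ell \bb Z^d$ have disjoint interiors, so the number of such cubes is at most
\[
    \frac{\vol(\+B_{R_0})}{\ell^d} \;=\; \frac{\pi^{d/2}\,R_0^d}{\Gamma(d/2+1)\,\ell^d}.
\]
Applying Stirling's lower bound $\Gamma(d/2+1) \ge \sqrt{\pi d}\,(d/(2e))^{d/2}$ turns this upper bound into $\frac{1}{\sqrt{\pi d}} \bigl(\tfrac{2\pi e\,R_0^2}{d\,\ell^2}\bigr)^{d/2}$, so the remaining task is to estimate $R_0/\ell$.

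For the ratio $R_0/\ell$, I would first estimate $\sqrt{d}\,\ell = \frac{\sqrt{d}}{64}\sqrt{\frac{d\eps}{L^2M}}$. Since $LM \ge d$, we have $L^2 M \ge d^2/M$, which gives $\sqrt{d}\,\ell \le \frac{1}{64}\sqrt{\eps M}$. Combined with $R = \sqrt{32M/\eps}$ and $\eps < 1$, this is negligible compared to $R$, so $R_0 = 2R + \sqrt{d}\,\ell \le 3R = 3\sqrt{32M/\eps}$. Meanwhile $1/\ell = 64\sqrt{L^2M/(d\eps)}$, so
\[
    \frac{R_0^2}{\ell^2} \;\le\; 9 \cdot \frac{32M}{\eps} \cdot 64^2 \cdot \frac{L^2M}{d\eps} \;=\; 9 \cdot 32 \cdot 64^2 \cdot \frac{L^2M^2}{d\eps^2}.
\]
Substituting this into the earlier bound gives
\[
    \Bigl(\tfrac{2\pi e\,R_0^2}{d\,\ell^2}\Bigr)^{d/2} \;\le\; \Bigl(\tfrac{2\pi e \cdot 9 \cdot 32 \cdot 64^2}{1}\Bigr)^{d/2}\cdot \Bigl(\tfrac{LM}{d\eps}\Bigr)^{d}.
\]
A direct numerical check shows $\sqrt{2\pi e \cdot 9 \cdot 32 \cdot 64^2}$ is well below $2^{10}\cdot 5 = 5120$, so after absorbing the $\frac{1}{\sqrt{\pi d}}$ prefactor this yields the claimed bound $\bigl(\frac{2^{10}\cdot 5\, LM}{d\eps}\bigr)^{d}$.

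The argument has no real obstacle; the only thing that requires a bit of care is verifying the constant $2^{10}\cdot 5$. The ingredients that make the constant work out are (i) the hypothesis $LM \ge d$ which keeps $\sqrt{d}\,\ell$ much smaller than $R$, so the ``boundary slack'' $\sqrt{d}\,\ell$ in $R_0$ does not inflate the bound, and (ii) Stirling's inequality, whose $(d/(2e))^{d/2}$ factor is what allows the ratio $R_0^2/\ell^2 \sim (LM/\eps)^2/d$ to be divided by $d$ in the final expression, yielding the critical $(LM/(d\eps))^d$ rather than $(LM/\eps)^d$.
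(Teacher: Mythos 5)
Your proposal is correct and follows essentially the same route as the paper: bound the number of disjoint cubes by $\vol(\+B_{R_0})/\ell^d$, apply Stirling's bound on $\Gamma(\frac{d}{2}+1)$, and then estimate $R_0^2/\ell^2$ using $R=\sqrt{32M/\eps}$ and $LM\ge d$; your constant check ($\sqrt{2\pi e\cdot 9\cdot 32\cdot 64^2}\approx 4489 < 5120$) is valid. The only cosmetic difference is that you bound $R_0\le 3R$ outright while the paper expands $R_0^2=4R^2+d\ell^2+4R\sqrt{d}\ell$ term by term, and you make explicit the (correct) convexity observation that each cube lies inside $\+B_{R_0}$.
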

\begin{proof}
    From \Cref{cor:dballvolbound}, $\!{vol}(\+B_{R_0}) = \frac{\tp{\pi R_0^2}^{\frac{d}{2}}}{\Gamma\tp{ \frac{d}{2}+1} } \leq \tp{\frac{2\pi e R_0^2}{d}}^{\frac{d}{2}}$. The volume of a cube with side length $\ell$ is $\tp{\frac{d\eps}{2^{12}\cdot L^2 M}}^{\frac{d}{2}}$. For the cubes whose vertices are all in $\+Z_{R_0}$, the overlapping area is $0$. Therefore, the total number of such cubes is no larger than 
    \begin{align*}
        \tp{2\pi e R_0^2 \cdot \frac{2^{12}\cdot L^2 M}{\eps d^2}}^{\frac{d}{2}} &= \tp{2\pi e \cdot \frac{2^{12}\cdot L^2 M}{\eps d^2}\cdot \tp{4R^2 + d\ell^2 + 4R\cdot \sqrt{d\ell^2}}}^{\frac{d}{2}}\\
        &= \tp{2\pi e \cdot \frac{2^{12}\cdot L^2 M}{\eps d^2}\cdot \tp{\frac{4\cdot 32 M}{\eps} + \frac{d^2\eps}{64^2\cdot L^2 M} + \frac{\sqrt{2}d}{4L}}}^{\frac{d}{2}}\\
        &= \tp{2\pi e \cdot \tp{\frac{2^{19}L^2 M^2}{\eps^2 d^2} + \frac{1}{4} + \frac{2^8\sqrt{2} LM}{\eps d}}}^{\frac{d}{2}}\\
        &\leq \tp{\frac{2^{10}\cdot 5 LM}{d\eps}}^{d}.
    \end{align*}
    % $\tp{2\pi e R_0^2 \cdot \frac{8\cdot 128 L^2 M}{\eps d^2}}^{\frac{d}{2}} = \tp{ 2\pi e \tp{\frac{8\cdot 128^2 M^2 L^2}{\eps^2 d^2} + 1 }}^{\frac{d}{2}} \leq \tp{\frac{2^{10}\cdot 5 LM}{d\eps}}^{d}$.
\end{proof}

We consider those cubes with side length $\ell$ and all vertices in $\+Z_{R_0}$. Let $n$ be the total number of such cubes. From \Cref{lem:cubes}, $n\leq \tp{\frac{2^{10}\cdot 5 LM}{d\eps}}^{d}$. Denote these cubes as $C_1,C_2,\dots,C_n$ and let $v_1,v_2,\dots, v_n$ be the center of these cubes. We first show that these cubes well cover the ball $\+B_{2R}$.
\begin{lemma}\label{lem:cubecover}
    For each $x\in \+B_{2R}$, there exists $i\in[n]$ such that $x\in C_i$.
    % $\|x - v_i\|\leq 2\sqrt{d}\ell$.
\end{lemma}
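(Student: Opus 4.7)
The plan is to exhibit explicitly, for any $x\in\+B_{2R}$, a cube on the grid $\ell\bb Z^d$ which both contains $x$ and has all vertices inside $\+B_{R_0}$. The definition of $R_0=2R+\sqrt{d}\cdot \ell$ is clearly tailored to exactly this: it is $2R$ plus the length of the diagonal of a single grid cube, so the proof should be a direct calculation.

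Concretely, writing $x=(x_1,\dots,x_d)$, I would take the cube $C=\prod_{j=1}^d\bigl[\ell\lfloor x_j/\ell\rfloor,\; \ell(\lfloor x_j/\ell\rfloor+1)\bigr]$. By construction $x\in C$ and every vertex of $C$ lies in $\ell\bb Z^d$. Then for each vertex $v$ of $C$, one has $|v_j-x_j|\le \ell$ for every coordinate $j$, so $\|v-x\|\le \sqrt{d}\cdot \ell$. By the triangle inequality, $\|v\|\le \|x\|+\|v-x\|\le 2R+\sqrt{d}\cdot \ell=R_0$, which places $v\in \+B_{R_0}\cap \ell\bb Z^d=\+Z_{R_0}$. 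Therefore all $2^d$ vertices of $C$ lie in $\+Z_{R_0}$, so $C$ is one of the enumerated cubes $C_1,\dots,C_n$ and $x\in C_i$ for that index $i$.

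I do not anticipate a real obstacle: this is essentially an unpacking of the definition of $R_0$. The only mild subtlety is the boundary case where some $x_j$ is an integer multiple of $\ell$, in which case $x$ lies on the face shared by several cubes of the grid; but then $x$ lies in each of them (as a closed set) and the argument above still selects one such cube among the $C_i$. Hence the claim follows.
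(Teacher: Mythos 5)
Your proof is correct and follows essentially the same route as the paper: exhibit the grid cube containing $x$ and verify that all its vertices lie in $\+B_{R_0}$, using that $R_0=2R+\sqrt{d}\,\ell$ accounts for the cube diagonal. The only difference is cosmetic — the paper rounds coordinates toward the origin and bounds vertex norms by expanding $\|y\|^2$ coordinate-wise with Cauchy--Schwarz, whereas you round down and apply the triangle inequality $\|v\|\le\|x\|+\|v-x\|$ directly, which is a slightly cleaner way to reach the same bound.
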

\begin{proof}
    % Assume in contradiction that we cannot find such a $v_i$ for some point $y\in \+B_{R'}$. 
    For each point $x\in \+B_{2R}$, we define $\ol x\in \bb R^d$ as
    $
        \forall j\in[d],\  \ol x(j) = \begin{cases}
            \lfloor \frac{x(j)}{\ell} \rfloor \cdot \ell, &\mbox{ if } x(j)\geq 0\\
            \lceil \frac{x(j)}{\ell} \rceil\cdot \ell, &\mbox{ if } x(j)< 0
        \end{cases}.
    $
    Consider the following cube
    \[
        C_x = \set{y\in \bb R^d:\ \forall j \in [d], y(j)\in \begin{cases}
            [\ol x(j), \ol x(j) + \ell], &\mbox{ if }\ol x(j)\ge 0\\
            [\ol x(j) - \ell, \ol x(j)], &\mbox{ if }\ol x(j)< 0
        \end{cases}}.
    \]
    It is obvious that $x\in C_x$ and for each vertex $y\in C_x$,
    \begin{align*}
        \|y\|^2 &\leq \sum_{j=1}^d \tp{\ol x(j)^2 + \ell^2 + 2\ell \cdot \abs{\ol x(j)}} \\
        &= \|\ol x\|^2 + d\ell^2 + 2\sqrt{d\ell^2}\cdot \frac{\sum_{j=1}^n \abs{\ol x(j)}}{\sqrt{d}}\\
        \mr{Cauchy-Schwartz inequality}&\leq \|\ol x\|^2 + d\ell^2 + 2\sqrt{d\ell^2}\cdot \|\ol x\|
        \\ 
        &\leq 4R^2 + d\ell^2 + 2\sqrt{d\ell^2}\cdot 2R
        \\ 
        &\leq R_0^2.
    \end{align*}
    Therefore $y\in \+Z_{R_0}$ and there exists $i\in [n]$ such that $C_i=C_x$.

    % We have $\ol x\in \+B_{2R} \cap \+Z_{R_0}$ and $\|\ol x - x\|^2 \leq d\ell^2$. Consider the following cube
    % \[
    %     C_x = \set{y\in \bb R^d:\ \forall j \in [d], y(j)\in \begin{cases}
    %         [\ol x(j), \ol x(j) + \ell], &\mbox{ if }\ol x(j)< 0\\
    %         [\ol x(j) - \ell, \ol x(j)], &\mbox{ if }\ol x(j)\geq 0
    %     \end{cases}}.
    % \]
    % It is obvious that there exists $i\in [n]$ such that $C_i=C_x$ and $\|x - v_i\|\leq \|x - \ol x\| + \|\ol x - v_i\| \leq 2\sqrt{d}\ell$.
\end{proof}

In the following, we will write $v_x$ for $v_i$ where $i$ is the unique $i$ such that $x\in C_i$.
%For each $x\in \+B_{2R}$, we denote $v_x$ as the vector $v_i$ where $i$ is the smallest index in $[n]$ such that $x\in C_i$. 
% For $x\in \+B_{2R} \setminus \tp{\bigcup_{j\in[n]} C_j}$, let $v_i$ be the closest point to $x$ in $\set{v_j}_{j\in[n]}$. 
% We denote the unique $v_i$ associated with each $x$ as $v_x$. 
%Let $\+J = \set{i\in[n]: \exists x\in \+B_{2R}, v_x = v_i}$. 
Let $\+J\defeq \set{i\in [n]\cmid C_i\cap \+B_{2R}\ne\emptyset}$. To estimate $f^*$ and $Z_{\mu}$, we query the value of each $f_\mu(v_i)$ and assign
\begin{equation}\label{eqn:hat}
    \wh f^* = \min_{i\in \+J} f_\mu(v_i) + \frac{d}{2},\quad \wh Z_{\mu} = \sum_{i=1}^n \!{vol}(C_i) \cdot \exp\tp{-f_\mu(v_i) - \frac{d}{2}}.
\end{equation}

%\mn{Here we add a multiplier $\frac{1}{\eps}$ to $\hat Z_{\mu}$ artificially to make $\hat Z_{\mu}$ large enough.}
Then the query complexity to determine $\wh f^*$ and $\wh Z_{\mu}$ is at most $\tp{\frac{2^{10}\cdot 5 LM}{d\eps}}^{d}$. In the following, we show that our construction satisfies the accurancy requirement, namely that

% The following lemma shows that such constructions indeed satisfy \Cref{assump:Z-and-fmin}.
 \begin{lemma}\label{lem:estimate}
     The construction of $\wh f^*$ and $\wh Z_{\mu}$ in~\eqref{eqn:hat} satisfies
     \[
    f^* \leq \wh f^* \leq f^* + d \quad \mbox{and}\quad 
    \frac12 e^{-d}\le \frac{\wh Z_{\mu}}{Z_{\mu}} \le 1.
    \]
 \end{lemma}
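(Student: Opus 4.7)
My plan is to first establish the key pointwise approximation estimate that drives everything: for any cube $C_i$ and any $x\in C_i$ with center $v_i\in \+B_{R_0}$, we have $|f_\mu(x)-f_\mu(v_i)|\le d/2$. This is proved from the $L$-smoothness of $f_\mu$ together with the hypothesis $\grad f_\mu(0)=0$, which implies $\|\grad f_\mu(v_i)\|\le L\|v_i\|\le LR_0$. A second-order Taylor expansion then yields
\[
    |f_\mu(x)-f_\mu(v_i)| \le LR_0\cdot \frac{\sqrt{d}\ell}{2}+\frac{L}{2}\cdot\frac{d\ell^2}{4}.
\]
Substituting $\ell = \frac{1}{64}\sqrt{\frac{d\eps}{L^2M}}$ and $R_0 = 2R+\sqrt{d}\ell$ with $R=\sqrt{32M/\eps}$ makes the first term of order $d$ with a small constant and the second term vanishingly small; a careful numerical check confirms that the total is bounded by $d/2$.

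Given this uniform estimate, the bounds on $\wh f^*$ follow quickly. For the upper bound $\wh f^*\le f^*+d$: the minimizer $x^*$ attaining $f^*$ lies in $\+B_{2R}$, hence by \Cref{lem:cubecover} belongs to some cube $C_{i^*}$ with $i^*\in\+J$, and therefore $f_\mu(v_{i^*})\le f_\mu(x^*)+d/2 = f^*+d/2$, giving $\wh f^*\le f^*+d$. For the lower bound $\wh f^*\ge f^*$: for any $i\in\+J$ pick any $y\in C_i\cap \+B_{2R}$, so $f_\mu(y)\ge f^*$ and thus $f_\mu(v_i)\ge f_\mu(y)-d/2\ge f^* - d/2$. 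Taking the minimum over $i\in\+J$ and adding $d/2$ gives $\wh f^*\ge f^*$.

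For the bounds on $\wh Z_\mu$, the same pointwise estimate gives, for every $x\in C_i$,
\[
    \exp\!\tp{-f_\mu(v_i)-\tfrac{d}{2}} \;\le\; \exp\!\tp{-f_\mu(x)} \;\le\; e^d\cdot \exp\!\tp{-f_\mu(v_i)-\tfrac{d}{2}}.
\]
Integrating both sides over each $C_i$ and summing (the $C_i$ overlap only on measure-zero boundaries and are contained in $\bb R^d$, while by \Cref{lem:cubecover} their union contains $\+B_{2R}$), we obtain
\[
    \wh Z_\mu \;\le\; \int_{\bigcup_i C_i} e^{-f_\mu(x)}\,\d x \;\le\; Z_\mu,
\]
which is the upper bound, and
\[
    e^d\,\wh Z_\mu \;\ge\; \int_{\bigcup_i C_i} e^{-f_\mu(x)}\,\d x \;\ge\; \int_{\+B_{2R}} e^{-f_\mu(x)}\,\d x \;\ge\; \tp{1-\tfrac{\eps}{32}}Z_\mu \;\ge\; \tfrac{1}{2}Z_\mu,
\]
where the penultimate inequality follows from Markov's inequality applied to the second-moment bound for $\mu$ (analogous to \eqref{eqn:markov-mu} but with $2R$ in place of $R$).

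The main obstacle is the first step: the numerical verification that $\ell$ has been chosen small enough that the per-cube error $|f_\mu(x)-f_\mu(v_i)|$ lands at exactly $d/2$, so that both $\wh f^*$ is within an additive $d$ of $f^*$ and $\wh Z_\mu$ is within the multiplicative factor $e^d$ of $Z_\mu/2$. Once that arithmetic is done, the two sandwich arguments above are routine, and the query count from \Cref{lem:cubes} gives the $\+O\tp{LM/(d\eps)}^d$ bound required by \Cref{prop:Z-and-fmin}.
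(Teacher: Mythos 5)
Your proposal is correct and follows essentially the same route as the paper's proof: the uniform per-cube estimate $\abs{f_\mu(x)-f_\mu(v_i)}\le d/2$ from $L$-smoothness and $\grad f_\mu(0)=0$, the sandwich for $\wh f^*$, and the integration of the pointwise bounds over the cubes for $\wh Z_\mu$. The only (immaterial) difference is in the lower bound on $\wh Z_\mu/Z_\mu$, where you invoke Markov's inequality on $\+B_{2R}$ directly to get $\int_{\bigcup_i C_i} e^{-f_\mu}\ge \frac12 Z_\mu$, while the paper phrases the same fact as the mass outside $\bigcup_i C_i$ being smaller than the mass inside.
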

 \begin{proof}
    Since the function $f_\mu$ is $L$-smooth and $\grad f_\mu(0)=0$, for each $x\in \+B_{R_0}$, $\|\grad f_\mu(x)\|\leq L\|x\| \leq LR_0$. From \Cref{lem:cubecover}, we always have $\|x - v_x\| \leq \sqrt{d}\ell$. Then by the definition of $L$-smooth,
    \begin{align*}
        f_\mu(v_x) &\leq f_\mu(x) + \grad f_\mu(x)^{\top} (x-v_x) + \frac{L}{2} \|x - v_x\|^2\\
        & \leq f_\mu(x) + \|\grad f_\mu(x)\| \cdot \|x-v_x\| + \frac{L}{2} \|x - v_x\|^2 \\
        &\leq f_\mu(x) + LR_0\cdot \sqrt{d}\ell + \frac{L}{2}\cdot d\ell^2 \leq f_\mu(x) + \frac{d}{2}
    \end{align*}
    % \htodo{The universal constants here might be inaccurate.}
    and similarly
    \begin{align*}
        f_\mu(x) & \leq f_\mu(v_x) + \grad f_\mu(v_x)^{\top} (v_x - x) + \frac{L}{2} \|x - v_x\|^2 \leq f_\mu(v_x) + \frac{d}{2}.
    \end{align*}
    Therefore, $f^* = \min_{x\in \+B_{2R}} f_\mu(x) \leq \min_{x\in \+B_{2R}} f_\mu(v_x) + \frac{d}{2} = \wh f^*$ and $f^* \geq \min_{x\in \+B_{2R}} f_\mu(v_x) - \frac{d}{2} = \wh f^* - d$.  From the same calculation, we know for each $x\in C_i$, $f_\mu(v_i) - \frac{d}{2} \leq f_\mu(x)\leq f_\mu(v_i) + \frac{d}{2}$. For $\wh Z_{\mu}$, we have
    \[
        Z_{\mu} \geq \sum_{i=1}^n \int_{C_i} \exp\tp{-f_\mu(x)} \d x \geq \sum_{i=1}^n \!{vol}(C_i)\cdot \exp\tp{-f_\mu(v_i) - \frac{d}{2}} = \wh Z_{\mu}.
    \]
    On the other hand, since $\+B_R\subseteq \bigcup_{i\in[n]} C_i$, we have $\int_{\bb R^d \setminus \tp{\bigcup_{i\in[n]} C_i}} \exp\tp{-f_\mu(x)} \d x < \int_{\bigcup_{i\in[n]} C_i} \exp\tp{-f_\mu(x)} \d x$. Therefore,
    \[
        Z_{\mu}\leq 2\sum_{i=1}^n \int_{C_i} \exp\tp{-f_\mu(x)} \d x \leq 2\sum_{i=1}^n \!{vol}(C_i)\cdot \exp\tp{-f_\mu(v_i) + \frac{d}{2}} = 2e^d\cdot \wh Z_{\mu}.
    \]
\end{proof}

\subsection{Proof of \Cref{thm:main-ub}} \label{sec:proof-of-ub}

From previous sections we know that the distribution $\pi$ is $\+O\tp{\frac{L^3R^4}{\tp{h_2-h_1}^2}}$-log-smooth (\Cref{lem:f2smooth}), has its first moment bounded by $\+O(\sqrt{M})$ (\Cref{lem:f2moment}), satisfies $\DTV\tp{\pi,\mu}= \frac{\eps}{2}$ (\Cref{lem:pi-mu-close}) and satisfies $C_{\!{PI}}\ge \frac{2d\eps}{M}\cdot \tp{\frac{LM}{d\eps}}^{-\+O(d)}$ (\Cref{lem:pi-PI}). Moreover, we can query $f_\pi(x)$ and $\grad f_\pi(x)$ efficiently, provided query access to $f_\mu$ and $\grad f_\mu$. 

Therefore, we can use the algorithm in~\cite{BCE+22} to sample from $\pi$ (see also~\cite[Chapter 11]{Che24}). Let $N$ be the total steps and $h$ be the step size. To sample from a target distribution $\nu\propto e^{-f}$, their algorithm acts as follows:
\begin{itemize}
    \item[1.] Pick a time $t_0\in[0,Nh]$ uniformly at random.
    \item[2.] Let $k_0$ be the largest integer such that $k_0 h<t$. For each $t<t_0$ and $k\leq k_0$, the process evolves as 
    \begin{equation}
        X_t = X_{kh} - (t-kh) \grad f(X_{kh}) + \sqrt{2}(B_t-B_{kh}), \label{eq:LMC2}
    \end{equation} 
    where $\set{B_t}_{t\geq 0}$ is the standard Brownian motion.
    \item[3.] Output $X_{t_0}$.
\end{itemize}

\begin{theorem}[A direct consequence of Corollary 8 in \cite{BCE+22}]\label{thm:LMCforPI}
    Let $\set{\mu_t}_{t\geq 0}$ denote the law of the interpolation \Cref{eq:LMC2} of LMC. Assume the potential function $f$ is $\+L$-smooth and the target distribution $\nu\propto e^{-f}$ satisfies the \Poincare inequality with constant $\alpha>0$. If $\!{KL}(\mu_0 \| \nu)\leq K_0$, choosing step size $h=\frac{\sqrt{K_0}}{2\+L\sqrt{dN}}$, then for $N\geq \max\set{\frac{32^2 \alpha^{-2} \+L^2 dK_0}{\delta^4}, \frac{9K_0}{d}}$ and $\ol \mu_{Nh}\defeq \frac{\int_0^{Nh} \mu_t \d t}{Nh}$, $\DTV(\ol \mu_{Nh},\nu)\leq \delta$.
\end{theorem}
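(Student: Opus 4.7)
The plan is to derive the statement as a direct specialization of Corollary 8 in \cite{BCE+22}. That corollary analyzes the interpolated LMC process and gives an averaged $\chi^2$-divergence bound of the rough form $\chi^2(\ol \mu_{Nh}\|\nu) \lesssim \frac{K_0}{\alpha Nh} + \frac{\+L^2 d h}{\alpha}$ whenever $\nu$ satisfies $\alpha$-Poincar\'e, $f$ is $\+L$-smooth, $\!{KL}(\mu_0\|\nu)\le K_0$, and the step size obeys a mild smallness condition $\+L h \lesssim 1$. From here the claim will follow by (i) plugging in the specific $h=\sqrt{K_0}/(2\+L\sqrt{dN})$ to balance the two terms in the bound, and (ii) converting the resulting $\chi^2$ bound into a TV bound via Pinsker and the comparison $\!{KL}\le \chi^2$.

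First I would restate Corollary 8 of \cite{BCE+22} in the above form and substitute the stated step size. With $h=\sqrt{K_0}/(2\+L\sqrt{dN})$, each of the two terms becomes proportional to $\frac{\+L}{\alpha}\sqrt{dK_0/N}$, so we obtain $\chi^2(\ol \mu_{Nh}\|\nu) \le \frac{C\+L}{\alpha}\sqrt{dK_0/N}$ for an absolute constant $C$. The auxiliary hypothesis $N\ge 9K_0/d$ is precisely what is needed so that this choice of $h$ meets the smallness condition required by Corollary 8, since $\+L h = \tfrac{1}{2}\sqrt{K_0/(dN)}$ becomes $O(1)$ exactly when $N\gtrsim K_0/d$.

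Next I would convert to total variation via the standard chain $\DTV(\ol \mu_{Nh},\nu) \le \sqrt{\!{KL}(\ol \mu_{Nh}\|\nu)/2} \le \tfrac{1}{2}\sqrt{\chi^2(\ol \mu_{Nh}\|\nu)}$, where the first inequality is Pinsker and the second uses $\!{KL}\le \log(1+\chi^2)\le \chi^2$. Demanding the right-hand side to be at most $\delta$ translates to $\chi^2(\ol \mu_{Nh}\|\nu)\le 4\delta^2$, which, after plugging in the bound from the previous paragraph, becomes $N\ge \tp{\frac{C\+L}{2\alpha\delta^2}}^{2} d K_0$; tracking constants gives the stated threshold $N\ge 32^2\alpha^{-2}\+L^2 d K_0/\delta^4$.

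The main obstacle is almost entirely bookkeeping: matching the explicit prefactor $32^2$ requires threading the absolute constants through the statement of Corollary 8, the Poincar\'e-to-$\chi^2$ comparison (which is where the factor $\alpha^{-1}$ enters), and the two $\tfrac{1}{2}$ factors picked up in Pinsker and in the $\chi^2$-to-$\DTV$ passage. There is no new dynamical or functional-inequality content beyond what is already packaged in \cite{BCE+22}; the remaining work is a mechanical verification that the interpolation process in \Cref{eq:LMC2} and the hypotheses on $f$ and $\nu$ stated here coincide with those of the cited corollary.
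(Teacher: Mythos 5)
The paper offers no proof of this statement—it is imported verbatim as a consequence of Corollary 8 in \cite{BCE+22}—so the question is whether your derivation from that source is sound. Your skeleton is right in two places: the condition $N\ge 9K_0/d$ is exactly the step-size constraint of \cite{BCE+22} (namely $h\le \frac{1}{6\+L}$) rewritten for $h=\frac{\sqrt{K_0}}{2\+L\sqrt{dN}}$, and with this $h$ the two error terms indeed balance to order $\+L\sqrt{dK_0/N}$, after which one solves for $N$.

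The genuine gap is the middle step. Corollary 8 of \cite{BCE+22}, and the theorem behind it, does not provide an averaged chi-squared bound of the form $\chi^2(\ol\mu_{Nh}\|\nu)\le C\big(\frac{K_0}{\alpha Nh}+\frac{\+L^2 dh}{\alpha}\big)$. What their discretization analysis controls under smoothness alone is the averaged \emph{relative Fisher information}, $\mathrm{FI}(\ol\mu_{Nh}\|\nu)\le \frac{2K_0}{Nh}+c\,\+L^2 dh$, and the \Poincare inequality enters only afterwards, to convert Fisher information into total variation via $\DTV(\ol\mu_{Nh},\nu)^2\le \frac{1}{2\alpha}\,\mathrm{FI}(\ol\mu_{Nh}\|\nu)$ (apply the \Poincare inequality to the square root of the density ratio and pass through the Hellinger distance). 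Under a \Poincare inequality alone one cannot bound $\mathrm{KL}$ or $\chi^2$ of the averaged LMC iterate by such Fisher-information quantities—controlling KL by Fisher information is precisely the log-Sobolev inequality—so your chain ``$\chi^2$ bound, then $\mathrm{KL}\le\chi^2$, then Pinsker'' starts from a premise the cited corollary does not supply. If you replace the alleged $\chi^2$ bound by the FI bound and use the FI-to-TV conversion above, the rest of your bookkeeping (balancing the two terms with the stated $h$, and translating $h\le\frac{1}{6\+L}$ into $N\ge \frac{9K_0}{d}$) goes through and recovers the stated thresholds up to the explicit constants.
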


To get a convergence guarantee for our target distribution $\pi$, it remains to find an initial distribution $\mu_0$ such that $\!{KL}(\mu_0 \| \pi)$ is bounded. \Cref{lem:f2smooth} shows that $f_{\pi}$ is $\+L$-smooth for $\+L=\+O\tp{\frac{L^3R^4}{\tp{h_2-h_1}^2}}$. By choosing $\mu_0$ as $\+N\tp{0, \frac{\!{Id}_d}{2\+L}}$, we can bound $\!{KL}(\mu_0 \| \pi)$ using the following lemma.
\begin{lemma}[A direct corollary of Lemma 32 in \cite{CEL+24}]\label{lem:initial}
    Suppose $\grad f(0)= 0$ and $f$ is $\+L$-smooth. Let $m = \E[X\sim \nu]{\|X\|}$ be the first moment of $\nu \propto e^{-f}$. Then for $\mu_0=\+N\tp{0, \frac{\!{Id}_d}{2\+L}}$,
    \[
        \log \tp{\sup \frac{\d \mu_0}{\d \nu}} \leq 2+\+L + f(0)-\min_{x\in \bb R^d} f(x) + \frac{d}{2}\log \tp{4m^2\+L}.
    \]
\end{lemma}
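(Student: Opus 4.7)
The plan is to bound the log Radon-Nikodym derivative pointwise by writing it out explicitly, controlling the term $f(x)-\+L\norm{x}^2$ via smoothness at the origin, and controlling $\log Z_\nu$ via Markov's inequality applied with the first-moment bound $m$.

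First, I would compute directly: since $\mu_0=\+N(0,\frac{1}{2\+L}\!{Id}_d)$ has density $p_{\mu_0}(x)=(\+L/\pi)^{d/2}e^{-\+L\norm{x}^2}$, and $\nu$ has density $p_\nu(x)=Z_\nu^{-1}e^{-f(x)}$, one has
\[
    \log\frac{\d\mu_0}{\d\nu}(x) = \frac{d}{2}\log\frac{\+L}{\pi} - \+L\norm{x}^2 + f(x) + \log Z_\nu.
\]
Because $\grad f(0)=0$ and $f$ is $\+L$-smooth, $f(x)\le f(0)+\frac{\+L}{2}\norm{x}^2$, so $f(x)-\+L\norm{x}^2\le f(0)-\frac{\+L}{2}\norm{x}^2\le f(0)$, eliminating the $x$-dependence except through $\log Z_\nu$.

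Next I would bound $\log Z_\nu$ using the first moment. By Markov's inequality, $\Pr[X\sim\nu]{\norm{X}>2m}\le \frac{1}{2}$, hence $\int_{\+B_{2m}} e^{-f(x)}\d x \ge \frac{Z_\nu}{2}$. On the other hand $e^{-f(x)}\le e^{-f^*}$ where $f^*=\min_x f(x)$, so
\[
    Z_\nu \le 2\,e^{-f^*}\cdot \!{vol}(\+B_{2m}) = \frac{2\,e^{-f^*}(4\pi m^2)^{d/2}}{\Gamma(d/2+1)}.
\]
Taking logarithms gives $\log Z_\nu \le \log 2 - f^* + \frac{d}{2}\log(4\pi m^2) - \log\Gamma(d/2+1)$.

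Combining the two bounds yields
\[
    \log\frac{\d\mu_0}{\d\nu}(x) \le \frac{d}{2}\log\frac{\+L}{\pi} + f(0) + \log 2 - f^* + \frac{d}{2}\log(4\pi m^2) - \log\Gamma(d/2+1) = f(0)-f^* + \frac{d}{2}\log(4m^2\+L) + \log 2 - \log\Gamma(d/2+1).
\]
Since $\Gamma(d/2+1)\ge 1$ for $d\ge 1$ and $\log 2\le 2\le 2+\+L$, the residual constants are absorbed into the claimed $2+\+L$ slack, finishing the argument.

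The proof is essentially elementary bookkeeping; there is no serious obstacle, the only mildly delicate point being that the $\+L$-smoothness at $0$ (rather than a global gradient bound) is exactly what is needed to cancel the quadratic contribution $\+L\norm{x}^2$ from the Gaussian and make the supremum over $x$ finite. The use of the first moment $m$ through Markov is the only place where the assumption on $\nu$ beyond smoothness enters.
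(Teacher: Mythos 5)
Your proof is correct. The paper does not actually prove this lemma --- it states it as ``a direct corollary of Lemma 32 in \cite{CEL+24}'' and relies entirely on that citation --- so your self-contained elementary derivation is a genuinely different (and arguably more useful) route. Your two ingredients are exactly the right ones: the smoothness bound $f(x)\le f(0)+\frac{\+L}{2}\|x\|^2$ at the stationary point $0$ cancels the Gaussian's $-\+L\|x\|^2$ and makes the supremum finite, and Markov's inequality with the first moment gives $Z_\nu\le 2e^{-f^*}\vol(\+B_{2m})$; combining the $\tp{\+L/\pi}^{d/2}$ normalization with $\vol(\+B_{2m})=(4\pi m^2)^{d/2}/\Gamma(d/2+1)$ produces precisely the $\frac{d}{2}\log(4m^2\+L)$ term, and in fact your residual constant $\log 2-\log\Gamma(d/2+1)$ is sharper than the stated $2+\+L$. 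One microscopic inaccuracy: $\Gamma(d/2+1)\ge 1$ fails at $d=1$ (where $\Gamma(3/2)=\sqrt{\pi}/2\approx 0.886$), but since $-\log\Gamma(d/2+1)\le 0.13$ for all $d\ge 1$ the absorption into $2+\+L$ still goes through, and the paper only invokes the lemma for $d\ge 3$ anyway. What the citation-based route buys the authors is brevity; what your route buys is that the reader can verify the constant and see exactly where the hypotheses $\grad f(0)=0$ and the first-moment bound enter.
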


Combining \Cref{coro:gap}, \Cref{lem:f2moment} and \Cref{lem:initial}, $\!{KL}(\mu_0\|\pi)$ can be bounded by $\!{poly}(d,M,L,\eps^{-1})$. Therefore, we can choose $\delta = \frac{\eps}{2}$ in \Cref{thm:LMCforPI} to sample from a distribution whose total variation distance is at most $\frac{\eps}{2}$ to $\pi$ with $\!{poly}(L,M, d,\eps^{-1})\cdot \tp{\frac{LM}{\eps d}}^{\+O(d)}$ queries to $f_\mu$ and $\grad f_{\mu}$.

\section{The smoothness conditions}\label{sec:OU-smooth}

In this section, we will compare \Cref{assump:smooth} with the smoothness assumption in \cite{HZD+24} and prove \Cref{thm:main-smooth}.

Recall that we assume the target distribution $\mu$ with density $p_{\mu}$ to be $L$-log-smooth. This assumption is typically essential for bounding the discretization error of sampling algorithms. In many works based on denoising diffusion probabilistic models (DDPMs) (e.g. \cite{CCL+23,LLT23,CLL23,HZD+24}), they further assume that the distributions during the OU process starting from $\mu$ are also $L$-log-smooth. 
% \ctodo{What is the $L$-smoothness assumption in those ``score function estimation $\implies$ good sampler'' works}

The definition for the OU process is as follows. Suppose we start from a random point $X_0\sim \mu$. The OU process $\set{X_t}_{t\geq 0}$ evolves with the following equation
\[
    \d X_t = -X_t \dd t + \sqrt{2}\d B_t
\]
where $\set{B_t}_{t>0}$ is the standard Brownian motion. The solution of the above equations is
\begin{equation}
    X_t = e^{-t}X_0 + \sqrt{2}\cdot e^{-t}\int_{0}^t e^s \d B_s. \label{eq:OU}
    % X_t = e^{-t}X_0 + \sqrt{1-e^{-2t}} Z_t, \label{eq:OU}
\end{equation}
From direct calculation, we know that $
\sqrt{2}\cdot e^{-t}\int_{0}^t e^s \d B_s \sim \+N\tp{0, (1-e^{-2t})I_d}$.
% where $Z_t$ is drawn from a standard Gaussian distribution.

There have been many convergence guarantees for the DDPMs.  Let $\mu_t$ be the distribution of $X_t$ and $p_t$ be the corresponding density function. Assuming the second-moment of $\mu$ is bounded by $M$ and $\log p_t$ being $L$-smooth for any $t\geq 0$, the work of \cite{HZD+24} proposed an algorithm that guarantees with high probability, the output distribution is $\tilde{\+O}(\eps)$-close to the target distribution in KL divergence, requiring at most $\exp\set{\+O\tp{L^3\cdot \log^3\frac{Ld+M}{\eps}} \cdot \max\ab\{\log\log Z^2,1\}}$ queries \footnote{Here $Z$ is the maximum norm of particles appeared in their algorithm.}. 
% This implies that as long as the smoothness condition of $\log p_t$ is satisfied with constant $L$, a quasi-polynomial sampling algorithm exists. 

It has been known that a smooth $\log p_{\mu}$ can imply the smoothness of $\log p_t$ in some specific cases, for example when $t$ is small (\cite{CLL23}) or when $\mu$ is strongly log-concave (\cite{LPSR21}). Via the techniques in Lemma~12 of \cite{CLL23}, one can prove an $O(d)$ upper bound of $\|\grad^2 \log p_t\|_{\!{op}}$ in expectation when $t=\Omega(1)$. However, this bound does not offer much utility for the algorithm of \cite{HZD+24}, as their results will be super-exponential under an $O(d)$-smoothness bound.

If the smoothness bound for $\log p_{\mu}$ is large, the bound in \cite{HZD+24} will be poor. Fortunately, we can assume $\log p_{\mu}$ to be $\+O(1)$-smooth without loss of generality. This is because we can always scale the domain to adjust the distribution's smoothness bound, while not changing the product of the smoothness bound and the second moment $M$ (\Cref{lem:LM}). Therefore, with the initial distribution being $\+O(1)$-log-smooth and the second moment being polynomial in $d$, if the $\log p_t$'s also remain $\+O(1)$-smooth, quasi-polynomial sampler exists. 
% So the relationship between the smoothness of the initial distribution and that of the distributions during the OU process is worth studying. 

% Hence previous works do not pay much attention to distinguishing between these two smoothness conditions.

Therefore, we are interested in the conditions for the $O(1)$-smoothness to be kept during the entire OU process. Our results in \Cref{thm:main-lb} indicates an exponential lower bound even for $O(1)$-log-smooth distributions. Then we know that for those hard instances constructed in \Cref{sec:lb}, $\log p_t$ cannot always be $O(1)$-smooth during the OU process. Otherwise a quasi-polynomial sampler exists from \cite{HZD+24}. 
% we found that the smoothness bound for $\log p_t$ can be significantly worse than that of the initial $\log p_{\mu}$.
In \Cref{subsec:stitched}, we introduce a family of $\+O(1)$-log-smooth distribution, which we refer to as the stitched Gaussian distributions and is a simplified version of those hard instances in \Cref{sec:lb}. As the OU process evolves, the bound on the smoothness of stitched Gaussians can become $\omega(1)$ at certain time $t$. 
% Given this counterexample, the relationship between the smoothness of the initial distribution and that of the distributions during the OU process is worth studying. 

To see the case for other non-log-concave distributions, in \Cref{subsec:mix}, we considered a class of classical multi-modal distributions, the mixture of Gaussians, and provide analyses of their smoothness properties with different parameter settings. Although mixture of Gaussians appear to be quite similar to the stitched Gaussians, our results demonstrate that they exhibit fundamentally different behaviors in terms of smoothness. To be specific, we show that for a mixture of two Gaussians with mean $u_1$ and $u_2$, if their covariance matrices $\Sigma_1=\Sigma_2\succeq \Omega(1)\cdot \!{Id}_d$, the smoothness of the distributions are almost determined by $\|u_1-u_2\|$ and the $\log p_t$'s will inherit the $\+O(1)$-smoothness is $\log p_{\mu}$ is $\+O(1)$-smooth. In contrast, when the covariance matrices differ, even with $\|u_1-u_2\|=o(1)$, the initial $\log p_{\mu}$ is not $L$-smooth for any $L=o(d)$. 

We also explore the mixture of multiple Gaussians in \Cref{subsec:mix}. For those cases with more components, the analysis becomes more complex. Even when all covariance matrices are the same, it is challenging to derive a concise rule to characterize the relationship between smoothness and the distances between the means of the Gaussians. We give an example where the centers of components are far apart, yet the mixture distribution remains $\+O(1)$-log-smooth, and this smoothness is preserved during the OU process.

An overview of the main results of this section is given in \Cref{tab:result-comp}, where we show the smoothness bounds of $\log p_{\mu}$ and corresponding $\log p_t$ in different cases, as well as whether the $\+O(1)$-smoothness property is preserved during the OU process.

\begin{table*}[htbp]
	\centering
	\caption{The Comparison between the Smoothness Bounds}
	\label{tab:result-comp}
  \begin{threeparttable}
\begin{tabular}{m{1.7cm}<{\centering}m{2.7cm}<{\centering}m{3.9cm}<{\centering}m{3.7cm}<{\centering}m{2cm}<{\centering}}
	\toprule
	& {Parameters \textcolor{red}{\tnote{1}}} & {Smoothness Bound for $\log p_{\mu}$} & {Smoothness Bound for $\log p_t$} & {Keep $\+O(1)$-smooth?}\\
	\midrule
	Stitched Gaussian & \shortstack{$m=2$ \\ $\Sigma_1=\Sigma_2=\!{Id}_d$ \\ $\|u_1-u_2\|^2 = s = \Omega(d)$} & \shortstack{$\+O(1)$\\ (\Cref{lem:stitchsmooth})} & \shortstack{ $\Omega\tp{e^{-2t}s - 1}$ \textcolor{red}{\tnote{2}}
        %$\+O(\max\ab\{1,e^{-2t}s\})$
    \\ for $t>\frac{\log 10}{2}$\\ (\Cref{thm:stitched2})} & No \\
    \hline 
    Mixture of Gaussians & \shortstack{$m=2$\\
    $\Sigma_1=\Sigma_2\succeq \Omega(1)\cdot \!{Id}_d$ \\
    $\|u_1-u_2\|^2=s$} & \shortstack{$\+O(\max\ab\{1,s\})$\\ (\Cref{lem:2-same})} & \shortstack{$\+O(\max\ab\{1,e^{-2t}s\})$\\ (\Cref{lem:2-same})} & Yes \\
    \hline 
    Mixture of Gaussians & \shortstack{$m=2$\\
    $\Sigma_1=\frac{\!{Id}_d}{2},\Sigma_2= \!{Id}_d$ \\
    $\|u_1-u_2\|=o(1)$} & \shortstack{$\Omega(d)$\\ (\Cref{lem:2-diff})} & - & NA \\
    \hline 
    Mixture of Gaussians & \shortstack{$m=2^d$\\
    $\Sigma_1=\cdots=\Sigma_m=J$ \\
    $\|u_i-u_j\|^2= \+O(d)$ \textcolor{red}{\tnote{3}}} & \shortstack{$\+O(1)$\\ (\Cref{lem:mixture1})} & \shortstack{$\+O(1)$\\ (\Cref{cor:mixture2})} & Yes \\
	\bottomrule
\end{tabular}
\begin{tablenotes}
	\footnotesize
    \item[\textcolor{red}{1}] Note that both the stitched Gaussian and the mixture of Gaussians are constructed based on Gaussian distributions with different parameters. We assume the Gaussian distributions used in the construction are $\+N(u_1,\Sigma_1),\+N(u_2,\Sigma_2), \dots, \+N(u_m,\Sigma_m)$ respectively.
    \item[\textcolor{red}{2}] We say the smoothness bound for a function $f$ is $\Omega(c)$ if there exists some $x\in \bb R^d$ such that $\norm{\grad^2 f(x)}_{\!{op}} = \Omega(c)$.
	\item[\textcolor{red}{3}] Here $J$ is a symmetric matrix with $\delta_{\!{Id}_d}\preceq J \preceq (1-\delta)\!{Id}_d$ for some constant $\delta\in (0,1/2)$. For the detailed construction of $\ab\{u_i\}_{i\in[m]}$, see \Cref{eq:HS-mix}.
  \end{tablenotes}
\end{threeparttable}
\end{table*}

\subsection{The stitched Gaussian distributions}\label{subsec:stitched}

In this section,  we show that the smoothness bound of $\log p_t$ can differ significantly from that of $\log p_{\mu}$ on the \emph{stitched Gaussian distributions}.
% we prove that the smoothness bound for $\log p$ and $\log p_t$ can vary significantly when $p$ is a stitched Gaussian distribution. 

The stitched Gaussian distributions are a class of distributions constructed by interpolating between multiple Gaussian components. Recall that $q_{\!{mol}}$ is the mollifier defined in \Cref{sec:mollifier}. Let $u\in \bb R^d$ be a vector satisfying $\|u\|^2 \geq 100 d$. Define $\mathfrak{g}_{u}(x) = q_{\!{mol}}\tp{10\tp{\frac{\|x-u\|}{\|u\|} - 0.4}}$ and 
\[
    f_{\mu}(x) = \mathfrak{g}_{u}(x)\cdot \frac{\|x\|^2}{2} + (1-\mathfrak{g}_{u}(x))\cdot \frac{\|x-u\|^2}{2}.
\]  
The specific type of stitched Gaussians we consider here has a density function of $p_{\mu}\propto e^{-f_{\mu}}$.

% When the information is clear from context, we abbreviate $\mathfrak{g}_{\left[\frac{2\|u\|}{5},\frac{\|u\|}{2}\right]}$ as $\mathfrak{g}$ for simplicity.
Let $r=\|u\|$. We can divide $\bb R^d$ into two parts $\+B_{\frac{r}{2}}(u) = \set{x\in \bb R^d: \| x - u\| \leq 0.5 \|u\|}$ and $\ol{\+B_{\frac{r}{2}}(u)} = \bb R^d \setminus \+B_{\frac{r}{2}}(u)$. By definition of $\mathfrak{g}_{u}$, for $x\in \ol{\+B_{\frac{r}{2}}(u)}$, $f_{\mu}(x) = \frac{\|x\|^2}{2}$. Furthermore, inside $\+B_{\frac{r}{2}}(u)$, when $x\in \+B_{\frac{2r}{5}}(u) =\set{x\in \bb R^d: \| x - u\| \leq 0.4 \|u\|}$, $f_{\mu}(x) = \frac{\|x-u\|^2}{2}$. In $\+B_{\frac{r}{2}}(u)\setminus \+B_{\frac{2r}{5}}(u)$, these two Gaussians are ``stitched together'', which means the density function transitions smoothly from a Gaussian distribution centered at $0$ to another Gaussian distribution centered at $u$.

\subsubsection{The smoothness of $\log p_{\mu}$}\label{subsec:stitched1}

We first prove that $\log p_{\mu}$ is indeed $\+O(1)$-smooth.

\begin{lemma}\label{lem:stitchsmooth}
    The function $\log p_{\mu}$ is $\+O(1)$-smooth for any $u\in \bb R^d$.
\end{lemma}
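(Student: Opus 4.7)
The plan is to reduce smoothness of $\log p_\mu$ to bounding $\norm{\grad^2 f_\mu}_{\!{op}}$, then split $\bb R^d$ into three zones determined by $\mathfrak{g}_u$: the inner ball $\+B_{\frac{2r}{5}}(u)$ where $f_\mu(x)=\frac{\|x-u\|^2}{2}$, the exterior $\ol{\+B_{\frac{r}{2}}(u)}$ where $f_\mu(x)=\frac{\|x\|^2}{2}$, and the transition annulus $\+B_{\frac{r}{2}}(u)\setminus \+B_{\frac{2r}{5}}(u)$. On the two extreme zones $\grad^2 f_\mu=\!{Id}_d$, so only the transition annulus needs work.

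On the annulus, write $A(x)=\frac{\|x\|^2}{2}$, $B(x)=\frac{\|x-u\|^2}{2}$ and expand
\[
    \grad^2 f_\mu = \grad^2 \mathfrak{g}_u \cdot (A-B) + \grad \mathfrak{g}_u\,(\grad A-\grad B)^\top + (\grad A-\grad B)\,\grad \mathfrak{g}_u^\top + \mathfrak{g}_u\,\grad^2 A + (1-\mathfrak{g}_u)\,\grad^2 B.
\]
The last two terms are bounded by $\!{Id}_d$ trivially. For the remaining three terms I would use the estimates $\grad A-\grad B = u$ (so $\|\grad A-\grad B\|=r$), and the algebraic identity $A-B = \frac{r^2}{2}+\langle u, x-u\rangle$, which on the annulus ($\|x-u\|\le \frac{r}{2}$) gives $|A-B|=\+O(r^2)$. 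For $\mathfrak{g}_u=q_{\!{mol}}\!\circ\!\phi$ with $\phi(x)=10(\|x-u\|/r - 0.4)$, direct computation yields $\grad\phi = \frac{10(x-u)}{r\|x-u\|}$ and $\grad^2\phi = \frac{10}{r\|x-u\|}\bigl(\!{Id}_d-\frac{(x-u)(x-u)^\top}{\|x-u\|^2}\bigr)$. Since on the annulus $\|x-u\|\ge \frac{2r}{5}$, one has $\|\grad\phi\|=\+O(1/r)$ and $\|\grad^2\phi\|_{\!{op}}=\+O(1/r^2)$, and because $q_{\!{mol}}',q_{\!{mol}}''$ are uniformly bounded constants, $\|\grad \mathfrak{g}_u\|=\+O(1/r)$ and $\|\grad^2 \mathfrak{g}_u\|_{\!{op}}=\+O(1/r^2)$.

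Plugging these in: the cross terms $\grad \mathfrak{g}_u\,u^\top$ contribute $\+O(1/r)\cdot r=\+O(1)$, and the curvature term $\grad^2 \mathfrak{g}_u\cdot(A-B)$ contributes $\+O(1/r^2)\cdot \+O(r^2)=\+O(1)$. Summing all five terms yields $\|\grad^2 f_\mu(x)\|_{\!{op}}=\+O(1)$ on the annulus with constants independent of $r$ and $d$, which together with the extreme-zone bound of $1$ proves $\+O(1)$-smoothness globally.

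The main thing to be careful about is the cancellation $\+O(1/r^2)\times \+O(r^2)$: the radial scaling of $\mathfrak{g}_u$ (which varies over a band of width $\Theta(r)$) exactly compensates the growth of $A-B$ over that band, and this cancellation is the reason the construction survives even as $r=\|u\|\to\infty$. The only technical point is to verify that the annulus never comes close to $x=u$ so that the derivatives of $\|x-u\|$ are nonsingular; this is exactly why $q_{\!{mol}}$ is activated starting at $\|x-u\|=0.4 r$ rather than at $x=u$, and it is guaranteed by the hypothesis $\|u\|^2\ge 100d$ making $r$ a genuine positive length scale.
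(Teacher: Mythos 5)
Your proof is correct and follows essentially the same route as the paper's: the same three-zone split, and the same two cancellations ($\|\grad \mathfrak{g}_u\|=\+O(1/r)$ against gradient terms of size $\+O(r)$, and $\|\grad^2 \mathfrak{g}_u\|_{\!{op}}=\+O(1/r^2)$ against potential terms of size $\+O(r^2)$), with constants independent of $d$ and $r$. Your bookkeeping via $\grad A-\grad B=u$ and $A-B=\+O(r^2)$ is marginally tidier than the paper's term-by-term expansion, but the argument is the same.
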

\begin{proof}
    For $x\in \+B_{\frac{2r}{5}}(u)$, we know that $f_{\mu}(x) = \frac{\|x-u\|^2}{2}$. Therefore $\grad^2 \log p_{\mu}(x) = \grad^2 f_{\mu}(x) = \!{Id}_d$. Similarly, for $x\in \ol{\+B_{\frac{r}{2}}(u)}$, $f_{\mu}(x)=\frac{\|x\|^2}{2}$ and $\grad^2 \log p_{\mu}(x) = \grad^2 f_{\mu}(x) = \!{Id}_d$. So it only remains to deal with those $x\in \+B_{\frac{r}{2}}(u) \setminus \+B_{\frac{2r}{5}}(u)$.

    % Let $\+B_{\frac{r}{2}}(u)\setminus \+B_{\frac{2r}{5}}(u) = \+B_{\frac{r}{2}}(u)\setminus \+B_{\frac{2r}{5}}(u) = \set{x\in \bb R^d:0.4 \|u\| < \| x - u\| \leq 0.5 \|u\|}$.
    For $x\in \+B_{\frac{r}{2}}(u)\setminus \+B_{\frac{2r}{5}}(u)$, 
    \begin{align*}
        \grad f_{\mu}(x) = \mathfrak{g}_{u}(x)\cdot x + \frac{\|x\|^2}{2} \cdot \grad \mathfrak{g}_{u}(x) - \frac{\|x - u\|^2}{2}\cdot \grad \mathfrak{g}_{u}(x)+ (1-\mathfrak{g}_{u}(x))\cdot (x-u)
    \end{align*}
    and consequently,
    \begin{align}
        \grad^2 f_{\mu}(x) &= \underbrace{x\cdot \grad \mathfrak{g}_{u}(x)^{\top}}_{(a_1)} + \underbrace{\grad \mathfrak{g}_{u}(x)\cdot  x^{\top}}_{(b_1)} + \underbrace{\mathfrak{g}_{u}(x)\cdot \!{Id}_d}_{(c_1)} + \underbrace{\frac{\|x\|^2}{2}\cdot \grad^2 \mathfrak{g}_{u}(x)}_{(d_1)} \notag \\
        &\quad - \underbrace{(x-u)\cdot \grad \mathfrak{g}_{u}(x)^{\top}}_{(a_2)} - \underbrace{\grad \mathfrak{g}_{u}(x)\cdot (x-u)^{\top}}_{(b_2)} + \underbrace{(1-\mathfrak{g}_{u}(x))\cdot \!{Id}_d}_{(c_2)} -\underbrace{\frac{\|x - u\|^2}{2}\cdot \grad^2 \mathfrak{g}_{u}(x)}_{(d_2)}. \label{eq:s1}
    \end{align}

    By the definition of $\mathfrak{g}_{u}(x)$, we have
     \[
        % \grad \mathfrak{g}_{u}(x) = \frac{10(x-u)}{\|x-u\|\cdot \|u\|}\cdot  \frac{\dd q}{\dd y}\Bigg|_{y=10\tp{\frac{\|x-u\|}{\|u\|}-0.4}}
        \grad \mathfrak{g}_{u}(x) = \frac{10(x-u)}{\|x-u\|\cdot \|u\|}\cdot  q'_{\!{mol}}\tp{10\tp{\frac{\|x-u\|}{\|u\|}-0.4}}
        % \grad h\tp{10\tp{\frac{\|x-2u\|}{\|u\|}-1.5}}
     \]
     and 
     \begin{align*}
        %  \grad^2 \mathfrak{g}_{u}(x) &= \frac{100(x-u)(x-u)^{\top}}{\|x-u\|^2\cdot \|u\|^2}\cdot  \frac{\dd^2 q}{\dd y^2}\Bigg|_{y=10\tp{\frac{\|x-u\|}{\|u\|}-0.4}} \\
        %  &\quad + \frac{10}{\|u\|}\tp{\frac{\!{Id}_d}{\|x-u\|} - \frac{(x-u)(x-u)^{\top}}{\|x-u\|^3}} \cdot  \frac{\dd q}{\dd y}\Bigg|_{y=10\tp{\frac{\|x-u\|}{\|u\|}-0.4}}.
        \grad^2 \mathfrak{g}_{u}(x) &= \frac{100(x-u)(x-u)^{\top}}{\|x-u\|^2\cdot \|u\|^2}\cdot  q''_{\!{mol}}\tp{10\tp{\frac{\|x-u\|}{\|u\|}-0.4}} \\
        &\quad + \frac{10}{\|u\|}\tp{\frac{\!{Id}_d}{\|x-u\|} - \frac{(x-u)(x-u)^{\top}}{\|x-u\|^3}} \cdot  q''_{\!{mol}}\tp{10\tp{\frac{\|x-u\|}{\|u\|}-0.4}}.
     \end{align*}

     Then we calculate the terms in \Cref{eq:s1} one by one. We have 
     \[ 
        (a_1) = \frac{10x(x-u)^{\top}}{\|x-u\|\cdot \|u\|}\cdot  q'_{\!{mol}}\tp{10\tp{\frac{\|x-u\|}{\|u\|}-0.4}}.
    \]
    Since $x\in 
    \+B_{\frac{r}{2}}(u)\setminus \+B_{\frac{2r}{5}}(u)$, we have $\|x-u\|=\Theta\tp{\|u\|}$ and $\|x\|=\Theta\tp{\|u\|}$. Recall that $q'_{\!{mol}}=\+O(1)$. So we have $\+O(1)\cdot \!{Id}_d\mge (a_1)\mge -\+O(1)\cdot \!{Id}_d$. We can also prove such bounds for $(a_2), (b_1)$ and $(b_2)$ in the same way.

    For the terms $(c_1)$ and $(c_2)$, we know that $\mathfrak{g}_{u}(x)\in [0,1]$ for all $x\in \+B_{\frac{r}{2}}(u)\setminus \+B_{\frac{2r}{5}}(u)$. Therefore, we have $\+O(1)\cdot \!{Id}_d\mge (c_1)\mge 0$ and $\+O(1)\cdot \!{Id}_d\mge (c_2)\mge 0$.

    For $x\in \+B_{\frac{r}{2}}(u)\setminus \+B_{\frac{2r}{5}}(u)$, $\|x\|^2=\Theta(\|u\|^2)$ and $\|x-u\|^2=\Theta(\|u\|^2)$. Since $q''_{\!{mol}}$ and $q'_{\!{mol}}$ are all $\+O(1)$, we have that $\+O\tp{\frac{1}{\|u\|^2}}\cdot \!{Id}_d\mge \grad^2 \mathfrak{g}_{u}(x) \mge 0$. Therefore, $\+O(1)\cdot \!{Id}_d\mge (d_1)\mge 0$ and $\+O(1)\cdot \!{Id}_d\mge (d_2)\mge 0$.

    Combining all these together, we know that $\log p_{\mu}$ is $\+O(1)$-smooth.
\end{proof}

\subsubsection{The smoothness of $\log p_t$}\label{subsec:stitched2}

We then show that for arbitrary $u\in \bb R^d$ with $\|u\|^2\geq 100d$, when $t$ satisfies $e^{-2t}<0.1$, there exists $x_0\in \bb R^d$ such that $\|\grad^2 \log p_t(x_0)\|_{\!{op}}\geq \Omega\tp{e^{-2t}\|u\|^2-1}$.
% $\log p_t(x_0) \not \mle L\cdot \!{Id}_d$ for any $L=o(\max\ab\{e^{-2t}\|u\|^2,1\})$. 
Combining this with the results in \Cref{subsec:stitched1}, it indicates that the smoothness bound for $\log p_t$ can be much larger than that for $\log p_{\mu}$. In other words, the smoothness of the initial distribution does not necessarily imply the smoothness during the OU process. 

We first see how the distribution evolves during the process. From \Cref{eq:OU}, for any $x\in \bb R^d$,
\[
    p_t(x) \propto \int_{\bb R^d} p\tp{\frac{y}{e^{-t}}} \cdot e^{-\frac{\|x-y\|^2}{2(1-e^{-2t})}} \dd y.
\]
For a fixed $x_0\in \bb R^d$, consider the distribution $\nu_t$ with density $q_t(y)\propto p_{\mu}\tp{\frac{y}{e^{-t}}} \cdot e^{-\frac{\|x_0-y\|^2}{2(1-e^{-2t})}}$ for each $y\in \bb R^d$. To calculate the Hessian of $\log p_t$, we use \Cref{prop:stitched-decomp}.
\begin{proposition}[Corollary of Lemma 22 in \cite{CLL23}]\label{prop:stitched-decomp}
    We have $\grad^2 \log p_t(x_0) = \frac{1}{(1-e^{-2t})^2}\cdot \!{Cov}_{Y\sim \nu_t}[Y] - \frac{\!{Id}_d}{1-e^{-2t}}$.
\end{proposition}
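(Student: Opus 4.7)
The plan is to compute $\grad^2 \log p_t(x_0)$ directly by differentiating the integral representation of $p_t$ twice under the integral sign. The key identity I will use is that for the standard OU process started at $\mu$, the density at time $t$ satisfies
\[
    p_t(x) \;\propto\; \int_{\bb R^d} p_\mu\!\tp{\tfrac{y}{e^{-t}}} \exp\!\tp{-\tfrac{\|x-y\|^2}{2(1-e^{-2t})}}\d y,
\]
which is exactly the unnormalized density $q_t(y)$ of $\nu_t$ (with $x$ in place of $x_0$) integrated out in $y$. So writing $\sigma^2 \defeq 1-e^{-2t}$ and $Z(x) \defeq \int g(x,y)\d y$, where $g(x,y) \defeq p_\mu(y/e^{-t})\exp\!\tp{-\|x-y\|^2/(2\sigma^2)}$, we have $\log p_t(x) = \log Z(x) + \!{const}$.

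First I would compute the gradient. Since $\partial_{x_i} g(x,y) = g(x,y)\cdot(y_i-x_i)/\sigma^2$, standard differentiation under the integral sign (justified by integrability coming from the Gaussian kernel and the fact that $p_\mu$ is a bounded density) yields
\[
    \grad \log p_t(x) \;=\; \frac{1}{\sigma^2}\bigl(\E[Y\sim \nu_x]{Y} - x\bigr),
\]
where $\nu_x$ denotes the probability distribution with density proportional to $g(x,\cdot)$; at $x=x_0$ this coincides with $\nu_t$.

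Next I would differentiate once more. There are two contributions: one from differentiating $(y_i-x_i)$ in the numerator, giving $-\delta_{ij}/\sigma^2$ plus $\E[\nu_x]{(Y_i-x_i)(Y_j-x_j)}/\sigma^4$; and one from differentiating $1/Z(x)$, which by the quotient rule and step one produces $-\E[\nu_x]{Y_i-x_i}\E[\nu_x]{Y_j-x_j}/\sigma^4$. Combining these two contributions and recognizing that the difference of the cross-moment and the product of marginal means is exactly the covariance entry $\!{Cov}_{\nu_x}(Y_i,Y_j)$, I obtain
\[
    \grad^2 \log p_t(x) \;=\; \frac{1}{\sigma^4}\,\!{Cov}_{Y\sim \nu_x}[Y] \;-\; \frac{\!{Id}_d}{\sigma^2}.
\]
Specializing to $x=x_0$ and substituting back $\sigma^2 = 1-e^{-2t}$ gives the claim.

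There is no real obstacle here beyond bookkeeping; the only thing to watch is justifying the exchange of derivative and integral, which follows from dominated convergence using that $p_\mu$ is integrable and the Gaussian factor $\exp(-\|x-y\|^2/(2\sigma^2))$ together with its derivatives in $x$ decay rapidly in $y$ uniformly for $x$ in a neighborhood of $x_0$ (for every fixed $t>0$ so that $\sigma^2>0$). Once that is in place, the rest is the elementary ``score/Fisher'' identity for an exponential-family-in-$x$ type integral, which is precisely Lemma~22 of \cite{CLL23} applied to the present density $g(x,y)$.
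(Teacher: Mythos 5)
Your computation is correct: differentiating $\log\int p_\mu(y/e^{-t})\exp(-\|x-y\|^2/(2\sigma^2))\,\mathrm{d}y$ twice in $x$ does give $\sigma^{-4}\,\mathrm{Cov}_{\nu_x}[Y]-\sigma^{-2}\,\mathrm{Id}_d$ with $\sigma^2=1-e^{-2t}$, and the translation-invariance of the covariance correctly absorbs the $(Y_i-x_i)$ shifts, so specializing to $x=x_0$ yields exactly the stated identity. The paper gives no in-line proof — it simply cites Lemma~22 of the external reference, which is this same score/Fisher identity for Gaussian-convolved densities — so your argument is the standard one that the citation delegates, written out in full; the domination argument you sketch for exchanging derivative and integral is adequate since $p_\mu$ is bounded (cf.\ the $L$-log-smoothness bound) and the Gaussian kernel and its $x$-derivatives decay uniformly for $x$ near $x_0$ when $t>0$.
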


In the remaining part of this section, we aim to prove the following lemma.
\begin{theorem}\label{thm:stitched2}
    For arbitrary $u\in \bb R^d$ with $\|u\|^2\geq 100d$, when $t$ satisfies $e^{-2t}<0.1$ and when $x_0=\frac{e^{-t}u}{2}$, $\|\!{Cov}_{Y\sim \nu_t}[Y]\|_{\!{op}} = \Omega\tp{e^{-2t}\|u\|^2}$ and consequently, $\|\grad^2 \log p_t(x_0)\|_{\!{op}}\geq \Omega\tp{e^{-2t}\|u\|^2-1}$.
    % for any $L=o(\max\ab\{e^{-2t}\|u\|^2,1\})$, $\!{Cov}_{Y\sim \nu_t}[Y] \not\mle L \cdot \!{Id}_d$ and consequently, $\grad^2 \log p_t(x_0)\not \mle L\cdot \!{Id}_d$.
\end{theorem}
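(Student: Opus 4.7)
The plan is to apply \Cref{prop:stitched-decomp} and exhibit a unit vector $v$ with $v^\top \!{Cov}_{Y\sim \nu_t}[Y]\,v = \Omega(e^{-2t}\|u\|^2)$. Writing $a=e^{-2t}$ and $b=1-e^{-2t}$, the hypothesis $e^{-2t}<0.1$ gives $1/b = \Theta(1)$, so such a covariance bound immediately yields
\[
    v^\top \grad^2 \log p_t(x_0)\,v \;=\; \frac{v^\top \!{Cov}_{Y\sim \nu_t}[Y]\,v}{b^2} - \frac{1}{b} \;=\; \Omega\tp{e^{-2t}\|u\|^2 - 1},
\]
and hence the claimed bound on $\norm{\grad^2 \log p_t(x_0)}_{\!{op}}$. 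I would take $v = u/\|u\|$ and partition $\bb R^d$ into three regions aligned with the piecewise definition of $f_\mu$: $A_1 = \set{y : \|y-e^{-t}u\| > e^{-t}\|u\|/2}$, on which $f_\mu(y/e^{-t}) = \|y\|^2/(2a)$; $A_2 = \set{y : \|y-e^{-t}u\| \le 2 e^{-t}\|u\|/5}$, on which $f_\mu(y/e^{-t}) = \|y - e^{-t}u\|^2/(2a)$; and the mollifier region $A_3$ in between.

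On $A_1$, completing the square shows $q_t(y) \propto e^{-\|y\|^2/(2a) - \|y-x_0\|^2/(2b)}$ is an isotropic Gaussian with mean $y^*_1 = a\,x_0 = e^{-3t}u/2$, variance $\sigma^2 = ab$, and minimum exponent $a\|u\|^2/8 = e^{-2t}\|u\|^2/8$. The analogous computation on $A_2$ yields a Gaussian with mean $y^*_2 = b\,e^{-t}u + a\,x_0 = e^{-t}u\tp{1 - e^{-2t}/2}$, the same variance $\sigma^2$, and the same minimum exponent. Thus, up to the common factor $e^{-e^{-2t}\|u\|^2/8}$, the restrictions of $q_t$ to $A_1$ and $A_2$ are two isotropic Gaussians with separated means $\|y^*_2 - y^*_1\| = e^{-t}\|u\|(1-e^{-2t}) \ge 0.9\, e^{-t}\|u\|$.

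The next step is to show the mixture weights $p_i = \nu_t(A_i)$ satisfy $p_1, p_2 = \frac12 \pm o(1)$. For $A_1$, the mean $y^*_1$ lies at distance at least $0.45\, e^{-t}\|u\|$ from $\partial A_1$, while the component standard deviation is $\sigma \le e^{-t}$; combined with $\|u\|^2 \ge 100d$, Gaussian concentration shows that at most an $e^{-\Omega(\|u\|^2)} \le e^{-\Omega(d)}$ fraction of the component's mass escapes $A_1$, and the symmetric bound holds at $y^*_2$ for $A_2$. For the mollifier region $A_3$, observe that $\mathfrak{g}_u \in [0,1]$ yields $f_\mu(x) \ge \min(\|x\|^2/2, \|x-u\|^2/2)$, so $q_t$ on $A_3$ is pointwise dominated by the sum of the two Gaussian factors, and every $y \in A_3$ has distance $\Omega(e^{-t}\|u\|)$ from both $y^*_1$ and $y^*_2$; hence this contribution is also exponentially small. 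Combined with the fact that the two bulk integrals agree (same variance, same minimum exponent), we conclude $p_1 p_2 = \Omega(1)$.

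Finally, applying the law of total variance in direction $v$ by conditioning on the component indicator gives
\[
    v^\top \!{Cov}_{Y\sim \nu_t}[Y]\, v \;\ge\; p_1 p_2 \tp{\langle v, y^*_1 - y^*_2\rangle}^2 \;=\; p_1 p_2\, e^{-2t}\|u\|^2 (1-e^{-2t})^2 \;=\; \Omega\tp{e^{-2t}\|u\|^2},
\]
which combined with the first paragraph closes the argument. The main technical obstacle is the third paragraph: since the mollifier region is only bounded by a \emph{sum} of two Gaussians rather than a single one, the argument needs both components to be tightly concentrated on their respective sides of the separation, and it is precisely the hypothesis $\|u\|^2 \ge 100d$ that provides enough standard-deviations of separation between the modes for every error term to be negligible.
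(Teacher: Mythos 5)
Your proposal is correct, and it reaches the conclusion by a genuinely different route from the paper. The paper decomposes the \emph{measure} itself: it writes $q_t = (1-\delta_t)e^{-h_1} + \delta_t p_{\gamma_t}$, where the residual $\gamma_t$ is supported on $e^{-t}\cdot\+B_{\frac r2}(u)$; the validity of this decomposition rests on \Cref{lem:stitched3} (the pointwise domination $e^{-h_1}\le e^{-f_t}$ inside the ball), and the final lower bound comes from the deterministic fact that \emph{every} point of the support of $\gamma_t$ is at distance $\ge e^{-t}\|u\|/4$ from the first center $e^{-3t}u/2$ — the paper never needs to compute the second mode's center or invoke any concentration beyond Markov's inequality. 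You instead partition $\bb R^d$ into the three regions where $f_\mu$ is piecewise defined, compute both Gaussian centers $y_1^*,y_2^*$ and check that their exponents and variances match, and then apply the law of total variance conditioned on the region indicator. Your route buys a sharper quantitative picture ($p_1\approx p_2\approx\frac12$ and separation $(1-e^{-2t})e^{-t}\|u\|$ rather than $e^{-t}\|u\|/4$), at the cost of needing genuine Gaussian concentration where the paper gets by with a support argument. One detail you should make explicit: in the law of total variance the relevant quantities are the \emph{conditional} means $m_i=\E{\ip{v}{Y}\mid Y\in A_i}$, not the unrestricted component means $\ip{v}{y_i^*}$; identifying the two requires bounding the mean shift caused by truncating the exponentially small tail (e.g.\ via Cauchy--Schwarz, $\abs{m_1-\ip{v}{y_1^*}}\lesssim \sqrt{d\sigma^2\cdot\Pr{\ol{A_1}}}$, which is indeed negligible under $\|u\|^2\ge 100d$), and similarly the three-component variance formula differs slightly from the two-component identity $p_1p_2\Delta^2$ you quote, though the $\Omega(\Delta^2)$ conclusion survives since $p_3$ is exponentially small and $m_3$ lies between $m_1$ and $m_2$. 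These are routine repairs using the same concentration estimates you already invoke, not gaps in the idea.
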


We will always assume $x_0=\frac{e^{-t}u}{2}$ in the following analyses. Before calculating $\!{Cov}_{Y\sim \nu_t}[Y]$, we see some basic properties of the distribution $\nu_t$. 

Define $c_t = \frac{d}{2}\log \tp{2\pi \sigma_t^2}$ where $\sigma_t^2 =  e^{-2t} (1-e^{-2t})$. Let $\+N_1$ and $\+N_2$ represent the two Gaussian distributions $\+N\tp{\frac{e^{-3t}u}{2}, \sigma_t^2 \cdot \!{Id}_d}$ and $\+N\tp{e^{-t}\tp{1-\frac{e^{-2t}}{2}}u, \sigma_t^2 \cdot \!{Id}_d}$ respectively. Let $h_1$ and $h_2$ be the potential function of these two Gaussian distributions. That is
\[
    h_1(y) = \frac{\norm{y-\frac{e^{-3t}u}{2}}^2}{2e^{-2t}(1-e^{-2t})} + c_t \quad \mbox{and}\quad h_2(y) = \frac{\norm{y- \tp{e^{-t}\tp{1-\frac{e^{-2t}}{2}}u }}^2}{2e^{-2t}(1-e^{-2t})} + c_t.
\]
Define $f_t:\bb R^d \to \bb R$ as 
\begin{align*}
    f_t(y) &= \mathfrak{g}_{u}\tp{\frac{y}{e^{-t}}}\cdot h_1(y) + \tp{1-\mathfrak{g}_{u}\tp{\frac{y}{e^{-t}}}}\cdot  h_2(y). 
\end{align*}

\begin{lemma}\label{lem:stitched0}
    With $x_0=\frac{e^{-t}u}{2}$, we have $q_t(y) \propto e^{-f_t(y)}$.
\end{lemma}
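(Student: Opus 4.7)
}

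The plan is to complete the square in the exponent of $q_t$ and show that, because the two Gaussian branches of $p_\mu$ are combined with the \emph{same} ``data'' Gaussian $\exp\bigl(-\|x_0-y\|^2/(2(1-e^{-2t}))\bigr)$, the stitching factor $\mathfrak{g}_u(y/e^{-t})$ can be pulled out cleanly after the completion of the square, yielding exactly $f_t$ in the exponent.

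First I would substitute the definition of $p_\mu$ into the formula for $q_t$ and write
\[
    q_t(y)\;\propto\;\exp\!\Bigl(-\mathfrak{g}_u\!\tp{\tfrac{y}{e^{-t}}}\cdot A_1(y)\;-\;\bigl(1-\mathfrak{g}_u\!\tp{\tfrac{y}{e^{-t}}}\bigr)\cdot A_2(y)\Bigr),
\]
where, after distributing the quadratic $\tfrac{\|x_0-y\|^2}{2(1-e^{-2t})}$ across the convex combination induced by $\mathfrak{g}_u(y/e^{-t})$,
\[
    A_1(y)=\frac{\|y/e^{-t}\|^2}{2}+\frac{\|x_0-y\|^2}{2(1-e^{-2t})},\qquad
    A_2(y)=\frac{\|y/e^{-t}-u\|^2}{2}+\frac{\|x_0-y\|^2}{2(1-e^{-2t})}.
\]
This is the ``distribution'' trick that lets the Gaussian factor be absorbed into each branch separately.

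Next, for each $i\in\{1,2\}$ I would complete the square in $y$. Both $A_1$ and $A_2$ are quadratic in $y$ with the same leading coefficient $\tfrac{e^{2t}}{2}+\tfrac{1}{2(1-e^{-2t})}=\tfrac{1}{2\sigma_t^2}$, so each equals $\tfrac{\|y-m_i\|^2}{2\sigma_t^2}+K_i$ for some mean $m_i$ and constant $K_i$ depending only on $t$ and $u$. A direct computation using $x_0=\tfrac{e^{-t}u}{2}$ and $\sigma_t^2=e^{-2t}(1-e^{-2t})$ gives
\[
    m_1=\sigma_t^2\cdot\frac{x_0}{1-e^{-2t}}=\frac{e^{-3t}u}{2},\qquad
    m_2=\sigma_t^2\Bigl(e^tu+\frac{x_0}{1-e^{-2t}}\Bigr)=e^{-t}\Bigl(1-\tfrac{e^{-2t}}{2}\Bigr)u,
\]
which matches the means defining $h_1$ and $h_2$.

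The one nontrivial point, and the step I expect to be the crux, is that the two constants $K_1$ and $K_2$ coincide. Indeed, if $K_1\ne K_2$ then the exponent after the completion of square contains a stray term $-\mathfrak{g}_u(y/e^{-t})(K_1-K_2)$ that depends on $y$ through the stitching function, and we do not get a clean $e^{-f_t(y)}$. I would therefore compute both constants explicitly: writing $a=e^{-2t}$ and using $\|x_0\|^2=a\|u\|^2/4$, one finds
\[
    K_1=\frac{\|x_0\|^2}{2(1-a)}-\frac{\|m_1\|^2}{2\sigma_t^2}=\frac{a\|u\|^2}{8},
    \qquad
    K_2=\frac{\|u\|^2}{2}+\frac{\|x_0\|^2}{2(1-a)}-\frac{\|m_2\|^2}{2\sigma_t^2}=\frac{a\|u\|^2}{8},
\]
the second equality following from the algebraic identity $(1-a)+a/4-(1-a/2)^2=a(1-a)/4$. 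This is precisely the choice of $x_0=e^{-t}u/2$ being the midpoint of $e^{-t}\cdot 0$ and $e^{-t}\cdot u$ that forces $K_1=K_2$.

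Finally, using $h_i(y)=\tfrac{\|y-m_i\|^2}{2\sigma_t^2}+c_t$, the exponent becomes
\[
    -\mathfrak{g}_u\!\tp{\tfrac{y}{e^{-t}}}\bigl(h_1(y)-c_t+K_1\bigr)-\bigl(1-\mathfrak{g}_u\!\tp{\tfrac{y}{e^{-t}}}\bigr)\bigl(h_2(y)-c_t+K_2\bigr)=-f_t(y)+(c_t-K_1),
\]
and since $c_t-K_1$ is independent of $y$, it is absorbed into the proportionality constant, giving $q_t(y)\propto e^{-f_t(y)}$ as desired.
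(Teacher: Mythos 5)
Your proposal is correct and follows essentially the same route as the paper's proof: substitute $p_\mu$ into $q_t$, distribute the Gaussian quadratic across the two branches of the convex combination, complete the square to recover the means of $h_1$ and $h_2$, and check that the leftover constants coincide (both equal $\tfrac{e^{-2t}\|u\|^2}{8}$) so that they factor out of the exponent. Your explicit identification of $K_1=K_2$ as the crux — and the role of $x_0=\tfrac{e^{-t}u}{2}$ in forcing it — is exactly the point the paper's computation establishes, just stated more transparently.
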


The proof of \Cref{lem:stitched0} is provided in \Cref{subsec:proof2}. Recall that $\+B_{\frac{r}{2}}(u) = \set{x\in \bb R^d: \| x - u\| \leq 0.5 \|u\|}$ and $\+B_{\frac{2r}{5}}(u) = \set{x\in \bb R^d: \| x - u\| \leq 0.4 \|u\|}$. For a set $S\subseteq \bb R^d$ and a real number $c\neq 0$, let $c\cdot S = \set{x\in \bb R^d: \frac{x}{c}\in S}$ be the scaled set. Therefore, outside the ball $e^{-t}\cdot \+B_{\frac{r}{2}}(u)$, $\mathfrak{g}_{u}\tp{\frac{y}{e^{-t}}}=1$ and $f_t(y)\equiv h_1(y)$. Inside $e^{-t}\cdot \+B_{\frac{r}{2}}(u)$, the potential function $f_t$ is the interpolating of $h_1$ and $h_2$ and when $y\in e^{-t}\cdot \+B_{\frac{2r}{5}}(u)$, $f_t(y)\equiv h_2(y)$. 

We claim that the density $q_t$ can be decomposed into a Gaussian density $e^{-h_1}$ plus some density function $p_{\gamma_t}$ which is only supported on $e^{-t}\cdot \+B_{\frac{r}{2}}(u)$. That is, we can find $\delta_t\in (0,1)$ and a distribution $\gamma_t$ with density $p_{\gamma_t}$ such that for any $y\in \bb R^d$, $ q_t(y) = (1-\delta_t) e^{-h_1(y)} + \delta_t\cdot p_{\gamma_t}(y)$. The existence of such $\delta_t$ and $\gamma_t$ is guaranteed by the properties given in the following two lemmas. The proofs of \Cref{lem:stitched3,lem:normalizing} are given in \Cref{subsec:proof2}.

\begin{lemma}\label{lem:stitched3}
    For any $y\in e^{-t}\cdot \+B_{\frac{r}{2}}(u)$, $e^{-h_1(y)} \leq e^{-f_t(y)}$.
\end{lemma}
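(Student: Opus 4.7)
The plan is to reduce the claimed exponential inequality $e^{-h_1(y)} \le e^{-f_t(y)}$ to the equivalent inequality $f_t(y) \le h_1(y)$, then exploit the convex-combination structure of $f_t$. Since $f_t(y) = \mathfrak{g}_u(y/e^{-t})\, h_1(y) + (1-\mathfrak{g}_u(y/e^{-t}))\, h_2(y)$, a direct rearrangement gives the identity
\[
    h_1(y) - f_t(y) = \bigl(1-\mathfrak{g}_u(y/e^{-t})\bigr)\bigl(h_1(y) - h_2(y)\bigr).
\]
Because $q_{\!{mol}}$ is a mollifier into $[0,1]$, the factor $1 - \mathfrak{g}_u(y/e^{-t})$ is nonnegative. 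Therefore the problem reduces to verifying the purely Gaussian pointwise comparison $h_2(y) \le h_1(y)$ on the region $y \in e^{-t}\cdot \+B_{r/2}(u)$, which is the only place where $\mathfrak{g}_u(y/e^{-t})$ is not identically $1$ anyway.

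Next I would compute $h_2(y) - h_1(y)$ directly. Setting $a = e^{-t}(1 - e^{-2t}/2)$ (the mean of $\+N_2$ scaled by $\|u\|^{-1}$) and $b = e^{-3t}/2$ (the scaled mean of $\+N_1$), the two potentials share the same quadratic part in $y$ since they have the same covariance $\sigma_t^2 \!{Id}_d$, and the difference collapses to a linear function of $y$. After expanding the squared norms and simplifying using $a - b = e^{-t}(1-e^{-2t})$, $a + b = e^{-t}$, and $\sigma_t^2 = e^{-2t}(1-e^{-2t})$, one obtains
\[
    h_2(y) - h_1(y) = \frac{e^t}{2}\bigl(e^{-t}\|u\|^2 - 2\, y\cdot u\bigr).
\]
Hence $h_2(y) \le h_1(y)$ is equivalent to the simple geometric condition $y \cdot u \ge \frac{e^{-t}\|u\|^2}{2}$.

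Finally I would verify this geometric condition on $e^{-t}\cdot \+B_{r/2}(u)$. Unfolding the scaled ball, $y \in e^{-t}\cdot \+B_{r/2}(u)$ is the same as $\|y - e^{-t}u\| \le \tfrac{e^{-t}\|u\|}{2}$. Writing $y = e^{-t}u + v$ with $\|v\| \le e^{-t}\|u\|/2$ and applying Cauchy–Schwarz,
\[
    y\cdot u = e^{-t}\|u\|^2 + v\cdot u \;\ge\; e^{-t}\|u\|^2 - \|v\|\,\|u\| \;\ge\; \frac{e^{-t}\|u\|^2}{2},
\]
which is exactly the condition required. Chaining the three steps yields $f_t(y) \le h_1(y)$ and therefore $e^{-h_1(y)} \le e^{-f_t(y)}$.

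There is essentially no obstacle: once the convex-combination identity is written down, the problem becomes a one-line Gaussian comparison, and the only quantitative input is the straightforward observation that every point of the ball $e^{-t}\+B_{r/2}(u)$ lies in the half-space $\{y\cdot u \ge e^{-t}\|u\|^2/2\}$, i.e. on the side of the perpendicular bisector closer to $e^{-t}u$ than to the origin. The hypothesis $\|u\|^2 \ge 100 d$ plays no role here and is reserved for the downstream moment and covariance estimates of \Cref{thm:stitched2}.
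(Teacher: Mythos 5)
Your proposal is correct and follows essentially the same route as the paper: reduce to the pointwise comparison $h_2(y)\le h_1(y)$ via the convex-combination structure of $f_t$, observe that this is equivalent to the half-space condition $2\langle u,y\rangle \ge e^{-t}\|u\|^2$, and verify that $e^{-t}\cdot\+B_{r/2}(u)$ lies in that half-space (the paper does this by expanding $\|y-e^{-t}u\|^2$ and lower-bounding $\|y\|$, while you write $y=e^{-t}u+v$ and apply Cauchy--Schwarz, a purely cosmetic difference). Your side remark that $\|u\|^2\ge 100d$ is not needed here is also consistent with the paper.
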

Let $Z_t=\int_{\bb R^d} e^{-f_t(y)} \dd y$ be the normalizing factor.
\begin{lemma}\label{lem:normalizing}
    When $e^{-2t}<0.1$ and $\|u\|^2\geq 100d$, we have $1.8\leq Z_t\leq 2$.
\end{lemma}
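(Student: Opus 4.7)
The plan is to partition $\bb R^d$ into three disjoint pieces according to the value of $\mathfrak{g}_u(y/e^{-t})$. From its construction, $\mathfrak{g}_u(y/e^{-t})=1$ outside $e^{-t}\+B_{r/2}(u)$ and $\mathfrak{g}_u(y/e^{-t})=0$ inside $e^{-t}\+B_{2r/5}(u)$, so I set
\[
R_1 \defeq \bb R^d\setminus e^{-t}\+B_{r/2}(u),\quad R_2 \defeq e^{-t}\+B_{2r/5}(u),\quad R_3 \defeq e^{-t}\tp{\+B_{r/2}(u)\setminus\+B_{2r/5}(u)},
\]
on which $f_t\equiv h_1$, $f_t\equiv h_2$, and $f_t$ is a convex combination of $h_1,h_2$, respectively. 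Since $e^{-h_i}$ is exactly the density of the Gaussian $\+N_i$, $\int_{\bb R^d}e^{-h_i(y)}\dd y=1$ for $i=1,2$.

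For the upper bound $Z_t\le 2$, on $R_3$ the convex combination together with AM--GM gives $e^{-f_t(y)}=\tp{e^{-h_1(y)}}^{\alpha}\tp{e^{-h_2(y)}}^{1-\alpha}\le \alpha e^{-h_1(y)}+(1-\alpha)e^{-h_2(y)}\le e^{-h_1(y)}+e^{-h_2(y)}$, where $\alpha\defeq\mathfrak{g}_u(y/e^{-t})\in[0,1]$. Splitting the integral over the three regions and using this bound on $R_3$ yields
\[
Z_t \le \int_{R_1\cup R_3}e^{-h_1(y)}\dd y + \int_{R_2\cup R_3}e^{-h_2(y)}\dd y \le 1+1 = 2.
\]

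For the lower bound $Z_t\ge 1.8$, I drop the nonnegative $R_3$ contribution to obtain $Z_t\ge \Pr[Y\sim\+N_1]{Y\in R_1}+\Pr[Y\sim\+N_2]{Y\in R_2}$, and argue each probability is close to $1$ via Gaussian tail estimates. The key distance computations are $\|m_1 - e^{-t}u\| = e^{-t}\|u\|(1-e^{-2t}/2)$ and $\|m_2-e^{-t}u\| = e^{-3t}\|u\|/2\le 0.05\,e^{-t}\|u\|$, where I used $e^{-2t}<0.1$. By the triangle inequality, the event $y\notin R_1$, i.e.\ $\|y-e^{-t}u\|\le e^{-t}\|u\|/2$, forces $\|y-m_1\|^2/\sigma_t^2\ge \|u\|^2(1-e^{-2t})/4$, while $y\notin R_2$, i.e.\ $\|y-e^{-t}u\|>0.4\,e^{-t}\|u\|$, forces $\|y-m_2\|^2/\sigma_t^2\ge 0.1225\,\|u\|^2/(1-e^{-2t})$. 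Combining with $\|u\|^2\ge 100d$ and $1-e^{-2t}\in(0.9,1)$, these turn into chi-squared thresholds of at least $22.5\,d$ and $12.25\,d$ respectively; the standard Chernoff bound $\Pr{\chi^2_d\ge\lambda d}\le(\lambda e^{1-\lambda})^{d/2}$ then drives both complementary probabilities well below $0.1$ uniformly in $d\ge 1$, giving $Z_t\ge 1.8$.

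The only place requiring care is the triangle-inequality bookkeeping that converts ``$y\notin R_i$'' into a quantitative lower bound on $\|y-m_i\|/\sigma_t$; once this is set up, Gaussian concentration provides ample slack and the numerical constants $0.4,0.5,100,0.1$ baked into the construction comfortably meet the $0.2$ budget required in $Z_t$.
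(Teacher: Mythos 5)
Your proof is correct and follows essentially the same route as the paper's: the same three-region split, the same pointwise bound $e^{-f_t}\le e^{-h_1}+e^{-h_2}$ on the interpolation region for the upper bound, and the same lower bound $Z_t\ge \Pr[\+N_1]{R_1}+\Pr[\+N_2]{R_2}$ after dropping the transition region. The only (immaterial) difference is that you control the two Gaussian probabilities with a chi-squared Chernoff bound, whereas the paper's Proposition on $\+N_1,\+N_2$ uses Markov's inequality to get $1-10d/\|u\|^2\ge 0.9$ for each; both comfortably meet the $0.2$ budget.
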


Furthermore, we can prove that $\delta_t=\Theta(1)$ for any $t$ satisfying $e^{-2t}<0.1$.

\begin{lemma}\label{lem:stitched4}
    We have $0.3 < \delta_t < 0.7$ when $e^{-2t}<0.1$ and $\|u\|^2\geq 100d$.
\end{lemma}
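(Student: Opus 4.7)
The plan is to extract an explicit formula for $\delta_t$ from the decomposition $q_t(y) = (1-\delta_t) e^{-h_1(y)} + \delta_t \cdot p_{\gamma_t}(y)$, and then plug in the bounds on the normalizing constant from \Cref{lem:normalizing}. The key observation is that $p_{\gamma_t}$ is supported on $e^{-t}\cdot \+B_{r/2}(u)$, so for any $y$ lying outside this ball the decomposition forces a pointwise identity that pins down $\delta_t$ uniquely.

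Concretely, I would first recall that outside $e^{-t}\cdot \+B_{r/2}(u)$ we have $\mathfrak{g}_u(y/e^{-t}) = 1$, hence $f_t(y) = h_1(y)$, hence $q_t(y) = e^{-h_1(y)}/Z_t$. Combined with $p_{\gamma_t}(y) = 0$ on this region, the decomposition gives
\[
\frac{e^{-h_1(y)}}{Z_t} \;=\; (1-\delta_t)\, e^{-h_1(y)}, \qquad \text{so} \qquad \delta_t \;=\; 1 - \frac{1}{Z_t} \;=\; \frac{Z_t - 1}{Z_t}.
\]
(One should verify briefly that this $\delta_t$ is consistent with a valid probability density $p_{\gamma_t}$: non-negativity inside the ball follows from \Cref{lem:stitched3}, and integrating the decomposition over $\bb R^d$ shows $p_{\gamma_t}$ has unit mass automatically.)

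Since the map $Z \mapsto 1 - 1/Z$ is increasing on $(0,\infty)$, the bounds $1.8 \le Z_t \le 2$ from \Cref{lem:normalizing} translate directly into
\[
1 - \tfrac{1}{1.8} \;\le\; \delta_t \;\le\; 1 - \tfrac{1}{2},
\]
i.e.\ $\tfrac{4}{9} \le \delta_t \le \tfrac{1}{2}$, which comfortably lies inside $(0.3,0.7)$.

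The main obstacle is essentially non-existent here: once the decomposition is set up correctly, the whole argument reduces to pointwise matching outside the support of $p_{\gamma_t}$ plus an application of the preceding lemma. The only care needed is to ensure the decomposition is well-posed, which amounts to checking non-negativity via \Cref{lem:stitched3} and verifying that the chosen $\delta_t$ automatically normalizes $p_{\gamma_t}$.
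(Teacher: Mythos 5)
Your proof is correct, and it takes a cleaner route than the paper's. The paper integrates the decomposition over the ball $e^{-t}\cdot\+B_{r/2}(u)$ to get $\Pr_{\nu_t}[Y\in \text{ball}] = (1-\delta_t)\Pr_{\+N_1}[Y\in\text{ball}] + \delta_t$, then sandwiches the left-hand side between $0.4$ and $0.7$ using the concentration bounds of \Cref{prop:stitched0} together with \Cref{lem:normalizing}, and finally solves the resulting inequalities for $\delta_t$. You instead match the decomposition pointwise \emph{outside} the ball, where $f_t\equiv h_1$ and $p_{\gamma_t}$ vanishes, which pins down $\delta_t = 1 - 1/Z_t$ exactly; the bounds $1.8\le Z_t\le 2$ then give $\delta_t\in[4/9,1/2]$. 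Your identity is the correct one (it is equivalent to integrating the decomposition over the complement of the ball, where $q_t = e^{-h_1}/Z_t$), your well-posedness check via \Cref{lem:stitched3} and automatic normalization is exactly what is needed, and your bound is tighter than the stated $(0.3,0.7)$. The only thing your route does not save is the reliance on \Cref{prop:stitched0}, since that proposition is still used inside the proof of \Cref{lem:normalizing}; but as a proof of this lemma, yours is shorter and sharper.
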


We prove \Cref{lem:stitched4} in \Cref{subsec:proof2}. Equipped with these lemmas, we are now ready to prove \Cref{thm:stitched2}.
\begin{proof}[Proof of \Cref{thm:stitched2}]
    We first decompose $\!{Cov}_{Y\sim \nu_t}[Y]$.
    By definition,
    \begin{align*}
        \!{Cov}_{Y\sim \nu_t}[Y] &= \E[Y\sim \nu_t]{(Y-\E[\nu_t]{Y})(Y-\E[\nu_t]{Y})^{\top}} \\
        &= (1-\delta_t)\cdot \E[Y\sim \+N_1]{(Y-\E[\nu_t]{Y})(Y-\E[\nu_t]{Y})^{\top}} + \delta_t\cdot \E[Y\sim \gamma_t]{(Y-\E[\nu_t]{Y})(Y-\E[\nu_t]{Y})^{\top}} \\
        &= (1-\delta_t)\cdot \E[Y\sim \+N_1]{(Y-\E[\+N_1]{Y})(Y-\E[\+N_1]{Y})^{\top}} \\
        &\quad + (1-\delta_t) \cdot \tp{\E[\+N_1]{Y} - \E[\nu_t]{Y}}\tp{\E[\+N_1]{Y} - \E[\nu_t]{Y}}^{\top} \\
        &\quad  + \delta_t \cdot \E[Y\sim \gamma_t]{(Y-\E[\gamma_t]{Y})(Y-\E[\gamma_t]{Y})^{\top}} + \delta_t \cdot \tp{\E[\gamma_t]{Y} - \E[\nu_t]{Y}}\tp{\E[\gamma_t]{Y} - \E[\nu_t]{Y}}^{\top} \\
        &= (1-\delta_t)\cdot\!{Cov}_{\+N_1}[Y] + (1-\delta_t) \cdot \tp{\E[\+N_1]{Y} - \E[\nu_t]{Y}}\tp{\E[\+N_1]{Y} - \E[\nu_t]{Y}}^{\top}\\
        &\quad + \delta_t\cdot\!{Cov}_{\gamma_t}[Y] + \delta_t \cdot \tp{\E[\gamma_t]{Y} - \E[\nu_t]{Y}}\tp{\E[\gamma_t]{Y} - \E[\nu_t]{Y}}^{\top}.
    \end{align*}
    Since $q_t(y) = (1-\delta_t)\cdot e^{-h_1(y)} + \delta_t \cdot p_{\gamma_t}(y)$, the expectation $\E[\nu_t]{Y} = (1-\delta_t)\cdot \E[\+N_1]{Y} + \delta_t\cdot \E[\gamma_t]{Y}$. Then we have 
    \[
        \E[\nu_t]{Y} - \E[\+N_1]{Y} = \delta_t\cdot \tp{\E[\gamma_t]{Y} - \E[\+N_1]{Y}} 
    \]
    and
    \[
        \E[\nu_t]{Y} - \E[\gamma_t]{Y} = (1-\delta_t)\cdot \tp{\E[\+N_1]{Y} - \E[\gamma_t]{Y}}.
    \]
    Therefore,
    \begin{align*}
        \!{Cov}_{Y\sim \nu_t}[Y] &= (1-\delta_t)\cdot\!{Cov}_{\+N_1}[Y] + \delta_t\cdot\!{Cov}_{\gamma_t}[Y] \\
        &\quad + \tp{\delta_t^2(1-\delta_t) + (1-\delta_t)^2\delta_t} \cdot \tp{\E[\+N_1]{Y} - \E[\gamma_t]{Y}}\tp{\E[\+N_1]{Y} - \E[\gamma_t]{Y}}^{\top} \\
        & = (1-\delta_t)\cdot \!{Cov}_{\+N_1}[Y] + \delta_t\cdot\!{Cov}_{\gamma_t}[Y] + \delta_t(1-\delta_t) \cdot \tp{\E[\gamma_t]{Y} - \frac{e^{-3t}u}{2}}\cdot \tp{\E[\gamma_t]{Y} - \frac{e^{-3t}u}{2}}^\top.
    \end{align*}

    We then show that the matrix $\tp{\E[\gamma_t]{Y} - \frac{e^{-3t}u}{2}}\cdot \tp{\E[\gamma_t]{Y} - \frac{e^{-3t}u}{2}}^\top$ cannot be bouned by $L\cdot \!{Id}_d$ for any $L=o(e^{-2t}\|u\|^2)$. To show this, we only need to prove $\norm{\E[\gamma_t]{Y} - \frac{e^{-3t}u}{2}}^2 = \Omega\tp{e^{-2t}\|u\|^2}$. Recall that $\gamma_t$ is supported only on $ e^{-t}\cdot \+B_{\frac{r}{2}}(u) = \set{y\in \bb R^d: \| y - e^{-t}u\| \leq 0.5 e^{-t} \|u\|}$. For each $y\in e^{-t}\cdot \+B_{\frac{r}{2}}(u)$, with $e^{-2t}<0.1$,
    \[
        \norm{ y- \frac{e^{-3t}u}{2} } \geq \|y\| - \norm{\frac{e^{-3t}u}{2}} \geq \frac{e^{-t}}{2} \|u\| - \frac{e^{-3t}}{2} \|u\| > \frac{e^{-t}}{4} \|u\|.
    \]
    So $\norm{\E[\gamma_t]{Y} - \frac{e^{-3t}u}{2}}^2$ can be lower bounded by 
    \[
        \inf_{y\in  e^{-t}\cdot \+B_{\frac{r}{2}}(u)} \norm{y - \frac{e^{-3t}u}{2}}^2  \geq \frac{e^{-2t}}{16} \|u\|^2 = \Omega\tp{e^{-2t}\|u\|^2}.
    \]
    
    Since $\!{Cov}_{\+N_1}[Y]\mge 0$, $\!{Cov}_{\gamma_t}[Y]\mge 0$ and from \Cref{lem:stitched4}, $\delta_t=\Theta(1)$, this indicates that $\|\!{Cov}_{Y\sim \nu_t}[Y]\|_{\!{op}} = \Omega\tp{e^{-2t}\|u\|^2}$.
    % this indicates that for any $L=o(e^{-2t}\|u\|^2)$, $\!{Cov}_{Y\sim \nu_t}[Y] \not\mle L \cdot \!{Id}_d$. 
    The remaining of the lemma then follows from \Cref{prop:stitched-decomp}.
\end{proof}

\subsubsection{Proofs for the supporting lemmas in \Cref{subsec:stitched2}}\label{subsec:proof2}
\begin{proof}[Proof of \Cref{lem:stitched0}]
    By definition, for any $y\in \bb R^d$,
    \begin{align}
        q_t(y) &\propto \exp\set{-\mathfrak{g}_{u}\tp{\frac{y}{e^{-t}}}\cdot \tp{\frac{\|y\|^2}{2e^{-2t}} +\frac{\|y-x_0\|^2}{2(1-e^{-2t})}} - \tp{1-\mathfrak{g}_{u}\tp{\frac{y}{e^{-t}}}}\cdot \tp{\frac{\|y-e^{-t}\cdot u\|^2}{2e^{-2t}} + \frac{\|y-x_0\|^2}{2(1-e^{-2t})}} } \notag \\
        &= \exp\left\{-\mathfrak{g}_{u}\tp{\frac{y}{e^{-t}}}\cdot \frac{\|y\|^2 + e^{-2t}\|x_0\|^2 - e^{-2t}(y^{\top}x_0 + x_0^{\top}y)}{2e^{-2t}(1-e^{-2t})} \right. \notag \\
        & \quad  \left. - \tp{1-\mathfrak{g}_{u}\tp{\frac{y}{e^{-t}}}}\cdot \frac{\|y\|^2 - (1-e^{-2t})e^{-t}(u^{\top}y + y^{\top}u) - e^{-2t}(x_0^{\top}y + y^{\top}x_0) + (1-e^{-2t})e^{-2t}\|u\|^2 + e^{-2t}\|x_0\|^2}{2e^{-2t}(1-e^{-2t})} \right\} \notag  \\
        &= \exp\left\{-\mathfrak{g}_{u}\tp{\frac{y}{e^{-t}}}\cdot \tp{\frac{\|y-e^{-2t}x_0\|^2}{2e^{-2t}(1-e^{-2t})} + \frac{\|x_0\|^2}{2}} \right. \notag \\
        & \quad \left.- \tp{1-\mathfrak{g}_{u}\tp{\frac{y}{e^{-t}}}}\cdot \tp{\frac{\|y- \tp{e^{-t}(1-e^{-2t})u + e^{-2t}x_0}\|^2}{2e^{-2t}(1-e^{-2t})} + \frac{\|x_0 - e^{-t}u\|^2}{2}}\right\}.\label{eq:stitched1}
    \end{align}
    When we choose $x_0 = \frac{e^{-t}u}{2}$, we can further simplify \Cref{eq:stitched1} as
    \begin{align*}
        q_t(y) & \propto \exp\left\{-\mathfrak{g}_{u}\tp{\frac{y}{e^{-t}}}\cdot \tp{\frac{\norm{y-\frac{e^{-3t}u}{2}}^2}{2e^{-2t}(1-e^{-2t})} + \frac{\|e^{-t}u\|^2}{8}} \right. \notag \\
        & \quad \left.- \tp{1-\mathfrak{g}_{u}\tp{\frac{y}{e^{-t}}}}\cdot \tp{\frac{\norm{y- \tp{e^{-t}\tp{1-\frac{e^{-2t}}{2}}u }}^2}{2e^{-2t}(1-e^{-2t})} + \frac{\|e^{-t}u\|^2}{8}}\right\} \notag  \\
        &\propto \exp\left\{-\mathfrak{g}_{u}\tp{\frac{y}{e^{-t}}}\cdot \frac{\norm{y-\frac{e^{-3t}u}{2}}^2}{2e^{-2t}(1-e^{-2t})}  - \tp{1-\mathfrak{g}_{u}\tp{\frac{y}{e^{-t}}}}\cdot \frac{\norm{y- \tp{e^{-t}\tp{1-\frac{e^{-2t}}{2}}u }}^2}{2e^{-2t}(1-e^{-2t})} \right\}  \notag \\
        &\propto \exp\left\{-\mathfrak{g}_{u}\tp{\frac{y}{e^{-t}}}\cdot \tp{\frac{\norm{y-\frac{e^{-3t}u}{2}}^2}{2e^{-2t}(1-e^{-2t})} + c_t} - \tp{1-\mathfrak{g}_{u}\tp{\frac{y}{e^{-t}}}}\cdot \tp{\frac{\norm{y- \tp{e^{-t}\tp{1-\frac{e^{-2t}}{2}}u }}^2}{2e^{-2t}(1-e^{-2t})} + c_t}\right\}
    \end{align*}
\end{proof}

\begin{proof}[Proof of \Cref{lem:stitched3}]
    Recall that for any $y\in \bb R^d$,
    \[
    h_1(y) = \frac{\norm{y-\frac{e^{-3t}u}{2}}^2}{2e^{-2t}(1-e^{-2t})} + c_t \quad \mbox{and}\quad h_2(y) = \frac{\norm{y- \tp{e^{-t}\tp{1-\frac{e^{-2t}}{2}}u }}^2}{2e^{-2t}(1-e^{-2t})} + c_t.
    \]
    To prove this lemma, we need to show that
    \[
        \norm{y-\frac{e^{-3t}u}{2}}^2 \geq \norm{y- \tp{e^{-t}\tp{1-\frac{e^{-2t}}{2}}u }}^2
    \]
    for each $y\in e^{-t}\cdot \+B_{\frac{r}{2}}(u)$. This is equivalent to say
    \begin{equation*}
        \inner{\tp{2e^{-t} - 2e^{-3t}}u}{y} \geq e^{-2t}(1-e^{-2t}) \|u\|^2,
    \end{equation*}
    and this can be further simplified to $2\inner{u}{y} \geq e^{-t}\|u\|^2$.

    Since for $y\in e^{-t}\cdot \+B_{\frac{r}{2}}(u)$, $y$ satisfies $\| y - e^{-t}u\| \leq 0.5 e^{-t} \|u\|$. Therefore, for each $y\in e^{-t}\cdot\+B_{\frac{r}{2}}(u)$, we have
    \begin{equation*}
        \|y\| \geq 0.5 e^{-t}\|u\| \quad \mbox{and} \quad 2e^{-t}\inner{u}{y} \geq \|y\|^2 + 0.75 e^{-2t} \|u\|^2.
    \end{equation*}
    This indicates that $2\inner{u}{y} \geq e^{-t}\|u\|^2$.
\end{proof}

\begin{proof}[Proof of \Cref{lem:normalizing}]
    On the one hand,
    \begin{align*}
        Z_t \leq \int_{\bb R^d} e^{-h_1(y)} \dd y + \int_{\bb R^d} e^{-h_2(y)} \dd y \leq 2.
    \end{align*}
    On the other hand,
    \begin{align*}
        Z_t &\geq \int_{e^{-t}\cdot \+B_{\frac{2r}{5}}(u)} e^{-h_2(y)} \dd y + \int_{e^{-t}\cdot \ol{\+B_{\frac{r}{2}}(u)}} e^{-h_1(y)} \dd y \\
        & = \Pr[Y\sim \+N_2]{Y\in e^{-t}\cdot \+B_{\frac{2r}{5}}(u)} + \Pr[Y\sim \+N_1]{Y\in e^{-t}\cdot \ol{\+B_{\frac{r}{2}}(u)}}\\
        &\geq 2- \frac{20 d}{\|u\|^2} >1.8.
    \end{align*}
    where in the second inequality we use \Cref{prop:stitched0} and the last inequality is due to $\|u\|^2\geq 100 d$.
\end{proof}

Before proving \Cref{lem:stitched4}, we first give the following concentration bounds for $\+N_1$ and $\+N_2$. Recall that $\+N_1$ and $\+N_2$ represent the two Gaussian distributions $\+N\tp{\frac{e^{-3t}u}{2}, \sigma_t^2 \cdot \!{Id}_d}$ and $\+N\tp{e^{-t}\tp{1-\frac{e^{-2t}}{2}}u, \sigma_t^2 \cdot \!{Id}_d}$ respectively. 
\begin{proposition}\label{prop:stitched0}
    When $e^{-2t}<0.1$, we have $\Pr[Y\sim \+N_1]{Y\in e^{-t}\cdot \ol{\+B_{\frac{r}{2}}(u)}} \geq 1 - \frac{10d}{\|u\|^2}$ and $\Pr[Y\sim \+N_2]{Y\in e^{-t}\cdot \+B_{\frac{2r}{5}}(u)} \geq 1- \frac{10d}{\|u\|^2}$.
\end{proposition}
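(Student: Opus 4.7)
The plan is to reduce both statements to a tail bound on $\norm{Y-\E Y}^2$ for a spherical Gaussian, which we then control via Markov's (equivalently Chebyshev's) inequality applied to the $\chi^2$-like quantity $\norm{Y-\E Y}^2$. Note $Y-\E Y\sim\+N(0,\sigma_t^2\cdot\!{Id}_d)$ in both cases, so $\E\norm{Y-\E Y}^2 = d\sigma_t^2 = d\,e^{-2t}(1-e^{-2t})\le d\,e^{-2t}$.

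For the first claim, the mean of $\+N_1$ is $\frac{e^{-3t}u}{2}$, whose distance from the ball center $e^{-t}u$ equals $e^{-t}\|u\|\cdot\bigl(1-\tfrac{e^{-2t}}{2}\bigr)\ge 0.95\,e^{-t}\|u\|$, using $e^{-2t}<0.1$. By the reverse triangle inequality, $Y\in e^{-t}\cdot \+B_{r/2}(u)$ (which means $\|Y-e^{-t}u\|\le 0.5\,e^{-t}\|u\|$) forces $\|Y-\E Y\|\ge 0.95\,e^{-t}\|u\|-0.5\,e^{-t}\|u\|=0.45\,e^{-t}\|u\|$. Applying Markov's inequality to $\|Y-\E Y\|^2$ gives
\[
\Pr[Y\sim\+N_1]{Y\in e^{-t}\cdot \+B_{r/2}(u)} \le \Pr{\|Y-\E Y\|^2\ge 0.2025\,e^{-2t}\|u\|^2}\le \frac{d\,e^{-2t}}{0.2025\,e^{-2t}\|u\|^2}\le \frac{10d}{\|u\|^2}.
\]
Taking complements yields the first bound.

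For the second claim, the mean of $\+N_2$ is $e^{-t}\bigl(1-\tfrac{e^{-2t}}{2}\bigr)u$, whose distance from $e^{-t}u$ is $\tfrac{e^{-3t}}{2}\|u\|\le 0.05\,e^{-t}\|u\|$ under $e^{-2t}<0.1$. If $\|Y-\E Y\|\le 0.35\,e^{-t}\|u\|$, then the triangle inequality gives $\|Y-e^{-t}u\|\le 0.4\,e^{-t}\|u\|$, i.e.\ $Y\in e^{-t}\cdot \+B_{2r/5}(u)$. Hence
\[
\Pr[Y\sim\+N_2]{Y\notin e^{-t}\cdot \+B_{2r/5}(u)}\le \Pr{\|Y-\E Y\|^2\ge 0.1225\,e^{-2t}\|u\|^2}\le \frac{d\,e^{-2t}}{0.1225\,e^{-2t}\|u\|^2}\le \frac{10d}{\|u\|^2},
\]
which gives the desired bound. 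The only slightly delicate step is choosing the split of the radius budget between the mean-offset and the Gaussian fluctuation so that Markov's inequality yields the claimed constant $10$; this is easy here because $e^{-2t}<0.1$ leaves a comfortable gap. No obstacle is expected.
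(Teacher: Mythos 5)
Your proposal is correct and follows essentially the same route as the paper: in both cases one uses the triangle inequality to reduce membership in the target set to a bound on $\norm{Y-\E Y}$, and then applies Markov's inequality to $\norm{Y-\E Y}^2$ with $\E\norm{Y-\E Y}^2 = d\sigma_t^2$, with the same numerical thresholds ($0.45\,e^{-t}\|u\|$ and $0.35\,e^{-t}\|u\|$). No gaps.
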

\begin{proof}
    We first prove $\Pr[Y\sim \+N_1]{Y\in e^{-t}\cdot \ol{\+B_{\frac{r}{2}}(u)}} \geq 1 - \frac{10d}{\|u\|^2}$. By definition, 
    \[
        e^{-t}\cdot \ol{\+B_{\frac{r}{2}}(u)} = \set{y\in \bb R^d: \|y - e^{-t}u\| > 0.5e^{-t}\|u\|}.
    \]
    Since 
    \[
        \|y - e^{-t}u\| \geq \tp{e^{-t} - \frac{e^{-3t}}{2}} \|u\| - \norm{y - \frac{e^{-3t}}{2} u },
    \]
    we have $\set{y\in \bb R^d:\ \norm{y - \frac{e^{-3t}}{2} u } < \tp{0.5 e^{-t} - \frac{e^{-3t}}{2}}\|u\|} \subseteq e^{-t}\cdot \ol{\+B_{\frac{r}{2}}(u)}$. Therefore, 
    \begin{align*}
        \Pr[Y\sim \+N_1]{Y\in e^{-t}\cdot \ol{\+B_{\frac{r}{2}}(u)}} &\geq \Pr[Y\sim \+N_1]{\norm{Y - \frac{e^{-3t}}{2} u }^2 < \tp{0.5 e^{-t} - \frac{e^{-3t}}{2}}^2\|u\|^2}  \\
        &\geq 1 - \frac{d\cdot \sigma_t^2}{\tp{0.5 e^{-t} - \frac{e^{-3t}}{2}}^2\|u\|^2} \\
        &\geq 1 - \frac{10d}{\|u\|^2}
    \end{align*}
    where the second inequality follows from the Markov's inequality and the last inequality is due to $e^{-2t}< 0.1$.

    We then prove that $\Pr[Y\sim \+N_2]{Y\in e^{-t}\cdot \+B_{\frac{2r}{5}}(u)} \geq 1- \frac{10d}{\|u\|^2}$. By definition, 
    \[
        e^{-t}\cdot \+B_{\frac{2r}{5}}(u) = \set{y\in \bb R^d: \|y - e^{-t}u\| < 0.4e^{-t}\|u\|}.
    \]
    Since
    \[
        \|y - e^{-t}u\| \leq \norm{y -  e^{-t}\tp{1-\frac{e^{-2t}}{2}}u} + \frac{e^{-3t}}{2}\|u\|,
    \]
    we have $\set{y\in \bb R^d :\ \norm{ y -  e^{-t}\tp{1-\frac{e^{-2t}}{2}}u}\leq\tp{0.4 e^{-t} - \frac{e^{-3t}}{2}}\|u\|} \subseteq e^{-t}\cdot \+B_{\frac{2r}{5}}(u)$. Similarly,
    \begin{align*}
        \Pr[Y\sim \+N_2]{Y\in e^{-t}\cdot \+B_{\frac{2r}{5}}(u)} &\geq \Pr[Y\sim \+N_2]{\norm{Y -  e^{-t}\tp{1-\frac{e^{-2t}}{2}}}^2 \leq\tp{0.4 e^{-t} - \frac{e^{-3t}}{2}}^2\|u\|^2}  \\
        &\geq 1- \frac{d\cdot \sigma_t^2}{\tp{0.4 e^{-t} - \frac{e^{-3t}}{2}}^2\|u\|^2} \\
        &\geq 1- \frac{10d}{\|u\|^2}.
    \end{align*}
\end{proof}

Now we give a proof of \Cref{lem:stitched4}.
\begin{proof}[Proof of \Cref{lem:stitched4}]
    Recall that the distribution $\gamma_t$ is only supported on $e^{-t}\cdot \+B_{\frac{r}{2}}(u)$. The choice of $\delta_t$ satisfies
    \begin{equation}
        \Pr[Y\sim \nu_t]{Y\in e^{-t}\cdot \+B_{\frac{r}{2}}(u)} = (1-\delta_t)\cdot \Pr[Y\sim N_1]{Y\in e^{-t}\cdot \+B_{\frac{r}{2}}(u)} + \delta_t. \label{eq:decomp}
    \end{equation}
    From \Cref{prop:stitched0} and \Cref{lem:normalizing}, we have
    \[
        \Pr[Y\sim \nu_t]{Y\in e^{-t}\cdot \+B_{\frac{r}{2}}(u)} \geq \frac{1}{Z_t}\cdot \Pr[Y\sim \+N_2]{Y\in e^{-t}\cdot \+B_{\frac{2r}{5}}(u)} \geq \frac{1-\frac{10d}{\|u\|^2}}{2} >0.4,
    \]
    and 
    \[
        \Pr[Y\sim \nu_t]{Y\in e^{-t}\cdot \+B_{\frac{r}{2}}(u)} \leq \frac{1}{Z_t}\cdot\tp{\Pr[Y\sim \+N_2]{Y\in e^{-t}\cdot \+B_{\frac{r}{2}}(u)} + \Pr[Y\sim \+N_1]{Y\in e^{-t}\cdot \+B_{\frac{r}{2}}(u)}} \leq \frac{1}{Z_t}\cdot \tp{1 + \frac{10d}{\|u\|^2}} < 0.7.
    \]
    
    From \Cref{prop:stitched0}, $\Pr[Y\sim N_1]{Y\in e^{-t}\cdot \+B_{\frac{r}{2}}(u)} \leq \frac{10d}{\|u\|^2} \leq 0.1$. So the RHS of \Cref{eq:decomp} satisfies
    \[
        \delta_t \leq (1-\delta_t)\cdot \Pr[Y\sim N_1]{Y\in e^{-t}\cdot \+B_{\frac{r}{2}}(u)} + \delta_t \leq 0.1 + 0.9 \delta_t.
    \]
    Combining above inequalities, we have $0.3< \delta_t < 0.7$.
\end{proof}

\subsection{Smoothness for the mixture of Gaussian distributions}\label{subsec:mix}

The analysis in the previous section indicates that
% that the smoothness parameter can become larger during the OU process compared to the stitched Gaussian distribution at the start. 
an $\+O(1)$-log-smooth initial distribution does not guarantee $\+O(1)$-smoothness after the process evolves. 
% Therefore, it is worth  investigating what kind of initial distributions can preserve such desirable smoothness properties.
In this subsection, we consider a family of classic multi-modal distributions, the mixture of Gaussian distributions. Mixture of Gaussians appear to be similar to stitched Gaussians, but the analysis below will reveal fundamental differences in their smoothness behaviors. 

We study the cases where each Gaussian component has a covariance $\Sigma_i \succeq \Omega(1)\!{Id}_d$. 
% Let $p$ and $p_t$ denote the density functions of the initial distribution and the distribution at time $t$ during the OU process respectively. 
We first considered the simple case with only two components, and then extend the analysis to mixtures with multiple components.

% To be specific, we show that for a mixture of two Gaussians with mean $u_1$ and $u_2$, if their covariance matrices are the same, the distributions at the beginning and during the process will always be $\+O\tp{\max\ab\{e^{-2t}\|u_1-u_2\|^2,1\}}$-smooth. In contrast, when the covariance matrices differ, even with $\|u_1-u_2\|$ being a constant, $\log p$ is not $L$-smooth for any $L=o(d)$. 

% However, for those cases with more components, the analysis becomes more complex. Even when all covariance matrices are the same, it is challenging to derive a concise rule to characterize the relationship between smoothness and the distances between the means of the Gaussians. We give an example where the centers of components are far apart, yet the mixture distribution remains $\+O(1)$-smooth, and this smoothness is preserved during the OU process.

\subsubsection{Mixture of Gaussian distributions and its evolution during the OU process}

Consider $m$ Gaussian distributions over $\bb R^d$, each with mean $u_i\in \bb R^d$ and covariance $\Sigma_i\in \bb R^{d\times d}$. Define function $f_i:\bb R^d\to \bb R$ as $f_i(x) = \frac{1}{2}(x-u_i)^{\top}\Sigma_i^{-1}(x-u_i) + \frac{1}{2}\log\tp{\tp{2\pi}^{d} \abs{\Sigma_i}}$ for each $x\in \bb R^d$. Then $e^{-f_i}$ is the density function of the Gaussian distribution $\+N\tp{u_i,\Sigma_i}$. Suppose $\mu$ is the mixture of Gaussian with density $p_{\mu}(x) = \sum_{i=1}^m w_i e^{-f_i(x)}$, where $w_i\in(0,1)$ is the weight of the $i$-th component and $\sum_{i=1}^m w_i=1$. Then $-\grad \log p_{\mu}(x) = \frac{\sum_{i=1}^m w_i\grad f_i(x)\cdot e^{-f_i(x)}}{p_{\mu}(x)}$ and 
\begin{align}
    -\grad^2 \log p_{\mu}(x) &= \frac{\sum_{i=1}^m w_i\grad^2 f_i(x)\cdot e^{-f_i(x)} - \sum_{i=1}^m w_i\grad f_i(x)\grad f_i(x)^{\top}\cdot e^{-f_i(x)}}{\sum_{i=1}^m w_i e^{-f_i(x)}} \notag  \\
    &\quad + \frac{\tp{\sum_{i=1}^m w_i\grad f_i(x)\cdot e^{-f_i(x)}}\tp{\sum_{i=1}^m w_i\grad f_i(x)\cdot e^{-f_i(x)}}^{\top}}{\tp{\sum_{i=1}^m w_i e^{-f_i(x)}}^2} \notag \\
    &= \frac{\sum_{i=1}^m\sum_{j=1}^m w_iw_j \tp{\grad f_i(x) \grad f_j(x)^{\top} - \frac{1}{2}\grad f_i(x) \grad f_i(x)^{\top} - \frac{1}{2}\grad f_j(x) \grad f_j(x)^{\top}} e^{-f_i(x)-f_j(x)}}{\tp{\sum_{i=1}^m w_i e^{-f_i(x)}}^2} \notag \\
    &\quad + \frac{\sum_{i=1}^m w_i\grad^2 f_i(x)\cdot e^{-f_i(x)} }{\sum_{i=1}^m w_i e^{-f_i(x)}} \notag \\
    &= - \underbrace{\frac{\sum_{1\leq i<j\leq m} w_iw_j \tp{\grad f_i(x) - \grad f_j(x)}\tp{\grad f_i(x) - \grad f_j(x)}^{\top} e^{-f_i(x)-f_j(x)}}{\tp{\sum_{i=1}^m w_i e^{-f_i(x)}}^2}}_{(A)} \notag \\
    &\quad + \underbrace{\frac{\sum_{i=1}^m w_i\grad^2 f_i(x)\cdot e^{-f_i(x)} }{\sum_{i=1}^m w_i e^{-f_i(x)}}}_{(B)}. \label{eq:smooth1}
\end{align}
From \Cref{eq:smooth1}, $\grad^2 \log p_{\mu}$ is determined by two parts: the weighted mixture of the Hessian of each component (term $(B)$), and the interaction between different components (term $(A)$).

Then we see how this distribution evolves during the OU process. Recall that the trajectory of the OU process is given by $X_t = e^{-t}X_0 + \sqrt{2}\cdot e^{-t}\int_{0}^t e^s \d B_s$ and $
\sqrt{2}\cdot e^{-t}\int_{0}^t e^s \d B_s \sim \+N\tp{0, (1-e^{-2t})I_d}$. If $X_0$ is drawn from some Gaussian distribution $\+N(u,\Sigma)$, at time $t$, the distribution of $X_t$ will be $\+N\tp{e^{-t}u, e^{-2t}\Sigma + (1-e^{-2t})\!{Id}_d}$. Hence, if the initial distribution is $\mu$, i.e., the weighted mixture of $\ab\{\+N(u_i, \Sigma_i)\}_{i\in[m]}$, then the distribution of $X_t$ will be the mixture of $\ab\{\+N\tp{e^{-t}u_i, e^{-2t}\Sigma_i + (1-e^{-2t})\!{Id}_d} \}_{i\in[m]}$ with the same weights. Let $\Sigma_i^{(t)} =  e^{-2t}\Sigma_i + (1-e^{-2t})\!{Id}_d$. That is to say, this distribution is still a mixture of Gaussians and has density $p_t(x) = \sum_{i=1}^n w_i e^{-f^{(t)}_i(x)}$, where $e^{-f^{(t)}_i(x)}$ is the density function of $\+N\tp{e^{-t}u_i, \Sigma_i^{(t)}}$.

\subsubsection{Mixture of two Gaussians}\label{subsubsec:2gaussian}
We first see the case when $m=2$. 

\paragraph{Gaussians with the same covariance: distance of means determines}~

When the covariances are the same and are bounded, the rules are simple and straightforward. The smoothness of the mixture distribution is totally determined by the distance of centers.

\begin{lemma}\label{lem:2-same}
    When $m=2$ and $\Sigma_1 = \Sigma_2 = \Sigma$ for some matrix $\Sigma\succeq \Omega(1)\!{Id}_d$, we have
    \[
        -\+O\tp{\|u_1-u_2\|^2}\cdot \!{Id}_d \preceq -\grad^2 \log p_{\mu}(x) \preceq \Sigma^{-1}, 
    \]
    and 
    \[
         -\+O\tp{e^{-2t}\|u_1-u_2\|^2}\cdot \!{Id}_d \preceq -\grad^2 \log p_t(x) \preceq \+O(1)\!{Id}_d
    \]
    for any $t>0$.
    
    On the other hand, for any $L=o(\|u_1-u_2\|^2)$, $-\grad^2 \log p_{\mu}(x) \not\succeq -L\cdot \!{Id}_d$ and $-\grad^2 \log p_t(x) \not\succeq -e^{-2t}L\cdot \!{Id}_d$.
\end{lemma}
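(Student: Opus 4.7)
The plan is to specialize~\eqref{eq:smooth1} with $m=2$ and $\Sigma_1=\Sigma_2=\Sigma$. With this choice, $\grad f_i(x)=\Sigma^{-1}(x-u_i)$ and $\grad^2 f_i(x)=\Sigma^{-1}$, so the term $(B)$ collapses to the constant matrix $\Sigma^{-1}$, while $\grad f_1(x)-\grad f_2(x)=\Sigma^{-1}(u_2-u_1)$ is independent of $x$. Hence
\[
    -\grad^2\log p_\mu(x) \;=\; \Sigma^{-1} \;-\; \alpha(x)\cdot\Sigma^{-1}(u_1-u_2)(u_1-u_2)^{\top}\Sigma^{-1},\qquad \alpha(x) \defeq \frac{w_1 w_2\, e^{-f_1(x)-f_2(x)}}{\bigl(w_1 e^{-f_1(x)}+w_2 e^{-f_2(x)}\bigr)^{2}}.
\]
Applying AM-GM with $a=w_1 e^{-f_1(x)}$ and $b=w_2 e^{-f_2(x)}$ gives $\alpha(x)\le 1/4$ uniformly. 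Since the rank-one correction is PSD, this immediately yields the upper bound $-\grad^2\log p_\mu(x)\preceq\Sigma^{-1}$; the lower bound follows from the operator-norm estimate $\alpha(x)\norm{\Sigma^{-1}(u_1-u_2)}^2\le\tfrac14\norm{\Sigma^{-1}}_{\!{op}}^{2}\|u_1-u_2\|^2=\+O(\|u_1-u_2\|^2)$, where $\norm{\Sigma^{-1}}_{\!{op}}=\+O(1)$ is equivalent to $\Sigma\succeq\Omega(1)\!{Id}_d$.

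Transporting to $p_t$ is immediate once one observes that the OU process sends $\mu$ to the mixture of $\+N(e^{-t}u_i,\Sigma^{(t)})$ with the same weights, where $\Sigma^{(t)}=e^{-2t}\Sigma+(1-e^{-2t})\!{Id}_d\succeq\Omega(1)\!{Id}_d$. The identical computation with the substitutions $u_i\mapsto e^{-t}u_i$ and $\Sigma\mapsto\Sigma^{(t)}$ yields the upper bound $(\Sigma^{(t)})^{-1}\preceq\+O(1)\!{Id}_d$, and the mean difference now contributes $\norm{(\Sigma^{(t)})^{-1}e^{-t}(u_1-u_2)}^2=\+O(e^{-2t}\|u_1-u_2\|^2)$, explaining the $e^{-2t}$ factor in the lower bound.

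For the matching "on the other hand" direction, evaluate at the midpoint $x_*=(u_1+u_2)/2$ (respectively $e^{-t}(u_1+u_2)/2$ for $p_t$). The symmetry of two equal-covariance Gaussians forces $f_1(x_*)=f_2(x_*)$, saturating AM-GM at $\alpha(x_*)=w_1 w_2$. Testing against the unit vector $v$ parallel to $\Sigma^{-1}(u_1-u_2)$ produces
\[
    v^{\top}\bigl(-\grad^2\log p_\mu(x_*)\bigr)v \;=\; v^{\top}\Sigma^{-1}v - w_1 w_2\norm{\Sigma^{-1}(u_1-u_2)}^{2} \;\le\; \+O(1) - \Omega(\|u_1-u_2\|^2),
\]
which drops below $-L$ for any $L=o(\|u_1-u_2\|^2)$ once $\|u_1-u_2\|$ is large enough, and the $p_t$ statement follows identically with an extra $e^{-2t}$ factor. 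The main subtlety is that this tightness direction needs $\norm{\Sigma^{-1}(u_1-u_2)}=\Omega(\|u_1-u_2\|)$, which requires $\lambda_{\max}(\Sigma)$ to be bounded and not to grow with $\|u_1-u_2\|$; this is implicit in reading $\Sigma$ as a fixed matrix, so that $\lambda_{\max}(\Sigma)$ is absorbed into the hidden $\Omega(\cdot)$ and the asymptotic $L=o(\|u_1-u_2\|^2)$ is taken in the regime of large distance.
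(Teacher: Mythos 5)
Your derivation of the two-sided operator bounds is the same as the paper's: specialize \eqref{eq:smooth1} to $m=2$, $\Sigma_1=\Sigma_2$, bound the mixing coefficient by $1/4$ via AM--GM, and absorb $\|\Sigma^{-1}(u_1-u_2)\|^2\le \|\Sigma^{-1}\|_{\!{op}}^2\|u_1-u_2\|^2$ using $\Sigma\succeq \Omega(1)\!{Id}_d$; the transport to $p_t$ via $\Sigma^{(t)}=e^{-2t}\Sigma+(1-e^{-2t})\!{Id}_d$ is likewise identical. Where you genuinely add something is the ``on the other hand'' tightness claim: the paper's proof simply does not address it, whereas your midpoint argument (evaluate at $x_*=(u_1+u_2)/2$, where symmetry forces $f_1(x_*)=f_2(x_*)$ and hence $\alpha(x_*)=w_1w_2$, then test against the unit vector along $\Sigma^{-1}(u_1-u_2)$) is a clean and correct way to establish it. Your closing caveat is also a real one and worth keeping: the saturation step needs $\|\Sigma^{-1}(u_1-u_2)\|=\Omega(\|u_1-u_2\|)$, i.e.\ $\lambda_{\max}(\Sigma)=\+O(1)$, which does not follow from the stated hypothesis $\Sigma\succeq\Omega(1)\!{Id}_d$ alone (and one similarly needs $w_1w_2=\Omega(1)$); these must be read as implicit normalizations for the tightness direction to hold as stated.
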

\begin{proof}
    When $m=2$ and $\Sigma_1 = \Sigma_2 = \Sigma$ for some matrix $\Sigma\in \bb R^{d\times d}$, we have $\grad f_1(x) = \Sigma^{-1}(x-u_1)$, $\grad f_2(x) = \Sigma^{-1}(x-u_2)$ and $\grad^2 f_1(x) = \grad^2 f_2(x) = \Sigma^{-1}$.
According to \Cref{eq:smooth1}, 
\begin{align*}
    -\grad^2 \log p_{\mu}(x) = - \frac{w_1w_2\cdot \Sigma^{-1}(u_1-u_2)(u_1-u_2)^{\top}\Sigma^{-1}\cdot e^{-f_1(x) - f_2(x)}}{\tp{w_1 e^{-f_1(x)} + w_2 e^{-f_2(x)}}^2} + \Sigma^{-1}.
\end{align*}
From the Cauchy-Schwartz inequality,
\[
    0 \leq \frac{w_1w_2\cdot e^{-f_1(x) - f_2(x)}}{\tp{w_1 e^{-f_1(x)} + w_2 e^{-f_2(x)}}^2} \leq \frac{1}{4}.
\]
Therefore
\[
    -\frac{1}{4}\Sigma^{-1}(u_1-u_2)(u_1-u_2)^{\top}\Sigma^{-1} + \Sigma^{-1} \preceq -\grad^2 \log p_{\mu}(x) \preceq \Sigma^{-1}.
\]
It can be easily prove that $\Sigma^{-2}$ is also upper bounded by $\+O(1)\!{Id}_d$. The result then follows from the fact that 
\[
    \Sigma^{-1}(u_1-u_2)(u_1-u_2)^{\top}\Sigma^{-1} \preceq \|\Sigma^{-1}(u_1-u_2)\|^2 \cdot \!{Id}_d = (u_1-u_2)^{\top}\Sigma^{-2}(u_1-u_2)\cdot \!{Id}_d \preceq \+O(\|u_1-u_2\|^2)\cdot \!{Id}_d.
\]

% This indicates that the potential function of the initial distribution is at least $\+O\tp{\|u_1-u_2\|^2}$-smooth (?).

Let $\Sigma^{(t)} = e^{-2t}\Sigma + (1-e^{-2t})\!{Id}_d$. For the distributions during the OU process, repeating the above calculations, we can get that
\[
     -\frac{e^{-2t}}{4}\tp{\Sigma^{(t)}}^{-1}(u_1-u_2)(u_1-u_2)^{\top}\tp{\Sigma^{(t)}}^{-1} + \tp{\Sigma^{(t)}}^{-1} \preceq -\grad^2 \log p_t(x) \preceq \tp{\Sigma^{(t)}}^{-1}.
\]
By the definition of $\Sigma^{(t)}$, $ \tp{\Sigma^{(t)}}^{-1} \preceq \frac{1}{1-ce^{-2t}}\cdot \!{Id}_d \preceq \+O(1) \!{Id}_d$ for some universal constant $c<1$. 
Therefore, at time $t$, $-\grad^2 \log p_t \succeq -\+O( \|u_1-u_2\|^2) \!{Id}_d$.
\end{proof}

\paragraph{Gaussians with different covariances}~

When the two components have different covariance matrices, the smoothness parameter can be $\+O(d)$ even when $\|u_1-u_2\|$ is small. Here is an example.

\begin{lemma}\label{lem:2-diff}
    Let $p_{\mu}(x) = \frac{1}{2}e^{-f_1(x)} + \frac{1}{2}e^{-f_2(x)}$ with $f_1(x) = \|x-u_1\|^2 + \frac{d}{2}\log \pi$ and $f_2(x) = \frac{\|x-u_2\|^2}{2} + \frac{d}{2}\log 2\pi$ for arbitrary vectors $u_1,u_2\in \bb R^d$. Then there exists $x\in \bb R^d$ such that $\|\grad^2 \log p_{\mu}(x)\|_{\!{op}} = \Omega{d\log 2 + 2\|u_1-u_2\|^2}$.
    % $\grad^2 \log p_{\mu}(x) \not\preceq L \cdot \!{Id}_d$ for any $L< d\log 2 + 2\|u_1-u_2\|^2$.
\end{lemma}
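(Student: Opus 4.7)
The plan is to apply the decomposition formula \eqref{eq:smooth1} for mixtures, specialized to $m=2$ and $w_1=w_2=\frac12$, and then pick an $x$ that simultaneously maximizes the ``interaction'' term $(A)$. Since $\nabla^2 f_1 = 2\,\!{Id}_d$ and $\nabla^2 f_2 = \!{Id}_d$, the averaging term $(B)$ lies between $\!{Id}_d$ and $2\,\!{Id}_d$ uniformly in $x$, so all of the large behavior must come from the rank-one PSD matrix
\[
(A) = \frac{e^{-f_1(x)}e^{-f_2(x)}}{\bigl(e^{-f_1(x)}+e^{-f_2(x)}\bigr)^2}\,vv^\top, \qquad v \defeq \grad f_1(x) - \grad f_2(x) = x - 2u_1+u_2 .
\]
The scalar prefactor $\frac{ab}{(a+b)^2}$ attains its maximum value $\frac14$ when $a=b$, i.e.\ when $f_1(x) = f_2(x)$, and this is the key to making $(A)$ large.

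The next step is to identify the level set $\{f_1=f_2\}$. A direct expansion of $2\|x-u_1\|^2 - \|x-u_2\|^2 = d\log 2$ shows that this set is the sphere
\[
  \bigl\| x - (2u_1 - u_2) \bigr\|^2 = 2\|u_1-u_2\|^2 + d\log 2 ,
\]
and on this sphere $\|v\| = \|x-(2u_1-u_2)\|$, so $\|v\|^2$ equals exactly $2\|u_1-u_2\|^2 + d\log 2$. Picking any such $x$, the operator norm of $(A)$ becomes $\tfrac14\|v\|^2 = \tfrac{d\log 2}{4} + \tfrac12\|u_1-u_2\|^2$.

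Finally, since $-\grad^2\log p_\mu(x) = (B)-(A)$ with $0\preceq (B)\preceq 2\,\!{Id}_d$ and $(A)$ rank-one with largest eigenvalue $\tfrac14\|v\|^2$ in the direction $v$, we get that the smallest eigenvalue of $-\grad^2\log p_\mu(x)$ is at most $2 - \tfrac14\|v\|^2$. Consequently
\[
  \bigl\|\grad^2 \log p_\mu(x)\bigr\|_{\!{op}} \;\geq\; \tfrac14\|v\|^2 - 2 \;=\; \Omega\bigl(d\log 2 + \|u_1-u_2\|^2\bigr),
\]
which matches the claim up to the harmless constant $2$ in front of $\|u_1-u_2\|^2$. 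There is no real obstacle here: the only thing to verify carefully is the sphere identity and the fact that the positive semidefinite contribution from $(B)$ cannot cancel the leading $\tfrac14\|v\|^2$ from $(A)$, both of which are routine.
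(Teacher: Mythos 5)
Your proposal is correct and follows essentially the same route as the paper: the same mixture decomposition, the same choice of $x$ on the level set $f_1=f_2$ (the sphere $\|x-(2u_1-u_2)\|^2 = d\log 2 + 2\|u_1-u_2\|^2$), and the same computation showing the rank-one interaction term has operator norm $\tfrac14\|x-(2u_1-u_2)\|^2$. If anything, your final step is slightly more careful than the paper's, since you explicitly subtract the bounded contribution of the averaged-Hessian term before concluding the lower bound on $\|\grad^2\log p_\mu(x)\|_{\!{op}}$.
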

\begin{proof}
    From \Cref{eq:smooth1}, 
    \begin{align*}
        -\grad^2 \log p_{\mu}(x) = \underbrace{\frac{2e^{-f_1(x)}\cdot \!{Id}_d + e^{-f_2(x)}\cdot \!{Id}_d}{e^{-f_1(x)} + e^{-f_2(x)}}}_{(a)} - \underbrace{\frac{e^{-f_1(x)-f_2(x)}\cdot \tp{x-(2u_1-u_2)}\tp{x-(2u_1-u_2)}^{\top}}{\tp{e^{-f_1(x)} + e^{-f_2(x)}}^2}}_{(b)}.
    \end{align*}
    For $(a)$, it is easy to know that $\!{Id}_d\preceq (a) \preceq 2\!{Id}_d$. For $(b)$, we will find a specific $x$ such that $(b)$ is large.
    When $x$ satisfies $\|x - (2u_1-u_2)\|^2 = d\log 2 + 2\|u_1-u_2\|^2$, by direct calculation, we have 
    \begin{align*}
        f_1(x) - f_2(x) &= \frac{1}{2}\tp{\|x\|^2 - (2u_1-u_2)^{\top}x - x^{\top}(2u_1-u_2) + 2\|u_1\|^2 - \|u_2\|^2 - d\log 2} \\
        &= \frac{1}{2}\tp{\|x - (2u_1-u_2)\|^2 - 4\|u_1\|^2 - \|u_2\|^2 + 2u_1^{\top}u_2 + 2u_2^{\top}u_1 + 2\|u_1\|^2 - \|u_2\|^2 - d\log 2}  \\
        &= \frac{1}{2}\tp{\|x - (2u_1-u_2)\|^2 - d\log 2 - 2\|u_1-u_2\|^2} \\
        &= 0.
    \end{align*}
    Therefore, $0\preceq (b) = \frac{1}{4}\cdot \tp{x-(2u_1-u_2)}\tp{x-(2u_1-u_2)}^{\top}$ and the optimal upper bound for $(b)$ is $\frac{\|x-(2u_1-u_2)\|^2}{4}\cdot \!{Id}_d = \tp{d\log 2 + 2\|u_1-u_2\|^2}\cdot \!{Id}_d$.
\end{proof}

% These results also holds whenever the number of components is a constant.

\subsubsection{Mixture of multiple Gaussians}
Things are more complicated and subtle for the mixture of multiple Gaussian distributions, even when their covariance matrices are the same. When they have the same covariance matrix, via similar arguments in \Cref{lem:2-same}, we can prove that $ -\+O\tp{\max_{i,j\in[m]}\|u_i-u_j\|^2}\cdot \!{Id}_d \preceq-\grad^2 \log p_{\mu}(x) \preceq \+O(1)\!{Id}_d$ and $ -\+O\tp{\max_{i,j\in[m]}e^{-2t}\|u_i-u_j\|^2}\cdot \!{Id}_d \preceq-\grad^2 \log p_t(x) \preceq \+O(1)\!{Id}_d$. However, these bounds may not be tight. A large distance between the centers of the components does not necessarily imply a lack of smoothness. The potential function may still be and maintain $\+O(1)$-smooth during the OU process in this case. We give an example in this section.

Let $J\in \bb R^{d\times d}$ be a symmetric and positive definite matrix and $h$ be an arbitrary vector in $\bb R^d$. Consider the distribution $\mu$ over $\bb R^d$ with density
\begin{equation}
    p_{\mu}(x) \propto \sum_{\sigma\in \ab\{\pm 1\}^d} \exp\set{ - \frac{1}{2}x^{\top}J^{-1}x + \tp{J^{-1}h+\sigma}^{\top}x}. \label{eq:HS-mix}
\end{equation}

This distribution is induced when applying the Hubbard-Stratonovich transform to the Ising model (see Appendix E in \cite{KLR22}). 
% The Ising model with interaction matrix $J$ is a distribution over $\ab\{\pm 1\}^d$ with density $p_J(\sigma)\propto \exp\set{\inner{\sigma}{J\sigma}}$ for any $\sigma\in \ab\{\pm 1\}^d$. 
Note that each $\sigma\in  \ab\{\pm 1\}^d$ corresponds to a Gaussian component $\+N(J\sigma+h, J^{-1})$. For a vector $x\in \bb R^d$, let $x(i)$ denote its $i$-th component for any $i\in[d]$. The following lemma shows that this distribution is log-smooth and even strongly log-concave if $J$ is within a moderate range.

\begin{lemma}\label{lem:mixture1}
    If $\delta\cdot \!{Id}_d\preceq J\preceq (1-\delta)\cdot \!{Id}_d$ for some $\delta\in (0,1/2)$, the distribution defined in \Cref{eq:HS-mix} satisfies $\frac{\delta}{1-\delta}\cdot \!{Id}_d \preceq -\grad^2 \log p_{\mu}(x) \preceq \frac{1}{\delta}\cdot \!{Id}_d$ for any $x\in \bb R^d$.
\end{lemma}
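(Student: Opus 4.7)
The plan is to exploit the fact that this density has a very special structure: after factoring out the Gaussian part that does not depend on $\sigma$, the remaining $\sigma$-weight is a product measure. Specifically, I would write
\[
p_\mu(x) \propto e^{-\frac{1}{2} x^\top J^{-1} x + (J^{-1}h)^\top x} \cdot \sum_{\sigma \in \{\pm 1\}^d} e^{\sigma^\top x}
= e^{-\frac{1}{2} x^\top J^{-1} x + (J^{-1}h)^\top x} \cdot \prod_{i=1}^d \bigl(e^{x(i)} + e^{-x(i)}\bigr),
\]
so that
\[
-\log p_\mu(x) = \tfrac12 x^\top J^{-1} x - (J^{-1}h)^\top x - \sum_{i=1}^d \log\bigl(e^{x(i)} + e^{-x(i)}\bigr) + \text{const}.
\]

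Taking two derivatives, the quadratic term contributes $J^{-1}$, the linear term vanishes, and each $-\log(e^{x(i)}+e^{-x(i)}) = -\log(2\cosh x(i))$ contributes $-(1-\tanh^2 x(i))$ on the $(i,i)$-diagonal. Hence
\[
-\grad^2 \log p_\mu(x) = J^{-1} - D(x),\quad \text{where } D(x) = \mathrm{diag}\bigl(1-\tanh^2 x(1),\ldots,1-\tanh^2 x(d)\bigr).
\]
(Equivalently, viewing $p_\mu$ as a mixture $\sum_\sigma w_\sigma(x)\, e^{-f_\sigma(x)}$ with all $f_\sigma$ sharing Hessian $J^{-1}$, the Hessian formula \eqref{eq:smooth1} yields $-\grad^2 \log p_\mu(x) = J^{-1} - \mathrm{Cov}_{\pi(\cdot\mid x)}[\sigma]$ where $\pi(\sigma\mid x) \propto e^{\sigma^\top x}$ factorizes over coordinates, giving the same diagonal covariance $D(x)$.)

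Since $0 \le 1 - \tanh^2 x(i) \le 1$ for every $i$, we have $0 \preceq D(x) \preceq I_d$. Combined with the spectral bounds $\frac{1}{1-\delta} I_d \preceq J^{-1} \preceq \frac{1}{\delta} I_d$ coming from the hypothesis $\delta I_d \preceq J \preceq (1-\delta) I_d$, this yields
\[
\frac{\delta}{1-\delta} I_d = \tp{\tfrac{1}{1-\delta}-1} I_d \preceq J^{-1} - D(x) \preceq J^{-1} \preceq \tfrac{1}{\delta} I_d,
\]
which is the desired conclusion. There is essentially no hard step here once the factorization is spotted; the only thing to double-check is the algebra when rewriting the sum over $\sigma$ as a product over coordinates, which works precisely because the $\sigma$-dependence in $g_\sigma(x)$ is purely linear in $\sigma$ with coefficient $x$ (the $J^{-1}h$ and $J^{-1}x$ pieces are $\sigma$-independent and can be pulled outside the sum).
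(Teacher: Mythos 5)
Your proposal is correct and follows essentially the same route as the paper: factor out the $\sigma$-independent Gaussian part, recognize the product $\prod_i(e^{x(i)}+e^{-x(i)})$, compute the Hessian as $J^{-1}$ minus a diagonal matrix with entries $1-\tanh^2 x(i)\in[0,1]$, and conclude via the spectral bounds $\frac{1}{1-\delta}\!{Id}_d\preceq J^{-1}\preceq\frac{1}{\delta}\!{Id}_d$. No gaps.
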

\begin{proof}
    By the definition in \Cref{eq:HS-mix}, 
    \begin{align*}
        p_{\mu}(x) &\propto \sum_{\sigma\in \ab\{\pm 1\}^d} \exp\set{ - \frac{1}{2}x^{\top}J^{-1}x + \tp{J^{-1}h+\sigma}^{\top}x} \\
        &= \exp\set{ - \frac{1}{2}x^{\top}J^{-1}x + h^{\top}J^{-1}x} \cdot \sum_{\sigma\in \ab\{\pm 1\}^d} \exp\set{\sigma^{\top}x} \\
        &= \exp\set{ - \frac{1}{2}x^{\top}J^{-1}x + h^{\top}J^{-1}x} \cdot \prod_{i=1}^d \tp{e^{x(i)}+e^{-x(i)}}.
    \end{align*}
    Therefore, $-\grad \log p_{\mu}(x) = J^{-1}x - J^{-1}h - z_x$ where $z_x\in \bb R^d$ and $z_x(i) = \frac{e^{x(i)} - e^{-x(i)}}{e^{x(i)} + e^{-x(i)}}$ for each $i\in[d]$. Consequently, $-\grad^2 \log p_{\mu}(x) = J^{-1} - A_x$ where $A_x$ is a diagonal matrix in $\bb R^{d\times d}$ and $A_x(i,i) = 1 - \tp{\frac{e^{x(i)} - e^{-x(i)}}{e^{x(i)} + e^{-x(i)}}}^2$ for each $i\in[d]$.

    Since $\delta\cdot \!{Id}_d\preceq J\preceq (1-\delta)\cdot \!{Id}_d$, for any $v\in \bb R^d$,
    \[
        v^{\top}J^{-1}v = \tp{J^{-\frac{1}{2}}v}^{\top}J^{-\frac{1}{2}}v \succeq \frac{1}{1-\delta}v^{\top}\cdot J^{-\frac{1}{2}} J J^{-\frac{1}{2}}\cdot v=\frac{1}{1-\delta}v^{\top}v,
    \]
    and 
    \[
        v^{\top}J^{-1}v = \tp{J^{-\frac{1}{2}}v}^{\top}J^{-\frac{1}{2}}v \preceq \frac{1}{\delta}v^{\top}\cdot J^{-\frac{1}{2}} J J^{-\frac{1}{2}}\cdot v = \frac{1}{\delta}v^{\top}v.
    \]
    % \[
    %     \frac{3}{2}v^{\top}v = \frac{3}{2}v^{\top}\cdot J^{-\frac{1}{2}} J J^{-\frac{1}{2}}\cdot v \preceq \tp{J^{-\frac{1}{2}}v}^{\top}J^{-\frac{1}{2}}v \preceq 3v^{\top}\cdot J^{-\frac{1}{2}} J J^{-\frac{1}{2}}\cdot v = 3v^{\top}v.
    % \]
    Thus $\frac{1}{1-\delta}\cdot \!{Id}_d\preceq J^{-1}\preceq \frac{1}{\delta}\cdot \!{Id}_d$. For the matrix $A_x$, we know $0\preceq A_x\preceq \!{Id}_d$. Combining these together, we can get the desired result.

\end{proof}

% \htodo{Define the notation $x(i)$ and $p*q$.}

For two distributions $\pi$ and $\nu$ with density $p_\pi$ and $p_{\nu}$ respectively, define $\pi*\nu$ as the distribution with density $p_{\pi*\nu}(x)\propto \int_{\bb R^d} p_{\pi}(y)\cdot p_{\nu}(x-y)\dd y$.
When the initial distribution is both strongly log-concave and log-smooth, we can show that the $-\grad^2 \log p_t$ is also bounded via the following lemma and its corollary.
\begin{lemma}[Lemma 28 in \cite{LPSR21}]\label{lem:m-gaussian1}
      Suppose $\pi$ is a probability density function on $\bb R^d$ such that $M_{1}^{-1} \preceq -\grad^2 \log p_{\pi}(x) \preceq M_2^{-1}$ for some $M_1,M_2\in \bb R^{d\times d}$. Let $\nu$ be the density function of $\+N(0,M)$. Then
      \[
        (M_1+M)^{-1} \preceq  -\grad^2 \log p_{\pi*\nu}(x) \preceq (M_2+M)^{-1}.
      \]
\end{lemma}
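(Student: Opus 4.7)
The plan is to reduce the Hessian bound on $-\log p_{\pi*\nu}$ to a covariance estimate for a strongly log-concave tilted measure. Writing $V(x) \defeq -\log p_{\pi*\nu}(x)$, I would introduce, for each $x \in \bb R^d$, the density
\begin{equation*}
    \rho_x(y) \;\propto\; p_\pi(y) \cdot \exp\!\tp{-\tfrac12 (x-y)^{\top} M^{-1} (x-y)},
\end{equation*}
so that, up to an additive constant in $x$, $V(x) = -\log \int \exp(-\Phi(x,y)) \dd y$ with $\Phi(x,y) \defeq -\log p_\pi(y) + \tfrac12 (x-y)^{\top} M^{-1} (x-y)$. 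The standard second-cumulant identity for log-partition functions, $\grad_x^2 V(x) = \E[Y\sim \rho_x]{\grad_x^2 \Phi(x,Y)} - \!{Cov}_{Y\sim\rho_x}[\grad_x \Phi(x,Y)]$, combined with $\grad_x^2 \Phi = M^{-1}$ and $\grad_x \Phi = M^{-1}(x-y)$, yields
\begin{equation*}
    \grad^2 V(x) \;=\; M^{-1} - M^{-1}\, \!{Cov}_{Y\sim \rho_x}[Y]\, M^{-1}.
\end{equation*}

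The remaining task is to sandwich $\!{Cov}_{Y\sim \rho_x}[Y]$. In the $y$ variable, $-\grad_y^2 \log \rho_x(y) = -\grad^2 \log p_\pi(y) + M^{-1}$, so the hypothesis on $\pi$ upgrades to
\begin{equation*}
    M_1^{-1} + M^{-1} \;\preceq\; -\grad_y^2 \log \rho_x(y) \;\preceq\; M_2^{-1} + M^{-1},
\end{equation*}
making $\rho_x$ strongly log-concave. The Brascamp--Lieb inequality applied to linear test functions immediately delivers the upper bound $\!{Cov}_{Y\sim \rho_x}[Y] \preceq (M_1^{-1}+M^{-1})^{-1}$. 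For the matching lower bound I would use a Cram\'er--Rao / Fisher-information argument: integration by parts gives $\E[Y\sim \rho_x]{(Y - \E[Y\sim \rho_x]{Y})(-\grad_y \log \rho_x(Y))^{\top}} = \!{Id}_d$ and the Fisher identity $\E[Y\sim\rho_x]{\grad_y \log \rho_x(Y) \grad_y \log \rho_x(Y)^{\top}} = \E[Y\sim\rho_x]{-\grad_y^2 \log \rho_x(Y)} \preceq M_2^{-1}+M^{-1}$; the matrix Cauchy--Schwarz inequality together with the monotonicity $0 \prec A \preceq B \Rightarrow A^{-1} \succeq B^{-1}$ then yields $\!{Cov}_{Y\sim \rho_x}[Y] \succeq (M_2^{-1}+M^{-1})^{-1}$.

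The last step is a purely algebraic identity: the Sherman--Morrison--Woodbury formula
\begin{equation*}
    M^{-1} - M^{-1}(N^{-1}+M^{-1})^{-1}M^{-1} \;=\; (M+N)^{-1},
\end{equation*}
applied with $N \in \set{M_1, M_2}$, converts the covariance sandwich directly into $(M+M_1)^{-1} \preceq \grad^2 V(x) \preceq (M+M_2)^{-1}$. The main technical point to verify is the boundary-term vanishing in the Cram\'er--Rao integration by parts; here this is automatic because the Gaussian factor $\exp(-\tfrac12(x-y)^{\top}M^{-1}(x-y))$ forces $\rho_x$ to have Gaussian tails, which dominates the at-most-quadratic growth of $-\log p_\pi$ guaranteed by the upper Hessian bound $M_2^{-1}$.
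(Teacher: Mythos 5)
Your argument is correct, but note that the paper does not actually prove this lemma — it is imported verbatim as Lemma~28 of \cite{LPSR21} — so there is no in-paper proof to compare against; what you have written is a valid self-contained derivation. The key identity $\grad^2 V(x) = M^{-1} - M^{-1}\,\!{Cov}_{Y\sim\rho_x}[Y]\,M^{-1}$ is exactly the tilted-measure decomposition the paper itself invokes elsewhere (\Cref{prop:stitched-decomp}, with $M=(1-e^{-2t})\!{Id}_d$), and the two covariance bounds you use — Brascamp--Lieb for the upper bound $\!{Cov}_{\rho_x}[Y]\preceq (M_1^{-1}+M^{-1})^{-1}$ and Cram\'er--Rao for the lower bound $\!{Cov}_{\rho_x}[Y]\succeq (M_2^{-1}+M^{-1})^{-1}$ — are the standard sandwich for strongly log-concave, log-smooth measures; the Woodbury step then gives precisely $(M_1+M)^{-1}\preceq \grad^2V\preceq (M_2+M)^{-1}$, with the expected role reversal (the $M_1$ hypothesis controls the lower Hessian bound of the convolution, the $M_2$ hypothesis the upper one). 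Two small points of hygiene: the statement implicitly requires $M_1,M_2\succ 0$ for the inverses and for $\rho_x$ to be strongly log-concave, which your Brascamp--Lieb and Cram\'er--Rao steps both need; and your justification of the vanishing boundary terms is slightly misattributed — it is the \emph{lower} Hessian bound $M_1^{-1}+M^{-1}\succ 0$ (strong log-concavity of $\rho_x$, hence Gaussian decay) together with the at-most-linear growth of $\grad\log\rho_x$ from the \emph{upper} bound that kills the boundary terms, not the Gaussian factor alone. Neither issue affects the validity of the proof.
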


\begin{corollary}\label{cor:mixture2}
    During the OU process with starting distribution defined in \Cref{eq:HS-mix}, 
    \[
        \frac{1}{1+\frac{1-2\delta}{\delta}e^{-2t}}\cdot \!{Id}_d \preceq -\grad^2 \log p_t(x) \preceq \frac{1}{1-(1-\delta)e^{-2t}} \cdot \!{Id}_d
    \]
    for any $t>0$ and any $x\in \bb R^d$.
\end{corollary}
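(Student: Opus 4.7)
The plan is to realize $p_t$ as a Gaussian convolution and then invoke \Cref{lem:m-gaussian1}. By \Cref{eq:OU}, if $X_0\sim \mu$, we can write $X_t = e^{-t}X_0 + \sqrt{1-e^{-2t}}\,Z$ with $Z\sim \+N(0,\!{Id}_d)$ independent of $X_0$. Let $\pi_t$ denote the law of $e^{-t}X_0$, and let $\nu_t = \+N(0, (1-e^{-2t})\!{Id}_d)$. Then $p_t = p_{\pi_t} * p_{\nu_t}$, which fits exactly the setup of \Cref{lem:m-gaussian1} once we know Hessian bounds for $\log p_{\pi_t}$.

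First I would transfer the Hessian bounds from $\mu$ to $\pi_t$ by a change of variables. Since $p_{\pi_t}(y) = e^{td}\,p_{\mu}(e^t y)$, the chain rule gives
\[
    -\grad^2 \log p_{\pi_t}(y) = e^{2t}\cdot \bigl(-\grad^2 \log p_{\mu}(e^t y)\bigr).
\]
Combined with \Cref{lem:mixture1}, which applies because $\delta\!{Id}_d \preceq J \preceq (1-\delta)\!{Id}_d$ by assumption, this yields
\[
    \frac{e^{2t}\delta}{1-\delta}\cdot \!{Id}_d \preceq -\grad^2 \log p_{\pi_t}(y) \preceq \frac{e^{2t}}{\delta}\cdot \!{Id}_d
    \qquad \forall y\in \bb R^d.
\]

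Next, I apply \Cref{lem:m-gaussian1} with $M_1 = \frac{1-\delta}{\delta e^{2t}}\!{Id}_d$, $M_2 = \frac{\delta}{e^{2t}}\!{Id}_d$, and $M = (1-e^{-2t})\!{Id}_d$, to conclude
\[
    (M_1+M)^{-1} \preceq -\grad^2 \log p_t(x) \preceq (M_2+M)^{-1}.
\]
Straightforward arithmetic gives
\[
    M_1+M = \frac{1-2\delta+\delta e^{2t}}{\delta e^{2t}}\!{Id}_d, \qquad M_2+M = \bigl(1-(1-\delta)e^{-2t}\bigr)\!{Id}_d,
\]
and inverting these scalar multiples of the identity yields exactly the advertised upper and lower bounds.

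There is no real obstacle here: the only mild subtlety is keeping track of the $e^{2t}$ factor when pushing the Hessian bound through the dilation $X_0\mapsto e^{-t}X_0$, and verifying that $\pi_t$ still has density of the form required by \Cref{lem:m-gaussian1} (it does, being a smooth strictly log-concave density since $\delta < 1/2$ implies $\delta<1-\delta$). Everything else reduces to scalar algebra on multiples of $\!{Id}_d$.
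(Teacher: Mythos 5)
Your proposal is correct and follows essentially the same route as the paper: write $\mu_t$ as the convolution of the dilated initial law with $\+N(0,(1-e^{-2t})\!{Id}_d)$, push the Hessian bounds of \Cref{lem:mixture1} through the dilation (which contributes the $e^{2t}$ factor), and apply \Cref{lem:m-gaussian1} with exactly the same $M_1$, $M_2$, $M$ as in the paper. The only difference is that you spell out the chain-rule step and the final scalar arithmetic, which the paper leaves implicit; all computations check out.
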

\begin{proof}
    Recall that $X_t = e^{-t}X_0 + \sqrt{2}\cdot e^{-t}\int_{0}^t e^s \d B_s$. Therefore $\mu_t = \mu'*\nu$ where $\mu'$ is the distribution with density $p_{\mu'}(x) \propto p_{\mu}\tp{\frac{x}{e^{-t}}}$ and $\nu$ is $\+N(0,(1-e^{-2t})\!{Id}_d)$. From \Cref{lem:m-gaussian1}, with $M_1 = \frac{1-\delta}{\delta}\cdot e^{-2t}\!{Id}_d$, $M_2 = \delta\cdot e^{-2t}\!{Id}_d$ and $M=(1-e^{-2t})\!{Id}_d$, we have 
    \[
        \frac{1}{1+\frac{1-2\delta}{\delta}e^{-2t}}\cdot \!{Id}_d \preceq -\grad^2 \log p_t(x) \preceq \frac{1}{1-(1-\delta)e^{-2t}} \cdot \!{Id}_d.
    \]
\end{proof}

Thus, despite being a mixture of Gaussian distributions where the component centers might be far apart, it is still $\+O(1)$-log-smooth and remains $\+O(1)$-log-smooth throughout the OU process.

\begin{remark}
    The motivation for exploring the distributions in \Cref{eq:HS-mix} is to study the Ising model, which is a distribution over $\ab\{\pm 1\}^d$ with density $p_{J,h}(\sigma)\propto \exp\set{\frac{1}{2}\inner{\sigma}{J\sigma} + \inner{h}{\sigma}}$ for any $\sigma\in \ab\{\pm 1\}^d$. 

    The Hubbard-Stratonovich transform states that the Ising model can be reduced to sampling from the distribution in \Cref{eq:HS-mix}: Consider the joint distribution over $\ab\{\pm 1\}^d\times \bb R^d$ with density $p_{J,h}(\sigma,x)\propto \exp\tp{-\frac{1}{2}x^\top J^{-1} x+(J^{-1}h+\sigma)^\top x}$. We can prove that
    \begin{itemize}
        \item its marginal density on $\bb R^d$ is exactly the $p_{\mu}$ in \Cref{eq:HS-mix};
        \item $p_{\mu}(x)\propto \exp\set{ - \frac{1}{2}x^{\top}J^{-1}x + h^{\top}J^{-1}x} \cdot \prod_{i=1}^d \tp{e^{x(i)}+e^{-x(i)}}$ and thus the unnormalized density and the gradients of the potential function can be calculated in polynomial time for any $x$;
        \item the conditional distribution with density $p_{J,h}(\sigma|x) \propto \exp\set{\inner{\sigma}{x}}$ is a product distribution and can be sampled efficiently.
    \end{itemize}
    The proofs are similar to Lemma E.1 in \cite{KLR22}.

    % Its marginal density on $\bb R^d$ is exactly \Cref{eq:HS-mix}. Furthermore, we can prove that the conditional distribution $p_{J,h}(\sigma|x) \propto \exp\set{\inner{\sigma}{x}}$, which is a product distribution on $\ab\{\pm 1\}^d$ and can be efficiently sampled from (the proof is similar to Lemma E.1 in \cite{KLR22}).

    Therefore, sampling from the Ising model can be executed in two steps: 1) sample $X\sim \mu$; 2) sample $\sigma$ from the distribution with density $p_{J,h}(\sigma|X)$. Hence, the hardness of this problem is closely related to the hardness of sampling from mixture of Gaussians. Given \Cref{lem:mixture1}, when $0\prec J\prec \!{Id}_d$, the distribution $\mu$ can be simulated in polynomial time using Langevin-based algorithms (e.g., the algorithm in \cite{CCBJ18}) and thus also gives a polynomial complexity upper bound for the Ising model. On the other hand, \cite{GKK24} proved that for any real $c>1$, the existence of polynomial samplers for Ising model with arbitrary $0\prec J\prec (1+c)\!{Id}_d$ implies $\*{NP}=\*{RP}$. This in turn indicates that, assuming  $\*{NP}\neq\*{RP}$, sampling from the mixture of Gaussians in such a special structure with $0\prec J\prec (1+c)\!{Id}_d$ is generally hard.
\end{remark}
% \htodo{Is the main idea of this remark is clear?}
% \ctodo{Very good}

\section{Comparison with optimization}\label{sec:sampling-vs-opt}
In this section, we prove \Cref{thm:main-opt-lb} and compare the hardness of optimization problems and sampling problems.

It is shown in \cite{MCJ+19} that for an $L$-log-smooth distribution $\mu$ which are $m$-strongly log-concave outside a ball of radius $R$, sampling can be done within $\tilde{\+O}\tp{\frac{dL^2}{\eps^2 m^2}\cdot e^{32LR^2}}$ queries, while optimizing the potential function of $\mu$ needs $\Omega\tp{\tp{\frac{LR^2}{\eps}}^{\frac{d}{2}}}$ queries in the worst case. This indicates that when $LR^2 = o(d)$, optimization can be harder than sampling.

% The relationship between optimization and sampling has always been a topic of interest. The work of \cite{MCJ+19} proves that when $f$ is $L$-smooth and  is $m$-strongly convex outside a region of radius $R$, the sampling upper bound is linear in $d$, while the optimization lower bound is exponential in $d$. This suggests that, for this specific class of functions, sampling is harder than optimization.

In this work, we consider a more general case, where the function $f$ and distribution $\mu$ with density $\propto e^{-f}$ are only required to satisfy \Cref{assump:moment} and \ref{assump:smooth}. In the optimization problem, we require the algorithm to output a point $x\in \bb R^d$ such that $\abs{f(x)-f(x^*)}\leq \eps$, where $x^*$ is the minimizer of $f$. 
% Let $\+U$ be the set of $L$-smooth functions such that for any $f\in \+U$, the second moment of $\mu\propto e^{-f}$ is $\Theta(M)$ for some $M=\Omega\tp{\frac{d}{L}}$ and $\grad f(0)=0$. 
%  We will prove the following theorem in \Cref{subsec:opt-lb}.

% show in \Cref{subsec:opt-lb} that, for any algorithm solving the optimization problem and guaranteeing $\+O(1)$ error with constant probability, there exists some functions in $\+U$, which requires at least $K_{o} = e^{\frac{d}{2}\log \Omega\tp{LM})}$ queries. 

% \subsection{Proof of \Cref{thm:main-opt-lb}}\label{subsec:opt-lb}

To prove the complexity lower bound, we first see the hard instances constructed in \cite{MCJ+19}.
\begin{lemma}[Lemma 17 in \cite{MCJ+19}]\label{lem:packing}
    For $R>r>0$, there exists $\+X_{R}\subset \bb R^d$ with $\abs{\+X_R}= \Big\lfloor \tp{\frac{R-r}{2r}}^d \Big\rfloor$ such that $\bigcup_{x\in \+X_R} \+B_r(x)\subset \+B_R$ and $\+B_r(x)\cap \+B_r(y) = \emptyset$ for any $x\neq y\in \+X_R$.
\end{lemma}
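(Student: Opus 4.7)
The goal is to construct a packing of $\lfloor ((R-r)/(2r))^d\rfloor$ points in $\bb R^d$ whose pairwise distances are at least $2r$ and whose distance from the origin is at most $R-r$. I would take the standard ``maximal packing plus volume argument'' route, which is the cleanest way to obtain the claimed count.

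Concretely, let $\+X_R$ be any maximal subset of $\+B_{R-r}$ with the property that $\|x-y\|\ge 2r$ for all distinct $x,y\in \+X_R$. Existence of a maximal such set follows from Zorn's lemma (or, since $\+B_{R-r}$ is compact, from a finite iterative construction). The two geometric properties required in the statement are immediate from the construction: for any distinct $x,y\in \+X_R$ the balls $\+B_r(x)$ and $\+B_r(y)$ are disjoint because $\|x-y\|\ge 2r$, and for each $x\in\+X_R$ we have $\|x\|\le R-r$, so $\+B_r(x)\subset \+B_R$.

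The only nontrivial part is the cardinality bound. Here I use maximality: for every $z\in\+B_{R-r}$, either $z\in\+X_R$ or there exists some $x\in\+X_R$ with $\|z-x\|<2r$ (otherwise adding $z$ would contradict maximality). Hence
\[
\+B_{R-r}\ \subseteq\ \bigcup_{x\in\+X_R} \+B_{2r}(x).
\]
Comparing $d$-dimensional volumes and cancelling the common factor $\vol(\+B_1)$ gives
\[
(R-r)^d\ \le\ |\+X_R|\cdot (2r)^d,
\]
so $|\+X_R|\ge \bigl(\tfrac{R-r}{2r}\bigr)^d$, and since $|\+X_R|$ is an integer we get $|\+X_R|\ge \bigl\lfloor\bigl(\tfrac{R-r}{2r}\bigr)^d\bigr\rfloor$. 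If the inequality is strict, simply discard points from $\+X_R$ until its size is exactly $\lfloor ((R-r)/(2r))^d\rfloor$; the two required properties persist under taking subsets.

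I don't expect any real obstacle here --- the whole argument is a one-line volume comparison once maximality is invoked. The only mildly subtle point is ensuring the balls of radius $r$ lie inside $\+B_R$, which is why we run the maximal packing inside the slightly smaller ball $\+B_{R-r}$ rather than inside $\+B_R$ itself. This is exactly the extra factor that makes the bound $\frac{R-r}{2r}$ (so that if $r\ll R$ one still recovers the natural $(R/(2r))^d$ scaling).
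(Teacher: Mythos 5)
This lemma is imported from \cite{MCJ+19} and the paper supplies no proof of its own, so there is nothing to diverge from: your maximal-packing-plus-volume-comparison argument is correct and is the standard route (and, as far as I can tell, essentially the one used in the cited reference, whose count $\lfloor((R-r)/(2r))^d\rfloor$ is exactly the volume ratio $\vol(\+B_{R-r})/\vol(\+B_{2r})$ your covering bound produces). The only cosmetic point is that if $\+B_r$ denotes closed balls you should insist on pairwise distances strictly greater than $2r$ (or perturb the configuration) so that tangent balls do not share a boundary point; this does not affect the volume count.
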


The work of \cite{MCJ+19} focused the cases where $f$ is $L$-smooth and is $m$-strongly convex outside a region of radius $R$. Consider the set $\+X_{\frac{R}{2}}$. We index the vertices in $\+X_{\frac{R}{2}}$ by $\left[\abs{\+X_{\frac{R}{2}}}\right]$. For each point $x_i\in \+X_{\frac{R}{2}}$, construct a function $f_i$ as follows:
\begin{equation}
    f_i(x) = \begin{cases}
        \frac{\eps}{2} \cdot \cos\tp{\frac{\pi}{r^2}\tp{\|x-x_i\|^2 - r^2}} - \frac{\eps}{2}, & \|x-x_i\|\leq r\\
        0, & \|x-x_i\|> r, \|x\| \leq \frac{R}{2} \\
        m\tp{\|x\| - \frac{R}{2}}^{2}, & \|x\|>\frac{R}{2}
    \end{cases}, \label{eq:opt1}
\end{equation}
where $r\defeq \sqrt{\frac{(2\pi^2 + \pi) \eps}{L}}$. 
From Lemma 18 in \cite{MCJ+19}, the functions defined in \Cref{eq:opt1} are all $L$-smooth. 
\begin{lemma}[Lemma 18 in \cite{MCJ+19}]\label{lem:opt-lb3}
    Let $L\geq 2m$. The functions in \Cref{eq:opt1} are $L$-smooth.
\end{lemma}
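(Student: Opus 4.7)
The plan is a piecewise verification of smoothness. I would first establish that $f_i\in C^1(\bb R^d)$ by checking that the three pieces of the definition agree in value and gradient across the two interfaces $\|x-x_i\|=r$ and $\|x\|=R/2$. At the inner interface the argument of $\cos$ is $0$, so the bump value equals $\frac{\eps}{2}\cos(0)-\frac{\eps}{2}=0$, matching the constant-zero middle region; the bump's gradient, which carries an overall factor $\sin(\pi(\|x-x_i\|^2-r^2)/r^2)$, also vanishes there. At the outer interface both the quadratic $m(\|x\|-R/2)^2$ and its gradient vanish. Thus $f_i$ and $\grad f_i$ are continuous, and $f_i$ is classically $C^2$ on the complement of the two spheres.

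Next, I would bound $\|\grad^2 f_i(x)\|_{\mathrm{op}}\le L$ separately in each open region. The middle region is trivial since the Hessian is zero. On the outer region, direct differentiation of $m(\|x\|-R/2)^2$ yields a matrix of the form $2m\,\hat x\hat x^\top + 2m\bigl(1-R/(2\|x\|)\bigr)(I-\hat x\hat x^\top)$ whose eigenvalues lie in $[0,2m]\subseteq[0,L]$, using the hypothesis $L\ge 2m$. The substantive calculation is inside the bump: writing $u=x-x_i$, the chain rule expresses $\grad^2 f_i$ in the form $\alpha\cdot uu^\top + \beta\cdot I$ with $|\alpha|\le \frac{2\eps\pi^2}{r^4}$ and $|\beta|\le \frac{\eps\pi}{r^2}$; the eigenvalues are $\beta$ (with multiplicity $d-1$) and $\alpha\|u\|^2+\beta$, and for $\|u\|\le r$ the triangle inequality bounds the largest in absolute value by $\frac{2\eps\pi^2}{r^2}+\frac{\eps\pi}{r^2} = \frac{(2\pi^2+\pi)\eps}{r^2}=L$ by the definition of $r$.

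The final step is to upgrade the piecewise Hessian bound to global Lipschitz continuity of $\grad f_i$. For any $x,y\in\bb R^d$ I would parametrize the segment $\gamma(t)=(1-t)x+ty$, which crosses the two interfacing spheres at finitely many parameter values; on each open subsegment the Hessian bound gives $\|\grad f_i(\gamma(s))-\grad f_i(\gamma(t))\|\le L\,\|\gamma(s)-\gamma(t)\|$, and continuity of $\grad f_i$ across each interface allows us to splice these estimates, yielding $\|\grad f_i(x)-\grad f_i(y)\|\le L\|x-y\|$. The only real subtlety, beyond elementary calculus, is the tight calibration of $r$: the bump-region bound is attained at $\|u\|=r$, so $r=\sqrt{(2\pi^2+\pi)\eps/L}$ is precisely the largest radius for which a cosine bump of depth $\eps$ is $L$-smooth. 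This tight choice is what makes the packing argument in \Cref{lem:packing} optimal for the subsequent hardness lower bound.
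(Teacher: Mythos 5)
The paper does not prove this lemma itself; it imports it verbatim from \cite{MCJ+19}, so there is no in-paper argument to compare against. Your blind reconstruction is correct and is the natural (and presumably the original) argument: the $C^1$ matching at both spherical interfaces, the Hessian bound $2m\le L$ in the outer region, the decomposition $\alpha\, uu^\top+\beta\, I$ with $|\alpha|\|u\|^2+|\beta|\le \frac{(2\pi^2+\pi)\eps}{r^2}=L$ inside the bump, and the segment-splicing step that upgrades piecewise Hessian bounds to a global Lipschitz estimate all check out, including the observation that the choice of $r$ makes the bump bound tight.
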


In Appendix C of \cite{MCJ+19}, they prove the following result.
\begin{theorem}\label{thm:lb-opt}
    For any $R>0$, $L\geq 2m>0$ and $\eps<\+O(LR^2)$, for any algorithm in $\+A$, there exists $i\in \left[\abs{\+X_{\frac{R}{2}}}\right]$, such that the algorithm requires at least $K=\Omega\tp{\tp{\frac{LR^2}{\eps}}^{\frac{d}{2}}}$ iterations on $f_i$ to guarantee that $\min_{k\leq K} \abs{f_i(x^K) - f_i(x^*)}<\eps$ with constant probability.
\end{theorem}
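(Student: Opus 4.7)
The plan is to invoke \Cref{thm:lb-opt} directly, using the hard instances $\{f_i\}$ from \Cref{eq:opt1} with parameters tuned so that each induced distribution $\mu_i \propto e^{-f_i}$ has second moment $\Theta(M)$. Specifically, I set $\eps = 1$ (matching the accuracy goal in the theorem statement), $m = L/2$ (the largest value permitted by \Cref{lem:opt-lb3}), and $R = c\sqrt{M}$ for a suitable absolute constant $c > 0$. With these choices, $L$-smoothness is immediate from \Cref{lem:opt-lb3}, and the hypothesis $\eps < \+O(LR^2)$ of \Cref{thm:lb-opt} is automatic since $LR^2 = c^2 LM = \Omega(d) \ge 8$ by the assumption $LM = \Omega(d)$ and $d\ge 8$.

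The main technical step is to verify that the second moment of $\mu_i$ lies in $\Theta(M)$. I would split the integrals $Z_i := \int e^{-f_i}$ and $\int \|x\|^2 e^{-f_i}$ into an ``interior'' part over $\+B_{R/2}$ and a ``tail'' part over its complement. Inside $\+B_{R/2}$, $f_i$ takes values in $[-1,0]$, so $e^{-f_i}\in[1,e]$; the interior contribution is therefore $\Theta(\!{vol}(\+B_{R/2}))$ for $Z_i$ and $\Theta(R^2 \cdot \!{vol}(\+B_{R/2}))$ for the second-moment integral. The bumps at the centers $x_i$, of depth $\eps=1$ and radius $r = \Theta(1/\sqrt{L})$, perturb these quantities by a relative factor at most $(2r/R)^d = O((1/\sqrt{LM})^d)$, which is negligible.

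For the tail, the change of variable $u = \|x\|-R/2$ reduces the key integral to $\omega_{d-1}\int_0^\infty (R/2+u)^{d-1}e^{-mu^2}\,du$, which I would estimate by Laplace's method. The integrand is log-concave in $u$, and a first-derivative calculation places its maximum at $u^\star$ satisfying $2mu^\star(R/2+u^\star) = d-1$; hence $u^\star = \Theta((d-1)/(mR))$ whenever $mR^2 \lesssim d^2$, and $u^\star = \Theta(\sqrt{d/m})$ otherwise. Since $mR^2 = c^2 LM/2 = \Omega(d)$ under our assumption, both bounds give $u^\star = O(R)$, so the tail mass lives in a spherical shell of radius $\Theta(R)$. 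An analogous Laplace estimate on $\int(R/2+u)^{d+1}e^{-mu^2}\,du$ then shows the tail contribution to $\int \|x\|^2 e^{-f_i}$ is $\Theta(R^2)$ times the tail contribution to $Z_i$. Combining interior and tail estimates gives $\E_{\mu_i}\|X\|^2 = \Theta(R^2) = \Theta(M)$.

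Granted these properties, \Cref{thm:lb-opt} applied with $\eps=1$ and $R=c\sqrt{M}$ delivers $K = \Omega\tp{(LR^2/\eps)^{d/2}} = \Omega\tp{(c^2 LM)^{d/2}} = (\alpha\cdot LM)^{d/2}$ for some universal constant $\alpha > 0$, which is exactly the claim. The main obstacle is the second-moment estimate in the borderline regime $LM = \Theta(d)$: there the tail of $\mu_i$ is not sharply concentrated around $\+B_{R/2}$ and may carry a constant fraction of the total mass, so I must argue carefully that despite this spread, the characteristic radius of the tail shell remains $\Theta(R)$ rather than blowing up, which is exactly what the bound $u^\star = O(R)$ above provides.
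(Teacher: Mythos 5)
There is a genuine gap: your proposal is circular with respect to the statement you were asked to prove. The statement at hand \emph{is} \Cref{thm:lb-opt} — the query lower bound $K=\Omega\tp{\tp{\frac{LR^2}{\eps}}^{\frac{d}{2}}}$ for optimizing the hard instances $f_i$ of \Cref{eq:opt1} — yet your plan opens with ``invoke \Cref{thm:lb-opt} directly'' and never supplies an argument for it. Everything you actually verify (setting $\eps=1$, $m=L/2$, $R=c\sqrt{M}$, smoothness via \Cref{lem:opt-lb3}, and the estimate $\E_{\mu_i}\|X\|^2=\Theta(R^2)=\Theta(M)$ via an interior/tail split and Laplace-type bounds) is the material needed to pass from \Cref{thm:lb-opt} to \Cref{thm:main-opt-lb}; in the paper that is exactly the role of \Cref{lem:lb-moment} and the sentence following it. None of it bears on \Cref{thm:lb-opt} itself, whose statement involves no second-moment condition at all.

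What is missing is the lower-bound argument proper, which the paper does not reprove but imports from Appendix C of \cite{MCJ+19}. If you wanted to give a self-contained proof rather than a citation, the ingredients are: (i) the packing statement \Cref{lem:packing}, giving $\Omega\tp{\tp{\frac{R}{r}}^{d}}=\Omega\tp{\tp{\frac{LR^2}{\eps}}^{\frac{d}{2}}}$ centers $x_i$ with disjoint balls $\+B_r(x_i)\subset \+B_{R/2}$, where $r=\sqrt{\frac{(2\pi^2+\pi)\eps}{L}}$; (ii) the observation that $f_i(x^*)=-\eps$ and $f_i\ge 0$ outside $\+B_r(x_i)$, so achieving $\min_{k\le K}\abs{f_i(x^{(k)})-f_i(x^*)}<\eps$ forces the algorithm to query a point inside $\+B_r(x_i)$; and (iii) an indistinguishability/averaging argument: outside the bumps all the $f_i$ (values and gradients) coincide, so the responses to any query not landing in a bump are independent of $i$, and an algorithm making $K$ queries can hit the correct bump, for an adversarially or uniformly chosen $i$, with probability at most $O(K/\abs{\+X_{R/2}})$; constant success probability therefore forces $K=\Omega\tp{\tp{\frac{LR^2}{\eps}}^{\frac{d}{2}}}$. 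Without step (iii) — or an explicit appeal to \cite{MCJ+19} as the paper makes — your write-up proves the downstream corollary while assuming the theorem in question.
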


For each $x_i\in \+X_{\frac{R}{2}}$, consider a distribution $\mu_i$ whose density is in proportion to $e^{-f_i}$ (it is easy to see that this function is integrable). The following lemma shows that the second moment of this distribution is bounded.

\begin{lemma}\label{lem:lb-moment}
    If $LR^2\geq 6d$, $d\geq 8$ and $\eps<1$, choosing $m=\frac{L}{2}$, we have $\frac{R^2}{18(e+1)} \leq \E[X\sim \mu_i]{\|X\|^2} \leq 5R^2$.
\end{lemma}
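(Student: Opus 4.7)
The plan is to split both integrands across the interior $\+B_{R/2}$ (where $f_i(x)\in[-\eps,0]$, so $1\le e^{-f_i(x)}\le e^\eps\le e$) and the tail $\bb R^d\setminus\+B_{R/2}$ (where $f_i(x)=m(\|x\|-R/2)^2$), and to write $Z_i\defeq\int e^{-f_i(x)}\d x = Z_1+Z_2$ and $N_i\defeq\int \|x\|^2 e^{-f_i(x)}\d x = N_1+N_2$ for the corresponding pieces. Setting $V\defeq\vol(\+B_{R/2})$, the interior bounds $V\le Z_1\le eV$ are immediate, and the standard second-moment identity $\int_{\+B_{R/2}}\|x\|^2\d x=\frac{dR^2}{4(d+2)}V$ for the uniform measure will be used directly.

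The crucial ingredient is the tail moment estimate $\int_{\|x\|>R/2}(\|x\|-R/2)^2 e^{-f_i(x)}\d x \le \frac{R^2}{6}Z_2$. After spherical coordinates, the substitution $t=\|x\|-R/2$, and rescaling $u=2t/R$, this reduces to bounding the ratio of $\int_0^\infty u^2(1+u)^{d-1}e^{-\alpha u^2}\d u$ to $\int_0^\infty (1+u)^{d-1}e^{-\alpha u^2}\d u$ by $2/3$, with $\alpha\defeq mR^2/4\ge 3d/4$ (using $m=L/2$ and $LR^2\ge 6d$). A single integration by parts using $u\,e^{-\alpha u^2}\d u=-\frac{1}{2\alpha}\d\tp{e^{-\alpha u^2}}$ gives
\[
\int_0^\infty u^2(1+u)^{d-1}e^{-\alpha u^2}\d u = \frac{1}{2\alpha}\int_0^\infty (1+u)^{d-2}(1+du)\,e^{-\alpha u^2}\d u \le \frac{d}{2\alpha}\int_0^\infty (1+u)^{d-1}e^{-\alpha u^2}\d u ,
\]
where the last step uses $1+du\le d(1+u)$; with $d/(2\alpha)\le 2/3$, the claim follows.

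With the tail estimate in hand, the two bounds are routine. Using $N_1\le (R/2)^2 Z_1$ together with $\|x\|^2\le\frac{R^2}{2}+2(\|x\|-R/2)^2$ on the tail yields $N_i\le \frac{R^2}{4}Z_1+\frac{5R^2}{6}Z_2\le\frac{5R^2}{6}Z_i$, so $\E[\mu_i]{\|X\|^2}\le\frac{5R^2}{6}\le 5R^2$. For the lower bound, $e^{-f_i}\ge 1$ on $\+B_{R/2}$ gives $N_1\ge\frac{dR^2}{4(d+2)}V$, and trivially $N_2\ge (R/2)^2 Z_2=\frac{R^2}{4}Z_2$; combined with $Z_i\le eV+Z_2$, the ratio becomes $\frac{A+CZ_2}{B+Z_2}$ with $A=\frac{dR^2}{4(d+2)}V$, $B=eV$, $C=\frac{R^2}{4}$. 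Since $CB>A$ (equivalently $e>d/(d+2)$), this function is monotone increasing in $Z_2$, so its minimum at $Z_2=0$ gives $\E[\mu_i]{\|X\|^2}\ge\frac{dR^2}{4e(d+2)}\ge\frac{R^2}{5e}\ge\frac{R^2}{18(e+1)}$ when $d\ge 8$, where the last inequality reduces to $5e\le 18(e+1)$, i.e., $13e+18\ge 0$.

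The main obstacle is the tail moment estimate. Naively, the surface-area factor $(R/2+t)^{d-1}$ can make the total tail mass $Z_2$ exceed $V$ by an exponential-in-$d$ amount, so crude uniform bounds on $(1+u)^{d-1}$ cannot yield a dimension-free constant. The integration-by-parts identity together with $1+du\le d(1+u)$ absorbs exactly one factor of the dimension, and the assumption $mR^2\ge 3d$ is precisely the strength that makes $d/(2\alpha)$ a constant; neither strengthening needs the exact value $6d$, but any smaller constant than $3$ here would loosen the final ratio.
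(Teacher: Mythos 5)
Your proof is correct, and for the upper bound it takes a genuinely different route from the paper's. The paper bounds the tail integral $\int_{\|x\|>R/2}\|x\|^2e^{-f_i(x)}\,dx$ by comparing the tail density with an isotropic Gaussian of covariance $2L^{-1}$ times the identity and then dividing by the lower bound $Z\ge \vol(\+B_{R/2})$, relying on $LR^2\ge 6d$ to cancel the resulting $e^{\Theta(d)}$ factors; you instead bound the ratio of the tail's second moment about the sphere $\|x\|=R/2$ to the tail's mass directly, via a radial integration by parts that absorbs the $(1+u)^{d-1}$ surface factor and yields the dimension-free constant $d/(2\alpha)\le 2/3$. Your route is self-contained, gives the slightly sharper bound $5R^2/6$, and notably sidesteps the paper's pointwise replacement of $R\|x\|$ by $\|x\|^2/2$ in the exponent, which as written is only valid for $\|x\|\ge 2R$ rather than on all of $\bb R^d\setminus\+B_{R/2}$. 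For the lower bound the two arguments are essentially the same in structure --- restrict to $\+B_{R/2}$, upper-bound the normalizing mass there by a constant multiple of $\vol(\+B_{R/2})$, and lower-bound the second moment of the (near-)uniform measure on the ball --- except that you use the exact value $\frac{d}{d+2}(R/2)^2$ together with a monotonicity-in-$Z_2$ argument, whereas the paper uses a conditional-expectation decomposition and the cruder estimate via $\vol(\+B_{R/2})-\vol(\+B_{R/3})$; both deliver the stated constant $\frac{1}{18(e+1)}$.
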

\begin{proof}
    We first prove the upper bound for $\E[X\sim \mu_i]{\|X\|^2}$. Let $Z= \int_{x\in \bb R^d} e^{-f_i(x)} \dd x$. Let $\+B_{\frac{R}{2}}$ denote the ball with radius $R/2$ centered at $0$. Then 
    \begin{equation}
        Z\geq \int_{\|x\|\leq \frac{R}{2}} e^{-f_i(x)}\dd x \geq \int_{\|x\|\leq \frac{R}{2}} 1 \dd x = \!{vol}\tp{\+B_{\frac{R}{2}}} = \frac{\pi^{\frac{d}{2}}\cdot \tp{\frac{R}{2}}^{d}}{\Gamma\tp{\frac{d}{2}+1}} \geq \tp{\frac{e\pi\cdot R^2}{4d}}^{\frac{d}{2}}. \label{eq:lb-Z}
    \end{equation}
    By definition,
    \begin{align*}
        \E[X\sim \mu_i]{\|X\|^2} &= \int_{\+B_{\frac{R}{2}}} \|x\|^2\cdot \frac{e^{-f_i(x)}}{Z} \dd x + \int_{\bb R^d \setminus \+B_{\frac{R}{2}}} \|x\|^2\cdot \frac{e^{-f_i(x)}}{Z} \dd x \\
        &\leq 4R^2 \int_{\+B_{\frac{R}{2}}} \frac{e^{-f_i(x)}}{Z} \dd x + \frac{1}{Z}\cdot \int_{\bb R^d \setminus \+B_{\frac{R}{2}}} \|x\|^2\cdot e^{-f_i(x)} \dd x \\
        &\leq 4R^2 + \frac{1}{Z}\cdot \int_{\bb R^d \setminus \+B_{\frac{R}{2}}} \|x\|^2\cdot e^{-f_i(x)} \dd x.
    \end{align*}
    For the second term, we have
    \begin{align*}
        &\phantom{{}={}}\int_{\bb R^d \setminus \+B_{\frac{R}{2}}} \|x\|^2\cdot e^{-f_i(x)} \dd x \\
        &= \int_{\bb R^d \setminus \+B_{\frac{R}{2}}} \|x\|^2\cdot e^{-\frac{L}{2}\tp{\|x\|^2 + \frac{R^2}{4} - R\|x\|}} \dd x \\
        &\leq \int_{\bb R^d \setminus \+B_{\frac{R}{2}}} \|x\|^2\cdot e^{-\frac{L}{2}\tp{\|x\|^2 + \frac{R^2}{4} - \frac{\|x\|^2}{2}}} \dd x \\
        &= \frac{1}{2}e^{-\frac{LR^2}{8} + \frac{d}{2}\log \frac{4\pi}{L}}\cdot  \int_{\bb R^d \setminus \+B_{\frac{R}{2}}} \|x\|^2\cdot 2e^{-\frac{L\|x\|^2}{4} - \frac{d}{2}\log \frac{4\pi}{L}} \dd x\\
        &\leq \frac{1}{2}e^{-\frac{LR^2}{8} + \frac{d}{2}\log \frac{4\pi}{L}}\cdot \E[X\sim \+N(0, \frac{2\!{Id}_d}{L})]{\|X\|^2}\\
        &= \frac{d}{L}\cdot e^{-\frac{LR^2}{8} + \frac{d}{2}\log \frac{4\pi}{L}}.
    \end{align*}
    Combining \Cref{eq:lb-Z} and the fact that $LR^2\geq 6d$, we have
    \begin{align*}
        \E[X\sim \mu_i]{\|X\|^2} &\leq 4R^2 + \frac{1}{Z}\cdot \int_{\bb R^d \setminus \+B_{\frac{R}{2}}} \|x\|^2\cdot e^{-f(x)} \dd x \\
        &\leq 4R^2 + \frac{d}{L}\cdot e^{-\frac{LR^2}{8} + \frac{d}{2}\log \frac{4\pi}{L} - \frac{d}{2}\log \frac{e\pi\cdot R^2}{4d}} \\
        &< 4R^2 + \frac{d}{L} < 5 R^2.
    \end{align*}
    \bigskip

    Then we prove the lower bound for $\E[X\sim \mu_i]{\|X\|^2}$. Consider a distribution $\mu'$ supported only on $\+B_{\frac{R}{2}}$, whose density is proportional to $e^{-f_i(x)}$ on $\+B_{\frac{R}{2}}$.
    Since for each $x\in \bb R^d \setminus \+B_{\frac{R}{2}}$, $\|x\|^2 \geq \frac{R^2}{4} > \frac{R^2}{18(e+1)}$, it suffices to prove $\E[X\sim \mu']{\|X\|^2}\geq \frac{R^2}{18(e+1)}$. Let $Z' = \int_{\+B_{\frac{R}{2}}} e^{-f_i(x)}\dd x$. Then
    \[
        Z' \leq \!{vol}(\+B_{\frac{R}{2}}) + e^{\eps}\cdot \!{vol}(\+B(x_i,r)) \leq (e+1)\cdot \!{vol}(\+B_{\frac{R}{2}}).
    \]
    Therefore,
    \begin{align*}
        \E[X\sim \mu']{\|X\|^2} & = \frac{1}{Z'} \int_{\+B_{\frac{R}{2}}} \|x\|^2 e^{-f_i(x)}\dd x \\
        &\geq \frac{1}{Z'} \int_{\+B_{\frac{R}{2}}} \|x\|^2 \dd x \\
        &\geq \frac{\frac{R^2}{9}\cdot \tp{\!{vol}(\+B_{\frac{R}{2}}) - \!{vol}(\+B(0,R/3))}}{(e+1)\cdot \!{vol}(\+B_{\frac{R}{2}})} \\
        &\geq \frac{R^2}{18(e+1)}.
    \end{align*}
\end{proof}

Then \Cref{thm:main-opt-lb} is a direct corollary of \Cref{thm:lb-opt}, \Cref{lem:opt-lb3,lem:lb-moment} by choosing $R=\Theta(\sqrt{M})$. 

Note that from \Cref{thm:main-ub}, our sampling algorithm needs $\tp{\frac{LM}{d}}^{\+O(d)}$ queries to simulate the distribution $\mu \propto e^{-f}$ within $0.01$ error in total variation  distance. For $LR^2=LM=\Theta(d)$ and for sufficiently large $d$, the lower bound in \Cref{thm:main-opt-lb} is larger by a factor of $d^{\Theta(d)}$ than this upper bound for sampling. 
Compare to the results in \cite{MCJ+19}, our results further demonstrate that for a broader class of distributions or functions and a wider range of parameters, sampling is simpler than optimization.

% \htodo{Here we require $\grad f(0)=0$ in the sampling upper bound but not in optimization lower bound. But I think this is not a big deal because \cite{MCJ+19} also has this problem.}

\begin{remark}
    If we delve into the proofs of the lower bounds for sampling and optimization, we can gain some intuition about why optimization is harder. Notably, in both proofs, the hard instances are constructed by locally modifying the function values or density values on a base instance. The challenge for the algorithm is to recognize the modified local area, such as the ball $\+B_r(x_i)$ in \Cref{eq:opt1} and $\+B_{r_2}(v)$ in \Cref{sec:hardinstance}. The more balls we can pack, the harder it becomes for the algorithm to identify the modified ball.

    On the other hand, to ensure $L$-smoothness, a smooth transition is needed at the border of the modified local area, meaning the radius of the ball cannot be too small. Taking the error $\eps$ as a constant for example, we observe the following:
    \begin{itemize}
        \item In the optimization lower bound, the difference between the function value inside and outside the ball should be $\Omega(1)$. As a consequence, the radius should be $\Omega\tp{\sqrt{\frac{1}{L}}}$.
        \item In the sampling lower bound, the mass inside the ball must be $\Omega(1)$. This requires a larger difference between function values inside and outside the ball, which in turn require the radius to be $\Omega\tp{\sqrt{\frac{d}{L}}}$.
    \end{itemize}
    Thus, when the error requirements are the same, the hard instances in the sampling lower bound is easier for the algorithm to recognize. This provides an intuition for why optimization is harder than sampling.
\end{remark}

\section{Conclusion and open problems}

In the paper, we studied the query complexity of sampling from $L$-log-smooth distributions with the second moment at most $M$. For this family of distributions, we established a $\tp{\frac{LM}{d\eps}}^{\Theta(d)}$ query complexity bound. It is an interesting question to explore the correct constant in the exponent. 
\begin{problem}
    Determine the infimum $c>0$ such that the query complexity of sampling from an $L$-log-smooth distribution with the second moment at most $M$ is $\tp{\alpha\cdot \frac{LM}{d\eps}}^{cd}$ for some universal constant $\alpha>0$.
\end{problem}
In light of our lower bound proof, we conjecture that the correct constant is $\frac{1}{2}$. In fact, our algorithm for the upper bound compromised significantly on this constant during the truncation of target distribution. Our approach also has the drawback of only applying to bounding the total variation distance. An optimal algorithm might rely on directly establishing functional inequalities (e.g. weak \Poincare inequality~\cite{HMRW24}) for the target distribution.

\begin{problem}
    Establish tight functional inequalities for $L$-log-smooth distributions with the second moment at most $M$.
\end{problem}

One of the motivation of this work is to understand the extent to which the quasi-polynomial time algorithm of~\cite{HZD+24} applies. As we investigated in \Cref{sec:OU-smooth}, finding a criterion for being $\+O(1)$-smooth along the trajectory of the OU process is a challenging task.

\begin{problem}
    Understand to what extent a diffusion process (not restricted to the OU process in ~\cite{HZD+24}) can maintain the $\+O(1)$-smoothness of the initial distribution. Can this condition result in efficient algorithm as well?
\end{problem}

Finally, can we characterize the query complexity for sampling from general non-log-concave distributions. Even in the case of mixtures of Gaussians, this remains a challenging problem.

\begin{problem}
    Characterize the condition under which sampling from a mixture of Gaussians has sub-exponential query complexity.    
\end{problem}

\section*{Acknowledgements}

The authors would like to thank Zongchen Chen for bringing the Hubbard-Stratonovich transform of Ising model into our attention, in particular its connection with multi-modal sampling. We also thank the anonymous reviewers for their valuable comments and suggestions.
\bibliographystyle{alpha}
\bibliography{arxiv}
\appendix

\section{Auxiliary lemmas}

\subsection{The volume of $d$-balls}

% \ctodo{Can we find a reference for these lemmas?}

\begin{proposition}\label{prop:Gamma}
    For positive integer $d\geq 8$, $\tp{\frac{d}{2e}}^{\frac{d}{2}} \leq \Gamma\tp{\frac{d}{2}+1} \leq \tp{\frac{d}{e}}^{\frac{d}{2}}$.
\end{proposition}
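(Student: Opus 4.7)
My plan is to derive both inequalities from the standard Stirling bounds for the Gamma function, namely
\[
    \sqrt{2\pi}\, x^{x+1/2}\, e^{-x} \;\le\; \Gamma(x+1) \;\le\; \sqrt{2\pi}\, x^{x+1/2}\, e^{-x+\frac{1}{12x}},
\]
applied with $x = d/2$. Setting $x = d/2$ gives $\Gamma(d/2+1)$ sandwiched between $\sqrt{\pi d}\cdot(d/(2e))^{d/2}$ (up to the $e^{1/(6d)}$ factor on top). So both bounds in the proposition reduce to comparing $\sqrt{\pi d}$ (times a mild exponential correction) against $1$ on one side and $2^{d/2}$ on the other.

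For the lower bound $\bigl(\tfrac{d}{2e}\bigr)^{d/2}\le \Gamma(d/2+1)$, Stirling's lower inequality already yields $\Gamma(d/2+1)\ge \sqrt{\pi d}\cdot (d/(2e))^{d/2}$, and the prefactor $\sqrt{\pi d}\ge 1$ for all $d\ge 1$, so this bound is immediate and does not use $d\ge 8$.

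For the upper bound $\Gamma(d/2+1)\le \bigl(\tfrac{d}{e}\bigr)^{d/2}$, I would first rewrite the target as $\bigl(\tfrac{d}{e}\bigr)^{d/2} = 2^{d/2}\bigl(\tfrac{d}{2e}\bigr)^{d/2}$. Stirling's upper inequality gives
\[
    \Gamma(d/2+1) \le \sqrt{\pi d}\cdot e^{1/(6d)}\cdot \tp{\tfrac{d}{2e}}^{d/2},
\]
so it suffices to show $\sqrt{\pi d}\cdot e^{1/(6d)} \le 2^{d/2}$ for every integer $d\ge 8$. This is the one place the hypothesis $d\ge 8$ is actually used.

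The remaining task, which I expect to be the only mildly delicate step, is to verify this last inequality for all integers $d\ge 8$. I would check the base case $d=8$ directly: the left side is $\sqrt{8\pi}\,e^{1/48} \approx 5.1$ while $2^4 = 16$. For $d\ge 8$ I would then observe that the ratio $2^{d/2} / \bigl(\sqrt{\pi d}\, e^{1/(6d)}\bigr)$ is monotonically increasing in $d$ (the numerator grows like $2^{d/2}$, which dominates $\sqrt{d}$ on this range, while $e^{1/(6d)}$ is decreasing in $d$), and hence remains $\ge 1$ for every integer $d\ge 8$. Monotonicity can be seen either by taking a logarithmic derivative of the continuous extension, or by the simple inductive step that multiplying $d$ by one increases $2^{d/2}$ by a factor of $\sqrt{2}$ while increasing $\sqrt{d}$ by a factor at most $\sqrt{9/8}$ at $d=8$ (and smaller thereafter). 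Either route closes the proof cleanly.
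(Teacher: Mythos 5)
Your proof is correct. The underlying tool is the same as the paper's (Stirling's approximation), but your route is slightly cleaner: you apply the Stirling bounds directly to $\Gamma(x+1)$ at the possibly half-integer point $x=d/2$, which gives a single unified argument, whereas the paper splits into cases according to the parity of $d$ (using $\Gamma(\frac d2+1)=(\frac d2)!$ for even $d$ and the duplication-type identity $\Gamma(\frac d2+1)=\frac{\sqrt\pi}{2^d}\cdot\frac{d!}{(\frac{d-1}{2})!}$ for odd $d$) and then invokes Stirling for factorials in each case. Both arguments reduce to the same elementary comparison of the prefactor $\sqrt{\pi d}$ (times a mild correction) against $1$ on one side and $2^{d/2}$ on the other, and your base-case-plus-monotonicity check of $\sqrt{\pi d}\,e^{1/(6d)}\le 2^{d/2}$ for $d\ge 8$ is sound; the only nitpick is the phrase ``multiplying $d$ by one,'' where you clearly mean incrementing $d$ by one.
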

\begin{proof}
    When $d$ is even, $\Gamma\tp{\frac{d}{2}+1} = \tp{\frac{d}{2}}!$. From Stirling's approximation,
    \[
        \tp{\frac{d}{2e}}^{\frac{d}{2}} \leq \sqrt{\pi d} \cdot \tp{\frac{d}{2e}}^{\frac{d}{2}} \leq \tp{\frac{d}{2}}! \leq 2\sqrt{\pi d} \cdot \tp{\frac{d}{2e}}^{\frac{d}{2}} \leq \tp{\frac{d}{e}}^{\frac{d}{2}}.
    \]
    % \mn{The Stirling's approximation states that for any positive integer $m$,
    % \[
    %     m!\leq \sqrt{2\pi m}\cdot \tp{\frac{m}{e}}^m \cdot e^{\frac{1}{12m}},
    % \] and
    % \[
    %     m!\geq \sqrt{2\pi m}\cdot \tp{\frac{m}{e}}^m \cdot e^{\frac{1}{12m+1}}.
    % \]}
    When $d$ is odd, $\Gamma\tp{\frac{d}{2}+1} = \frac{\sqrt{\pi}}{2^d}\cdot \frac{d!}{\tp{\frac{d-1}{2}}!}$. Similarly we have
    \[
        \frac{\sqrt{\pi}}{2^d}\cdot \frac{d!}{\tp{\frac{d-1}{2}}!} \leq \frac{\sqrt{\pi}}{2^d}\cdot \frac{\sqrt{2\pi d} \tp{\frac{d}{e}}^d }{ \sqrt{\pi(d-1)}\tp{\frac{d-1}{2e}}^{\frac{d-1}{2}}} \cdot e^{\frac{1}{12d} - \frac{1}{6(d-1)+1}} \leq \sqrt{ \frac{2\pi d}{d-1}}\cdot \tp{\frac{d-1}{d}}^{\frac{d-1}{2}} \tp{\frac{d}{2e}}^{\frac{d+1}{2}} \leq \tp{\frac{d}{e}}^{\frac{d}{2}},
    \]
    and 
    \[
        \frac{\sqrt{\pi}}{2^d}\cdot \frac{d!}{\tp{\frac{d-1}{2}}!} \geq \frac{\sqrt{\pi}}{2^d}\cdot \frac{\sqrt{2\pi d} \tp{\frac{d}{e}}^d }{ \sqrt{\pi(d-1)}\tp{\frac{d-1}{2e}}^{\frac{d-1}{2}}} \cdot e^{\frac{1}{12d+1} - \frac{1}{6(d-1)}} \geq \sqrt{2\pi}\cdot \tp{\frac{d}{2e}}^{\frac{d+1}{2}} \cdot \frac{1}{2} \geq \tp{\frac{d}{2e}}^{\frac{d}{2}}.
    \]
\end{proof}

\begin{proposition}
    The volume of $\+B_R$ is $\frac{(\pi R^2)^\frac{d}{2}}{\Gamma\tp{\frac{d}{2}+1}}$.
\end{proposition}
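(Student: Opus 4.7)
The plan is to prove this classical identity by computing the Gaussian integral $I_d \defeq \int_{\bb R^d} e^{-\|x\|^2}\,\d x$ in two different ways and equating the results.

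First I would evaluate $I_d$ using Cartesian coordinates. Since $\|x\|^2 = \sum_{i=1}^d x_i^2$, Fubini's theorem factors the integral into a product of one-dimensional Gaussian integrals, yielding $I_d = \tp{\int_{\bb R} e^{-t^2}\,\d t}^d = \pi^{d/2}$.

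Next I would evaluate $I_d$ using spherical symmetry. Observe that by the change of variables $y = x/R$, $\vol(\+B_R) = C_d R^d$ for some dimension-dependent constant $C_d$. Consequently the surface area of the sphere of radius $r$ in $\bb R^d$ is $A(r) = \frac{\d}{\d r}\vol(\+B_r) = d C_d r^{d-1}$. Integrating $e^{-\|x\|^2}$ over shells gives
\[
I_d = \int_0^\infty A(r) e^{-r^2}\,\d r = d C_d \int_0^\infty r^{d-1} e^{-r^2}\,\d r.
\]
The substitution $u = r^2$ transforms this into $\frac{d C_d}{2} \int_0^\infty u^{d/2-1} e^{-u}\,\d u = \frac{d C_d}{2} \Gamma\tp{\frac{d}{2}} = C_d\, \Gamma\tp{\frac{d}{2}+1}$, using the recursion $\frac{d}{2}\Gamma\tp{\frac{d}{2}} = \Gamma\tp{\frac{d}{2}+1}$.

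Equating the two expressions for $I_d$ yields $C_d = \pi^{d/2}/\Gamma\tp{\frac{d}{2}+1}$, and therefore $\vol(\+B_R) = C_d R^d = (\pi R^2)^{d/2}/\Gamma\tp{\frac{d}{2}+1}$. There is no real obstacle here; every step is a standard textbook computation, and the only mild point of care is justifying the scaling $\vol(\+B_R) = C_d R^d$ and the surface-area formula $A(r) = d C_d r^{d-1}$, both of which follow immediately from a change of variables.
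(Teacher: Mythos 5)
Your proof is correct. The paper states this volume formula as a standard fact without proof, and your Gaussian-integral derivation (evaluating $\int_{\bb R^d} e^{-\|x\|^2}\,\d x$ once by Fubini and once in spherical shells) is the canonical textbook argument; every step, including the scaling $\vol(\+B_R)=C_dR^d$ and the substitution giving $\frac{d}{2}\Gamma(\frac{d}{2})=\Gamma(\frac{d}{2}+1)$, checks out.
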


\begin{corollary}\label{cor:dballvolbound}
    \[
        \tp{\frac{e\pi R^2}{d}}^{\frac{d}{2}}\le\!{vol}\tp{\+B_R} \le \tp{\frac{2e\pi R^2}{d}}^{\frac{d}{2}}.
    \]
\end{corollary}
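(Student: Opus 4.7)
The plan is to derive the stated volume bounds by a direct substitution of the Gamma function estimates from \Cref{prop:Gamma} into the exact formula $\!{vol}(\+B_R) = (\pi R^2)^{d/2}/\Gamma(d/2+1)$, which is recorded in the proposition immediately preceding the corollary. Since both inputs are already proved, the corollary is essentially an arithmetic consequence, so my proposal is really just a bookkeeping step.

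First I would write the volume as $\frac{(\pi R^2)^{d/2}}{\Gamma(d/2+1)}$, which is valid for all positive integers $d$. Then for the upper bound on $\!{vol}(\+B_R)$, I would invoke the lower bound $\Gamma(d/2+1) \geq (d/(2e))^{d/2}$ from \Cref{prop:Gamma}, yielding
\[
    \!{vol}(\+B_R) \;\le\; \frac{(\pi R^2)^{d/2}}{(d/(2e))^{d/2}} \;=\; \left(\frac{2e\pi R^2}{d}\right)^{d/2}.
\]
For the lower bound on $\!{vol}(\+B_R)$, I would symmetrically invoke $\Gamma(d/2+1) \leq (d/e)^{d/2}$, giving
\[
    \!{vol}(\+B_R) \;\ge\; \frac{(\pi R^2)^{d/2}}{(d/e)^{d/2}} \;=\; \left(\frac{e\pi R^2}{d}\right)^{d/2}.
\]

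There is no real obstacle; the only caveat worth flagging is the hypothesis $d\ge 8$ inherited from \Cref{prop:Gamma}, which the ambient paper is in any case operating under (the lower bound proof requires $d\ge 5$ and the upper bound $d\ge 3$, and the places where this corollary is actually invoked — e.g.\ in the cube-counting argument of \Cref{lem:cubes} and the smoothness estimate of \Cref{lem:fclose} — all live in regimes where $d$ is already assumed large enough). If one wanted the bounds to hold uniformly in $d$, it would be trivial to absorb the small-$d$ cases into the multiplicative constants by weakening $(2e\pi R^2/d)^{d/2}$ slightly, but as stated the substitution is clean and immediate.
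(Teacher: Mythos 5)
Your proposal is correct and is exactly the intended derivation: the paper states the corollary without proof precisely because it follows by substituting the two Gamma bounds of \Cref{prop:Gamma} into the exact volume formula, with the upper (resp.\ lower) bound on $\Gamma(\frac{d}{2}+1)$ giving the lower (resp.\ upper) bound on the volume. Your remark about the inherited hypothesis $d\ge 8$ is a fair and accurate caveat, but otherwise there is nothing to add.
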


% \ctodo{Clean here}

\subsection{The trigonometric functions in \Cref{lem:disjointcap}}

For two vectors $u,v\in \bb R^d$, let $\theta\tp{u,v}$ denote the angle between $u$ and $v$. The cosine of this angle is defined as $\cos\tp{\theta\tp{u,v}} = \frac{\inner{u}{v}}{\|u\|\|v\|}$. In this section, we prove the results about trigonometric functions used in \Cref{lem:disjointcap}.

\begin{lemma}\label{lem:cos}
    For any unit vectors $v,x,w\in \bb R^d$, we have 
    \[
        \cos\tp{\theta(v,w)}= \cos\tp{\theta(v,x)}\cdot \cos\tp{\theta(x,w)} +\sin\tp{\theta(v,x)}\cdot \sin\tp{\theta(x,w)}.
    \]
\end{lemma}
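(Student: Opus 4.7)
The plan is to reduce the identity to a two-dimensional coordinate computation adapted to $v$ and $x$. First I would select an orthonormal basis $e_1, e_2, \ldots, e_d$ of $\mathbb{R}^d$ with $e_1 = x$ and $e_2$ chosen so that $v$ lies in $\mathrm{span}(e_1, e_2)$ with nonnegative coefficient on $e_2$; this is possible whenever $v \ne \pm x$ (the degenerate case is immediate). In these coordinates, writing $\alpha = \theta(v, x)$, we have $v = \cos\alpha\, e_1 + \sin\alpha\, e_2$ with $\sin\alpha \ge 0$.

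Second, I would decompose $w$ according to its projection onto $x$. Letting $\beta = \theta(x, w)$, the component of $w$ parallel to $x$ has coefficient $\cos\beta$ and the component orthogonal to $x$ has Euclidean norm $\sin\beta \ge 0$. Thus $w = \cos\beta\, e_1 + \sin\beta\, u$ for some unit vector $u \in \mathrm{span}(e_2, \ldots, e_d)$, and computing the inner product directly gives
\[
    \cos\theta(v, w) \;=\; v \cdot w \;=\; \cos\alpha\cos\beta \,+\, \sin\alpha\sin\beta \cdot (e_2 \cdot u).
\]

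The main obstacle, which is in fact a genuine gap in the statement as written, is that the identity in the lemma requires $e_2 \cdot u = 1$, i.e.\ $u = e_2$, equivalently $w$ must lie in the two-dimensional plane $\mathrm{span}(v, x)$ on the same side of $x$ as $v$. For arbitrary $w \in \mathbb{R}^d$ the scalar $e_2 \cdot u$ ranges over the whole interval $[-1, 1]$ (take $w$ pointing in the $e_3$ direction, for instance), so the claimed equality is not universally valid; the universally valid consequence of the coordinate calculation is the spherical law of cosines bound $\cos\theta(v, w) \le \cos\alpha\cos\beta + \sin\alpha\sin\beta$. Completing a proof of the lemma as literally stated would therefore require either adding a coplanarity-with-same-side hypothesis on $v, x, w$ or reinterpreting the conclusion as an inequality. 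I expect the remainder of the argument to be purely mechanical once this hypothesis is settled, since the identity then reduces to the classical angle-subtraction formula $\cos(\alpha - \beta) = \cos\alpha\cos\beta + \sin\alpha\sin\beta$ applied in the plane $\mathrm{span}(v, x)$.
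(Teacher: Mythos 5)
Your decomposition is essentially the one the paper itself uses (split $v$ and $w$ into components parallel and orthogonal to $x$ and expand the inner product), and the gap you flag is genuine: the lemma as stated is false, and the paper's own proof commits exactly the error you isolate. In the paper's final step, $\langle v_\perp, w_\perp\rangle$ is silently replaced by $\sin\theta(v,x)\cdot\sin\theta(x,w)$, which presumes that the two orthogonal components are positively parallel --- in your notation, the unjustified claim $e_2\cdot u=1$. Taking $v,x,w$ to be three pairwise orthogonal unit vectors gives left-hand side $0$ and right-hand side $1$, so the equality cannot hold for arbitrary unit vectors; it only holds when $w$ lies in $\mathrm{span}(v,x)$ on the appropriate side of $x$, as you say.

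One refinement to your conclusion: Cauchy--Schwarz gives $\abs{\langle v_\perp,w_\perp\rangle}\le \sin\theta(v,x)\sin\theta(x,w)$ (with the convention $\sin\theta\ge 0$ for $\theta\in[0,\pi]$), so your computation actually yields the two-sided bound
\[
\cos\theta(v,x)\cos\theta(x,w)-\sin\theta(v,x)\sin\theta(x,w)\;\le\;\cos\theta(v,w)\;\le\;\cos\theta(v,x)\cos\theta(x,w)+\sin\theta(v,x)\sin\theta(x,w),
\]
not only the upper bound you state. This matters because in the sole application, the proof of \Cref{lem:disjointcap}, what is needed is a \emph{lower} bound on $\cos\theta(w,v_i)$ given $\cos\theta(x,v_i)\ge\ell$ and $\cos\theta(x,w)\ge\ell$; the valid ``minus'' direction gives $\cos\theta(w,v_i)\ge 2\ell^2-1=\ell'$ (with the paper's definitions of $\ell,\ell'$), which is weaker than the bound $\ell^2-(1-\ell^2)\ell'$ asserted there but still suffices, since the covering argument only uses that $w$ lies in the cap of angle $\arccos(\ell')$ around $v_i$. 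So the correct fix is not to add a coplanarity hypothesis (which cannot be guaranteed for the $w$ arising in \Cref{lem:disjointcap}) but to restate the lemma as the inequality above and propagate the slightly weaker constant through \Cref{lem:disjointcap}.
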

\begin{proof}
    We divide $v$ into two parts, $v_{\|}$ and $v_{\perp}$, which are parallel and perpendicular to $x$ respectively. Similarly, we divide $w$ into $w_{\|}$ and $w_{\perp}$. Therefore,
    \begin{align*}
        \cos\tp{\theta(v,w)} &= \inner{v}{w}= \inner{v_\| + v_\perp}{w_\| + w_\perp}\\
        &= \inner{v_\|}{w_\|} + \inner{v_\perp}{w_\perp}\\
        &=\cos\tp{\theta(v,x)}\cdot \cos\tp{\theta(x,w)} + \sin\tp{\theta(v,x)}\cdot \sin\tp{\theta(x,w)}
    \end{align*}
\end{proof}

\begin{lemma}\label{lem:cosinBall}
    Let $u,v\in \bb R^d$ be a vector with norm $\|u\|=\|v\|=\frac{3R}{4}$. Let $\ell=\frac{\sqrt{\tp{\frac{3R}{4}}^2-2r_2^2}}{\frac{3R}{4}}$. If the caps $C_u=\set{x\in \bb R: \|x\|=\frac{3R}{4}, \cos\tp{\theta(x,u)}\geq \ell}$ and $C_v=\set{x\in \bb R: \|x\|=\frac{3R}{4}, \cos\tp{\theta(x,v)}\geq \ell}$ are disjoint. Then the balls $\+B(v,r_2)$ and $\+B(u,r_2)$ are also disjoint.
\end{lemma}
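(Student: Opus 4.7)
I will prove the contrapositive: if the balls $\+B(u,r_2)$ and $\+B(v,r_2)$ intersect, then the caps $C_u$ and $C_v$ intersect. Two balls of radius $r_2$ centered at $u$ and $v$ intersect if and only if $\|u-v\| \le 2r_2$. Using $\|u\|=\|v\|=\tfrac{3R}{4}$, the law of cosines gives
\[
    \|u-v\|^2 = 2\tp{\tfrac{3R}{4}}^2\tp{1-\cos\theta(u,v)},
\]
so the hypothesis $\|u-v\|\le 2r_2$ is equivalent to
\[
    \cos\theta(u,v) \ge 1 - \frac{2r_2^2}{(3R/4)^2}.
\]

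\textbf{Construction of a common point in the caps.} I will pick the natural candidate $x \defeq \tfrac{3R}{4}\cdot\frac{u+v}{\|u+v\|}$, the unit-length midpoint of $u$ and $v$ on the sphere of radius $\tfrac{3R}{4}$. By symmetry $x$ lies in the plane spanned by $u$ and $v$ and bisects the angle between them, hence
\[
    \cos\theta(x,u) = \cos\theta(x,v) = \cos\!\tp{\tfrac{\theta(u,v)}{2}}.
\]
Applying the half-angle identity $2\cos^2(\alpha/2) = 1+\cos\alpha$ to the bound on $\cos\theta(u,v)$ above yields
\[
    \cos^2\!\tp{\tfrac{\theta(u,v)}{2}} \ge 1 - \frac{r_2^2}{(3R/4)^2} = \frac{(3R/4)^2 - r_2^2}{(3R/4)^2}.
\]

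\textbf{Concluding comparison with $\ell$.} Taking square roots and comparing with the definition $\ell = \frac{\sqrt{(3R/4)^2-2r_2^2}}{3R/4}$, I get
\[
    \cos\theta(x,u) = \cos\theta(x,v) \ge \frac{\sqrt{(3R/4)^2 - r_2^2}}{3R/4} > \ell,
\]
so $x \in C_u \cap C_v$, contradicting disjointness of the caps. This proves the contrapositive and hence the lemma. I do not expect any serious obstacle: the argument is essentially a one-line trigonometric computation once the correct witness $x$ is chosen. The only subtle point is noting that $u+v \ne 0$ (automatic here because $\cos\theta(u,v) \ge 1 - 2r_2^2/(3R/4)^2 > -1$ under the assumption $4r_2 \le R$ used throughout the paper), so $x$ is well-defined.
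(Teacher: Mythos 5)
Your proof is correct, but it takes a different route from the paper's. The paper also argues by contradiction/contraposition, but its witness is an actual intersection point: it takes $x\in \+B(u,r_2)\cap \+B(v,r_2)$, radially projects it to $y=\frac{3R}{4}\cdot\frac{x}{\|x\|}$ on the sphere, and bounds $\cos\theta(x,u)$ directly from $\|x-u\|^2\le r_2^2$ via the law of cosines (using $\|x\|\ge \frac{3R}{4}-r_2$ and $R\ge 4r_2$ to get $\cos\theta(x,u)\ge 1-\frac{r_2^2}{(3R/4)^2}\ge \ell$). You instead extract only the consequence $\|u-v\|\le 2r_2$, convert it into a lower bound on $\cos\theta(u,v)$, and use the canonical spherical midpoint $\frac{3R}{4}\cdot\frac{u+v}{\|u+v\|}$ together with the half-angle identity. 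Your version is slightly cleaner in that the witness is symmetric and independent of where the balls happen to meet, and it avoids the auxiliary estimate $\|x\|\ge \frac{3R}{4}-r_2$; the paper's version generalizes more readily to situations where one only knows a point lies in both balls (e.g.\ balls of different radii). Your handling of the two minor subtleties — $u+v\ne 0$ and the sign of $\cos(\theta(u,v)/2)$ (automatic since $\theta(u,v)/2\in[0,\pi/2]$) — is adequate, and the final comparison $\sqrt{(3R/4)^2-r_2^2}>\sqrt{(3R/4)^2-2r_2^2}$ is valid because $(3R/4)^2\ge 2r_2^2$ under $R\ge 4r_2$, which is needed anyway for $\ell$ to be defined.
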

\begin{proof}
    Assume in contrast that there exists $x\in \+B(u,r_2)\cap \+B(v,r_2)$. Let $y=\frac{3R}{4}\cdot \frac{x}{\|x\|}$. We show that $y\in C_u\cap C_v$.

    Note that since $x\in \+B(u,r_2)$, 
    \[
        r_2^2\geq \|x-u\|^2=\|x\|^2 + \|u\|^2 - 2\inner{x}{u} = \|x\|^2 + \|u\|^2 - 2\|x\|\|u\|\cdot \cos\tp{\theta(x,u)}.
    \]
    So we have
    \begin{align*}
        \cos\tp{\theta(x,u)}&\geq \frac{\|x\|^2+\|u\|^2}{2\|x\|\|u\|} - \frac{r_2^2}{2\|x\|\|u\|} \\
        &\geq 1 - \frac{r_2^2}{2\|x\|\|u\|} \\
        \mr{$\|x\|\geq \|u\|-\|x-u\|\geq \frac{3R}{4}-r_2$} &\geq 1 - \frac{r_2^2}{2\tp{\frac{3R}{4}-r_2}\cdot \frac{3R}{4}} \\
        \mr{$R\geq 4r_2$}&\geq 1-\frac{r_2^2}{\tp{\frac{3R}{4}}^2}\geq \ell.
    \end{align*}
    Similarly, $\cos\tp{\theta(x,v)}\geq \ell$. This indicates $y\in C_u\cap C_v$, which leads to a conflict.
\end{proof}

\subsection{The properties of the smoothness parameter and the second moment}
The following lemma shows that scaling the domain does not change $L\cdot M$.
\begin{lemma}\label{lem:LM}
    For an $L$-smooth function $f\colon \bb R^d\to \bb R$, assume the second moment of the distribution $\mu$ with density $p_\mu(x)\propto e^{-f(x)}$ is $M$. Consider a scaled function $f'(x)=f\tp{\frac{x}{\sqrt{L}}}$. Then $f'$ is $1$-smooth and the second moment of $\nu$ with density $\propto e^{-f'(x)}$ is $L\cdot M$.
\end{lemma}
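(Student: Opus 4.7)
The plan is to verify both claims by a direct change of variables, since the scaling $x \mapsto x/\sqrt{L}$ is affine and commutes nicely with differentiation and with Lebesgue measure.

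First, for the smoothness claim, I would compute that $\grad f'(x) = \frac{1}{\sqrt{L}}\grad f\tp{\frac{x}{\sqrt{L}}}$ and $\grad^2 f'(x) = \frac{1}{L}\grad^2 f\tp{\frac{x}{\sqrt{L}}}$ by the chain rule. Since $f$ is $L$-smooth, $\norm{\grad^2 f(y)}_{\!{op}}\le L$ for all $y$, and hence $\norm{\grad^2 f'(x)}_{\!{op}}\le 1$ for all $x$, which gives $1$-smoothness of $f'$.

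Next, for the second moment, I would observe that if $X\sim \mu$ then the random variable $Y\defeq \sqrt{L}\cdot X$ has density $\propto e^{-f'(y)}$, i.e., $Y\sim \nu$. To see this, for any Borel set $A$,
\[
    \Pr{Y\in A} = \Pr{X\in A/\sqrt{L}} \propto \int_{A/\sqrt{L}} e^{-f(x)}\d x = \int_A e^{-f(y/\sqrt{L})}\cdot L^{-d/2}\d y \propto \int_A e^{-f'(y)}\d y,
\]
where the last step uses the substitution $y=\sqrt{L}\cdot x$. Therefore
\[
    \E[\nu]{\norm{Y}^2} = \E[\mu]{\norm{\sqrt{L}\cdot X}^2} = L\cdot \E[\mu]{\norm{X}^2}\le L\cdot M.
\]
Combining the two parts yields the lemma. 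There is essentially no obstacle here; this is purely a routine verification via chain rule and change of variables, and the only care needed is to keep track of the factor $\sqrt{L}$ that appears in the Jacobian versus the factor $L$ that appears in the Hessian.
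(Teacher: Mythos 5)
Your proposal is correct and follows essentially the same route as the paper: chain rule for the Hessian bound, and the substitution $y=\sqrt{L}x$ (phrased as the pushforward $Y=\sqrt{L}X$) for the second moment. The only blemish is the final ``$\le L\cdot M$'', which should be an equality since the hypothesis is that the second moment of $\mu$ \emph{equals} $M$.
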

\begin{proof}
    From the chain rule, $\grad^2 f'(x) = \frac{1}{L} \grad^2 f(y)\Big|_{y=\frac{x}{\sqrt{L}}}$. This indicates that the function $f'$ is $1$-smooth. 

    For the second moment, we have
    \begin{align*}
        \E[X\sim \nu]{\|X\|^2} &= \frac{\int_{\bb R^d} \|x\|^2 e^{-f'(x)}\d x}{\int_{\bb R^d} e^{-f'(x)}\d x}\\
        \mr{substituting $x$ by $\sqrt{L}y$}&= \frac{\int_{\bb R^d} \|\sqrt{L}y\|^2 e^{-f'(\sqrt{L}y)}\cdot \tp{\sqrt{L}}^d \d y}{\int_{\bb R^d} e^{-f'(\sqrt{L}y)}\cdot \tp{\sqrt{L}}^d\d y}\\
        &=L\cdot \frac{\int_{\bb R^d} \|y\|^2 e^{-f(y)}\d y}{\int_{\bb R^d} e^{-f(y)}\d y}\\
        &=L\cdot \E[Y\sim \mu]{\|Y\|^2} \\
        &=L\cdot M.
    \end{align*}
\end{proof}

The following lemma gives a lower bound for $L\cdot M$ for any distribution $\mu$ with density $\propto e^{-f}$ such that $\grad f(0)=0$.
% The following lemma shows that for any distribution $\nu\propto e^{-f}$ satisfying \Cref{assump:moment} and \ref{assump:smooth} with $\grad f(0)=0$, we have $LM\geq d$.
\begin{lemma}\label{lem:lb-LM}
    Let $f:\bb R^d\to \bb R$ be an $L$-smooth function such that $e^{-f}$ is integrable and $\grad f(0)=0$. Let $\mu$ be the distribution with density $p_{\mu}\propto e^{-f}$. Then $\E[\mu]{\|X\|^2}\geq \frac{d}{L}$.
\end{lemma}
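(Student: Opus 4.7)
\textbf{Proof plan for \Cref{lem:lb-LM}.} The key identity is that under the stationarity property $\E[\mu]{\grad f(X)^\top X}=d$, combined with the Lipschitz bound on $\grad f$, the inequality $\E[\mu]{\|X\|^2}\ge d/L$ falls out immediately. So the plan has two ingredients: an integration-by-parts identity, and a pointwise bound from smoothness.

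First, since $\grad f$ is $L$-Lipschitz and $\grad f(0)=0$, I have $\|\grad f(x)\|\le L\|x\|$ for every $x\in\bb R^d$. Cauchy--Schwarz then yields the pointwise bound $\grad f(x)^\top x\le \|\grad f(x)\|\cdot\|x\|\le L\|x\|^2$, and hence
\[
    \E[\mu]{\grad f(X)^\top X}\le L\cdot \E[\mu]{\|X\|^2}.
\]

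Second, I compute $\E[\mu]{\grad f(X)^\top X}$ directly. Let $Z=\int_{\bb R^d} e^{-f(x)}\d x$. Using $\grad e^{-f(x)}=-e^{-f(x)}\grad f(x)$ and integration by parts in each coordinate,
\[
    \int_{\bb R^d}\grad f(x)^\top x\, e^{-f(x)}\d x
    =-\int_{\bb R^d} x^\top \grad e^{-f(x)}\d x
    =\sum_{i=1}^d \int_{\bb R^d}\partial_i x_i\cdot e^{-f(x)}\d x
    = d\cdot Z,
\]
where the boundary terms at infinity vanish (to be justified by a cut-off argument; see below). Dividing by $Z$ gives $\E[\mu]{\grad f(X)^\top X}=d$. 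Combining with the first inequality proves $\E[\mu]{\|X\|^2}\ge d/L$.

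The only step that requires care is the vanishing of boundary terms in the integration by parts, since $L$-smoothness only gives the \emph{upper} bound $f(x)\le f(0)+\tfrac{L}{2}\|x\|^2$ on $f$, not a coercive lower bound. I would handle this with a standard truncation: let $\chi_R$ be a smooth cut-off supported in $\+B_{2R}$ equal to $1$ on $\+B_R$, apply the divergence theorem to $x_i\chi_R(x) e^{-f(x)}$, and take $R\to\infty$. The integrand $\grad f(x)^\top x\cdot e^{-f(x)}$ is absolutely integrable on all of $\bb R^d$ because $\|\grad f(x)^\top x\|\le L\|x\|^2$ and the second moment is finite (if it is infinite, the claim is trivial); dominated convergence handles the interior term, while the error from $\grad \chi_R$ is supported in the annulus $R\le\|x\|\le 2R$ and is $O(\int_{\|x\|\ge R}\|x\|e^{-f}\d x)\to 0$ as $R\to\infty$ by integrability of $\|x\|^2 e^{-f}$. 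This technical cut-off argument is the main (minor) obstacle; the rest of the proof is a one-line consequence.
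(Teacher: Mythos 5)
Your proposal is correct and follows essentially the same route as the paper: the pointwise bound $\|\grad f(x)\|\le L\|x\|$ from smoothness, Cauchy--Schwarz to pass to $x^\top\grad f(x)$, and coordinate-wise integration by parts yielding the identity $\E[\mu]{\grad f(X)^\top X}=d$. The only difference is that you explicitly justify the vanishing of boundary terms via a cut-off argument, a point the paper's proof leaves implicit.
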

\begin{proof}
    By the definition of $L$-smoothness, for each $x\in \bb R^d$, 
    \[
        L\|x-0\| \geq \|\grad f(x) - \grad f(0)\| = \|\grad \log \mu(x)\| = \frac{\|\grad \mu(x)\|}{\mu(x)}.
    \]
    Therefore,
    \begin{align*}
        \E[\mu]{\|X\|^2} &= \int_{\bb R^d} \|x\|^2 p_\mu(x) \dd x\\
        &\geq \int_{\bb R^d} \|x\|\cdot \frac{1}{L}\cdot \frac{\|\grad p_\mu(x)\|}{p_\mu(x)}  \cdot p_\mu(x) \dd x \\
        &= \frac{1}{L} \int_{\bb R^d} \|x\|\cdot \|\grad p_\mu(x)\|\dd x  \\
        \mr{Cauchy-Schwartz inequality}&\geq -\frac{1}{L} \int_{\bb R^d} \sum_{i=1}^d x_i\cdot \grad p_\mu(x)_i\dd x \\
        &= -\frac{1}{L}\sum_{i=1}^d \int_{\bb R} \int_{\bb R}\cdots \int_{\bb R} x_i\cdot \grad p_\mu(x)_i\dd x_i \dd x_1\cdots \dd x_d \\
        \mr{integration by parts}&= \frac{1}{L}\sum_{i=1}^d \int_{\bb R} \int_{\bb R}\cdots \int_{\bb R}  p_\mu(x)\dd x_i \dd x_1\cdots \dd x_d \\
        &= \frac{d}{L}.
    \end{align*}
\end{proof}

\end{document}